\documentclass[atmp]{ipart_v1}

\Vol{xx}
\Issue{x}
\Year{20xx}
\firstpage{1}

\usepackage{t1enc}
\usepackage[latin1]{inputenc}
\usepackage[english]{babel}

\usepackage{amsthm}
\usepackage{yfonts}

\usepackage{bbm}
\usepackage{bm}
\usepackage{mathrsfs}

\usepackage{slashed}

\usepackage{graphicx}
\usepackage{amsmath}

\usepackage{fancyhdr}
\usepackage{afterpage}
\usepackage{amssymb,amscd,amsthm,amsxtra}
\usepackage{latexsym}
\usepackage[dvips]{epsfig}
\usepackage{stmaryrd,mathtools}
\newcommand{\R}{\mathbb{R}}     
\newcommand{\M}{\mathcal{M}}    
\newcommand{\J}{\mathcal{J}}
\newcommand{\PP}{\mathcal{P}}
\newcommand{\HH}{\mathcal{H}}

\newcommand{\f}{\textit{graph}{f}}
\newcommand{\ov}[1]{\overline{#1}}                         
   
%

\newcommand{\be}[0]{\begin{equation}}
\newcommand{\ee}[0]{\end{equation}}

\numberwithin{equation}{section}

\theoremstyle{plain}
\newtheorem{theorem}{Theorem}[section]
\newtheorem{lemma}[theorem]{Lemma}
\newtheorem{proposition}[theorem]{Proposition}
\newtheorem{remark}[theorem]{Remark}
\newtheorem{corollary}[theorem]{Corollary}

\begin{document}

\title[Blowup rate control for solution of Jang's equation and its application]{Blowup rate control for solution of Jang's equation and its application on Penrose inequality}

\author[Wenhua Yu]{Wenhua Yu}

\begin{abstract}
We prove that the blowup term of a  blowup solution of Jang's equation on an initial data set $ (\M,g,k) $ near an arbitrary strictly stable MOTS $ \Sigma $ is exactly $ -\frac{1}{\sqrt{\lambda}}\log \tau $, where $ \tau $ is the distance from  $ \Sigma $ and $ \lambda $ is the principal eigenvalue of the  MOTS stability operator of $ \Sigma $. 
We also prove that the gradient of the solution is of order $ \tau^{-1} $. Moreover, we apply these results to get a Penrose-like inequality under additional assumptions.
\end{abstract}

\maketitle

\section{Acknowledgement}
This is my 2019 PhD dissertation thesis (DOI: 10.7916/d8-avnq-g588). I would like to thank my dissertation advisor, Professor Mu-Tao Wang, for his guidance and support throughout the time of my research. Also I want to thank Pei-ken Hung for helpful discussions and suggestions.

\newtheorem*{foliationA}{Foliation A}
\newtheorem*{foliationB}{Foliation B}
\newtheorem*{foliationC}{Foliation C}

\section{Preliminary and Main Results}\label{introduciton}
In this paper, we specify the relationship between the  blowup term of any  blowup solution of Jang's equation at a strictly stable marginally outer trapped surface  (MOTS) and the principal eigenvalue of the stability operator. We also specify the blowup order of the gradient of the solution.   We then apply these results to the Jang's slice constructed by J. Metzger (c.f. \cite{metzger},\cite{jek}) to get a Penrose-like inequality.\par

Jang's equation is a quasilinear elliptic equation on an initial data set  and it was proposed by P.  Jang in \cite{jangs}. An important application of this equation is the proof of Positive Mass Theorem on a  general initial data set by R. Schoen and S.-T. Yau in \cite{yau}. 
As in  \cite{yau}, we assume $ (\M,g,k) $ is an oriented 3-dim initial data set, with $ g $ the Riemannian metric,   $ k $ the second fundamental form, $ \mu $ the local mass density, $ J^i $ the local current density, satisfying the constraint equations
\begin{align*}
\mu=&\frac{1}{2}(R-k^{ij}k_{ij}+(tr_g k)^2)\\
J^i=&\nabla_j(p^{ij}-g^{ij}tr_g k)
\end{align*}

The initial data set $ (\M,g,k) $ is said to be asymptotically flat if the complement of some compact subset of  $ \M $
consists of finitely many connected components $ \M_1, . . . , \M_m $, called the ends,
each one diffeomorphic to $ \R^3-\overline{ B_1(0)} $, and such that in the corresponding
coordinate systems the metric tensor $ g_{ij} $ converges to the Euclidean metric
$ \delta_{ij} $ and the second fundamental form tensor $k_{ij} $ to zero. More precisely, we require that $ |g_{ij} - \delta_{ij}| + |x||\partial_k g_{ij} | = O(|x|^{- q}) $ and $ k_{ij}= O(|x|^{-q-1}) $ as $ |x| \to\infty$ for some $ q>\dfrac{1}{2} $.
We ask in addition that for some $ \beta > 2 $, $ tr_g(k) = g^{ij}k_{ij} = O(|x|^{-\beta}) $ as $ x \to \infty $. This last condition is imposed so that certain barriers for the Jang's equation can be constructed far out in the asymptotically flat ends. With each end $ \M_k $ we associate a total mass $ m_k $ defined by the flux integral:
$$m_k=\frac{1}{16\pi}\int_{\infty}\sum_{i,j}(g_{ij,j}-g_{jj,i})d\sigma_i$$
which is the limit of surface integrals taken over large two spheres in $ \M_k $. This number $ m_k $ is called the ADM mass of $ \M_k $, c.f. \cite{adm}. In this formulation, the general Positive Mass Theorem  states that, if  $ (\M,g,k) $ is a complete  asymptotically flat initial data set and satisfies the dominant energy condition $ \mu>|J| $, then the ADM mass of each end is non-negative. In \cite{yau}, R. Schoen and S.-T. Yau proved this theorem by reducing it to the Riemannian Positive Mass Theorem with the help of Jang's equation,  which states that for a 3-dim complete asymptotically flat Riemannain manifold  with non-negative scalar curvature,  the ADM  mass of each end  is non-negative.\par

An important tool applied in this reduction strategy is Jang's equation.
Consider the graph of a function $f \in C^2(\M)$ as a hypersurface of the product manifold $(\M \times \R, g + dt^2)$. 
Jang's equation is then:
\begin{equation} \J[f] = \HH [f] - \PP [f] = 0, \label{jang} \end{equation}
where $\HH[f]$ denotes the mean curvature of $\f \subset \M\times\R$ computed with respect to its downward-pointing unit normal, and $\PP[f] $ denotes the trace of $ k $ with respect to the induced metric on ${\f}$, after  extending $k$ trivially along the $\R$-factor.
Thus $ \HH[f],\PP[f] $ are given by:
\begin{align*}
\HH[f]=&(g^{ij}-\frac{f^i f^j}{1+|\nabla f|^2})\frac{\nabla_i\nabla_j f}{\sqrt{1+|\nabla f|^2}}\\
\PP[f]=&(g^{ij}-\frac{f^i f^j}{1+|\nabla f|^2})k_{ij}
\end{align*}

In \cite{yau}, in order to construct a solution for Jang's equation Eq.\eqref{jang},  it was proved that the auxiliary equation: 
\begin{equation}\label{aux}
\HH[f]-\PP[f]=tf
\end{equation}
has a solution $ f_t\in B^{2,\beta} $ when $ t>0 $.  
Here, the weighted H\"{o}lder space $ B^{2,\beta} $ for $ \beta\in(0,1) $ is defined as the space for all $ f $ such that $ |f|_{2,\beta} $ is finite, where
\begin{align*}
|f|_{2,\beta}=&\sup_{\M}(r^{\beta}(x)|f(x)|+r^{1+\beta}(x)|\nabla f(x)|\\ &+r^{2+\beta}(x)|\nabla^2 f(x)|+r^{2+2\beta}(x)|\nabla^2 f |_{\beta,x})\\
|\nabla^2 f |_{\beta,x}=&\sup_{x_1,x_2\in B_{\frac{r(x)}{2}}(x)} \frac{|\nabla^2 f(x_1)-\nabla^2 f(x_2)|}{dist(x_1,x_2)^{\beta}}
\end{align*}
where the weight function $ r(x) $ satisfies $ r(x)\geq 1 $ in $ \M $, and $ r(x)=|x| $ on each end $ \M_k $.\par

The solution $ f_t $ for the  auxiliary equation Eq.\eqref{aux} satisfies the a priori estimate:
\begin{equation}
\sup_{\M} t|f|\leq \mu_1,\mbox{ }\mbox{ }\mbox{ }\sup_{\M} t|\nabla f|\leq \mu_2
\end{equation}
where $ \mu_1,\mu_2 $ are constants depending only on the initial data. With this gradient bound and the a priori estimate \cite[Proposition 2]{yau},  a smooth solution $ f $ of Jang's equation Eq.\eqref{jang} can be found, as the  limit function of the sequence $ \{f_{t_i}\} $ for a sequence $ \{t_i\} $ converging to zero. This solution of Jang's equation might blow up or down at some apparent horizons $ \Sigma $. By apparent horizon we mean   a closed surface $ \Sigma $ such that one of $ \theta^+[\Sigma] $ or $ \theta^-[\Sigma] $ vanishes on $ \Sigma $, where 
\[ \theta^\pm = \text{tr}_g k - k(\nu, \nu) \pm H, \]
is called \textit{inner} and \textit{outer expansions}. Here $\nu$ is the unit normal on $ \Sigma $ pointing into $ \M $ and $H$ is the mean curvature of $ \Sigma $ in $ \M $ with respect to $\nu$. Also by blowup  we mean, that outside from an apparent horizon $ \Sigma $ the function $ f $ is such that $ \f $ is a smooth submanifold of $ \M\times\R $ with a cylindrical end converging to $\Sigma \times \R$.\par

After the solution of Jang's equation has been constructed, it is applied to the proof of  the general Positive Mass Theorem. It helps reduce  the general Positive Mass Theorem to the Riemannian Positive Mass Theorem. We review the procedure of this reduction strategy here. For each apparent horizon $ \Sigma $ where the solution $ f $ blows up or down, denote $ U $ a neighborhood of $ \Sigma $, then in \cite{yau} each of these cylindrical end of $ \f $ is slightly deformed in $ U\times(T, \infty) $ or $ U\times(-\infty,-T) $ so that $ \f $ coincides with $ \Sigma\times\R $ in $ U\times(T, \infty) $ or $ U\times(-\infty,-T) $. These  cylindrical ends are then closed up by a conformal factor $ \psi $ which exponentially goes to zero on these ends. After this step, the metric on each of these cylindrical ends is conformally changed to a metric which is uniformly equivalent to the flat metric in a punctured ball. Then, a conformal factor $ u $ on $ \M $ for this deformed metric $ \tilde{g} $ can be found such that $ u^4\tilde{g} $ has zero scalar curvature. The theorem  is then proved because of the Riemannian Positive Mass Theorem and the fact that the ADM mass of $ u^4\tilde{g} $ is not greater than that of $ g $.\par

It is well known that in the time symmetric case when black holes are present, there is a positive lower bound for the mass, in terms of the surface area of the black holes. The Riemannian Penrose inequality may be thought of as a refinement of the Riemannian Positive Mass Theorem in the presence of horizons. It relates the total ADM mass (of a chosen
end) $m$ to the area $A$ of its outermost minimal area enclosure. The inequality states that: 
\begin{equation}
m\geq\sqrt{\frac{A}{16\pi}}.
\end{equation}
for an  asymptotically flat Riemannian manifold with nonnegative scalar curvature. It is conjectured that Penrose inequality also holds for general initial data sets, when horizons are replaced by MOTSs. It is natural to consider applying Jang's equation to prove the general Penrose inequality, because until now this is the only equation that can locate the apparent horizons. However,  in the proof of the Positive Mass theorem \cite{yau}, the information of the apparent horizon $ \Sigma $ is lost after the  cylindrical end being deformed  to coincide  with $ \Sigma\times\R $ in $ U\times(T, \infty) $ or $ U\times(-\infty,-T) $. For example, in Schwarzschild spacetime, we can construct a blowup solution for Jang's equation on its time symmetric slice. Then if we applied the same  reduction techniques as in \cite{yau} to this blowup solution,  the cylindrical end will be closed up and the corresponding metric will then be conformally deformed to the flat metric on punctured $ \R^3 $. If one want to extract some information about $ \Sigma $, then the original blowup end of the solution is needed.  Due to this issue, H.  Bray and M.   Khuri proposed in \cite{coupled} a way to prove the Penrose inequality for general initial data sets by using the solution of generalized Jang's equation. Also in the same paper a solution of  generalized Jang's equation in spherically symmetrical setting was found and was used to prove Penrose inequality in this case. However,   the existence of the solution of generalized Jang's equations is still not clear in the general case. For the original equation proposed by P.   Jang,  J. Metzger was able to construct in \cite{metzger} a solution which blows up at the outermost MOTS. In this paper, our intuition is to obtain a sharp estimate of blowup rate of the solution of the original Jang's equation Eq.\eqref{jang} at one blowup end, and thus capture some information of the corresponding horizon $ \Sigma $ in form of a Penrose-like inequality.  \par

There has been a lot of research about Jang's equation and its blowup solution, c.f. \cite{yau},\cite{metzger},\cite{jek},\cite{han},\cite{williams}. In these papers, various existence theorems of Jang's equation were proved, with different constraints, such as asymptotically decaying and blowup at horizon.  Blowup control estimates of these solutions were also given in these papers. The uniqueness of asymptotically decaying solution of Jang's equation which blowup at outermost MOTS in the time symmetric (i.e. $ k=0 $) case was proved in \cite{jek}.  We review these results in Section \ref{existing}. \par

In this paper, we study the blowup rate of any possible  blowup solution of Jang's equation \eqref{jang} near a strictly stable MOTS $ \Sigma $ on a general initial data set $ (\M,g,k) $. We prove in Section \ref{barrier} that the blowup term of  any such kind of solution is exactly $ -\frac{1}{\sqrt{\lambda}}\log \tau $, where $ \tau $ is the distance from  $ \Sigma $ and $ \lambda $ is the principal eigenvalue of the  MOTS stability operator of $ \Sigma $. In Section \ref{gradient}, we prove the order of the gradient of the solution is $ \tau^{-1} $.
In Section \ref{apriori}, for the  solution $ f_0 $  constructed by J. Metzger in \cite{metzger}, we prove that  the   the coefficients of our estimates for $ f_0 $ depend only on  the initial data set by using the a priori estimates for $ f_0 $. 
We then prove in Section \ref{penrose} that under additional assumptions, we  can apply these results to  the slice  $ \f_0 $  to get a Penrose-like inequality.\par

We start from an initial data set $(\M,g,k)$ with boundary, where $\M$ is a 3-manifold equipped with a Riemannian metric $ g $ together with a symmetric bilinear form $k$ representing the second fundamental form of the time slice $\M$ in space-time. We assume its boundary $ \partial \M $ is consisted of apparent horizons.  An apparent horizon is called a \textit{marginally outer trapped surface} (MOTS) if $ \theta^+(\Sigma) $ vanishes, and  \textit{marginally inner trapped surface} (MITS) if $ \theta^-[\Sigma] $ vanishes. If $ \Sigma $ is a MOTS and there is no other MOTS on the outside of it, we call it an \textit{outermost} MOTS. There might be other apparent horizons inside $ \M $, but we assume that each connected component of $ \partial\M $ has positive distance to the rest of $ \partial\M $ and all other apparent horizons inside $ \M $.

Now we introduce the definition of MOTS stability operator and its principal eigenvalue. For a more detailed investigation  we refer to  \cite{stable}. 

Let $\Sigma\subset \M$ be a MOTS and consider a normal variation of
$\Sigma$ in $\M$, that is a map $F: \Sigma \times (-\epsilon,\epsilon) \to \M$
such that $F(\cdot,0)=id_\Sigma$ and $\dfrac{\partial}{\partial s}\big|_{s=0} F(p,s) =
h\nu$, where $h$ is a function on $\Sigma$ and $\nu$ is the normal of
$\Sigma$. Then the variation of $\theta^+$ is given by
\begin{equation*}
\dfrac{\partial \theta^+[F(\Sigma,s)]}{\partial s}\Big|_{s=0} = L_\Sigma h,
\end{equation*}
where $L_\Sigma$ is a linear elliptic operator of second order along
$\Sigma$, given by
\begin{equation*}
L_\Sigma h =  -^\Sigma\Delta h + 2 S(^\Sigma\nabla h) + h\big(^\Sigma div S -|\chi^+|^2 - |S|^2
+ \tfrac{1}{2}^\Sigma Sc -\mu - J(\nu)\big).
\end{equation*}
In this expression $^\Sigma\nabla$, $^\Sigma div$ and $^\Sigma\Delta$ denote the gradient,
divergence and Laplace-Beltrami operator tangential to $\Sigma$. The
tangential 1-form $S$ is given by  $ S(\nu)=k(\nu,\nu) $ for any $ \nu $ tangential to $ \Sigma $. $\chi^+$ is
the bilinear form $\chi^+ = A + k^\Sigma$, where $A$ is the second
fundamental form of $\Sigma$ in $M$ and $k^\Sigma$ is the projection
of $k$ to $T\Sigma\times T\Sigma$. Furthermore, $^\Sigma Sc$ denotes the scalar curvature of $\Sigma$, $\mu = \dfrac{1}{2}(^\Sigma Sc -|k|^2 + (tr
k)^2)$, and $J = div_M k - d tr k$. It is worth noting that $ L_\Sigma $ is not self-adjoint.\par

However, it was proved in \cite[Lemma 4.1]{stable} that when $ \Sigma $ is compact then there is a real eigenvalue $ \lambda $, called \textit{the principal eigenvalue}, such that the real part of 
any other eigenvalue of $ L_\Sigma $ is greater or equal to $ \lambda $. The corresponding
eigenfunction $ \beta $, $ L\beta =\lambda\beta $ is unique up to a multiplicative constant and can
be chosen to be real and everywhere positive. If $\lambda$ is positive, $\Sigma$ is called
\textit{strictly stable}.  In particular, if $\Sigma$ is strictly stable as a MOTS, there exists an outward deformation strictly increasing $\theta^+$.\par

In this paper, we  use following ways of foliation near the apparent horizon $ \Sigma $. We will use Foliation B in most of our proofs.

\begin{foliationA}
	Denote $ \nu $ the normal vector field of $ \Sigma $ pointing into $ \M $. Defined the foliation $ \Psi_A $ to be the map:
	\begin{equation}
	\Psi_A : \Sigma\times [0,\bar \tau] \to \M:  \mbox{   } (p,\tau)\mapsto \exp_p^{\M}(\tau\nu)	
	\end{equation}
\end{foliationA}

\begin{foliationB}
	Suppose   $ \Sigma $ is a compact boundary component and a strictly stable MOTS with principal eigenvalue  and corresponding eigenfunction $ \lambda>0 $. We further scale so that $ \min_\Sigma\beta=1 $. Defined the foliation $ \Psi_B $ to be the map:
	\begin{align}
	&\Psi_B : \Sigma\times [0,\bar s] \to \M \mbox{ }\mbox{such that}\\ \nonumber
	(1)& \Psi_B(p,0)=p \mbox{ }\mbox{for}\mbox{ }p\in\Sigma\\ \nonumber
	(2)& \dfrac{\partial \Psi_B(p,s)}{\partial s}=\beta(p) \nu_s, \mbox{ }\mbox{where}\mbox{ } \nu_s  \mbox{ }\mbox{is the normal to}\mbox{ }  \Sigma_s:=\Psi(\Sigma,s) \\ \nonumber
	&\mbox{ }\mbox{extending the outward pointing}\mbox{ }
	\mbox{ }\mbox{normal }\mbox{ } \nu_0 \mbox{ }\mbox{on }\mbox{ } \Sigma_0=\Sigma.
	\end{align}
\end{foliationB}

Because  $ \Sigma $ is   compact, $ \beta $ is positive, and   $ \min_\Sigma\beta=1 $,  $ \Psi_A $ and $ \Psi_B $ are comparable.

Our main results are the following:

\begin{theorem}
	\label{thm1}
	Suppose $ (\M,g,k) $ is a smooth initial data set with boundary $ \partial \M=\cup_{i} \Sigma_i $, where each $ \Sigma_i $ is a connected component of the boundary.
	Let $ \Sigma $ be a boundary component which is a compact and strictly stable MOTS  with principal eigenvalue $ \lambda>0 $, and $ f $ be a solution of Eq.\eqref{jang} in an open neighborhood $ V $ of $ \Sigma $, and
	blows up at $ \Sigma $, i.e.  $ f(x)\to +\infty$ when $ x\to \Sigma $. Denote $ \tau(x)$ to be the geodesic distance of a point $ x $ in a neighborhood of $ \Sigma $  to $ \Sigma $. Also denote $ V_c=\{x\in\M|\tau(x)\leq c\}. $\par

	Then there exists $ \tau_0 $ depending only on the local geometry near $ \Sigma $, such that
	$ f(x)+\dfrac{1}{\sqrt{\lambda}} \log\tau(x) $
	is a  bounded  function in $ V_{\tau_0} $.\par
	
	More specifically, under Foliation B, there exist constants  $ a $ and  $ s_0 $ depending only on the local geometry near $ \Sigma $, such that
	the following barrier control for $ f $ holds in $ U_{s_0} $ for any $ 0<s\leq s_1\leq s_0$. Here $ U_{s_0} $ is the region swept out by $ \Sigma_s. $ 
	\begin{equation}\label{thm barrier}
	-\frac{1}{\sqrt{\lambda}}\log\frac{s}{s_1} +\inf_{\Sigma_{s_1}}f +a(s-s_1) \leq
	f(\cdot,s)\leq-\frac{1}{\sqrt{\lambda}}\log\frac{s}{s_1} +\sup_{\Sigma_{s_1}}f-a(s-s_1) 
	\end{equation}
	As a consequence, on each $ \Sigma_s $ for $ s\in(0,s_0) $, if $p_i\in\Sigma,i=1,2 $ are the points such that $ f(p_1,s)=\sup_{\Sigma_s} f $ and  $ f(p_2,s)=\inf_{\Sigma_s} f $, then we have the following gradient estimates for $ f $ at $ p_1, p_2 $:
	\begin{align}
	|\partial_s f(p_1,s)|\leq& \frac{1}{\sqrt{\lambda}s}+a\\
	|\partial_s f(p_2,s)|\geq& \frac{1}{\sqrt{\lambda}s}-a
	\end{align}
	Here $$ \partial_s f(p_i,s):=\lim_{\delta s\to 0}\frac{f(p_i,s+\delta s)-f(p_i,s)}{\delta s}, i=1,2.$$

\end{theorem}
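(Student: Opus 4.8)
The plan is to construct explicit sub- and super-solutions (barriers) for Jang's equation near $\Sigma$ using the foliation $\Psi_B$, with the key point being that in this foliation the principal eigenfunction $\beta$ has been absorbed so that the relevant geometry degenerates in a controlled, essentially one-dimensional way as $s\to 0$. First I would compute $\J[f]$ for a function of the form $f(p,s)=-\frac{1}{\sqrt\lambda}\log s + \phi(s)$, where $\phi$ is a correction term (affine in $s$, as in the statement). The heuristic is that the blowup of $f$ forces $|\nabla f|\to\infty$, so in the operators $\HH[f]$ and $\PP[f]$ the projection $g^{ij}-\frac{f^if^j}{1+|\nabla f|^2}$ becomes, to leading order, the projection onto $T\Sigma_s$; hence $\HH[f]-\PP[f]\approx \frac{f''}{(1+|\nabla f|^2)^{3/2}} + (\text{tangential mean curvature of }\Sigma_s) - \theta^+[\Sigma_s]/(\cdots)$-type terms. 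Since $\Sigma$ is a MOTS, $\theta^+$ vanishes on $\Sigma$, and since $\Sigma$ is strictly stable, $\theta^+[\Sigma_s]$ grows like $\lambda\beta s + o(s)$ along $\Psi_B$ — this is exactly where the principal eigenvalue $\lambda$ enters, via $L_\Sigma\beta=\lambda\beta$. Matching the $f''$ term against this linear growth of $\theta^+$ is what pins down the coefficient $\frac{1}{\sqrt\lambda}$ of $\log s$.

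Next I would make this rigorous by choosing the comparison functions
\begin{equation*}
f^\pm(p,s) = -\frac{1}{\sqrt\lambda}\log\frac{s}{s_1} + c^\pm \mp a(s-s_1),
\end{equation*}
with $c^+=\sup_{\Sigma_{s_1}}f$, $c^-=\inf_{\Sigma_{s_1}}f$, and verify that for a suitable $a>0$ and small enough $s_0$ one has $\J[f^+]\le 0\le\J[f^-]$ throughout $U_{s_0}$ (i.e. $f^+$ is a supersolution and $f^-$ a subsolution), together with the boundary ordering $f^-\le f\le f^+$ on $\Sigma_{s_1}$ and the matching of the cylindrical ends as $s\to 0$ (both $f$ and $f^\pm$ blow up like $-\frac{1}{\sqrt\lambda}\log s$, so the difference is controlled there — one may need an intermediate argument that $f + \frac{1}{\sqrt\lambda}\log s$ cannot blow up, using a standard barrier such as those in Schoen–Yau or Metzger to get an a priori $O(\log)$ bound, then bootstrap). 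The comparison principle for the quasilinear elliptic operator $\J$ then yields \eqref{thm barrier} on $U_{s_0}$. The boundedness of $f(x)+\frac{1}{\sqrt\lambda}\log\tau(x)$ in $V_{\tau_0}$ follows since $\Psi_A$ and $\Psi_B$ are comparable (so $s$ and $\tau$ are comparable up to multiplicative constants, contributing only a bounded additive term to the logarithm).

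For the gradient estimates, I would argue at the extremal points directly: at $p_1$ with $f(p_1,s)=\sup_{\Sigma_s}f$, the one-sided comparison $f(\cdot,s)\le f^+(\cdot,s)$ with equality forced in the limit structure, combined with $f(\cdot,s')\ge f^-(\cdot,s')$ for $s'$ slightly different, sandwiches the difference quotient $\frac{f(p_1,s+\delta s)-f(p_1,s)}{\delta s}$ between the corresponding difference quotients of $f^-$ and $f^+$; letting $\delta s\to 0$ and using $\frac{d}{ds}\big(-\frac{1}{\sqrt\lambda}\log\frac{s}{s_1}\big)=-\frac{1}{\sqrt\lambda s}$ gives $|\partial_s f(p_1,s)|\le \frac{1}{\sqrt\lambda s}+a$, and symmetrically the lower bound at $p_2$. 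The main obstacle I anticipate is the barrier computation itself: one must show that the error terms in $\J[f^\pm]$ — coming from the tangential curvature of $\Sigma_s$, the terms in $\PP$ involving $k^\Sigma$, the lower-order pieces $S$, $|\chi^+|^2$, $|S|^2$ in $L_\Sigma$, and the $O(s^2)$ corrections to $\theta^+[\Sigma_s]$ — are all dominated, uniformly on $\Sigma$, by the good term $a$ times the leading coefficient, for $s$ small. This requires careful expansion of all the geometric quantities along $\Psi_B$ to first order in $s$, using compactness of $\Sigma$ and positivity of $\beta$ to get the uniformity, and is the technical heart of the argument.
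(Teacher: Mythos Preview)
Your overall strategy --- work in Foliation B, compute $\J$ on radial test functions $\phi(s)$, exploit $\theta^+[\Sigma_s]=\lambda\beta s+O(s^2)$ to pin down the coefficient $1/\sqrt\lambda$, then read off the gradient bounds at extremal points from the barrier inequalities --- is the same as the paper's, and the gradient--estimate step is essentially identical to what is done there.

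The gap is in the application of the comparison principle to your barriers $f^\pm(s)=-\tfrac{1}{\sqrt\lambda}\log\frac{s}{s_1}+c^\pm\mp a(s-s_1)$. To compare $f$ with $f^+$ on $U_{s_1}=\{0<s\le s_1\}$ you must control the inner boundary $s\to 0$, where both $f$ and $f^+$ blow up. Your parenthetical remark (``get an a priori $O(\log)$ bound, then bootstrap'') does not close this: a crude bound $f\le -c_1\log s+C$ with some $c_1>0$ is useless here unless you already know $c_1\le 1/\sqrt\lambda$, which is exactly what you are trying to prove. So the argument as written is circular at the inner boundary, and there is no bootstrap available.

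The paper resolves this by never using $-\tfrac{1}{\sqrt\lambda}\log s$ itself as a barrier. Instead it introduces a one--parameter family
\[
v_{a,\gamma}(s)=\frac{1}{\sqrt\lambda}\int_s^{\bar s}\frac{dx}{x^\gamma}+as,
\]
checks (by an expansion of $\J[v_{a,\gamma}]$ to order $s^{2\gamma}$) that for suitable fixed $a$ one has $\J[v_{a,\gamma}]\le 0$ for all $\gamma\in(1,\tfrac54)$ and $\J[v_{a,\gamma}]\ge 0$ for all $\gamma\in(\tfrac34,1)$, on a fixed interval $(0,s_0]$. For $\gamma<1$ the sub-barrier $v_{a,\gamma}$ is \emph{bounded} at $s=0$, so it is trivially below the blowup solution there; letting $\gamma\to 1^-$ gives the sharp lower bound. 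For $\gamma>1$ the super-barrier $v_{a,\gamma}$ blows up like $s^{1-\gamma}$, i.e.\ polynomially, hence faster than any logarithm; but to exploit this one first needs to know $f\le -c_1\log s+C$ for \emph{some} $c_1$. The paper establishes this separately (and this is itself nontrivial): it uses the shifted barriers $w_\epsilon(s)=-\log(s-\epsilon)$, which blow up at $s=\epsilon>0$ where $f$ is finite, and pieces them together iteratively over nested intervals $[\tfrac{(1+d\alpha)^{n+1}}{(1+\alpha)^n}\epsilon_0,\,\tfrac{(1+d\alpha)^n}{(1+\alpha)^{n-1}}\epsilon_0]$ to build a global upper bound $W(s)$ of order $-c_1\log s$. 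Only after this order control can the $\gamma>1$ super-barriers be compared with $f$ at $s\to 0$; one then lets $\gamma\to 1^+$ to obtain the sharp upper bound. The two-step structure (crude order control, then coefficient refinement via a limiting family whose members have strictly faster or strictly slower blowup than the target) is the missing idea in your proposal.
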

\begin{remark}
	The solution $ f $ might  blow up at some other boundary components, but as long as that boundary component  is a strictly stable MOTS,  the blowup rate of $ f $ near that component can be described by the principal eigenvalue of that component. 
\end{remark}
From the gradient estimate for $ f $ under Foliation B at extreme points on each slice $ \Sigma_s $, we immediately have the following result:
\begin{corollary}\label{corollary1}
	Assume conditions in Theorem \ref{thm1}. Then under Foliation B, if $ f $ is constant on $ \Sigma_{s_1} $ for some $ s_1\in(0,s_0) $, then  the following gradient estimate holds on on  $ \Sigma_{s_1} :$ $$ \frac{1}{\sqrt{\lambda}{s_1}}-a\leq|\partial_s f(\cdot,{s_1})|\leq \frac{1}{\sqrt{\lambda}{s_1}}+a$$
\end{corollary}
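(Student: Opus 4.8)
The plan is to read the corollary off directly from the last part of Theorem~\ref{thm1}, namely the pointwise gradient estimates at the extreme points of each slice. The single observation needed is that the hypothesis ``$f$ is constant on $\Sigma_{s_1}$'' forces $\sup_{\Sigma_{s_1}}f=\inf_{\Sigma_{s_1}}f$, so that \emph{every} point $p\in\Sigma_{s_1}$ is simultaneously a point at which $f|_{\Sigma_{s_1}}$ attains its maximum and a point at which it attains its minimum. In the notation of Theorem~\ref{thm1} applied on the slice $\Sigma_{s_1}$ (which is legitimate since $s_1\in(0,s_0)$), an arbitrary $p\in\Sigma_{s_1}$ may therefore be taken both as the point $p_1$ and as the point $p_2$.

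Concretely, I would apply the $p_1$-estimate of Theorem~\ref{thm1} at $p$ to get $|\partial_s f(p,s_1)|\le \frac{1}{\sqrt{\lambda}\,s_1}+a$, and the $p_2$-estimate at the same $p$ to get $|\partial_s f(p,s_1)|\ge \frac{1}{\sqrt{\lambda}\,s_1}-a$; since $p\in\Sigma_{s_1}$ is arbitrary, the two-sided bound $\frac{1}{\sqrt{\lambda}\,s_1}-a\le|\partial_s f(\cdot,s_1)|\le\frac{1}{\sqrt{\lambda}\,s_1}+a$ holds on all of $\Sigma_{s_1}$, which is the assertion. If one prefers a self-contained derivation not routed through the extreme-point statement, the same conclusion follows from the barrier control \eqref{thm barrier} anchored at $s_1$: writing $c:=\sup_{\Sigma_{s_1}}f=\inf_{\Sigma_{s_1}}f$, for $0<s\le s_1$ the two barriers squeeze $f(p,s)-c$ between $-\frac{1}{\sqrt{\lambda}}\log\frac{s}{s_1}+a(s-s_1)$ and $-\frac{1}{\sqrt{\lambda}}\log\frac{s}{s_1}-a(s-s_1)$; dividing by $s-s_1<0$ and letting $s\uparrow s_1$ pins the one-sided derivative in the interval $[-\frac{1}{\sqrt{\lambda}\,s_1}-a,\ -\frac{1}{\sqrt{\lambda}\,s_1}+a]$, and the analogous two-sided statement then gives the estimate on $|\partial_s f|$.

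There is essentially no obstacle here; the corollary is a bookkeeping consequence of the pointwise nature of the extreme-point estimates in Theorem~\ref{thm1}, together with the fact that constancy collapses $\sup$ and $\inf$ on the slice. The only minor points to keep in mind are that $\partial_s f(p,s_1)$ is to be understood via the limit defined in the statement of Theorem~\ref{thm1} (available since $f$ is smooth away from $\Sigma$), and that in order to interpret the bound as a genuine two-sided estimate on $|\partial_s f|$ one wants $a\,s_1<1/\sqrt{\lambda}$, so that the lower bound is a positive quantity; this is automatic after possibly shrinking the constant $s_0$ furnished by Theorem~\ref{thm1}.
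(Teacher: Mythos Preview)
Your proposal is correct and matches the paper's approach exactly: the paper states that the corollary follows ``immediately'' from the gradient estimate for $f$ at extreme points on each slice $\Sigma_s$, which is precisely your observation that constancy of $f$ on $\Sigma_{s_1}$ makes every point simultaneously a maximum and a minimum, so both the $p_1$ and $p_2$ estimates of Theorem~\ref{thm1} apply at every point. Your alternative derivation directly from the barrier control \eqref{thm barrier} is also valid and essentially reproduces the argument used in the paper to derive the extreme-point estimates themselves.
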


Let $ U $ be a neighborhood of $ \Sigma $. We can define a coordinate system on the neighborhood $U\times\R$ of $ \Sigma\times\R $ in $ \M\times\R $ by taking the fourth coordinate $ s $ to be  the parameter in Foliation B.  Let $\bar \Psi : \Sigma \times(-\epsilon,\epsilon)\to \M$ be the map
\begin{equation*}
\bar\Psi
:
\Sigma\times (-\epsilon,\epsilon)\times \R \to \M \times\R
:
(p,s,z) \mapsto \big( \Psi_B(p,s), z \big).    
\end{equation*}

Therefore,  under coordinate  system $ \bar{\Psi} $, following the argument in \cite[Proposition  4]{yau}, the graph of the blowup solution $ f $ near $ \Sigma $ on $ \M $ can be written as the graph of a function $ u $ on the cylindrical end $ \Sigma\times \R $.  In \cite[Section 4]{metzger}, J. Metzger provides a way to extend the $ C^0 $ super  control to $ C^2 $ super  control. Therefore, we are able to get some gradient estimates for   $ f $:

\begin{theorem}
	\label{thm2}
	
	Under the same assumptions as  Theorem \ref{thm1}, suppose $f$ is a blowup solution of Eq.\eqref{jang} in an open neighborhood $ U $ near $ \Sigma $, which is a compact boundary component and a strictly stable MOTS with principal eigenvalue $ \lambda>0 $. Denote $ N=\f $.
	Then,   under coordinate system $ \bar{\Psi} $, \par
	(1) there exist positive constants $\bar z $  and $C_1 $, $C_2 $, $C_3 $, such that $N \cap( U
	\times [\bar z,\infty))$ can be written as the graph of a function
	$u$ over $C_{\bar z}:= \Sigma \times [\bar z,\infty)$,  and
	\begin{gather} 
	|u(p,z)| + |^{C_{\bar z}}\nabla u(p,z)| + |^{C_{\bar z}}\nabla^2 u(p,z)| \leq C_1\exp(-\sqrt{\lambda} z)\\
	|u(p,z)|\geq	C_2\exp(-\sqrt{\lambda} z) \\
	|^{C_{\bar z}}\nabla u(p,z)|\geq	C_3\exp(-\sqrt{\lambda} z) 
	\end{gather} 
	where $^{C_{\bar z}}\nabla$ is the covariant derivative w.r.t. the induced metric on $C_{\bar{z}}$.\par 
	(2) Denote $ \Sigma_s=\Psi_B(\Sigma,s) $. Then there exists constant $ s_0 $, such that the following gradient estimates for $ f $:
	\begin{align}\label{cy barrier3}
	\frac{C_2}{C_1s }\leq	|\partial_s  f (p,s)  |\leq\frac{C_1}{C_3s }\\
	|\nabla^{\Sigma_s} f(p,s)|\leq \frac{\sqrt{2} C_1^2}{C_2 C_3}
	\end{align}
	hold for $ \forall p\in\Sigma$ , $\forall s\in(0, s_0] $. Here $ \nabla^{\Sigma_s}$ denotes the covariant derivative along $ \Sigma_s. $
\end{theorem}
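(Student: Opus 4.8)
The plan is to transfer $N=\f$ to the cylinder $\Sigma\times\R$, read off the $C^{0}$ decay of the cylindrical graph function from the barrier control of Theorem \ref{thm1}, bootstrap it to $C^{2}$ decay by interior elliptic estimates on the cylinder, establish matching lower bounds, and finally push everything back to $f$ via the inverse function theorem. To set up the cylindrical picture, note that by the barrier control \eqref{thm barrier} the solution $f$ blows up monotonically along the leaves $\Sigma_{s}$ as $s\to 0$; hence, exactly as in \cite[Proposition  4]{yau} together with the $C^{0}$--to--$C^{2}$ extension of \cite[Section 4]{metzger}, there is $\bar z$ so that $N\cap(U\times[\bar z,\infty))$ is the graph $\{(p,u(p,z),z)\}$ over $C_{\bar z}=\Sigma\times[\bar z,\infty)$ in the Foliation~B normal ($s$-)direction, and for points of $N$ one has $z=f(p,s)\iff s=u(p,z)$. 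Fixing $s_{1}$ in \eqref{thm barrier} and substituting $z=f(p,s)$, solving the two inequalities for $s=u(p,z)$ gives
\begin{equation*}
C_{2}\exp(-\sqrt{\lambda}\,z)\le u(p,z)\le C_{1}\exp(-\sqrt{\lambda}\,z),
\end{equation*}
which is the $C^{0}$ content of part (1).

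The main analytic step is upgrading this to derivative bounds with the \emph{same} exponential rate. Rewriting Jang's equation \eqref{jang} in the coordinates $(p,z)$ on $C_{\bar z}$ yields a quasilinear, uniformly elliptic equation for $u$ whose coefficients converge as $z\to\infty$ to those of a $z$-translation--invariant operator, whose linearization on the cylinder is precisely the one with indicial root $\sqrt{\lambda}$ dictated by the MOTS stability operator $L_{\Sigma}$ (this is exactly what makes $-\tfrac{1}{\sqrt{\lambda}}\log\tau$ the sharp blowup term in Theorem \ref{thm1}). On each slab $\Sigma\times[z-1,z+1]$ the coefficients are uniformly elliptic and H\"older with constants independent of $z$, so interior Schauder estimates --- applied after the rescaling $u_{z}(p,t):=e^{\sqrt{\lambda}z}u(p,z+t)$, which by the previous step is bounded above and below uniformly in $z$ and solves a rescaled equation with coefficients still controlled uniformly in $z$ --- give $\|u_{z}\|_{C^{2,\alpha}(\Sigma\times[-\tfrac12,\tfrac12])}\le C$ uniformly; undoing the rescaling produces $|u|+|{}^{C_{\bar z}}\nabla u|+|{}^{C_{\bar z}}\nabla^{2}u|\le C_{1}\exp(-\sqrt{\lambda}z)$ after relabeling $C_{1}$. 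This is essentially the procedure of \cite[Section 4]{metzger}. I expect this uniform-in-$z$ Schauder argument to be the main obstacle: one must check carefully that constant rescalings and $z$-translations keep the coefficients controlled independently of $z$, so that no power of $z$ and no loss of the rate $\sqrt{\lambda}$ creeps in upon differentiation.

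For the lower bounds I would first observe that, since $u(p,z)\to 0$, we have $u(p,z)=-\int_{z}^{\infty}\partial_{z}u(p,\zeta)\,d\zeta$, so $u\ge C_{2}e^{-\sqrt{\lambda}z}$ forces $\int_{z}^{\infty}|\partial_{z}u(p,\zeta)|\,d\zeta\ge C_{2}e^{-\sqrt{\lambda}z}$; combining this with the pointwise bound $|\partial_{z}u|\le C_{1}e^{-\sqrt{\lambda}\zeta}$ shows that on every slab $\Sigma\times[z,z+R]$ (for a fixed large $R$) there is a point with $|\partial_{z}u|\ge c\,e^{-\sqrt{\lambda}z}$. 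More cleanly, a compactness argument on the rescaled functions $e^{\sqrt{\lambda}z}u(\cdot,z)$ --- which subconverge to a solution of the limiting linear problem on $\Sigma$, necessarily a nonzero multiple $c\beta$ of the principal eigenfunction because of the two-sided $C^{0}$ bound --- gives $e^{\sqrt{\lambda}z}u(\cdot,z)\to c\beta$ and hence $e^{\sqrt{\lambda}z}\partial_{z}u(\cdot,z)\to-\sqrt{\lambda}\,c\,\beta$; since $\beta\ge 1$ on $\Sigma$ this yields $|\partial_{z}u(p,z)|\ge C_{3}e^{-\sqrt{\lambda}z}$ for $z$ large, hence also $|{}^{C_{\bar z}}\nabla u|\ge C_{3}e^{-\sqrt{\lambda}z}$, completing part (1). (The tangential part $|\nabla^{\Sigma}u|$ genuinely may vanish at critical points of $\beta$, which is why the lower bound is placed on the $z$-derivative.)

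Finally, part (2) is the inverse function theorem applied to the identity $u(p,f(p,s))=s$. Differentiating in $s$ gives $\partial_{s}f=1/\partial_{z}u$, and differentiating in $p$ gives $\nabla^{\Sigma_{s}}f=-(\partial_{s}f)\,\nabla^{\Sigma}u$ up to the bounded Foliation~B metric coefficients identifying $\Sigma_{s}$ with $\Sigma$. Feeding in $C_{2}e^{-\sqrt{\lambda}z}\le u\le C_{1}e^{-\sqrt{\lambda}z}$, $|\partial_{z}u|\le|{}^{C_{\bar z}}\nabla u|\le C_{1}e^{-\sqrt{\lambda}z}$ and $|\partial_{z}u|\ge C_{3}e^{-\sqrt{\lambda}z}$, together with the equivalent form $C_{1}^{-1}s\le e^{-\sqrt{\lambda}z}\le C_{2}^{-1}s$ of the $C^{0}$ bound (recall $s=u(p,z)$), one reads off $\tfrac{C_{2}}{C_{1}s}\le|\partial_{s}f(p,s)|\le\tfrac{C_{1}}{C_{3}s}$ and $|\nabla^{\Sigma_{s}}f(p,s)|\le\tfrac{\sqrt{2}\,C_{1}^{2}}{C_{2}C_{3}}$ (the $\sqrt{2}$ absorbing the Foliation~B metric factor), valid for $s\in(0,s_{0}]$ with $s_{0}$ chosen so that $f(\cdot,s)\ge\bar z$ there. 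This last step is a routine computation once part (1) is in hand.
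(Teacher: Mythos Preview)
Your overall architecture—transfer to the cylinder, read off two-sided $C^{0}$ bounds on $u$ from the barriers of Theorem \ref{thm1}, bootstrap to $C^{2}$ via Schauder on $z$-slabs as in \cite[Section 4]{metzger}, then invert to $f$—matches the paper. The substantive difference is how you obtain the pointwise lower bound $|\partial_{z}u|\ge C_{3}e^{-\sqrt{\lambda}z}$. The paper does \emph{not} use a compactness/Liouville argument: it (i) invokes the gradient estimate at extremal points from Theorem \ref{thm1}, which gives $|\partial_{s}f(p_{z},s)|\le \tfrac{1}{\sqrt{\lambda}s}+a$ at the slice maximum $p_{z}$ and hence $|\partial_{z}u(p_{z},z)|\ge c\,e^{-\sqrt{\lambda}z}$ at that single point, and then (ii) spreads this to all of $\Sigma$ by applying the Harnack inequality for $\langle \bar e_{4},\nu\rangle=-\partial_{z}u/\sqrt{1+|^{C_{\bar z}}\nabla u|^{2}}$ from \cite[Proposition 2]{yau}, after covering the thin tube by finitely many Harnack balls. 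This is entirely self-contained once the Schoen--Yau a priori estimates are granted.

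Your compactness route is reasonable in spirit but skips the nontrivial step: you assert that any subsequential limit $w$ of $e^{\sqrt{\lambda}z}u(\cdot,z+\cdot)$ is ``necessarily a nonzero multiple $c\beta$'' and then read off $\partial_{t}w\neq 0$. The limit $w$ does solve the linearized equation $\partial_{t}^{2}w-L_{\Sigma}w=0$ on $\Sigma\times\R$ with the two-sided bound $C_{2}e^{-\sqrt{\lambda}t}\le w\le C_{1}e^{-\sqrt{\lambda}t}$, but concluding $w=c\,\beta\,e^{-\sqrt{\lambda}t}$ amounts to a Liouville-type classification for this non-self-adjoint problem (equivalently, after dividing by $\beta e^{-\sqrt{\lambda}t}$, a Liouville theorem for a drift Laplacian with no zeroth-order term on the noncompact cylinder). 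This can be done, but it is not immediate and is where your sketch has a genuine gap; the paper's Harnack step sidesteps it entirely. Your integral/mean-value observation only produces a good point in each column $\{p\}\times[z,z+R]$ and by itself does not propagate to all of $\Sigma\times\{z\}$. Minor point: the $\sqrt{2}$ in the tangential bound arises from summing the two orthonormal components $\partial_{1}f,\partial_{2}f$, not from the lapse $\beta$ of Foliation B.
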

\begin{remark}
	From  Eq. \eqref{thm barrier}, we can see that the barriers for $ f  $ depend  on $ \inf_{\Sigma_{\tau_0}}f $ and $ \sup_{\Sigma_{\tau_0}}f $. This is due to the fact that Jang's equation \eqref{jang}  is invariant under vertical translation. Therefore, $ C_1,C_2, C_3, \bar{ z} $ also depend  on these quantities.	
	If we hope that $ C_1,C_2,C_3,\bar{ z} $ can be determined by the initial data, then   more constraints are needed, e.g. outer boundary condition or decay in infinity.
\end{remark}

If we put more constraints on $ (\M,g,k) $ such that it meets the conditions in \cite[Theorem 3.1, Remark 3.3]{metzger}, then a blowup solution $ f_0 $ of Jang's equation  \eqref{jang}  can be constructed at outermost MOTS, with appropriate a priori estimates.  On this specific Jang's slice $ N_0=\f_0 $, we can get the same estimates as  in Theorem \ref{thm1} and \ref{thm2}, with constants $ C_1,C_2, C_3, \bar{ z} $ only depending on the geometry of the initial data. 

\begin{theorem}\label{thm3}

	Besides the conditions and notations in Theorem \ref{thm2},  we further assume that $ \M $ is a 3-dim asymptotically flat manifold with one end, and  satisfies the dominant energy condition. Also assume that $ \Sigma $ is the only boundary component, and is a compact outermost MOTS. We further assume that there is no MITS in $ \M $. Then a function $ f_0 $ on $ \M $ can be constructed as in \cite[Theorem 3.1]{metzger}, such that $ \J[f_0]=0 $, $ f_0(x)\to 0 $ when $ |x|\to \infty $, and it only blows up at $ \Sigma $. Denote $ N_0=\f_0 $. Then,   under coordinate system $ \bar{\Psi} $, \par
	
	(1) there exist positive constants $\bar z $  and $C_1 $, $C_2 $, $C_3$  only depending on the initial data,  such that $N_0 \cap( U
	\times [\bar z,\infty))$ can be written as the graph of a function
	$u_0$ over $C_{\bar z}:= \Sigma \times [\bar z,\infty)$,  and
	\begin{gather} 
	|u_0(p,z)| + |^{C_{\bar z}}\nabla u_0(p,z)| + |^{C_{\bar z}}\nabla^2 u_0(p,z)| \leq C_1\exp(-\sqrt{\lambda} z)\\ 
	|u_0(p,z)|\geq	C_2\exp(-\sqrt{\lambda} z) \\
	|^{C_{\bar z}}\nabla u_0(p,z)|\geq	C_3\exp(-\sqrt{\lambda} z) 
	\end{gather}
	where $^{C_{\bar z}}\nabla$ is the covariant derivative w.r.t. the induced metric on $C_{\bar{z}}$.\par 
	(2) Denote $ \Sigma_s=\Psi_B(\Sigma,s) $. Then there exists constant $ s_0 $, such that the following gradient estimates for $ f_0 $:
	\begin{align}\label{cy barrier4}
	\frac{C_2}{C_1s } \leq	|\partial_s  f_0 (p,s)  |\leq\frac{C_1}{C_3s }\\
	|\nabla^{\Sigma_s} f_0(p,s)|\leq \frac{\sqrt{2} C_1^2}{C_2 C_3}
	\end{align}
	hold for $ \forall p\in\Sigma$ , $\forall s\in(0, s_0] $. Here $ \nabla^{\Sigma_s}$ denotes the covariant derivative along $ \Sigma_s. $
\end{theorem}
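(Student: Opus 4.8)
The plan is to read Theorem~\ref{thm3} as the specialization of Theorems~\ref{thm1} and~\ref{thm2} to Metzger's solution $f_0$, the only genuinely new point being the claim that all the constants are computable from $(\M,g,k)$. By the Remark following Theorem~\ref{thm2}, the sole obstruction to this is the appearance of $\sup_{\Sigma_{s_1}}f_0$ and $\inf_{\Sigma_{s_1}}f_0$ in the barrier \eqref{thm barrier}: the constant $a$ and the radii $s_0,\bar z$ produced in Theorems~\ref{thm1}--\ref{thm2} already depend only on the local geometry near $\Sigma$, hence only on the initial data. So the steps are: (i) invoke \cite[Theorem~3.1, Remark~3.3]{metzger} for the existence of $f_0$ and apply Theorems~\ref{thm1}--\ref{thm2} to it near $\Sigma$; (ii) fix a reference slice $\Sigma_{s_1}$ in Foliation~B, with $s_1$ determined by the geometry near $\Sigma$, and bound $f_0$ on $\Sigma_{s_1}$ by an initial‑data constant using Metzger's a priori estimates (this is the content of Section~\ref{apriori}); (iii) re‑run the arguments of Theorems~\ref{thm1} and~\ref{thm2} with this improved input.

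For (i): under the stated hypotheses — asymptotically flat with one end, dominant energy condition, $\Sigma$ the only boundary component and an outermost MOTS, no MITS in $\M$ — \cite[Theorem~3.1]{metzger} produces $f_0$ with $\J[f_0]=0$, $f_0\to 0$ at infinity, blowing up only at $\Sigma$; since $\Sigma$ is moreover strictly stable, Theorems~\ref{thm1} and~\ref{thm2} are applicable to $f_0$ in a neighborhood of $\Sigma$. For (ii), the crucial point is that $f_0$ is obtained by an approximation scheme with uniform a priori estimates: uniform interior gradient bounds together with uniform $C^0$ bounds coming from barriers near $\Sigma$ (built from the outermost/strict‑stability structure) and barriers in the asymptotically flat end (built from the decay $tr_g k=O(|x|^{-\beta})$, $\beta>2$), the latter combined with the normalization $f_0\to 0$ which removes the vertical‑translation ambiguity. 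Consequently $|f_0|$ is bounded on the complement of any fixed collar $\{0<\tau<s_1\}$ of $\Sigma$ by a constant $\Lambda$ depending only on $(\M,g,k)$; choosing $s_1$ small enough that $\Sigma_{s_1}\subset U_{s_0}$ with $s_0$ the barrier radius of Theorem~\ref{thm1}, we get $|\sup_{\Sigma_{s_1}}f_0|,\,|\inf_{\Sigma_{s_1}}f_0|\le\Lambda$. This is the step I expect to demand the most care: one must go through Metzger's construction to verify that the estimates on $\{\tau\ge s_1\}$ genuinely depend on nothing but the initial data, with particular attention to the role of the decay at infinity in pinning down the additive normalization.

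For (iii): feeding $s_1,\Lambda$ and the geometric constant $a$ into \eqref{thm barrier}, and using $a(s-s_1)\in[-as_1,0]$ for $0<s\le s_1$, one obtains
\begin{equation*}
-\tfrac{1}{\sqrt{\lambda}}\log\tfrac{s}{s_1}-\Lambda-as_1\ \le\ f_0(\cdot,s)\ \le\ -\tfrac{1}{\sqrt{\lambda}}\log\tfrac{s}{s_1}+\Lambda+as_1 ,
\end{equation*}
a two‑sided bound with initial‑data constants. Passing to the cylindrical picture exactly as in Theorem~\ref{thm2} — writing $N_0\cap(U\times[\bar z,\infty))$ as the graph of $u_0$ over $C_{\bar z}$, where $u_0(p,z)$ is the Foliation‑B parameter $s$ at which $f_0(p,\cdot)$ takes the value $z$, so that $f_0(p,u_0(p,z))=z$ — the displayed inequality inverts to $C_2 e^{-\sqrt{\lambda}z}\le u_0(p,z)\le C_1 e^{-\sqrt{\lambda}z}$ with $C_1=s_1 e^{\sqrt{\lambda}(\Lambda+as_1)}$ and $C_2=s_1 e^{-\sqrt{\lambda}(\Lambda+as_1)}$, and $\bar z$ can likewise be taken to depend only on the initial data. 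The $C^1$, $C^2$ upper bounds on $u_0$ and the lower bound on $|^{C_{\bar z}}\nabla u_0|$ then follow by running Metzger's $C^0\!\to\!C^2$ super‑control extension \cite[Section~4]{metzger} on the cylindrical end, whose constants propagate the initial‑data dependence once the $C^0$ control does, yielding $C_3$. Finally part~(2) requires no new work: \eqref{cy barrier4} is \eqref{cy barrier3} of Theorem~\ref{thm2} with the constants $C_1,C_2,C_3$ now known to depend only on the initial data, so one simply substitutes.
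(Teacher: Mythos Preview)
Your proposal is correct and follows essentially the same approach as the paper: the paper's Section~\ref{apriori} likewise reduces Theorem~\ref{thm3} to Theorems~\ref{thm1}--\ref{thm2} plus a uniform $C^0$ bound for $f_0$ on a reference slice $\Sigma_{s_0}$, obtained from the a priori estimates in Metzger's approximation scheme. The paper is slightly more specific than you are in step~(ii), pinpointing \cite[Proposition~2]{yau} (the local parametric estimate), \cite[Proposition~3]{yau} (a priori bounds at the asymptotically flat end), and the gradient estimate $\sup_{\tilde\M}|f_t|+\sup_{\tilde\M}|\nabla f_t|\le C/t$ from Metzger's construction as the ingredients yielding a uniform-in-$t$ bound $|\sup_{\M'}f_t|<c_1$ on any $\M'$ whose boundary avoids $\Sigma$; conversely, you are more explicit than the paper in step~(iii) about how the constants $C_1,C_2$ are computed from $s_1,\Lambda,a$.
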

\begin{remark}\label{c}
	Because Foliation A and Foliation B are comparable,   Theorem \ref{thm3} also holds under Foliation A, but with  constants $ C_1', C_2', C_3' $, which are different from  $ C_1, C_2, C_3$  by quantities that can be determined by the local geometry near $ \Sigma $.
\end{remark}
From the construction procedure of $ f_0 $, c.f.\cite{metzger}, we can find that if the initial data set $ (\M,g,k) $ is spherically symmetric, then $ f_0 $ is also  spherically symmetric. Then follows Corollary 3 we have:

\begin{corollary}\label{corollary2}
	Assume conditions and notations in Theorem \ref{thm3} and further assume that  $ (\M,g,k) $ is spherically symmetric, then $ f_0 $ is also  spherically symmetric. Furthermore there exist constants $ a, \tau_0 $ only depending on the local geometry near horizon,  such that 
	$$ \frac{1}{\sqrt{\lambda}\tau}-a\leq|\partial_\tau f_0|\leq \frac{1}{\sqrt{\lambda}\tau}+a$$ holds for $ \tau\in(0,\tau_0] $, where $ \tau $ is the geodesic distance from $ \Sigma $.
\end{corollary}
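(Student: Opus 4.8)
\emph{Proof proposal.} The plan is to proceed in three stages: (i) derive the spherical symmetry of $f_0$ from Metzger's construction; (ii) show that in the spherically symmetric case Foliation B coincides with the geodesic-distance foliation, so that $s=\tau$; and (iii) invoke Corollary \ref{corollary1}.

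\emph{Stage (i): $f_0=f_0(\tau)$.} The rotation group $SO(3)$ acts on $(\M,g,k)$ by isometries, fixing the symmetry sphere $\Sigma$ setwise and preserving the asymptotically flat end. In \cite[Theorem 3.1]{metzger} the function $f_0$ is obtained as a limit of solutions of regularized auxiliary problems which are uniquely solvable (by a maximum-principle argument, exactly as the equation $\HH[f]-\PP[f]=tf$ in \cite{yau}) and whose operators are natural with respect to isometries. Hence, for every $R\in SO(3)$, the pullback $R^{\ast}f_0$ solves the same problem with the same blowup locus $\Sigma$ and the same decay $f_0\to 0$ at infinity, and therefore $R^{\ast}f_0=f_0$. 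Thus $f_0$ is $SO(3)$-invariant, and since the level sets $\Sigma_\tau=\{\tau(x)=\tau\}$ of the distance to $\Sigma$ are exactly the $SO(3)$-orbits, $f_0$ is constant on each $\Sigma_\tau$; write $f_0=f_0(\tau)$.

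\emph{Stage (ii): $\beta\equiv 1$ and $s=\tau$.} In spherical symmetry every quantity entering $L_\Sigma$ is $SO(3)$-invariant, hence constant on $\Sigma$; moreover the tangential form $S$ vanishes, since $k(X,\nu)=0$ whenever $X$ is tangent to a symmetry sphere. Therefore $L_\Sigma h=-{}^{\Sigma}\Delta h+c\,h$ for a constant $c$, so the principal eigenvalue is $\lambda=c$, attained by constant eigenfunctions (the remaining eigenvalues are $c+\mu_j$ with $\mu_j>0$ the nonzero eigenvalues of $-{}^{\Sigma}\Delta$). With the normalization $\min_\Sigma\beta=1$ this forces $\beta\equiv 1$. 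But then the defining relation $\partial_s\Psi_B(p,s)=\nu_s$ of Foliation B is precisely the unit-speed normal flow of $\Sigma$, whose integral curves are the normal geodesics $s\mapsto\exp_p(s\nu)$ and whose level sets are the geodesic spheres; hence $\Psi_B=\Psi_A$ on their common domain, so $s=\tau$ and $\partial_s=\partial_\tau$ there.

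\emph{Stage (iii): conclusion.} By Stage (i), $f_0$ is constant on every slice $\Sigma_s$ with $s\in(0,s_0)$, so Corollary \ref{corollary1} applies on each of them and gives
$$\frac{1}{\sqrt{\lambda}\,s}-a\le|\partial_s f_0(\cdot,s)|\le\frac{1}{\sqrt{\lambda}\,s}+a,$$
with $a$ and $s_0$ depending only on the local geometry near $\Sigma$. Substituting $s=\tau$, $\partial_s=\partial_\tau$ from Stage (ii) and setting $\tau_0=s_0$ yields the claim. The main obstacle is Stage (i): one must extract from Metzger's construction that the symmetry is genuinely propagated, i.e. that the auxiliary problems used there are $SO(3)$-equivariant and have unique solutions, so that the limit $f_0$ is itself invariant rather than merely admitting an invariant representative. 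Once $f_0=f_0(\tau)$ and $\beta\equiv 1$ are in hand, the remaining steps are immediate from Corollary \ref{corollary1}.
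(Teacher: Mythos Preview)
Your proposal is correct and follows essentially the same route as the paper, which dispatches the corollary in one line by noting that Metzger's construction preserves spherical symmetry and then invoking Corollary~\ref{corollary1}. Your Stage~(ii), showing that in spherical symmetry the principal eigenfunction is constant so that $\beta\equiv 1$ and hence Foliation~B coincides with Foliation~A (giving $s=\tau$), makes explicit a step the paper leaves implicit when passing from the $s$-variable of Corollary~\ref{corollary1} to the geodesic distance $\tau$; this is a genuine clarification and your argument for it is sound.
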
\par

Theorem \ref{thm3} actually provides another way to foliate the neighborhood of $ \Sigma $. The level set of $ f  $: $$ \Sigma_\gamma=\{f_0=-\frac{1}{\sqrt{\lambda}}\log\gamma\}  $$ will form a foliation near $ \Sigma $, because it can be proved from Theorem \ref{thm3} that $ \partial_\tau f $ is uniformly away from zero. Furthermore, from Theorem \ref{thm3} it can be proved that this foliation is comparable with the Foliation A and B. 

\begin{foliationC}
	Under the same assumptions and notations as  Theorem \ref{thm3}. 
	Define the foliation $ \Psi_C $ to be the map:
	\begin{align}
	&\Psi_C : \Sigma\times [0,\bar \gamma] \to \M \mbox{ }\mbox{such that}\\ \nonumber
	(1)& \Psi_C(p,0)=p \mbox{ }\mbox{for}\mbox{ }p\in\Sigma\\ \nonumber
	(2)&  \Psi_C(p,\gamma)=u_0(p,-\frac{1}{\sqrt{\lambda}}\log\gamma)
	\end{align}
	
\end{foliationC}
Then Theorem \ref{thm3} implies the following:
\begin{corollary}
	Denote $ \tau $ to be the geodesic distance to $ \Sigma $, and $ \gamma $ to be the parameter of Foliation C. Then there exist  positive constants $\alpha_1 $, $\alpha_2 $ only depending on the initial data, such that: 
	\begin{align*}
	\alpha_1^{-1}\tau\leq\gamma\leq \alpha_1 \tau\\
	\alpha_2^{-1}\leq\frac{\partial\gamma}{\partial\tau}\leq\alpha_2
	\end{align*}
\end{corollary}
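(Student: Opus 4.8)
The plan is to use that, by the very definition of Foliation C, the parameter $\gamma$ is simply the function $e^{-\sqrt{\lambda}f_0}$ transported to a one-sided neighborhood of $\Sigma$. Indeed, a point $x$ near $\Sigma$ lies on the leaf $\Sigma_\gamma=\{f_0=-\tfrac1{\sqrt\lambda}\log\gamma\}$ exactly when $\gamma=\gamma(x):=e^{-\sqrt{\lambda}f_0(x)}$; this is a well-defined, smooth, positive function with $\gamma\to 0$ at $\Sigma$, the point being that part~(2) of Theorem~\ref{thm3} forces $\partial_\tau f_0\neq 0$, so the level sets of $f_0$ genuinely foliate a collar of $\Sigma$ and $\Psi_C$ is a diffeomorphism there. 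Both claimed estimates therefore reduce to statements about $f_0$.

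For the first estimate I would write $x=\Psi_B(p,s)$, so that $s=u_0(p,f_0(x))$, and invoke part~(1) of Theorem~\ref{thm3}: $C_2\,e^{-\sqrt{\lambda}f_0(x)}\le s\le C_1\,e^{-\sqrt{\lambda}f_0(x)}$, i.e. $C_2\,\gamma(x)\le s\le C_1\,\gamma(x)$. Since Foliations A and B are comparable with constants controlled by the initial data, the Foliation B parameter $s$ and the geodesic distance $\tau$ are mutually bounded, and combining the two comparisons yields $\alpha_1^{-1}\tau\le\gamma\le\alpha_1\tau$. (Equivalently, one may quote the logarithmic barrier of Theorem~\ref{thm1}, which for $f_0$ reads $|f_0+\tfrac1{\sqrt\lambda}\log\tau|\le C$ with $C$ depending only on the initial data, and exponentiate.)

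For the second estimate I would differentiate $\gamma=e^{-\sqrt{\lambda}f_0}$ along the unit-speed normal geodesics defining Foliation A, along which $\tau$ is precisely arc length, obtaining $\partial_\tau\gamma=-\sqrt{\lambda}\,(\partial_\tau f_0)\,\gamma$. Because $f_0$ blows up at $\Sigma$ it is strictly decreasing in $\tau$, so $\partial_\tau f_0<0$ and $\partial_\tau\gamma=\sqrt{\lambda}\,|\partial_\tau f_0|\,\gamma>0$. Reading part~(2) of Theorem~\ref{thm3} in Foliation A (Remark~\ref{c}) pinches $|\partial_\tau f_0|$ between $c_1/\tau$ and $c_2/\tau$ for positive constants $c_1,c_2$ depending only on the initial data; substituting $\alpha_1^{-1}\tau\le\gamma\le\alpha_1\tau$ from the first part cancels the $\tau$ and leaves $\sqrt{\lambda}\,c_1\alpha_1^{-1}\le\partial_\tau\gamma\le\sqrt{\lambda}\,c_2\alpha_1$, which is the asserted bound with $\alpha_2=\max\bigl(\sqrt{\lambda}\,c_2\alpha_1,\;(\sqrt{\lambda}\,c_1\alpha_1^{-1})^{-1}\bigr)$.

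The only genuinely delicate part is the passage between foliations: one must verify that rewriting the estimates of Theorem~\ref{thm3}, which are phrased in the Foliation B parameter $s$ (and via $u_0$ on the cylinder), in terms of the geodesic distance $\tau$ costs only factors controlled by the initial data, i.e. that the Jacobian $\partial s/\partial\tau$ is bounded above and below by such constants. This is exactly what the comparability of Foliations A and B together with Remark~\ref{c} provide, so no new analysis is needed; one should also note that the value of $\partial\gamma/\partial\tau$ is insensitive to whether it is computed along Foliation A curves or along Foliation C curves, since in either case the tangential gradients of $\gamma$ and of $\tau$ are dominated by their respective normal derivatives, which are the quantities just estimated.
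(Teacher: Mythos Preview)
Your proposal is correct and is precisely the argument the paper has in mind: the corollary is stated as an immediate consequence of Theorem~\ref{thm3} with no proof given, and the intended derivation is exactly to read off $C_2\gamma\le s\le C_1\gamma$ from the bounds on $u_0$ in part~(1), convert $s$ to $\tau$ via the comparability of Foliations~A and~B, and then differentiate $\gamma=e^{-\sqrt{\lambda}f_0}$ and plug in the gradient bounds of part~(2). Your closing caveat about which curves $\partial\gamma/\partial\tau$ is computed along is more care than the paper takes; the natural reading is along the Foliation~A geodesics, where your computation is clean.
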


Recall that on a Jang's slice, the   scalar curvature $ \ov{R} $   can be written as:
\begin{equation}\label{syid}
\ov{R}=16 \pi(\mu-J(\omega))+|h-k|^2_{\ov{g}}+2|q|^2_{\ov{g}}-2div_{\ov{g}}(q)
\end{equation}
where $ \ov{g} $ is the metric on Jang's slice, h is the mean curvature of Jang's slice embedded into $ \M\times\R $, and 
\begin{align*}
&\omega_i=\frac{\nabla_i f}{\sqrt{1+|\nabla f|^2}}\\
& q_i=\frac{f^j}{\sqrt{1+|\nabla f|^2}} (h_{ij}-k_{ij})
\end{align*}

From the above theorems,  if  we have one  more constraint  on the coefficients of the gradient estimates in Theorem \ref{thm3}, then 
we are able to prove a Penrose-like inequality with the help of spinor arguments. Furthermore, from \cite[Proposition 2]{yau}, we know that $ |q|_{\ov{g}} $ is bounded near $ \Sigma $. We are also able to prove the same Penrose-like inequality if we put constraint on this upper bound for $ |q|_{\ov{g}} $ :

\begin{theorem}
	\label{thm4}
	Assume the same conditions and notations of Theorem \ref{thm3}. Suppose Condition 1 and 2 to be the following:\par
	
	Condition 1: $$\lambda \frac{C_1^2}{C_3^2}(1+\frac{2 C_1^4}{C_2^2C_3^2})<4, $$ where $ C_1, C_2, C_3 $ are the constants of the gradient estimates in Theorem \ref{thm3};\par
	Condition 2:$$|q|_{\ov{g}}<2\sqrt{\lambda}\mbox{ }\mbox{near}\mbox{ }\Sigma.$$\par
	If one of the above conditions hold, then  we have the following Penrose-like inequalities:
	\begin{equation}
	m\ge \theta\sqrt{\dfrac{|\Sigma|_g}{16\pi}}
	\end{equation}
	where $m$ is the ADM mass, and $ \theta $ is a positive constant given by Eq.\eqref{theta}. $ \theta $  only depends on the geometry of  $ (\M,g,k) $, but does not depend on $ m $. 
\end{theorem}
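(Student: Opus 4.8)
\medskip

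\noindent The plan is to run a spinorial (Witten-type) argument on the Jang slice $N_0 = \f_0$, using the scalar curvature formula Eq.\eqref{syid} together with the sharp blowup control from Theorem \ref{thm3} to control the boundary term over the cylindrical end. Concretely, I would recall that on the conformally changed metric $\bar g$ on $N_0$ one has, via Eq.\eqref{syid}, $\bar R = 16\pi(\mu - J(\omega)) + |h-k|_{\bar g}^2 + 2|q|_{\bar g}^2 - 2\,\mathrm{div}_{\bar g}(q)$; the dominant energy condition $\mu \geq |J|$ forces $\mu - J(\omega) \geq 0$ since $|\omega|_{\bar g} \leq 1$, so $\bar R \geq 2|q|_{\bar g}^2 - 2\,\mathrm{div}_{\bar g}(q)$. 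Next I would conformally rescale $\bar g$ by a factor $u$ solving a linear elliptic equation adapted to the term $q$ (as in the Schoen--Yau reduction and in \cite{coupled}) so as to absorb the divergence term, obtaining a metric $\tilde g = u^4 \bar g$ whose scalar curvature is nonnegative \emph{provided} the relevant coefficient inequality holds — and this is exactly where Condition 1 or Condition 2 enters: the pointwise size of $q$ near $\Sigma$, bounded by $2\sqrt{\lambda}$ either directly (Condition 2) or through the gradient estimates of Theorem \ref{thm3} (Condition 1, via $|q|_{\bar g}$ controlled by $C_1,C_2,C_3$ and $\lambda$), must be small enough relative to the eigenvalue $\lambda$ that governs the exponential decay rate of $u_0$ on the cylinder so that the conformal Laplacian on $N_0$ admits a positive solution with the correct asymptotics.

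\medskip

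\noindent With $\tilde g$ in hand I would solve the Witten equation $\slashed{D}\phi = 0$ on $N_0$ for a spinor $\phi$ asymptotic at the asymptotically flat end to a constant spinor of unit norm; existence follows from $\tilde R \geq 0$ together with the cylindrical end being, after the conformal change, a geometry on which the Dirac operator has trivial kernel in the appropriate weighted space — here one uses that the cross-section $\Sigma$ is a \emph{strictly stable} MOTS, so the relevant cylindrical Dirac/Laplace operator has a spectral gap and no obstruction. The Lichnerowicz--Weitzenböck identity then gives
\begin{equation*}
m \geq \frac{1}{16\pi}\int_{N_0}\left(4|\tilde\nabla\phi|^2 + \tilde R\,|\phi|^2\right)\,dV_{\tilde g} \;-\; (\text{boundary contribution over the end }\Sigma\times\{z\to\infty\}),
\end{equation*}
and the whole point of Theorem \ref{thm3} is that the graph function $u_0$ and its derivatives decay like $\exp(-\sqrt{\lambda}z)$ with \emph{two-sided} bounds, so the conformal factor $u$ has a definite rate on the cylinder and the boundary term over $\Sigma\times\{z=T\}$ converges, as $T\to\infty$, to a computable positive multiple of $\int_\Sigma |\phi|^2$, which in turn is estimated from below by $|\Sigma|_g$ times a constant built from $\lambda$, $C_1$, $C_2$, $C_3$ (the precise constant being what Eq.\eqref{theta} records as $\theta$). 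Combining, $m \geq \theta\sqrt{|\Sigma|_g/16\pi}$ with $\theta$ depending only on the geometry of $(\M,g,k)$ and not on $m$.

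\medskip

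\noindent The main obstacle I expect is the \emph{boundary analysis on the cylindrical end}: one must show that the conformally changed metric $\tilde g$ on $\Sigma\times[\bar z,\infty)$ is complete and controlled well enough that (i) the Witten spinor exists with finite Dirac energy, and (ii) the boundary integral at infinity has a finite, explicitly signed limit rather than diverging or oscillating. This is where the sharpness of the blowup rate $-\tfrac{1}{\sqrt\lambda}\log\tau$ is essential — a cruder estimate would leave the limit indeterminate — and where Conditions 1 and 2 are doing real work, since they guarantee that after absorbing $q$ the metric $\tilde g$ still closes up the end with the right decay (the inequality $\lambda \frac{C_1^2}{C_3^2}(1+\frac{2C_1^4}{C_2^2C_3^2})<4$ is precisely the threshold ensuring the conformal factor does not decay too slowly to keep $\tilde R \geq 0$ near $\Sigma$). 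A secondary technical point is checking that the spinor asymptotics at the asymptotically flat end reproduce the ADM mass with the standard $\tfrac{1}{16\pi}$ normalization after the conformal change, which follows from the fast decay $q>\tfrac12$ hypotheses on $(\M,g,k)$ inherited from Theorem \ref{thm3}. Finally I would assemble $\theta$ explicitly in Eq.\eqref{theta} as the product of the lower bound for the boundary term coefficient and the $|\phi|^2 \geq c > 0$ estimate on $\Sigma$, and verify it is independent of $m$.
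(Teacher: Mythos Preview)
Your overall strategy---spinorial Witten argument on the Jang slice with the scalar curvature formula \eqref{syid}---is the right framework, but two central steps in your proposal diverge from the paper's proof in ways that would not work as written.

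\textbf{First, the divergence term is not absorbed by a conformal change.} You propose passing to $\tilde g = u^4 \bar g$ to kill $-2\,\mathrm{div}_{\bar g}(q)$. The paper never does this. Instead it integrates $\mathrm{div}_{\bar g}(q)|\psi|^2$ by parts over the truncated region $\ov\M_{s,r}$, which converts the divergence into a bulk term $q(|\psi|^2)$ plus boundary terms $\int q(\bar e_3)|\psi|^2$ on $\partial\ov\M_s$ and $\int q(\nu_r)|\psi|^2$ on $S_r$. The bulk term is then controlled by the elementary inequality $|\ov\nabla\psi|^2+|q|^2|\psi|^2+q(|\psi|^2)\geq 0$, which costs half of $\int|\ov\nabla\psi|^2$. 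The inner boundary contribution $q(\bar e_3)$ is kept and combined with the mean curvature term into the single quantity $\ov H_{\partial\ov\M_s}-q(\bar e_3)$, for which the paper derives the explicit formula \eqref{sy} and then bounds by $\lambda C_1/C_3 + O(s)$ using the gradient estimates. This is precisely where Condition~1 enters: together with $\lambda|\Sigma|_g\leq 4\pi$ (stability, \cite[Lemma~9.7]{stable}) it makes
\[
C_s \;=\; \min_{\partial\ov\M_s}\Bigl(4 - (\ov H - q(\bar e_3))\sqrt{|\partial\ov\M_s|_{\ov g}/\pi}\,\Bigr)
\]
strictly positive for small $s$. Condition~2 plays the analogous role under Foliation~C (level sets of $f_0$), where $\ov H\to 0$ and $|q|<2\sqrt\lambda$ directly forces $C'_\gamma>0$.

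\textbf{Second, the inner boundary is kept at finite distance from $\Sigma$, not sent to the horizon.} You claim the boundary integral over $\Sigma\times\{z=T\}$ converges as $T\to\infty$ to a positive multiple of $|\Sigma|_g$. The paper's own Schwarzschild computation shows this is false: $\lim_{r\to 2}\theta_r = 0$. The argument instead works on $\ov\M_s$ for a \emph{fixed} small $s>0$, solves the Dirac equation there with the Atiyah--Patodi--Singer--type boundary condition $P_s^+\psi=0$ (projection onto positive eigenspaces of the boundary Dirac operator $\slashed D$), and uses the B\"ar--Hijazi eigenvalue bound $|\alpha_i|\geq 2\sqrt{\pi/|\partial\ov\M_s|_{\ov g}}$ on the $2$-sphere to make the inner boundary term nonnegative. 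This yields $8\pi m\geq \int|\ov\nabla\psi_1|^2 + \sqrt{\pi/|\partial\ov\M_s|}\,C_s\int_{\partial\ov\M_s}|\psi_1|^2$. The final extraction of an area term is done by passing to $h=|\psi_1|$, introducing the capacity-type constant $\sigma_s = \sqrt{|\ov\Sigma_s|/\pi}\,\inf |df|^2_{L^2}/|f|^2_{L^2(\partial)}$, and applying Young's inequality; this gives $m\geq \theta_s\sqrt{|\ov\Sigma_s|/16\pi}$ with $\theta_s = \sigma_s C_s/2(C_s+\sigma_s)\cdot\sqrt{|\partial\ov\M_s|/|\partial\M|}$. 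The constant $\theta$ in \eqref{theta} is then the \emph{supremum} of $\theta_s$ over $s\in(0,s_0)$, not a limit.
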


\section{Introduction  and Review of Existing Results}\label{existing}

In this section we  review the results of Positive Mass Theorem, the existence of solution of Jang's equation in various settings, and the  Riemannian Penrose inequality. \par

The Riemannian Positive Mass Theorem states:
\begin{theorem}
	Let $(M^n,g)$ be a complete asymptotically flat manifold with nonnegative scalar curvature.  If $n<8$ or if $M$ is spin, then the mass of each end is nonnegative.  Moreover, if any of the ends has zero mass, then $(M^n,g)$ is isometric to Euclidean space.
\end{theorem}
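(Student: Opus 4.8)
This is a classical theorem; I sketch the two standard routes, since the hypothesis ``$n<8$ or $M$ spin'' is precisely the disjunction of the two cases in which a proof is known.

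\emph{Spin case (Witten's argument).} Fix the distinguished end and a spinor $\psi_0$ of unit norm that is constant in the asymptotically Euclidean coordinates there (and set the analogous constant to $0$ on the other ends). First I would solve the Dirac equation $\slashed{D}\psi=0$ with $\psi-\psi_0\to 0$ at infinity: the Lichnerowicz--Weitzenb\"ock identity $\slashed{D}^2=\nabla^*\nabla+\tfrac14 R$ together with $R\ge 0$ makes $\slashed{D}$ coercive on the relevant weighted Sobolev space (the asymptotic decay rate $q>\tfrac12$ from asymptotic flatness is exactly what keeps the weighted analysis inside the Fredholm range), so the inhomogeneous equation for $\psi-\psi_0$ is solvable. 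Integrating the Weitzenb\"ock identity over the region $M_\rho$ cut off by large coordinate spheres in every end and integrating by parts gives
\begin{equation*}
\int_{\partial M_\rho}(\ldots)\, d\sigma \;=\; \int_{M_\rho}\Big(|\nabla\psi|^2+\tfrac14 R\,|\psi|^2\Big)\, d\mu ,
\end{equation*}
and the main point is the boundary computation: as $\rho\to\infty$ the flux integral converges to a positive universal multiple of the ADM mass $m$ of the chosen end (the contributions of the other ends vanish since $\psi_0$ there is $0$). As the right-hand side is $\ge 0$, we get $m\ge 0$. For rigidity, if $m=0$ then $\nabla\psi\equiv 0$; applying this to every asymptotically constant $\psi_0$ yields a space of parallel spinors of maximal rank, which forces trivial holonomy, hence $g$ is flat, and a complete flat asymptotically flat manifold is isometric to Euclidean space.

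\emph{Case $n<8$ (Schoen--Yau).} Here I would argue by contradiction, assuming $m<0$ on some end. After the standard reductions (reduce to one end; deform $g$ so it is conformally flat near infinity, still has $R\ge 0$, and still has negative mass), the negative mass furnishes a barrier near infinity that lets one minimize $(n-1)$-area in an appropriate class and obtain a complete, properly embedded, stable minimal hypersurface $\Sigma^{n-1}\subset M^n$ asymptotic to a coordinate hyperplane; smoothness of minimizing hypersurfaces up to dimension $7$ is where $n<8$ is used. The stability inequality for $\Sigma$, combined with the Gauss equation and $R\ge 0$, shows the induced metric on $\Sigma$ is conformal to a complete asymptotically flat metric of nonnegative scalar curvature whose mass is still $\le 0$, now in dimension $n-1$. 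Iterating this descent lands on a complete stable minimal surface inside a $3$-manifold, where Gauss--Bonnet applied to this asymptotically planar surface produces the contradiction. The equality case is handled separately: if some end had $m=0$, a small compactly supported deformation keeping $R\ge 0$ would make the mass strictly negative, contradicting the inequality just proved, unless $(M,g)$ is already Euclidean.

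The main obstacle in the spin proof is analytic and essentially bookkeeping: the weighted Fredholm theory for $\slashed{D}$ on the noncompact manifold and the identification of the boundary term at infinity with the ADM mass; once these are in place the inequality is the one-line Weitzenb\"ock estimate. In the $n<8$ proof the real difficulty is geometric: extracting the stable minimal hypersurface with prescribed planar asymptotics from the negative-mass hypothesis, together with the regularity theory for minimizers — this is what both forces the dimension restriction and makes the argument substantially harder than the spinorial one.
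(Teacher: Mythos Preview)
The paper does not actually prove this theorem: it appears in the review section (Section~\ref{existing}) as a known result, with the $n<8$ case attributed to Schoen--Yau and the spin case to Witten (completed by Parker--Taubes), and no argument is given beyond those citations. Your sketch correctly outlines precisely those two classical proofs, so there is nothing further to compare; your proposal is consistent with what the paper cites and is an accurate summary of the standard literature arguments.
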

The $n<8$ case was first proved by R. Schoen and S.-T. Yau \cite{schyau}, and then by E. Witten for the spin case using a Bochner-Lichnerowicz-Weitzenb\"{o}ck formula \cite{witten} and completed later by T. Parker and C. Taubes \cite{taubes}.\par
It is worth noting that M. Herzlich \cite{herz} proved the Riemannian Positive Mass theorem  for a 3 dimensional manifold with boundary, providing that the mean curvature of the boundary is not too large:
\begin{theorem}
	Let $(M,g)$ be a $ C_\tau^{2,\alpha} $ asymptotically flat manifold of order $ \tau>1/2 $ and scalar curvature in $ L^1 $.    Suppose $ M $ has an inner boundary $ \partial M $, homeomorphic to a 2-sphere, whose mean curvature satisfies $$ H\leq 4\sqrt{\dfrac{\pi}{Area(\partial M)}}. $$ Then, if the scalar curvature of $(M,g)$ is nonnegative, its mass is nonnegative. Moreover, if its mass is zero, then the manifold is flat.
\end{theorem}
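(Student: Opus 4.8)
I would run a Witten--type argument adapted to a manifold with a compact inner boundary. Equip $M$ (an asymptotically flat $3$-manifold with one end and boundary $\partial M\cong S^2$) with its spin structure and let $D=D^M$ be the Dirac operator on the spinor bundle $\mathbb S\to M$, with $\nabla$ the spinor connection. On the end, trivialize $\mathbb S$ and fix a nonzero constant spinor $\psi_0$. On $\partial M$ impose an Atiyah--Patodi--Singer type spectral boundary condition: letting $D^{\partial M}$ be the intrinsic Dirac operator of $(\partial M,\gamma)$, $\gamma=g|_{\partial M}$, require that $\psi|_{\partial M}$ lie in the closed span of the eigenspinors of $D^{\partial M}$ with \emph{positive} eigenvalue. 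This is legitimate because $D^{\partial M}$ on a spin $2$-sphere has no kernel (indeed B\"ar's estimate below gives a spectral gap around $0$), and it is an elliptic, self-adjoint, Fredholm boundary condition for $D$. The plan is to produce a harmonic spinor $\psi$ satisfying this boundary condition with $\psi\to\psi_0$ at infinity, and then to read off the sign of the ADM mass from the Lichnerowicz--Weitzenb\"ock identity; the hypotheses $R\ge0$ and the mean-curvature bound will make both the bulk and the boundary contributions nonnegative.

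\textbf{Step 1: existence of the harmonic spinor.}
Extend $\psi_0$ inward by a cutoff equal to the constant spinor outside a large coordinate ball and vanishing near $\partial M$, so that $D\psi_0$ is compactly supported in the end. Since $\tau>1/2$ is precisely the decay threshold for the mass integral to converge in dimension $3$, and since the chosen boundary condition is elliptic, self-adjoint and Fredholm of index $0$ between the weighted spaces $W^{1,2}_\delta(\mathbb S)\to L^2_{\delta-1}(\mathbb S)$ for $\delta\in(-1,0)$, it suffices to check triviality of the kernel: if $D\phi=0$ with $\phi\in W^{1,2}_\delta$ and the boundary condition, then $\phi\to0$ at infinity, and the Weitzenb\"ock computation of Step 2 applied to $\phi$ (with no mass term at infinity, and a nonnegative boundary term) forces $\nabla\phi\equiv0$; then $|\phi|$ is constant and $\phi\to0$, so $\phi\equiv0$. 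Hence $D$ is an isomorphism, we solve $D\phi=-D\psi_0$ with the boundary condition, and $\psi:=\psi_0+\phi$ satisfies $D\psi=0$, the boundary condition, and $\psi\to\psi_0$. Elliptic regularity up to $\partial M$ gives $\psi$ the regularity allowed by the metric.

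\textbf{Step 2: Weitzenb\"ock integration and the boundary term (the main point).}
From $D^2=\nabla^*\nabla+\tfrac14 R$ one has, for the vector field $W$ with $\langle W,X\rangle=\mathrm{Re}\,\langle\nabla_X\psi+X\cdot D\psi,\psi\rangle$, the identity $\mathrm{div}\,W=|\nabla\psi|^2-|D\psi|^2+\tfrac14 R|\psi|^2$. Integrating over the region between $\partial M$ and a large sphere $S_r$ in the end and letting $r\to\infty$ (here $\tau>1/2$ and $R\in L^1$ give convergence), the flux through $S_r$ tends to $c\,m\,|\psi_0|^2$ with $c>0$, and $D\psi\equiv0$, so
\[
c\,m\,|\psi_0|^2=\int_M\big(|\nabla\psi|^2+\tfrac14 R|\psi|^2\big)+\int_{\partial M}\langle W,\nu\rangle\,d\sigma ,
\]
$\nu$ the unit normal pointing into $M$. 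The bulk term is $\ge0$ since $R\ge0$. For the boundary term, the hypersurface Dirac relation together with the chosen boundary condition gives, for a suitable orientation/Clifford convention,
\[
\langle W,\nu\rangle\big|_{\partial M}=\langle\psi,D^{\partial M}\psi\rangle-\tfrac{H}{2}\,|\psi|^2 ,
\]
with $H$ the trace mean curvature of $\partial M$ with respect to $\nu$. Since $\psi|_{\partial M}$ lies in the positive spectral subspace of $D^{\partial M}$, one has $\int_{\partial M}\langle\psi,D^{\partial M}\psi\rangle\ge\lambda_1\int_{\partial M}|\psi|^2$, with $\lambda_1>0$ the smallest positive eigenvalue of $D^{\partial M}$. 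Now invoke B\"ar's sharp eigenvalue estimate on a topological $2$-sphere, $\lambda_1^2\ge 4\pi/\mathrm{Area}(\partial M)$, i.e.\ $\lambda_1\ge 2\sqrt{\pi/\mathrm{Area}(\partial M)}$, with equality iff $(\partial M,\gamma)$ is a round sphere. Therefore, as soon as $H\le 4\sqrt{\pi/\mathrm{Area}(\partial M)}$ (exactly the hypothesis, and sharp, since a round sphere of radius $\rho$ has $H=2/\rho=4\sqrt{\pi/\mathrm{Area}}$),
\[
\int_{\partial M}\langle W,\nu\rangle\ge\int_{\partial M}\Big(\lambda_1-\tfrac{H}{2}\Big)|\psi|^2\ge 0 ,
\]
and the master identity forces $m\ge0$.

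\textbf{Step 3: rigidity, and the main obstacle.}
If $m=0$, all three nonnegative terms vanish: $\nabla\psi\equiv0$, so $|\psi|\equiv|\psi_0|\neq0$ and $\psi$ is a nowhere-vanishing parallel spinor; then $R|\psi|^2\equiv0$ gives $R\equiv0$, and Clifford-contracting the vanishing spinor curvature $\mathcal R(X,Y)\psi\equiv0$ yields $\mathrm{Ric}\cdot\psi\equiv0$, hence $\mathrm{Ric}\equiv0$; in dimension $3$ this means $(M,g)$ is flat. (Equality in B\"ar's estimate would moreover force $\partial M$ round, consistent with the rigidity model of a round horizon.) The technical crux is the interface between Steps 1 and 2: one must choose the boundary condition so that it is simultaneously elliptic, self-adjoint and Fredholm (needed for solvability) \emph{and}, after the hypersurface Dirac identity, produces a boundary integrand of the form $\langle\psi,D^{\partial M}\psi\rangle-\tfrac{H}{2}|\psi|^2$ with $\psi|_{\partial M}$ genuinely confined to the positive spectral subspace of $D^{\partial M}$ --- only with this confinement does B\"ar's sharp constant $2\sqrt{\pi/\mathrm{Area}(\partial M)}$ match the hypothesis $H\le 4\sqrt{\pi/\mathrm{Area}(\partial M)}$. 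Fixing the signs and Clifford conventions in the boundary term, and checking that the spectral projection (rather than its complement) is the right one, is where the real work lies; the remainder is the standard Witten machinery in weighted Sobolev spaces.
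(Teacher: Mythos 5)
Your proposal is correct and is essentially the intended argument: this statement is quoted from Herzlich \cite{herz}, whose proof is precisely this Witten-type spinor argument with an APS/spectral boundary condition, the hypersurface Dirac identity for the boundary term, and B\"ar's sharp bound $\lambda_1\ge 2\sqrt{\pi/\mathrm{Area}(\partial M)}$ on the $2$-sphere, so the constant $4\sqrt{\pi/\mathrm{Area}(\partial M)}$ appears exactly as you derive it. The paper itself does not reprove the theorem but runs the same machinery in Section \ref{penrose}; note that there the boundary condition is $P^+_s\psi=0$ (confinement to the nonpositive spectral subspace, with boundary term $-\langle\psi,\slashed{D}\psi\rangle-\tfrac12\ov{H}|\psi|^2$), which is the mirror image of your convention and confirms your remark that pinning down the correct spectral projection and sign conventions is the only delicate point.
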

The Positive Mass Theorem was then extended to the case of general initial data set  by R. Schoen and S.-T. Yau in \cite{yau} with the help of Jang's equation proposed by P. Jang. 

\begin{theorem}\label{general pmt}
	Let $(\M,g,k) $ be a complete oriented asymptotically flat three dimensional initial data set. Assuming the dominant energy condition, then the ADM mass of each end is non-negative. Moreover, if any of the ends has zero mass, then $(\M, g, k)$ can be isometrically embedded in to four dimensional Minkowski space as a spacelike hypersurface.
\end{theorem}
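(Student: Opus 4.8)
\section*{Proof proposal}

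The plan is to reproduce, in outline, the Schoen--Yau reduction of the general Positive Mass Theorem to its Riemannian counterpart via Jang's equation, as reviewed in Section~\ref{introduciton}. First I would solve the regularized equation \eqref{aux}, $\HH[f]-\PP[f]=tf$ for $t>0$, obtaining $f_t\in B^{2,\beta}$ with the a priori bounds $\sup_{\M}t|f_t|\le\mu_1$ and $\sup_{\M}t|\nabla f_t|\le\mu_2$. Using this gradient bound together with the interior a priori estimate of \cite[Proposition~2]{yau}, one extracts a sequence $t_i\to0$ with $f_{t_i}\to f$ in $C^2_{loc}$ away from the apparent horizons, where $\J[f]=0$. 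The asymptotic hypotheses $q>\tfrac12$ and $\beta>2$ supply super- and subsolution barriers in each end forcing $f\to0$ there, so the Jang graph $N=\mathrm{graph}\,f\subset\M\times\R$ is asymptotically flat with $m_k(N,\ov g)=m_k(\M,g)$ for every end; near any apparent horizon $\Sigma$ at which $f$ blows up or down, $N$ develops a cylindrical end converging to $\Sigma\times\R$.

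The second step exploits the curvature identity \eqref{syid}: on $(N,\ov g)$ one has $\ov R=16\pi(\mu-J(\omega))+|h-k|_{\ov g}^2+2|q|_{\ov g}^2-2\,\mathrm{div}_{\ov g}(q)$, and the dominant energy condition $\mu\ge|J|$ together with $|\omega|<1$ (hence $J(\omega)\le|J|$) gives $\ov R\ge 2|q|_{\ov g}^2-2\,\mathrm{div}_{\ov g}(q)$. Integrating the divergence term by parts and completing a square shows $\int_N\big(8|\nabla\psi|^2+\ov R\,\psi^2\big)\ge0$ for all compactly supported $\psi$, i.e. the conformal Laplacian $-8\Delta_{\ov g}+\ov R$ has nonnegative bottom of spectrum. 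Hence there is a positive solution $u$ of $-8\Delta_{\ov g}u+\ov R\,u=0$ with $u=1+A|x|^{-1}+O(|x|^{-2})$ at infinity and $A\le0$, by the argument of \cite{yau} (which deals with the fact that $\ov R$ is nonnegative only up to a divergence). Then $\tilde g:=u^4\ov g$ is scalar-flat with $m_k(N,\tilde g)=m_k(N,\ov g)+2A\le m_k(\M,g)$ on the chosen end.

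Third, one handles the cylindrical ends of $N$: near each such $\Sigma$ deform $N$ inside $U\times(T,\infty)$ (resp.\ $U\times(-\infty,-T)$) so that it coincides with $\Sigma\times\R$ there, then conformally close each cylindrical end with a factor $\psi>0$ decaying exponentially along the cylinder. The result is a complete asymptotically flat metric with the same ends as $(\M,g)$ minus those over the horizons, with scalar curvature still nonnegative modulo a divergence and ADM mass $\le m_k(\M,g)$; repeating the conformal change of the previous step yields a complete asymptotically flat scalar-flat metric $g'$ with $m_k(g')\le m_k(\M,g)$. The Riemannian Positive Mass Theorem stated above (dimension $3$, hence spin) gives $m_k(g')\ge0$, whence $m_k(\M,g)\ge0$ for every end. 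For the rigidity statement: if some end has $m_k(\M,g)=0$, then $0\le m_k(g')\le0$, so $m_k(g')=0$ and the rigidity case of the Riemannian theorem forces $g'$ to be flat $\R^3$; unwinding the construction then forces $A=0$, $u\equiv1$, the closing factors $\psi$ constant (so there were no blowups and $N$ had no cylindrical ends), $\ov R\equiv0$, and therefore, reading \eqref{syid} with the energy condition, $\mu=|J|=0$, $q\equiv0$ and $h\equiv k$ on $N$. Thus $N$ is a complete flat spacelike hypersurface of $\R^3\times\R$ whose second fundamental form agrees with $k$, which exhibits $(\M,g,k)$ as a spacelike slice of Minkowski space $\R^{3,1}$.

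I expect the main obstacle to lie in Steps~1 and~3 --- precisely the behavior of the Jang solution at the apparent horizons. One must rule out pathological blowup (establishing that each blowup end is genuinely cylindrical, with the graph converging to $\Sigma\times\R$), control the geometry there uniformly enough that the deformation to $\Sigma\times\R$ and the conformal closing preserve the sign of $\ov R$ modulo a divergence and the mass inequality, and verify that the conformal factor $u$ of Step~2 exists on the non-compact, cylindrical-ended $N$ despite $\ov R$ being nonnegative only up to a divergence term. (It is exactly the fine structure of this blowup that the present paper analyses quantitatively in Theorems~\ref{thm1}--\ref{thm2}.) The remaining ingredients --- the barrier construction in the asymptotically flat ends, the mass comparison under the conformal change, and the rigidity bookkeeping --- are by now standard once these points are secured.
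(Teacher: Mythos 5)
Your outline is essentially the argument the paper itself relies on: Theorem \ref{general pmt} is not proved in the paper but quoted from \cite{yau}, and the reduction you describe (auxiliary equation \eqref{aux} with a priori bounds, passage to a blowup limit, the identity \eqref{syid} with the dominant energy condition, deformation and conformal closing of the cylindrical ends, conformal change to zero scalar curvature without increasing the mass, and then the Riemannian Positive Mass Theorem plus a rigidity unwinding) is exactly the Schoen--Yau strategy sketched in Section \ref{introduciton}. So your proposal is correct in approach and matches the paper's (cited) proof; the gaps you flag at the horizons and for the conformal factor are precisely the technical content supplied in \cite{yau}.
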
 

This paper also proved the existence of an asymptotically decaying solution of Jang's equation on the asymptotically flat initial data set $ (\M,g,k) $. By asymptotically decaying  we mean  the solution $ f $ satisfies $ \partial^j f=O(|x|^{-j-\frac{1}{2}}) $, $ j=0,1,2,3 $,  on each asymptotically flat end. It was also  proved that the solution $ f $ is well-behaved except that it may blowup at some apparent horizons. 

\begin{theorem}\cite[Proposition 4]{yau}
	\label{original}
	Assume the same conditions in Theorem \ref{general pmt}. There is a sequence $ \{t_i\} $ converging to zero and  open sets $ \Omega_+$, $ \Omega_- $, $ \Omega_0 $, so that if $ f_i $ satisfies $ \J[f_i]=t_i f_i $, we have:
	\begin{enumerate}
		\item The sequence $ \{f_i\} $ converges uniformly to $ +\infty $ (respectively $ -\infty $) on the set $ \Omega_+$ (respectively $ \Omega_-$),   and  $ \{f_i\} $ converges to  a smooth asymptotically decaying solution $ f $ of Eq. \eqref{jang} on $ \Omega_0. $
		\item The sets $ \Omega_+ $ and $ \Omega_- $ have compact closure, and $ \M=\bar{\Omega}_+\cup\bar{\Omega}_-\cup\bar{\Omega}_0 $. Each boundary component $ \Sigma $ of $ \Omega_+ $ (respectively $ \Omega_- $) is a smooth embedded two-sphere satisfying $ H_\Sigma-\text{tr}_\Sigma(k_{ij})=0 $ (respectively $ H_\Sigma+\text{tr}_\Sigma(k_{ij})=0 $). Moreover, no two connected components of $ \Omega_+ $ can share a common boundary.
		\item The graphs $ N_i$ of $ f_i $ converge smoothly to   a   properly embedded submanifold $ N $ of $ \M \times\R $. Each connected component of $ N $
		is either a component of the graph $ f$, or the cylinder $ \Sigma\times\R\subset \M\times\R $ over a boundary component $ \Sigma $ of $ \Omega_+ $ or $ \Omega_- $. Any two connected components of $ N_0 $ are separated by a positive distance.
	\end{enumerate}
	
\end{theorem}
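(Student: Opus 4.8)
The plan is to carry out the Schoen--Yau regularization argument: solve the perturbed equation $\J[f]=tf$ for $t>0$, pass to a limit along a sequence $t_i\downarrow 0$, and separate $\M$ into the region where the solutions stay locally bounded and the region where they blow up, combining elliptic estimates for $f_{t_i}$ with a geometric compactness argument for the graphs $N_i=\text{graph}(f_{t_i})\subset\M\times\R$.

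First I take for granted the existence of $f_t\in B^{2,\beta}$ solving $\J[f_t]=tf_t$ for $t>0$, the a priori bounds $\sup_\M t|f_t|\le\mu_1$ and $\sup_\M t|\nabla f_t|\le\mu_2$, the interior estimate \cite[Proposition 2]{yau} (which controls $|\nabla f_t|$, hence by Schauder all higher derivatives, on a fixed ball in terms of $\mathrm{osc}\,f_t$ and $t|f_t|$ there), and the asymptotically flat barriers far out in the ends. Fix $t_i\downarrow 0$. Let $\Omega_0$ be the interior of the set of points having a neighborhood on which $\{f_{t_i}\}$ is uniformly bounded. On $\Omega_0$, Arzel\`a--Ascoli together with the Schauder estimates gives, after passing to a subsequence, $f_{t_i}\to f$ in $C^2_{\mathrm{loc}}(\Omega_0)$; since $t_i f_{t_i}\to 0$ there, $\J[f]=0$. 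The barriers in the ends (using $tr_g k=O(|x|^{-\beta})$ with $\beta>2$) force the $f_{t_i}$ to remain uniformly bounded and decay near infinity, so each end lies in $\Omega_0$, $f$ is asymptotically decaying, and $\M\setminus\Omega_0$ is compact. On $\M\setminus\bar\Omega_0$ one has $|f_{t_i}|\to\infty$ locally uniformly; since $t_i f_{t_i}$ is uniformly Lipschitz, the sign of the limit $\pm\infty$ cannot change along a continuous path without $t_i f_{t_i}$ crossing an interval of length tending to $0$, so it is locally constant, giving open sets $\Omega_+=\{f_{t_i}\to+\infty\}$ and $\Omega_-=\{f_{t_i}\to-\infty\}$ with $\bar\Omega_\pm$ compact and $\M=\bar\Omega_+\cup\bar\Omega_-\cup\bar\Omega_0$.

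For the geometric part I rewrite Jang's equation as $\HH[f_i]=\PP[f_i]+t_i f_i$. Since $|\PP[f_i]|\le C(k)$ and $|t_i f_i|\le\mu_1$, the hypersurfaces $N_i$ have uniformly bounded mean curvature, and they have locally bounded area (being graphs, with the graphical part near infinity controlled by the decay). By the compactness theory for hypersurfaces of bounded mean curvature in the $4$--manifold $\M\times\R$ (the dimension being in the regular range), a further subsequence of $N_i$ converges, smoothly away from a closed lower-dimensional set and with integer multiplicity, to a properly embedded limit $N$ of bounded mean curvature. Over $\Omega_0$, $N$ is $\text{graph}(f)$. Over $\Omega_+$ (resp.\ $\Omega_-$) the graphs are pushed to $+\infty$ (resp.\ $-\infty$) but remain proper, so the only part of $N$ above such a region is the vertical cylinder $\partial\Omega_\pm\times\R$; along a boundary component $\Sigma=\partial\Omega_+$ the unit normals of $N_i$ converge to the horizontal unit normal $\nu$ of $\Sigma$, so $f^i f^j/(1+|\nabla f_i|^2)\to\nu^i\nu^j$, and passing $\HH[f_i]=\PP[f_i]+t_i f_i$ to the limit yields $H_\Sigma=(g^{ij}-\nu^i\nu^j)k_{ij}=\mathrm{tr}_\Sigma k$, the stated MOTS equation (with the opposite sign over $\Omega_-$, where the wall is traversed from the other side). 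Elliptic regularity for this limiting equation, the codimension-one bound on the singular set, and the fact that $\Sigma$ is the topological boundary of the open set $\Omega_+$ make $\Sigma$ a smooth embedded surface; the stability of the limiting cylinder $\Sigma\times\R$, combined with the constraint equations, the dominant energy condition, and Gauss--Bonnet (exactly the stable-minimal-surface-in-positive-scalar-curvature mechanism), forces $\Sigma\cong S^2$.

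It remains to rule out two components of $\Omega_+$ sharing a boundary component and to establish properness and separation of the components of $N$. If a smooth $\Sigma$ bordered $\Omega_+$ on both sides, the graphs $N_i$ would accumulate onto $\Sigma\times\R$ from both sides with multiplicity two, and two tangent sheets solving (weakly, in the limit) the same bounded-mean-curvature equation must coincide by the strong maximum principle, incompatible with $\Sigma$ being two-sided in $\M$; this is the maximum-principle argument of \cite{yau}. Properness of $N$ follows from properness of each $N_i$ together with local finiteness of the convergence, and distinct connected components of $N$ are a positive distance apart again by the strong maximum principle: two components touching at a point would be mutually tangent limit hypersurfaces of bounded mean curvature and hence coincide. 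I expect the geometric compactness step to be the main obstacle --- rigorously controlling the singular set and the multiplicity of the limit $N$ over the blow-up region, and in particular confirming that its vertical part is precisely a smooth cylinder over a MOTS/MITS rather than a more degenerate stationary object --- whereas the dichotomy for $\{f_{t_i}\}$ and the analysis at the ends are comparatively routine given the quoted estimates.
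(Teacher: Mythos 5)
This statement is not proved in the paper at all: it is quoted verbatim as \cite[Proposition 4]{yau} in the review section, so there is no internal proof to compare against, and your proposal is in effect an attempted reconstruction of the original Schoen--Yau argument. Your overall skeleton is the right one (regularize with $\J[f]=tf$, use the a priori bounds $t|f_t|\le\mu_1$, $t|\nabla f_t|\le\mu_2$, barriers in the ends, split $\M$ into $\Omega_\pm$ and $\Omega_0$, and identify the vertical part of the limit of the graphs $N_i$ as cylinders over surfaces with $H_\Sigma\mp\text{tr}_\Sigma k=0$, with the $S^2$ topology coming from the stability-type inequality, the dominant energy condition and Gauss--Bonnet).

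The genuine gap is the geometric compactness step, which you yourself flag as the main obstacle. There is no general theorem asserting that properly embedded hypersurfaces with merely bounded mean curvature and locally bounded area subconverge smoothly away from a lower-dimensional singular set with integer multiplicity; bounded mean curvature plus area bounds only gives varifold convergence, and the Schoen--Simon/``regular dimension range'' machinery you allude to requires stability or minimizing properties you have not established for the $N_i$. In \cite{yau} this step is carried by the local parametric estimate (their Proposition 2, reproduced in Section \ref{gradient} of this paper): because it bounds the local defining function $w$ of $N_i$ in $C^{3,\alpha}$ in terms only of $\sup t_i|f_i|$, $\sup t_i|\nabla f_i|$ and the initial data --- not in terms of $f_i$ itself --- it applies uniformly over the blow-up region, so the graphs $N_i$ subconverge smoothly everywhere, with no singular set, and the limit is properly embedded. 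The accompanying Harnack inequality for $\langle \ov{e}_4,\nu\rangle$ is then what forces each limit component over the blow-up set to be exactly vertical, i.e.\ a cylinder $\Sigma\times\R$, and it is also the clean way to see that the sign of the divergence of $f_{t_i}$ is locally constant; your argument for that dichotomy via the Lipschitz bound on $t_if_{t_i}$ is too vague to rule out oscillation between $+\infty$ and $-\infty$ across a region where $|f_{t_i}|\to\infty$. With the compactness and cylindricality rerouted through Proposition 2 and the Harnack inequality, the remaining steps you sketch (limit MOTS/MITS equation, maximum principle for the multiplicity-two and separation statements, topology of $\Sigma$) do go through essentially as you describe.
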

It was also proved in \cite{yau} that  if this blowup happens,  then $N $ can be written as the graph of a function  $ u $ on the corresponding cylindrical end, and the derivatives of $ u $ up to second order also tend to zero asymptotically on this cylindrical end, c.f.\cite[Corollary 2]{yau}:

\begin{theorem}\cite[Corollary 2]{yau}
	\label{thm:sideways}
	Assume the conditions of Theorem \ref{original} and let $\Sigma$ be a connected component
	of the apparent horizons, on which $ f $ tends to $ +\infty $ ($ -\infty $ respectively). Let $U$ be a neighborhood of $\Sigma$ with positive
	distance to any other apparent horizons in $ \M $.
	
	Then for all $\epsilon>0$ there exists $\bar z = \bar z(\epsilon)$,
	depending also on the geometry of $(\M,g,k)$, such that $N \cap (U
	\times [\bar z,\infty))$ can be written as the graph of a function
	$u$ over $C_{\bar z}:= \Sigma \times [\bar z,\infty)$, so that
	\begin{equation*}
	|u(p,z)| + |^{C_{\bar z}}\nabla u(p,z) | + |^{C_{\bar z}}\nabla^2 u(p,z)| < \epsilon.
	\end{equation*}
	for all $(p,z)\in C_{\bar z}$. Here, $^{C_{\bar z}}\nabla$ denotes
	covariant differentiation along $C_{\bar z}$.
\end{theorem}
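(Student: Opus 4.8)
The plan is to re-express the blowup end of $N$ as a ``sideways'' graph $\tau=u(p,z)$ over the cylinder $C=\Sigma\times\R$ and then to extract the decay of $u$ and of its first two derivatives from interior elliptic estimates. First I would introduce Fermi coordinates for $\Sigma$ (the foliation of Foliation A): the normal exponential map identifies a one-sided collar of $\Sigma$ in $\Omega_0$ with $\Sigma\times(0,\bar\tau)$ so that $g=\mathrm{d}\tau^2+g_\tau$ with $g_\tau$ a smooth family of metrics on $\Sigma$, and $U\times\R\subset\M\times\R$ becomes $\Sigma\times(0,\bar\tau)\times\R$ with the product metric $\mathrm{d}\tau^2+g_\tau+\mathrm{d}z^2$, which is independent of $z$. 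On this collar $N=\{(p,\tau,f(p,\tau))\}$ with $f(p,\tau)\to+\infty$ as $\tau\to0^+$. By Theorem~\ref{original}(3), $N=\mathrm{graph}(f)$ has a cylindrical end converging smoothly to $\Sigma\times\R$; hence, uniformly in $p$, the downward unit normal $\nu_N=(1+|\nabla f|^2)^{-1/2}(\nabla f,-1)$ becomes horizontal as $\tau\to0$, which forces $|\nabla f|\to\infty$ and, since $f$ increases as $\tau\downarrow0$, $\partial_\tau f<0$ on a fixed collar $\Sigma\times(0,\tau_1)$. Therefore $\tau\mapsto f(p,\tau)$ is a strictly decreasing diffeomorphism of $(0,\tau_1)$ onto $(f(p,\tau_1),+\infty)$; setting $\bar z_0:=\sup_{p\in\Sigma}f(p,\tau_1)<\infty$, the implicit function theorem gives, for each $(p,z)$ with $z\ge\bar z_0$, a unique $\tau=u(p,z)\in(0,\tau_1)$ with $f(p,u(p,z))=z$, and $(p,z)\mapsto(p,u(p,z),z)$ realizes $N\cap(U\times[\bar z_0,\infty))$ as the graph of the smooth function $u$ over $C_{\bar z_0}$.

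Next I would establish the $C^0$ decay: given $\epsilon\in(0,\tau_1)$, the function $f$ is bounded by some $M(\epsilon)$ on the compact set $\{\epsilon\le\tau\le\tau_1\}$, so monotonicity of $f$ in $\tau$ forces $u(p,z)<\epsilon$ once $z>M(\epsilon)$; thus $u(p,z)\to0$ uniformly in $p$ as $z\to\infty$. To treat $\nabla u$ and $\nabla^2 u$ I would pass to the equation satisfied by $u$. The condition $\J[f]=0$ says geometrically that $N$ has mean curvature, with respect to its downward normal, equal to the trace over $N$ of $k$ extended trivially along the $\R$-factor; since both $\mathrm{d}\tau^2+g_\tau+\mathrm{d}z^2$ and the extended $k$ are independent of $z$, rewriting this condition for $N=\{\tau=u(p,z)\}$ yields a quasilinear equation $\mathcal Q[u]=0$ on $C=\Sigma\times\R$ that is invariant under translation in $z$, has smooth coefficients, and is uniformly elliptic on the region where $|u|+|\nabla u|$ is bounded (it is of prescribed mean curvature type with bounded prescribed function).

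Now suppose, for contradiction, that there exist $(p_i,z_i)$ with $z_i\to\infty$ and $|{}^{C_{\bar z}}\nabla u(p_i,z_i)|+|{}^{C_{\bar z}}\nabla^2 u(p_i,z_i)|\ge\delta_0>0$. The $z$-translates $u_i(p,z):=u(p,z+z_i)$ solve $\mathcal Q[u_i]=0$ on $\Sigma\times[-R,R]$ for $i$ large, and $\|u_i\|_{C^0}\to0$ by the previous step. The interior gradient estimate for equations of prescribed mean curvature type, followed by interior Schauder estimates, bounds $\|u_i\|_{C^{2,\alpha}}$ uniformly on $\Sigma\times[-R/2,R/2]$; passing to a subsequence and, by compactness of $\Sigma$, assuming $p_i\to p_\infty$, one gets $u_i\to u_\infty$ in $C^2_{\mathrm{loc}}$ with $u_\infty\equiv0$, whence $|{}^{C}\nabla u_\infty(p_\infty,0)|+|{}^{C}\nabla^2 u_\infty(p_\infty,0)|=0$, contradicting that it is the limit of quantities $\ge\delta_0$. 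Hence $|{}^{C_{\bar z}}\nabla u|+|{}^{C_{\bar z}}\nabla^2 u|\to0$ uniformly as $z\to\infty$; combining this with the $C^0$ bound and choosing $\bar z(\epsilon)$ large enough that each of the three terms is below $\epsilon$ gives the assertion.

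The step I expect to be the main obstacle is the graph representation in the first paragraph --- turning ``$N$ has a cylindrical end converging to $\Sigma\times\R$'' into the quantitative statements that $\partial_\tau f$ stays bounded away from $0$ on a fixed collar of $\Sigma$ and that the inversion $\tau=u(p,z)$ is valid on all of $\Sigma\times[\bar z,\infty)$ for a single $\bar z$. This is precisely where the detailed blowup analysis of Jang's equation near a MOTS (the sub/supersolution barriers manufactured from the MOTS data, as in the proof of Theorem~\ref{original}) is needed. Once the sideways graph and its $z$-translation-invariant elliptic equation are in hand, the decay of the derivatives is a routine compactness argument.
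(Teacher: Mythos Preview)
Your outline is essentially correct and close in spirit to the argument the paper sketches (following Schoen--Yau), but the two differ precisely at the step you flag as the main obstacle, and your proposed fix for that step is not the one actually used.

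You plan to obtain the sideways graph representation by first arguing that $\partial_\tau f<0$ on a fixed collar, and you suggest this requires the sub/super-solution barriers built from the MOTS data. In Schoen--Yau (and in this paper's sketch in Section~\ref{gradient}) no barriers enter here at all. The mechanism is instead the local parametric estimates and the Harnack inequality of \cite[Proposition~2]{yau}: one uses that $f-a$ is again a solution for every $a$, so the vertically translated graphs $N-a_i$ have uniformly bounded second fundamental form and uniform $C^{3,\alpha}$ bounds on their local defining functions; hence a subsequence converges smoothly on compact sets to a properly embedded limit. The Harnack inequality for $\langle\bar e_4,\nu\rangle=(1+|\nabla f|^2)^{-1/2}$ then forces this quantity to vanish identically on the limit (since it tends to zero along the blowup), so the limit is a vertical cylinder, necessarily $\Sigma\times\R$. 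This already gives both the sideways graph representation \emph{and} the uniform $C^2$ smallness of $u$ in one stroke: if $|u|+|^{C}\nabla u|+|^{C}\nabla^2 u|\ge\delta_0$ along some sequence $z_i\to\infty$, translate by $z_i$ and pass to the limit to get a contradiction with the limit being exactly $\Sigma\times\R$.

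Your route---invert to $u$, write Jang's equation sideways as a $z$-invariant quasilinear elliptic equation $\mathcal Q[u]=0$, then run interior gradient and Schauder estimates on the $z$-translates---is a legitimate alternative for the derivative decay once the graph representation is in hand; this is in fact how the paper (following \cite{metzger}) upgrades $C^0$ control of $u$ to $C^2$ control in Section~\ref{gradient}. What it buys you is a cleaner separation between ``graph exists'' and ``derivatives decay'', at the cost of having to justify uniform ellipticity of $\mathcal Q$ and the interior gradient estimate before you know $|^{C}\nabla u|$ is small. The Schoen--Yau route avoids this by working with the local defining function $w$ in geodesic normal coordinates adapted to $N$ rather than with $u$ directly; the a priori $C^{3,\alpha}$ bound on $w$ is available unconditionally from \cite[Proposition~2]{yau}. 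Either way, the essential ingredients are translation invariance in $z$ and a compactness argument; your misattribution of the graph-representation step to barrier methods is the only substantive discrepancy.
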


To prove the Penrose inequality, it is necessary to capture the information of the apparent horizons. Therefore, one needs to prescribe the boundary condition for Jang's equation at horizons. One approach is to let the solution blow up  near $\Sigma$  and approximate a cylinder over $\partial\M$. In \cite{metzger}, J. Metzger showed that if $\Sigma$ is an outermost MOTS, then in fact there \textit{must} exist a solution to \eqref{jang} which blows up at $\Sigma$ (and only at $\Sigma$, provided there are no MITSs in $ \M $).

\begin{theorem}\cite[Theorem 3.1]{metzger}
	\label{thm:blowup}
	If $(\M,g,k)$ be an initial data set with $\partial \M = \partial^- \M \cup
	\partial^+ \M$ such that $\partial^-\M$ is an outermost MOTS,
	$\theta^+[\partial^+\M]>0$ and $\theta^-[\partial^+\M]<0$, then there exists
	an open set $\Omega_0\subset \M$ and a function $f_0: \Omega_0\to \R$
	such that
	\begin{enumerate}
		\item $\M\setminus\Omega_0$ does not intersect $\partial \M$,
		\item $\theta^-[\partial \Omega_0] =0$ with respect to the normal vector
		pointing into $\Omega_0$, 
		\item $\J[f_0] = 0$,
		\item $N^+ = \text{graph} f_0 \cap \M\times \R^+$ is asymptotic to the cylinder $\partial ^- M
		\times\R^+$,
		\item $N^- = \text{graph} f_0 \cap \M\times \R^-$ is asymptotic to the cylinder $\partial \Omega_0
		\times\R^-$, and
		\item $f_0 |_{\partial ^+ \M} = 0$.
	\end{enumerate}
\end{theorem}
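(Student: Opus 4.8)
The plan is to follow the capillary-regularization method of \cite{metzger}: one realizes $f_0$ as a limit of solutions of the Dirichlet problem for Jang's equation in which the graph is forced, by a sequence of boundary conditions, to become vertical over $\partial^-\M$ and hence to blow up exactly there. Concretely, for a sequence of contact angles $\gamma_k\downarrow 0$ I would solve
\begin{equation*}
\HH[f_k]-\PP[f_k]=0\ \text{ in }\M,\qquad \langle N_k,\nu\rangle=\cos\gamma_k\ \text{ on }\partial^-\M,\qquad f_k=0\ \text{ on }\partial^+\M,
\end{equation*}
where $N_k$ is the downward unit normal of $\mathrm{graph}\,f_k$ and $\nu$ the outward normal of $\partial^-\M$ (equivalently one may insert the Schoen--Yau lower-order term $\sigma f_k$ and send $\sigma\downarrow 0$ afterwards). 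For each fixed $k$ this mixed boundary value problem is solved by the continuity method / a Leray--Schauder argument, once the a priori estimates below are available; any asymptotically flat end of $\M$ is handled by the Schoen--Yau barriers pinning the solution at infinity.

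The a priori estimates are the substance of the argument, and I would organize them in three pieces. First, a $C^0$ bound from comparison with barriers: constants are super/subsolutions up to the regularization, while near $\partial^+\M$ the \emph{strict} hypotheses $\theta^+[\partial^+\M]>0$ and $\theta^-[\partial^+\M]<0$ provide one-sided barriers of the form $\pm C\varphi(\tau)$ with $\tau$ the distance to $\partial^+\M$, which pin $f_k=0$ there with a uniform bound on the normal derivative --- this is exactly where the conditions on $\partial^+\M$ enter. Second, interior gradient control from the Schoen--Yau estimate \cite[Proposition 2]{yau}, which is purely local and insensitive to the boundary data. Third --- and this is the delicate one --- the behaviour near $\partial^-\M$: here there can be \emph{no} $k$-uniform gradient bound, since the graph must tip over to vertical, so instead I would use that $\theta^+[\partial^-\M]=0$ to build, over a collar $\Sigma\times[0,\bar\tau]$ foliated as in Foliations A/B, logarithmic-type barriers trapping $f_k$ between roughly $-C\log\tau+c_k$ and $-c\log\tau+C_k$; more importantly, these keep the approximating graphs $N_k$, viewed as submanifolds of $\M\times\R$, within bounded geometry uniformly away from the cylinder $\partial^-\M\times\R$, with $k$-independent area and curvature bounds on compact subsets of $\M\times\R$ minus a neighborhood of that cylinder.

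With these estimates I would pass to the limit. Elliptic compactness (together with the comparison principle, which makes the $f_k$ essentially monotone in $k$) yields, along a subsequence, locally smooth convergence on the open set $\Omega_0$ where $\sup_k|f_k|$ is locally bounded, to a solution $f_0$ of $\J[f_0]=0$ with $f_0|_{\partial^+\M}=0$; this is (3) and (6). Off $\overline{\Omega_0}$ the $f_k$ diverge, and the geometric bounds force the graphs to converge to $\mathrm{graph}\,f_0$ together with vertical cylinders over the blowup loci. As in \cite[Proposition 4]{yau}, the set where $f_k\to+\infty$ is a union of MOTSs and the set where $f_k\to-\infty$ is a union of MITSs, and both avoid the pinned boundary $\partial^+\M$; since $\partial^-\M$ is \emph{outermost} there is no MOTS strictly outside it, so the $+\infty$ locus is exactly $\partial^-\M$ --- giving (4) --- while the $-\infty$ locus is a MITS that one names $\partial\Omega_0$, giving (2), with the region it bounds being $\M\setminus\Omega_0$, disjoint from all of $\partial\M$, which is (1); then $N^-=\mathrm{graph}\,f_0\cap\M\times\R^-$ is asymptotic to $\partial\Omega_0\times\R^-$, which is (5). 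The cylindrical-end assertions (4)--(5) use the interior blowup-end structure already recorded in \cite{yau} (and are sharpened to the precise logarithmic rate, in the strictly stable case, by Theorems \ref{thm1}--\ref{thm2} of the present paper).

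The step I expect to be the main obstacle is the third piece of the a priori estimates, i.e.\ the control near $\partial^-\M$: because no scalar bound can survive the limit, one is forced to argue geometrically --- establishing $k$-uniform area and second-fundamental-form bounds for the approximating Jang graphs away from the cylinder $\partial^-\M\times\R$ and then invoking geometric compactness --- and the genuinely delicate point is to show that the limiting object splits \emph{precisely} as $\mathrm{graph}\,f_0$ plus the two cylinders, with no loss of area, no stray components, and the right multiplicities. The foliated barriers over the collar of $\partial^-\M$ are what drive this, so their construction, and the verification that they are compatible both with solvability for fixed $k$ and with the forced blowup as $k\to\infty$, carries most of the weight of the proof.
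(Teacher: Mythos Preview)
Your approach differs from the one the paper outlines (following \cite{metzger}) in one essential respect: you work directly on $\M$ and impose a capillary/contact-angle condition on $\partial^-\M$, forcing the graph vertical in the limit, whereas Metzger first \emph{extends} the data across $\partial^-\M$ to a larger set $(\tilde\M,\tilde g,\tilde k)$ foliated by strictly outer-trapped surfaces ($\theta^+<0$, $H[\partial\tilde\M]>0$), and then solves the regularized Dirichlet problem
\[
\J[f_t]=t f_t\ \text{in }\tilde\M,\qquad f_t=\tfrac{\delta}{2t}\ \text{on }\partial^-\tilde\M,\qquad f_t|_{\partial^+\M}=0,
\]
sending $t\downarrow 0$. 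The point of the extension is that in the trapped collar the constant $\delta/(2t)$ is itself a subsolution (since $\J[\text{const}]=-P$ and trappedness controls the sign), so one gets the global a priori bound $\sup_{\tilde\M}|f_t|+\sup_{\tilde\M}|\nabla f_t|\le C/t$ for free, and the blowup is produced simply by the diverging Dirichlet datum; outermostness of $\partial^-\M$ is invoked only at the very end, to pin the $+\infty$ locus of the limit to $\partial^-\M$ rather than to some MOTS further out. No barriers near the MOTS itself are needed at any stage.

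By contrast, the third block of your estimates --- the logarithmic barriers over a collar of $\partial^-\M$ --- is exactly the piece Metzger's extension trick is designed to avoid, and it is not clear it can be carried out under the hypotheses of the theorem. Those barriers (as in Section~\ref{barrier} of the present paper) rely on $\theta^+[\Sigma_s]\sim\lambda\beta s$ with $\lambda>0$, i.e.\ on \emph{strict} stability of $\partial^-\M$; the theorem assumes only that $\partial^-\M$ is outermost, which gives $\lambda\ge 0$ but not $\lambda>0$, and in the degenerate case $\lambda=0$ your logarithmic trap breaks down. In addition, solvability of the mixed capillary--Dirichlet problem for Jang's equation at each fixed $\gamma_k$ is not something one can quote off the shelf; Metzger's route stays entirely within the Dirichlet theory of \cite{yau,Andersson}. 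So while your outline is plausible under the extra hypothesis of strict stability, it is both a genuinely different argument and, as stated, carries a gap relative to the actual assumptions of the theorem.
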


This existence theorem still holds if the boundary condition is changed from  $f_0 |_{\partial ^+ \M} = 0$ to asymptotic decaying if the initial data set is asymptotically flat (c.f. \cite[Remark 3.3]{metzger}).  \cite[Proposition 3.1]{jek} provided a proof for this case for higher dimension $ 3\le n\le 7 $. Moreover, if there is no MITS in $ \M $, then $ \Omega_0=\M $.\par

With the  blowup solution $ f_0 $ constructed in Theorem \ref{thm:blowup} , J. Metzger showed
that under the assumption of strict stability, the graph of $ f_0 $ can be written as the graph of a function $ u_0 $ on the cylindrical end, whose decay rate  is
exponential with a power directly related to the principal eigenvalue
of the MOTS. The general idea is to show the existence of a super-solution
with at most logarithmic blowup of the desired rate, c.f.\cite[Theorem 4.2, Theorem 4.4]{metzger}: 
\begin{theorem}\cite[Theorem 4.2]{metzger}
	\label{thm:exponential}
	Let $N_0 = \f_0$ be the manifold constructed in Theorem \ref{thm:blowup} and 
	assume the situation of Theorem~\ref{thm:sideways}. Then there exists $\bar z = \bar z(\epsilon)$,
	depending also on the geometry of $(\M,g,k)$, such that $N_0 \cap (U
	\times [\bar z,\infty))$ can be written as the graph of a function
	$u_0$ over $C_{\bar z}:= \Sigma \times [\bar z,\infty).$
	If in addition $\Sigma$ is
	strictly stable with principal eigenvalue $\lambda >0$, then
	for all $\delta < \sqrt{\lambda}$ there exists $c=c(\delta)$ depending only
	on the data $(\M,g,k)$ and $\delta$ such that
	\begin{equation*}
	|u_0(p,z)| + |^{C_{\bar z}}\nabla u_0(p,z)| + |^{C_{\bar z}}\nabla^2 u_0(p,z)| \leq c\exp(-\delta z).
	\end{equation*}
	where  $^{C_{\bar z}}\nabla$ is the covariant derivative w.r.t. the induced metric on $C_{\bar{z}}$. 
\end{theorem}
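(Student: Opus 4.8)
The plan is to reduce, via Theorem~\ref{thm:sideways}, to an exponential decay estimate for the graph function $u_0$ on the half-cylinder $C_{\bar z}=\Sigma\times[\bar z,\infty)$; to build an explicit barrier for the rewritten Jang equation out of the principal eigenfunction of $L_\Sigma$; to run a comparison principle for the $C^0$ bound; and finally to bootstrap to $C^1$ and $C^2$. Theorem~\ref{thm:sideways} already provides, for $\bar z$ large depending only on the data, that $N_0\cap(U\times[\bar z,\infty))$ is the normal graph of $u_0$ over $C_{\bar z}$ with $|u_0|+|{}^{C_{\bar z}}\nabla u_0|+|{}^{C_{\bar z}}\nabla^2 u_0|$ as small as desired there; consequently the Jang operator $\mathcal A$, rewritten for normal graphs over $\Sigma\times\R$, is a uniformly elliptic quasilinear operator with smooth coefficients on $C_{\bar z}$, whose configuration $u_0\equiv0$ (the cylinder $\Sigma\times\R$) is an exact solution precisely because $\theta^+[\Sigma]=0$.

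The crux is the barrier. I would expand $\mathcal A[u]$ about $u\equiv0$: since the cylinder is an exact solution, $\mathcal A[0]=0$, and the first variation is controlled by the variation of $\theta^+$, i.e. by $L_\Sigma$, together with the vertical bending term $\partial_z^2$. Thus $\mathcal A[u]=-\partial_z^2 u+L_\Sigma u+B+Q$, where $B$ is first order in $\partial u$ with coefficients decaying in $z$ and $Q=Q(u,\partial u,\partial^2 u)$ is at least quadratically small. On the ansatz $\bar u(p,z)=c\,e^{-\delta z}\beta(p)$, with $L_\Sigma\beta=\lambda\beta$ and $\beta>0$ (cf. \cite{stable}), the leading part is a fixed positive multiple of $(\lambda-\delta^2)\bar u$, of the sign needed for a supersolution exactly because $\delta<\sqrt\lambda$; choosing $\bar z$ large so that the $z$-decaying and quadratic errors are dominated, $\bar u$ is a genuine one-sided barrier for $\mathcal A$ on $C_{\bar z}$, and $-\bar u$ serves on the other side. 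It is natural here to use Foliation~B, whose leaves flow along $\beta\nu$ rather than $\nu$: this installs $\beta$ as the transverse profile of the barrier rather than scattering it through the lower-order terms, which is what keeps the error estimates manageable.

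For the comparison, $\bar z$ being fixed (large), set $c_0:=\sup_{\{z=\bar z\}}u_0$ and $c:=c_0\,e^{\delta\bar z}/\min_\Sigma\beta$, so that $\bar u\ge u_0$ on $\{z=\bar z\}$; note that $\bar u$ is small on $C_{\bar z}$ (comparable to $c_0$) even though $c$ itself need not be small, which is why the quadratic error in $\mathcal A[\bar u]$ is controlled. Since $u_0\to0$ and $\bar u\to0$ as $z\to\infty$, the difference $u_0-\bar u$ has nonpositive $\limsup$ at the infinite end and is $\le0$ on $\{z=\bar z\}$, so any positive interior supremum would be attained and contradict the supersolution inequality through the maximum principle for $\mathcal A$. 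Hence $-\bar u\le u_0\le\bar u$ on all of $C_{\bar z}$, which is the $C^0$ exponential decay; read against the distance foliation it already says $f_0$ is bounded above by $-\tfrac1\delta\log\tau+\text{const}$ near $\Sigma$. To get the derivative bounds, one rescales $u_0$ at large height $z_0$ and applies interior Schauder estimates: $u_0$ solves a uniformly elliptic quasilinear equation whose coefficients and inhomogeneity converge exponentially in $z$ to those of the exactly cylindrical problem, so the $C^0$ decay $|u_0|\le c\,e^{-\delta z}$ is promoted to $|{}^{C_{\bar z}}\nabla u_0|+|{}^{C_{\bar z}}\nabla^2 u_0|\le c'\,e^{-\delta z}$ with $c'$ depending only on $\delta$ and the data. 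This is the $C^0\Rightarrow C^2$ passage carried out in Section~4 of \cite{metzger}.

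The main obstacle is the expansion in the second paragraph: one must compute $\HH-\PP$ for normal graphs over the cylinder carefully enough to see that the only obstruction to $\bar u$ being a supersolution is the quantity $\lambda-\delta^2$, and then check that the first-order terms, the $O(\tau)$ geometric errors coming from the leaves $\Sigma_s$, and the quadratic nonlinearity are all uniformly dominated on $C_{\bar z}$ — this is exactly where strict stability ($\lambda>0$, so that $\delta<\sqrt\lambda$ leaves room) is indispensable, and where the choice of Foliation~B does the combinatorial work. A secondary difficulty is that the comparison must be run on the noncompact end $z\to\infty$, where both $u_0$ and $\bar u$ vanish, using only the qualitative fact $u_0\to0$ from Theorem~\ref{thm:sideways} and no a priori decay rate.
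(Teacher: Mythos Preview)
Your proposal is correct and follows essentially the same line as Metzger's argument (which the paper outlines in Section~\ref{gradient}): expand the Jang operator for graphs over the cylinder so that its linearization at $u\equiv 0$ is $\partial_z^2-L_\Sigma$, build a barrier whose leading term produces the factor $(\lambda-\delta^2)>0$, apply the comparison principle on the half-cylinder, and then bootstrap $C^0$ to $C^2$ via interior Schauder estimates for the frozen-coefficient linear equation. The only cosmetic difference is that Metzger constructs the supersolution on the $\M$ side as a function $\phi(s)$ of the Foliation~B parameter with logarithmic blowup $-\tfrac{1}{\delta}\log s$ (which is exactly your barrier after the change of variables $z=f$, $u=s$), whereas you write the barrier directly on the cylinder; note in particular that in Foliation~B the eigenfunction $\beta$ is already absorbed into the transverse coordinate, so the barrier there is simply $c\,e^{-\delta z}$ rather than $c\,e^{-\delta z}\beta(p)$---the extra factor of $\beta$ you write belongs to the Foliation~A picture.
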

A non-existence theorems for the blowup solution with higher blowup rate was also proved in \cite{metzger}:
\begin{theorem}\cite[Theorem 4.4]{metzger}
	\label{thm:rigidity}
	Under the assumptions of Theorem~\ref{thm:exponential} there are no
	solutions $h : \Sigma\times [0,\infty) \to \R$ to the equation \eqref{jang}
	with decay 
	\begin{equation*}
	|h(p,z)| + |^{C_{\bar z}}\nabla h(p,z)| + |^{C_{\bar z}}\nabla^2 h(p,z)|
	\leq
	C \exp(-\delta z)
	\end{equation*}
	such that $\delta > \sqrt{\lambda}$ and $h>0$. Here, $^{C_{\bar z}}\nabla$ is the covariant derivative w.r.t. the induced metric on $C_{\bar{z}}$. 
\end{theorem}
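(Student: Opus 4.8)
The plan is to argue by contradiction through a comparison principle. If a positive solution $h$ on $\Sigma\times[0,\infty)$ with decay of order $\exp(-\delta z)$, $\delta>\sqrt\lambda$, existed, I would build an honest positive \emph{subsolution} of Jang's equation on a half-cylinder $\Sigma\times[\bar z,\infty)$ whose decay is \emph{exactly} of order $\exp(-\sqrt\lambda z)$; sliding it below $h$ on the finite slice $\Sigma\times\{\bar z\}$ and comparing on the whole half-cylinder would then force $h\gtrsim\exp(-\sqrt\lambda z)$ there, contradicting $\delta>\sqrt\lambda$.

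First I would record the structure of \eqref{jang} in cylinder coordinates. Writing the graph of $h$ over $C_{\bar z}=\Sigma\times[\bar z,\infty)$ as in \cite[Proposition 4]{yau} and \cite[Section 4]{metzger}, Jang's equation becomes a quasilinear elliptic equation for $h$ over the cylinder. Since $g+dt^2$ and the trivially extended tensor $k$ are invariant under translation in the $\R$-factor, this equation is autonomous in $z$; the flat cylinder over $\Sigma$ solves it, and its linearization at $h=0$ is, up to sign conventions, the model operator $\partial_z^2-L_\Sigma$, where $L_\Sigma$ is the MOTS stability operator — this is precisely the channel through which the principal eigenvalue enters these blow-up estimates. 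Letting $\beta>0$ be the principal eigenfunction, $L_\Sigma\beta=\lambda\beta$, the function $w_0(p,z):=\beta(p)\exp(-\sqrt\lambda z)$ satisfies $(\partial_z^2-L_\Sigma)w_0=0$, so the Jang operator evaluated on $\epsilon w_0$ reduces to its quadratic remainder, of size $O(\epsilon^2\exp(-2\sqrt\lambda z))$, which has no definite sign — hence $\epsilon w_0$ alone is not a subsolution.

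The correction is to subtract a faster-decaying positive mode. Fix $0<\eta<\sqrt\lambda$ and set $w_\eta(p,z):=\beta(p)\exp(-(\sqrt\lambda+\eta)z)$; then $(\partial_z^2-L_\Sigma)w_\eta=(2\sqrt\lambda\,\eta+\eta^2)\,w_\eta$, a quantity of \emph{definite sign} and of order $\exp(-(\sqrt\lambda+\eta)z)$. I would put $\underline h:=\epsilon w_0-A\,w_\eta=\beta\exp(-\sqrt\lambda z)\,\bigl(\epsilon-A\exp(-\eta z)\bigr)$, fixing the ratio $\kappa:=A/\epsilon$ once and for all and choosing $\bar z$ with $\kappa\exp(-\eta\bar z)<\tfrac12$, so that $\underline h>0$ on $\Sigma\times[\bar z,\infty)$. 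Since $(\partial_z^2-L_\Sigma)\underline h=-A(2\sqrt\lambda\,\eta+\eta^2)w_\eta$, the linear part of the Jang operator on $\underline h$ has a definite sign and order $\exp(-(\sqrt\lambda+\eta)z)$, which because $\eta<\sqrt\lambda$ dominates the quadratic remainder of order $\exp(-2\sqrt\lambda z)$; taking $\epsilon$ small (which pushes the threshold for this domination below $\bar z$, with $\bar z$ fixed once $\kappa$ is fixed), $\underline h$ becomes a subsolution of \eqref{jang} on $\Sigma\times[\bar z,\infty)$ in the sense appropriate for the comparison principle, with $\underline h\sim\epsilon\beta\exp(-\sqrt\lambda z)$ as $z\to\infty$.

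Finally I would compare. Shrinking $\epsilon$ further — $\bar z$ and $\kappa$ now fixed — gives $\underline h(\cdot,\bar z)\le h(\cdot,\bar z)$ on the compact slice $\Sigma\times\{\bar z\}$, since $h>0$ there has a positive minimum while $\underline h|_{z=\bar z}=O(\epsilon)$; and $\underline h,h\to0$ as $z\to\infty$. The comparison principle for the Jang operator on the unbounded cylinder $\Sigma\times[\bar z,\infty)$ — legitimate here because strict stability supplies the everywhere-positive $w_0$ solving the linearized equation, which one uses to conjugate away the zeroth-order term before invoking the maximum principle — then gives $\underline h\le h$ throughout, so $h(p,z)\ge\tfrac12\epsilon(\min_\Sigma\beta)\exp(-\sqrt\lambda z)$ for all large $z$, contradicting $|h|\le C\exp(-\delta z)$ with $\delta>\sqrt\lambda$. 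Hence no such $h$ exists. The main obstacle is the quasilinear perturbation analysis on the half-infinite cylinder: one must control the quadratic remainder of the Jang operator uniformly in $z$ and confirm that the single corrector $w_\eta$ with $\eta<\sqrt\lambda$ dominates it on all of $\Sigma\times[\bar z,\infty)$, with $\bar z$ chosen independently of $\epsilon$; a secondary point is that $L_\Sigma$ is not self-adjoint, so one works throughout with its real, positive principal eigenfunction rather than a spectral decomposition, and one must ensure that the comparison principle is valid on the non-compact cylinder — which is exactly what the positive solution $w_0$ provides.
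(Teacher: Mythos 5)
Your overall strategy (a positive comparison function of exact order $e^{-\sqrt{\lambda}z}$ slid under $h$ on a slice, then propagated by a comparison principle on the half-cylinder) is reasonable, but the key construction has the sign of the corrector backwards, and this is exactly the step the whole argument hinges on. In the cylindrical form of Jang's operator used here (Section \ref{gradient}, following Metzger), $\J[v]=\partial_z^2 v+\gamma^{ij}\nabla^2_{ij}v-2\gamma^{ij}\partial_i v\,k(\partial_s,\partial_j)-\theta^+[\Sigma_v]+Q$, the second-order part is positive elliptic, so to conclude $\underline h\le h$ from $\underline h\le h$ at $z=\bar z$ you need $\J[\underline h]\ge 0=\J[h]$ (this matches the paper's convention on the $\M$-side: sub barriers satisfy $\J\ge0$ and lie below the solution, super barriers satisfy $\J\le0$ and lie above). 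Your ansatz $\underline h=\epsilon w_0-Aw_\eta$ gives $(\partial_z^2-L_\Sigma)\underline h=-A(2\sqrt{\lambda}\eta+\eta^2)w_\eta<0$, and since the quadratic remainder is $O(\epsilon^2 e^{-2\sqrt{\lambda}z})=o(e^{-(\sqrt{\lambda}+\eta)z})$, what you have actually produced is $\J[\underline h]<0$ for large $z$: a supersolution, not a subsolution. A supersolution lying below $h$ on the slice yields no conclusion, so the contradiction never gets off the ground. The remedy is to add rather than subtract the fast mode: $\underline h=\epsilon w_0+Aw_\eta$ is automatically positive, still of exact order $e^{-\sqrt{\lambda}z}$, and its linear part $+A(2\sqrt{\lambda}\eta+\eta^2)w_\eta$ dominates the remainder (for suitable $A,\epsilon$ and $\bar z$), giving the needed $\J[\underline h]\ge 0$.

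A secondary problem is the comparison principle itself on the noncompact cylinder. The function you conjugate by must be a genuine positive supersolution of the linear operator obtained by differencing $\J[\underline h]$ and $\J[h]$, whose coefficients differ from the model $\partial_z^2-L_\Sigma$ by sign-indefinite, exponentially decaying terms; $w_0$ only solves the model equation exactly, with no margin to absorb these perturbations. Worse, $w_0$ decays at the borderline rate $e^{-\sqrt{\lambda}z}$, so the quotient $(\underline h-h)/w_0$ tends to $\epsilon>0$ at infinity and the maximum principle at infinity fails with this choice. Both defects are repaired by conjugating instead with $\chi=\beta e^{-\mu z}$ for some fixed $0<\mu<\sqrt{\lambda}$, whose strict margin $(\mu^2-\lambda)<0$ absorbs the decaying error terms after enlarging $\bar z$, and for which $(\underline h-h)/\chi\to 0$ at infinity. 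For contrast, the paper does not reprove this cited rigidity statement on the cylinder at all: it deduces it from the sharp lower bound in Theorem \ref{thm1}, whose sub barriers $v_{a,\gamma}$ with $\tfrac34<\gamma<1$ (Proposition \ref{sublemma}) are built on the $\M$-side over a compact collar, where the barrier is finite at $\Sigma$ while $f\to+\infty$, so only the standard comparison principle on a bounded region is needed and the noncompact issues your route must confront simply do not arise.
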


\begin{remark}
	It is worth noting that  Theorem \ref{thm:rigidity}  can be implied by our theorem.
\end{remark}

In \cite{coupled}, H.  Bray and M.   Khuri proposed the generalized Jang's equation, in an attempt to prove the Penrose inequality in the setting of general initial data. It is defined as 
\begin{equation}\label{7}
\left(g^{ij}-\frac{\phi^{2}f^{i}f^{j}}{1+\phi^{2}|\nabla f|^{2}}\right)
\left(\frac{\phi\nabla_{ij}f+\phi_{i}f_{j}+\phi_{j}f_{i}}
{\sqrt{1+\phi^{2}|\nabla f|^{2}}}-k_{ij}\right)=0.
\end{equation}

An appropriate choice of $ \phi $ will provide a proof for Penrose inequality. However, this  involves solving coupled equations, and the full existence theorem for the solution remains unsolved, except for spherical symmetric case,  c.f. \cite{coupled}.
In \cite{han}, Q. Han and M. Khuri proved the existence of blowup solution of generalized Jang's equation, in the case when $ \phi $ is fixed and independent of the solution, on an asymptotically flat initial data set. They also gave the blowup rate estimates for the solution they constructed. 

Since this article is not focusing on the generalized version Eq.\eqref{7}, we only include their theorem for the case of $ \phi \equiv1$:

\begin{theorem}\cite[Theorem 1.1]{han}\label{han}(case  $ \phi \equiv1$)
	Suppose that $(\M,g,k)$ is a smooth, asymptotically flat initial data
	set, with outermost apparent horizon boundary $\partial \M$ composed of MOTS  $ \partial^+\M $ and MITS  $ \partial^-\M $. Denote by $\tau$ the distance function from $ \partial \M $ and  $\Sigma_{\tau}$  the level sets of the geodesic flow emanating from $\partial \M$. Assume $ c^{-1}\tau\leq\theta^{\pm}(\Sigma_{\tau})\leq c\tau $ for some constants   $c>0$, then
	there exists a smooth asymptotically decaying solution $f$ of the 
	Jang's equation \eqref{jang}, such that
	$f(x)\rightarrow\pm\infty$ as $x\rightarrow\partial^{\pm}\M$. More
	precisely, in a neighborhood of $\partial^{\pm}\M$:
	\begin{equation}
	-\alpha^{-1}\log\tau+\beta^{-1} \leq \pm f
	\leq -\alpha\log\tau+\beta
	\end{equation}
	
	for some positive constants $\alpha$ and $\beta$.

\end{theorem}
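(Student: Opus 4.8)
Following \cite{yau} and \cite{metzger}, I would prove this by first manufacturing explicit logarithmic barriers in a collar of the apparent horizon -- this is where the two-sided pinching $c^{-1}\tau\le\theta^{\pm}(\Sigma_\tau)\le c\tau$ enters -- then solving a family of regularized problems on domains exhausting $\M\setminus\partial\M$ and passing to the limit, the barriers simultaneously forcing the blow-up and pinning its logarithmic rate.

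\emph{Step 1: radial barriers.} In a collar $\{0<\tau\le\tau_0\}$ of $\partial^{+}\M$ on which $\tau=\tau_{+}$ (the distance to $\partial^{+}\M$) is smooth, write $n:=\nabla\tau$ for the unit normal of the level surface $\Sigma_\tau$ pointing into $\M$. For a radial trial function $f=\phi(\tau)$, using $\nabla^{2}\tau(n,\cdot)=0$, $\Delta\tau=H_{\Sigma_\tau}$, $\operatorname{tr}_{g}k-k(n,n)=\operatorname{tr}_{\Sigma_\tau}k$, and $\theta^{+}(\Sigma_\tau)=H_{\Sigma_\tau}+\operatorname{tr}_{\Sigma_\tau}k$, a direct computation gives
\begin{equation*}
\J[\phi]=-\theta^{+}(\Sigma_\tau)+\frac{\phi''}{(1+(\phi')^{2})^{3/2}}+\Big(1+\frac{\phi'}{\sqrt{1+(\phi')^{2}}}\Big)H_{\Sigma_\tau}-\frac{k(n,n)}{1+(\phi')^{2}}.
\end{equation*}
Inserting $\phi(\tau)=-\sigma^{-1}\log\tau+b$ (so $\phi\to+\infty$, $\phi'=-(\sigma\tau)^{-1}$, $\phi''=(\sigma\tau^{2})^{-1}$), the third and fourth terms are $O(\tau^{2})$ while the second equals $\sigma^{2}\tau\,(1+o(1))$, so
\begin{equation*}
\J[\phi]=\tau\Big(\sigma^{2}-\tfrac{\theta^{+}(\Sigma_\tau)}{\tau}\Big)+O(\tau^{2}),\qquad \tfrac{\theta^{+}(\Sigma_\tau)}{\tau}\in[c^{-1},c].
\end{equation*}
Hence, near $\tau=0$, the choice $\sigma^{2}<c^{-1}$ gives $\J[\phi]\le 0$ -- a super-solution blowing up like $-\sigma^{-1}\log\tau$ -- and $\sigma^{2}>c$ gives $\J[\phi]\ge 0$ -- a sub-solution blowing up like $-\sigma^{-1}\log\tau$. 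Fixing $\alpha>\max(\sqrt c,1)$ and the additive constant (which leaves $\J[\phi]$ unchanged, \eqref{jang} being invariant under vertical translation), this produces on a collar of $\partial^{+}\M$ a sub-solution $\phi^{+}_{\mathrm{sub}}(\tau)=-\alpha^{-1}\log\tau+\beta^{-1}$ and a super-solution $\phi^{+}_{\mathrm{sup}}(\tau)=-\alpha\log\tau+\beta$ with $\phi^{+}_{\mathrm{sub}}\le\phi^{+}_{\mathrm{sup}}$ for $\tau$ small; a parallel computation in a collar of $\partial^{-}\M$, with $f\to-\infty$ and the expansion $\theta^{-}(\Sigma_\tau)$, yields the analogous barriers there.

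\emph{Steps 2--3: regularization, uniform estimates, limit.} For $\epsilon\in(0,\tau_0)$ put $\M_\epsilon=\{\tau_{+}\ge\epsilon,\ \tau_{-}\ge\epsilon\}$ and solve on $\M_\epsilon$ a regularization of \eqref{jang}: either the Dirichlet problem with $f_\epsilon$ on $\{\tau_{\pm}=\epsilon\}$ set equal to the value there of the Step~1 super-barrier (which diverges as $\epsilon\downarrow0$), or the capillary problem with vertical contact of $\operatorname{graph}f_\epsilon$ along $\{\tau_{\pm}=\epsilon\}\times\R$ as in \cite{metzger}. Either problem is solved by the Schoen--Yau scheme -- solve $\J[f]=tf$, then let $t\downarrow0$, using \cite[Proposition 2]{yau} and the Step~1 barriers (which also serve as the boundary barriers at $\{\tau_{\pm}=\epsilon\}$) -- and the asymptotically decaying behaviour at spatial infinity is built in via the Schoen--Yau end barriers, available since $\operatorname{tr}_{g}k=O(|x|^{-\beta})$ with $\beta>2$. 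Because the boundary data is the super-barrier value, comparing on each collar $\{\epsilon\le\tau_{\pm}\le\tau_0\}$ gives $\phi^{\pm}_{\mathrm{sub}}\le\pm f_\epsilon\le\phi^{\pm}_{\mathrm{sup}}$ there with constants \emph{independent of $\epsilon$}; away from the horizon, the Schoen--Yau interior gradient and Schauder estimates (\cite[Proposition 2]{yau}) apply -- here one uses that $\partial\M$ is the \emph{outermost} apparent horizon with no MITS inside, so the graphs of $f_\epsilon$ cannot develop a cylindrical end anywhere but over $\partial\M$ -- yielding uniform $C^{2}_{\mathrm{loc}}(\M\setminus\partial\M)$ bounds. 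A diagonal subsequence then converges, $f_\epsilon\to f$ in $C^{2}_{\mathrm{loc}}(\M\setminus\partial\M)$, with $\J[f]=0$, $f$ asymptotically decaying, and $\phi^{\pm}_{\mathrm{sub}}\le\pm f\le\phi^{\pm}_{\mathrm{sup}}$ in the collars; in particular $f\to\pm\infty$ at $\partial^{\pm}\M$ with the stated two-sided logarithmic bound, and the blow-up is genuine since $\pm f_\epsilon\ge\phi^{\pm}_{\mathrm{sub}}(\tau')\to+\infty$ on each fixed level $\{\tau_{\pm}=\tau'\}$ already for $\epsilon<\tau'$.

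\emph{Main obstacle.} The crux is Step~2, in two respects. First, the barrier construction demands tracking the $O(\tau)$ and $O(\tau^{2})$ remainders in $\J[\phi]$ precisely, and this is exactly where the \emph{two-sided} bound $c^{-1}\tau\le\theta^{\pm}(\Sigma_\tau)\le c\tau$ -- not merely the vanishing of $\theta^{\pm}$ on the horizon -- is indispensable; the admissible logarithmic coefficients $\alpha^{-1}$ and $\alpha$ are dictated by $c$. Second, one must establish solvability of the regularized problems on $\M_\epsilon$ and, above all, a priori estimates with constants \emph{uniform in $\epsilon$}; this rests on the Jang-equation a priori estimates of Schoen--Yau together with the blow-up analysis excluding cylindrical ends away from $\partial\M$, and it requires particular care near $\{\tau_{\pm}=\epsilon\}$. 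The remaining ingredients -- the comparison principle and elliptic bootstrap -- are routine.
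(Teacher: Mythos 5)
This statement is quoted from \cite{han}; the paper does not reprove it, so there is no in-paper proof to compare against, and the closest in-house analogue is the barrier machinery of Section \ref{barrier}. Your sketch follows the same route as the cited source: radial logarithmic barriers calibrated by the two-sided pinching $c^{-1}\tau\le\theta^{\pm}(\Sigma_\tau)\le c\tau$, regularized problems on truncated domains, uniform estimates, and a diagonal limit. Your barrier computation is correct: writing $Q=\phi'/\sqrt{1+(\phi')^2}$, your expression equals the paper's formula \eqref{local} with $\beta=1$ because $Q\theta^+-(1+Q)P=-\theta^++(1+Q)H$, and the resulting calibration (supersolution for $\sigma^2<c^{-1}$, subsolution for $\sigma^2>c$, hence $\alpha>\sqrt{c}$) is exactly what the hypothesis buys.

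Two points deserve flagging. First, the collar comparison on $\{\epsilon\le\tau\le\tau_0\}$ needs a uniform-in-$\epsilon$ two-sided $C^0$ bound for $f_\epsilon$ on the fixed outer leaf $\{\tau=\tau_0\}$ before the additive constants can be fixed; interior Schauder estimates alone do not give this (the equation is invariant under vertical translation, so the family could drift), and one must pin the solutions by the decay at spatial infinity together with the Schoen--Yau a priori estimates of \cite[Proposition 2]{yau} and the exclusion of interior apparent horizons via the outermost hypothesis. You name these ingredients, but be aware this is where the real work of \cite{han} (and of Section \ref{apriori} in this paper, in Metzger's setting) sits, not a routine afterthought. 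Second, note the structural difference from this paper's own Section \ref{barrier}: there the upper logarithmic bound must hold for an \emph{arbitrary} blow-up solution, so a supersolution that itself blows up cannot be applied until one first proves the order control of Proposition \ref{w} by iterating barriers on shrinking annuli; your truncation trick (prescribing the super-barrier value on $\{\tau=\epsilon\}$) sidesteps that issue, but consequently yields the upper bound only for the solution you construct. That is all Theorem \ref{han} claims, so your argument is adequate for this statement, but it would not suffice for the paper's Theorem \ref{thm1}.
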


\begin{remark}
	The lower bound blowup control in the above theorem is not only valid for the solution constructed in \cite{han}, but also valid for all blowup solutions. This is because the family of sub barriers used to prove the lower bound in Theorem \ref{han} are finite at horizon, thus  can be applied for all possible blowup solutions. However in this article,  a different family of sub barriers are constructed, which  provide a better estimate for  the blowup solution. 
\end{remark}

In \cite{williams}, C. Williams showed that for  large classes of spherically symmetric initial data, there are solutions of the Jang equation which blow up at non-outermost MOTSs, i.e. MOTSs which lie strictly inside of other MOTSs, and even inside of strictly outer trapped surfaces. \par

Until now there is no result about the uniqueness of  the solution of Jang's equation on a general initial data set. For time symmetric case, Jang's equation  is reduced to a Jenkins-Serrin type equation. Under some boundary  constrain,  M. Eichmair and  J. Metzger were able to prove the uniqueness for solution of Jang's equation in time symmetric case, c.f. \cite[Section 5]{jek}.

Thus in the time symmetric case, the existence and uniqueness of the asymptotically decaying solution of Jang's equation \eqref{jang} which blows up  at outermost MOTS has been proved. We also know by \cite[Theorem 1.1]{han} that in this case the blowup rate near horizon is $ \log\tau $, where $ \tau  $ is the distance from the horizon. However, we still don't know the coefficient of $ \log\tau $  in this case, and whether there are any other lower order blowup terms. In this paper, we will show that even in the non-time-symmetric case, for a blowup solution $ f $ of Jang's equation \eqref{jang} at a strictly stable MOTS with principal eigenvalue $ \lambda $,  the only blowup term of $ f $ is exactly $  -\dfrac{1}{\sqrt{\lambda}} \log\tau$. We are also able to prove that the gradient of $ f $ is of order $ \tau^{-1} $. For these two parts of the result we don't need extra assumptions on spacetime such as the dominant energy condition or  asymptotic flatness. This is because our estimation is local near horizon.\par

The third part of our result is to apply these estimates to the Jang's slice $ N_0=\f_0 $ constructed by J. Metzger in \cite{metzger} to prove a Penrose-like inequality. \par 
The  Penrose inequality may be thought of as a refinement of the Positive Mass Theorem when black holes are present. It relates the total ADM mass (of a chosen
end) $m$ to the area $A$ of its outermost minimal area enclosure. The inequality states that: 
\begin{equation}
m\geq\sqrt{\frac{A}{16\pi}}.
\end{equation}
And furthermore it asserts that if equality holds and the outermost minimal area enclosure is the boundary of an open bounded domain $U\subset M$, then $(M-U,g)$ admits an isometric embedding into
the Schwarzschild spacetime with second fundamental form given by $k=0$.

The Riemannian Penrose inequality (time symmetric case) states:

\begin{theorem}[Riemannian Penrose Inequality]
	Let $(M^n,g)$ be a complete asymptotically flat manifold with nonnegative scalar curvature, where $n<8$.  Fix one end. Let $m$ be the mass of that end, and let $A$ be the area of an outer minimizing horizon (with one or more components). Let $\omega_{n-1}$ be the area of the standard unit $(n-1)$-sphere.  Then
	$$m\geq {1\over2} \left({A\over \omega_{n-1}}\right)^{n-2\over n-1},$$
	with equality if and only if the part of $(M,g)$ outside the horizon is isometric to a Riemannian Schwarzschild manifold outside its unique outer minimizing horizon.
\end{theorem}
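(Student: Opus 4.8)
The plan is to prove the statement along its two classical routes, according to the number of components of the horizon and to the dimension, invoking the Riemannian positive mass theorem (the first theorem of this section) only at the final step. In dimension three with a \emph{connected} horizon I would follow Huisken--Ilmanen and use the Hawking mass
\[
m_H(\Sigma)=\sqrt{\frac{|\Sigma|}{16\pi}}\Big(1-\frac{1}{16\pi}\int_\Sigma H^2\,d\mu\Big).
\]
Geroch's computation shows that along a smooth inverse mean curvature flow $\partial_t x=\nu/H$ by surfaces $\Sigma_t$ the quantity $m_H(\Sigma_t)$ is nondecreasing; the proof combines the evolution equations for $|\Sigma_t|$ and for $\int_{\Sigma_t}H^2$ with the Gauss equation, the Gauss--Bonnet theorem (whence the need for $\Sigma_t$ to be connected, so its Euler characteristic is at most $2$), and the hypothesis $R\ge0$. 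Since the horizon is minimal, $m_H(\Sigma)=\sqrt{A/16\pi}$, and since the leaves round off at infinity, $m_H(\Sigma_t)\to m$; monotonicity then yields $m\ge\sqrt{A/16\pi}=\tfrac12(A/\omega_2)^{1/2}$, the $n=3$ form of the inequality.

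A smooth flow need not exist (the mean curvature can vanish, and leaves can fail to be outward minimizing), so the second step is to pass to the Huisken--Ilmanen weak level-set formulation: $\Sigma_t=\partial\{u<t\}$ for a weak solution $u$ of $\mathrm{div}(\nabla u/|\nabla u|)=|\nabla u|$ produced by elliptic regularization, the defining feature being that whenever the evolving surface fails to be outward minimizing it jumps to its strictly outward minimizing hull. The substance of their work is that across such jumps the Hawking mass does not decrease --- areas are preserved and the $\int H^2$ term can only drop --- so monotonicity survives, and the weak flow exists for all time and exhausts the end. This settles the connected case, and, by restricting to a single component of a multi-component horizon, gives $m\ge\sqrt{A_i/16\pi}$ for the largest component.

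To recover the \emph{full} area $A=\sum_i|\Sigma_i|$ and the rigidity statement I would instead run Bray's conformal flow $g_t=u_t^4g$, where $u_0\equiv1$ and $u_t$ solves a linear elliptic problem outside the horizon, harmonic there with prescribed normal derivative and tending to a constant at infinity; one checks that (i) the horizon stays minimal and its area is constant, (ii) $R(g_t)\ge0$ is preserved, (iii) the ADM mass $m(t)$ is nonincreasing, and (iv) $(M,g_t)$ converges outside its horizon to a spatial Schwarzschild slice of mass $m(\infty)$. Because Schwarzschild saturates the inequality and area is preserved, $m(\infty)=\tfrac12(A/\omega_2)^{1/2}$, so $m=m(0)\ge m(\infty)=\tfrac12(A/\omega_2)^{1/2}$; and equality forces $m(t)$ constant, hence the conformal factors trivial and $(M,g)$ itself Schwarzschild outside the horizon, which is the rigidity clause. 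For $3\le n<8$ one repeats this with $g_t=u_t^{4/(n-2)}g$ (Bray--Lee); the dimension restriction enters exactly twice --- the outermost minimal hypersurface is regular below dimension $8$, and the Riemannian positive mass theorem, used to identify the limit and close the argument, is available for $n<8$ or for spin $M$ --- and the exponent $(n-2)/(n-1)$ and the constant $\tfrac12\omega_{n-1}^{-(n-2)/(n-1)}$ are exactly those of the $n$-dimensional Schwarzschild metric.

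The analytically hard part is the second step: constructing the weak inverse mean curvature flow and proving that the Geroch/Hawking-mass monotonicity is not destroyed by the jumps --- equivalently, in Bray's picture, point (iii), that the conformal flow does not increase the ADM mass. Both hinge on delicate outer-minimizing and minimal-hull arguments and on the regularity theory for the respective weak solutions; everything else is either a direct computation (Geroch monotonicity, the Schwarzschild model values) or an appeal to the positive mass theorem already quoted above.
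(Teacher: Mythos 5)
Your proposal correctly sketches the standard proofs --- Geroch/Hawking-mass monotonicity under Huisken--Ilmanen's weak inverse mean curvature flow for the connected $n=3$ case, Bray's conformal flow for the full multi-component inequality with rigidity, and the Bray--Lee extension to $n<8$ --- which are exactly the arguments this paper invokes by citation, since it states the Riemannian Penrose Inequality as a known result and gives no proof of its own. Nothing further is needed.
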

The $n=3$ case was  proved by G. Huisken and T. Ilmanen using inverse mean curvature flow method (c.f. \cite{huisken}), and also by H. Bray  using a conformal flow of metrics (c.f.  \cite{bray}). Then H. Bray and D. Lee generalized the result  to the $ n<8 $ case (c.f.   \cite{lee}). M. Herzlich also gave a Penrose-like inequality in  his paper \cite{herz}. \par

For the case of general initial data set $ (\M,g,k) $ when the dominant energy condition $ \mu\geq |J| $ is satisfied, it is possible to obtain a non-sharp Penrose-like inequality with the help of Jang's equation. Recall that on a Jang's slice, the   scalar curvature $ \ov{R} $   can be written as:
\begin{equation}\label{syid}
\ov{R}=16 \pi(\mu-J(\omega))+|h-k|^2_{\ov{g}}+2|q|^2_{\ov{g}}-2div_{\ov{g}}(q)
\end{equation}
where $ \ov{g} $ is the metric on Jang's slice, h is the mean curvature of Jang's slice embedded into $ \M\times\R $, and 
\begin{align*}
&\omega_i=\frac{\nabla_i f}{\sqrt{1+|\nabla f|^2}}\\
& q_i=\frac{f^j}{\sqrt{1+|\nabla f|^2}} (h_{ij}-k_{ij})
\end{align*}

For the case when vector field $ q $ vanishes at a cylindrical end of Jang's slice, M.  Khuri proved  in \cite{Khuri} a Penrose-like inequality for an initial data set with charge. However, until now the Penrose inequality  for the case when $ q $ does not vanish at the cylindrical end of Jang's slice  still remains open. For example,   in Schwarzschild spacetime, we can construct a blowup solution for Jang's equation on its time symmetric slice, and we can calculate that the vector field $ q $ does not  vanish at the cylindrical end of this Jang's slice. In this article,  we will prove a Penrose-like inequality for this case, based on a sharp estimate of the cylindrical end of  Jang's slice.

\section{Barriers Construction}\label{barrier}

Because Theorem \ref{thm1} is a local estimate of the solution   at each strictly stable MOTS boundary component,  we can do the estimate  separately. It is equivalent to prove our result on one of these components. We denote this compact and strictly stable component as $ \Sigma $ with principal eigenvalue $ \lambda>0 $. For convenience we assume  $ f(x)\to +\infty$ when $ x\to \Sigma $.\par

All of our computation in Section \ref{barrier} will be under Foliation B. Under this foliation we can choose $\bar s>0$ small enough such that the
surfaces $\Sigma_s=\Psi_B(\Sigma,s)$ with $s\in[0,\bar s]$ form a
local foliation near $\Sigma$ with lapse $\beta$ such that
\begin{equation*}
\lambda\beta s- \Lambda s^2\leq	\theta^+[\Sigma_s] \leq \lambda\beta s+ \Lambda s^2.
\end{equation*}
for some constant $ \Lambda $.\par
Denote the region swept out by these $\Sigma_s$ by $U_{\bar s}$. Note that
$\partial U_{\bar s} = \Sigma \cup \Sigma_{\bar s}$ and $dist(\Sigma_{\bar
	s},\Sigma)\geq\bar s$. We can assume that $dist(\Sigma_{\bar s},\partial
\M)>0$. On $U_{\bar s}$ we consider a test function of the form $v=\phi(s)$. For
such functions Jang's operator can be computed as follows, cf.~\cite{metzger}, ~\cite{Andersson}. Note that $ \dfrac{\partial \beta}{\partial s} $ does not appear because our $ \beta $ is chosen to be a function on $ \Sigma $.
\begin{align}
\label{local}
\J[\phi]
=&
\frac{\phi'}{\beta\sqrt{1 + \beta^{-2}(\phi')^2}}\theta^+
- \left( 1+ \frac{\phi'}{\beta\sqrt{1 + \beta^{-2}(\phi')^2}}\right) P\\\nonumber
&- \frac{ k(\nu,\nu)}{1 + \beta^{-2}(\phi')^2}
+ \frac{\phi''}{\beta^2(1 + \beta^{-2}(\phi')^2)^{\frac{3}{2}}},
\end{align}
where  $\phi'$ denotes the
derivative of $v=\phi(s)$ with respect to $s$. The quantities $\theta^+$,
$k(\nu,\nu)$ and $P=tr k-k(\nu,\nu)$ are computed on the respective $\Sigma_s$.

\subsection{Super Estimate: Order Control}

The most important part of our proof in this section is the order control of the solution $ f $. If  we know nothing about the blowup order of $ f $ beforehand, it will be  impossible to apply the comparison principle in $ U_{\ov{s}} $, even if we have   a super solution.

In this section we will prove that for any blowup solution $ f $ of Eq.\eqref{jang}, there exist  constants $ c_1>0,  s_1>0$ which only depend on the local  geometry of the initial data set near the horizon, such that $ f<-c_1 \log s $ holds for $ s\in(0,s_1] $. In order to prove this inequality, in Proposition \ref{w} below, we construct a piecewise smooth function $ W(s) $ iteratively, and prove that $ f<W $ by induction. Moreover, we observe for $ s\to 0 $ that the blowup order of $ W(s) $ is exactly $ -c_1\log s $ for some positive constant $ c_1 $, thus  the inequality gets proved. In the  next section, we will refine $ c_1 $ to $ \dfrac{1}{\sqrt{\lambda}} $ by  another family of super barriers which will be constructed in Lemma \ref{superlemma2}.

We first construct a family of auxiliary super barriers under the Foliation B. We will use these auxiliary super barriers to construct the upper bound function $ W(s) $ in Proposition \ref{w}.

For $ \epsilon>0 $, we set functions $w_\epsilon $ to be the following:
\begin{lemma}
	\label{superlemma1}
	For any $ \epsilon>0 $, we construct the following functions $ w_\epsilon $ on $ s\in(\epsilon,\bar s) $:
	\begin{equation}
	w_\epsilon (s)=-\log(s-\epsilon)
	\end{equation}

	These functions have the following properties:
	\begin{enumerate}
		\item $ \lim_{s\to\epsilon}w_\epsilon (s)=+\infty $
		\item There exist constants $ \epsilon_0>0 $ and $ \alpha>0 $ which  only depend on $ (\M,g,k) $, s.t. $ \forall 0<\epsilon\le\epsilon_0 $, $ w_\epsilon (s) $ is a  super solution on $ s\in(\epsilon,(1+\alpha)\epsilon] $.

	\end{enumerate}

\end{lemma}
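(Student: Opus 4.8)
The plan is to verify the two properties directly from the closed form of $w_\epsilon$ and the expression \eqref{local} for $\J$ on functions of the Foliation~B parameter $s$ alone. Property~(1) is immediate, since $w_\epsilon(s)=-\log(s-\epsilon)\to+\infty$ as $s\downarrow\epsilon$. For property~(2) I would substitute $\phi=w_\epsilon$ into \eqref{local}. Writing $t:=s-\epsilon$ one has $\phi'=-t^{-1}$ and $\phi''=t^{-2}$, and the elementary identities $1+\beta^{-2}(\phi')^2=(1+\beta^2t^2)/(\beta^2t^2)$ and $\phi'/\bigl(\beta\sqrt{1+\beta^{-2}(\phi')^2}\bigr)=-1/\sqrt{1+\beta^2t^2}$ turn \eqref{local} into
\begin{equation*}
\J[w_\epsilon]=\frac{-\theta^+}{\sqrt{1+\beta^2t^2}}-\Bigl(1-\frac{1}{\sqrt{1+\beta^2t^2}}\Bigr)P-\frac{\beta^2t^2\,k(\nu,\nu)}{1+\beta^2t^2}+\frac{\beta t}{(1+\beta^2t^2)^{3/2}},
\end{equation*}
all curvature quantities being evaluated on $\Sigma_s$.

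Next I would estimate the four terms for $s$ small. Compactness of $\Sigma$ and the normalization $\min_\Sigma\beta=1$ give $1\le\beta\le\beta_0:=\max_\Sigma\beta$, while $|P|$ and $|k(\nu,\nu)|$ are bounded on a neighborhood of $\Sigma$ by constants depending only on the local geometry; since $1-(1+\beta^2t^2)^{-1/2}\le\tfrac12\beta^2t^2$, the two middle terms are $O(t^2)$ uniformly. Grouping the first and last terms, using $\theta^+[\Sigma_s]\ge\lambda\beta s-\Lambda s^2$ and $\beta t/(1+\beta^2t^2)\le\beta t$, gives
\begin{equation*}
\frac{-\theta^+}{\sqrt{1+\beta^2t^2}}+\frac{\beta t}{(1+\beta^2t^2)^{3/2}}\le\frac{1}{\sqrt{1+\beta^2t^2}}\bigl(-\lambda\beta s+\Lambda s^2+\beta t\bigr),
\end{equation*}
and once $\epsilon$ is small enough that $(1+\beta^2t^2)^{-1/2}\ge\tfrac12$ on the range $t\le\alpha\epsilon$, and provided the bracket is negative there, the right side is at most $\tfrac12\bigl(-\lambda\beta s+\Lambda s^2+\beta t\bigr)$.

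The decisive point --- and, I expect, the only genuinely delicate one --- is the choice of $\alpha$. On $\epsilon<s\le(1+\alpha)\epsilon$ we have $t\le\alpha\epsilon$ and $s=t+\epsilon$, so the bracket equals $\beta t(1-\lambda)-\lambda\beta\epsilon+\Lambda s^2$. The last term $\beta t/(1+\beta^2t^2)^{3/2}$ is itself of order $\epsilon$, not $O(\epsilon^2)$, so it cannot be absorbed into the error; it has to be cancelled against the $\theta^+$ term, and this cancellation keeps the bracket negative precisely when $\alpha(1-\lambda)<\lambda$. This is automatic if $\lambda\ge1$, and if $\lambda<1$ it suffices to take $\alpha:=\lambda/\bigl(2(1+\lambda)\bigr)$, so that $\alpha(1-\lambda)\le\tfrac{\lambda}{2}$; in either case $\alpha$ depends only on $\lambda$, hence on the local geometry. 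With this $\alpha$ the bracket is $\le-\tfrac{\lambda}{2}\beta\epsilon+\Lambda(1+\alpha)^2\epsilon^2\le-\tfrac{\lambda}{2}\epsilon+\Lambda(1+\alpha)^2\epsilon^2$, so, feeding this back and adding the $O(\epsilon^2)$ middle terms,
\begin{equation*}
\J[w_\epsilon]\le-c\,\epsilon+C\,\epsilon^2,
\end{equation*}
with $c>0$ a fixed multiple of $\lambda$ and $C$ depending only on $\lambda,\Lambda,\beta_0,\sup_\Sigma|P|,\sup_\Sigma|k(\nu,\nu)|$. It then remains to fix $\epsilon_0>0$ small enough that this right side is negative, that $(1+\beta^2t^2)^{-1/2}\ge\tfrac12$ for $t\le\alpha\epsilon_0$, and that $(1+\alpha)\epsilon_0\le\bar s$; for every $0<\epsilon\le\epsilon_0$ this gives $\J[w_\epsilon]<0$ on $\epsilon<s\le(1+\alpha)\epsilon$, i.e. $w_\epsilon$ is a (strict) super solution there, with $\alpha$ and $\epsilon_0$ depending only on the initial data near $\Sigma$. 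The remaining bookkeeping --- chiefly the signs when multiplying the bracket by $(1+\beta^2t^2)^{-1/2}\le1$ --- is routine.
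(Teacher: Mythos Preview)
Your proof is correct and follows essentially the same approach as the paper: substitute $w_\epsilon$ into \eqref{local}, identify the leading contribution $-\lambda\beta s+\beta(s-\epsilon)$ with $O(\epsilon^2)$ remainder, then choose $\alpha$ so that $(1-\lambda)\alpha<\lambda/2$ (equivalently, the paper's condition $(1-\lambda)(1+\alpha)<1-\lambda/2$) and $\epsilon_0$ small enough to absorb the quadratic error. Your explicit-inequality bookkeeping with $t=s-\epsilon$ is slightly more detailed than the paper's big-$O$ version, but the argument is the same.
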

\begin{proof}
	Property (1) is obvious, thus we only prove Property (2) here.
	If we  plug   $\phi=w_\epsilon $  into Eq.\eqref{local}, then for $ s\in(\epsilon,(1+\alpha)\epsilon] $, ($ \alpha $ is a constant to be determined),   each term  of Eq.\eqref{local} will be:
	\begin{align*}
	\frac{w_\epsilon'}{\beta\sqrt{1 + \beta^{-2}(w_\epsilon')^2}}=&-\frac{1}{\sqrt{1+\beta^2 (s-\epsilon)^2}}=-1+O((s-\epsilon)^2)\\
	\theta^+=&\lambda\beta s +O(s^2)\\
	\frac{1}{ 1 + \beta^{-2}(w_\epsilon')^2 }=&\frac{\beta^2 (s-\epsilon)^2}{1+\beta^2 (s-\epsilon)^2}=\beta^2(s-\epsilon)^2+O((s-\epsilon)^4)\\
	\frac{w_\epsilon''}{\beta^2( 1 + \beta^{-2}(w_\epsilon')^2)^{\frac{3}{2}}}=&\frac{\beta(s-\epsilon)}{(1+\beta^2(s-\epsilon)^2)^{\frac{3}{2}}}=\beta(s-\epsilon)+O((s-\epsilon)^3)
	\end{align*}
	
	Thus 
	\begin{align*}
	\J[w_\epsilon]=&-\lambda\beta s +\beta(s-\epsilon)+O((s-\epsilon)^2)\\
	=&(1-\lambda)\beta s-\beta\epsilon+O((s-\epsilon)^2)\\
	=&(1-\lambda)\beta s-\beta\epsilon+O( \epsilon^2)
	\end{align*}
	for $ s\in(\epsilon,(1+\alpha)\epsilon] $.\\
	Thus if we set $ \alpha $ to be a constant s.t. $(1-\lambda)(1+ \alpha)<1-\dfrac{\lambda}{2} $,   we will have:
	\begin{equation*}
	\J[w_\epsilon]\leq(1-\frac{\lambda}{2})\beta\epsilon-\beta\epsilon +O(\epsilon^2)=-\frac{\lambda}{2}\beta\epsilon+O(\epsilon^2)<0
	\end{equation*}
	holds for $ \epsilon\in(0,\epsilon_0] $, where $ \epsilon_0 $ is a constant only related to the geometry of $ (\M,g,k) $ near $ \Sigma $.	
\end{proof}

Now we focus on a  blowup solution $ f $ satisfying   Eq.\eqref{jang} on $ U_{\bar s} $. For simplicity we assume that $ \bar{s}>(1+\alpha)\epsilon_0 $.

By the comparison principle, c.f. \cite[Section 10]{Gilbarg}, we get the following result: 
\begin{proposition}
	There exist constants $ \epsilon_0 $ and $ \alpha $, s.t.
	for any function $ f $ that satisfies  \eqref{jang}  on $U_{ \bar s} $ and  $ f\to+\infty $ when $ s\to 0 $, the following inequality :
	\begin{equation}
	\label{superep}
	f(\cdot,s) \le \sup_{q\in\Sigma}f(q,(1+\alpha)\epsilon)+\log(\frac{\alpha\epsilon}{s-\epsilon})
	\end{equation}
	holds for any $\epsilon\in(0,\epsilon_0]$ and $ s\in(\epsilon,(1+\alpha)\epsilon] $. 
\end{proposition}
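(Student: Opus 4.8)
The plan is to run a comparison argument against a vertically translated copy of the barrier $w_\epsilon$ of Lemma~\ref{superlemma1}. The structural fact that makes this work is that Jang's operator $\J$ depends on $x$, $\nabla f$ and $\nabla^2 f$ but \emph{not} on $f$ itself (the tensor $k$ is extended trivially along the $\R$-factor, so it contributes no zeroth-order term); hence $\J$ is invariant under $f\mapsto f+\mathrm{const}$, and the comparison principle for quasilinear elliptic operators without zeroth-order term applies directly, cf.~\cite[Section~10]{Gilbarg}.

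First I would fix $\epsilon\in(0,\epsilon_0]$, with $\epsilon_0$ and $\alpha$ the constants of Lemma~\ref{superlemma1}, put $c:=\sup_{q\in\Sigma}f(q,(1+\alpha)\epsilon)+\log(\alpha\epsilon)$, and consider the translated barrier $\overline{w}(s):=w_\epsilon(s)+c=-\log(s-\epsilon)+c$ on $s\in(\epsilon,(1+\alpha)\epsilon]$. By the invariance just noted together with Lemma~\ref{superlemma1}(2), $\overline{w}$ is a super solution on this range; moreover $\overline{w}((1+\alpha)\epsilon)=\sup_{\Sigma_{(1+\alpha)\epsilon}}f$ and $\overline{w}(s)\to+\infty$ as $s\to\epsilon^{+}$.

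Next I would compare $f$ with $\overline{w}$ --- not on the whole slab $\{\epsilon<s\le(1+\alpha)\epsilon\}$, where $\overline{w}$ is singular on the inner face, but on the sub-slabs $\Omega_{s'}$ swept out by $\Sigma_{s}$ for $s'<s<(1+\alpha)\epsilon$, and then let $s'\to\epsilon^{+}$. On $\partial\Omega_{s'}=\Sigma_{s'}\cup\Sigma_{(1+\alpha)\epsilon}$ one has $\overline{w}\ge f$ on $\Sigma_{(1+\alpha)\epsilon}$ by the choice of $c$, while on $\Sigma_{s'}$ one has $\overline{w}(s')-\sup_{\Sigma_{s'}}f\to+\infty$ as $s'\to\epsilon^{+}$, because $f$ is continuous, hence bounded, up to $\Sigma_{\epsilon}$ (as $\epsilon>0$), whereas $\overline{w}(s')\to+\infty$; so $\overline{w}\ge f$ on all of $\partial\Omega_{s'}$ once $s'$ is close enough to $\epsilon$. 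The comparison principle then yields $\overline{w}\ge f$ on $\Omega_{s'}$. Since for any target $s\in(\epsilon,(1+\alpha)\epsilon]$ we may take $s'\in(\epsilon,s)$ arbitrarily close to $\epsilon$, we conclude $f(\cdot,s)\le\overline{w}(s)=\sup_{q\in\Sigma}f(q,(1+\alpha)\epsilon)+\log\frac{\alpha\epsilon}{s-\epsilon}$, which is exactly \eqref{superep}; the constants $\epsilon_0$ and $\alpha$ are the ones supplied by Lemma~\ref{superlemma1}.

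The only delicate point --- and the main obstacle --- is the behaviour at the inner face $s=\epsilon$, where $\overline{w}$ is infinite, so that the comparison principle cannot be applied on the whole slab at once. The exhaustion by sub-slabs $\Omega_{s'}$, with the boundedness of $f$ on $\Sigma_{s'}$ for $s'>\epsilon$ playing off against the logarithmic divergence of $\overline{w}$, is exactly what circumvents this; the rest is a routine maximum-principle argument, legitimate precisely because the operator that linearizes the difference $\overline{w}-f$ inherits the absence of a zeroth-order term from $\J$.
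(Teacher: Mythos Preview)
Your proof is correct and follows essentially the same approach as the paper: both translate the barrier $w_\epsilon$ of Lemma~\ref{superlemma1} vertically so that it matches $\sup_{\Sigma_{(1+\alpha)\epsilon}}f$ on the outer face, invoke the translation invariance of $\J$ to retain the supersolution property, and use that $f(\cdot,\epsilon)$ is finite while $\overline{w}(s)\to+\infty$ as $s\to\epsilon^+$ to justify the comparison. The paper states this last step tersely (``the fact that $f(\cdot,\epsilon)$ is finite''), whereas you spell out the exhaustion by sub-slabs $\Omega_{s'}$; this is a welcome clarification but not a different argument.
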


\begin{proof}
	
	We assume $ \epsilon_0 $, $ \alpha$  the same as in Lemma \ref{superlemma1} and $ (1+\alpha)\epsilon_0 < \bar s $. Then by   Property (2) of Lemma \ref{superlemma1}, we know that for $ \forall \epsilon\in(0,\epsilon_0) $, $ \ov{ w}_\epsilon(s)=w_\epsilon(s)- w_\epsilon((1+\alpha)\epsilon)+\sup_{q\in\Sigma}f(q,(1+\alpha)\epsilon) $ is a super solution on $ s\in(\epsilon,(1+\alpha)\epsilon] $. This super solution satisfies $ \ov{ w}_\epsilon((1+\alpha)\epsilon)\geq f(\cdot,(1+\alpha)\epsilon) $. Moreover,  by Property (1) in Lemma \ref{superlemma1}, and the fact that $f(\cdot,\epsilon) $ is finite, we know by the comparison principle that this super solution is actually a super barrier for $ f $ for $ s\in(\epsilon,(1+\alpha)\epsilon] $.  	
\end{proof}

In the above proposition we find  super barriers which are only effective in a region which becomes infinitely small when $ \epsilon\to 0 $. This is not enough for obtaining a upper bound for the blowup rate. In  Proposition \ref{w} below we use the above $ w_\epsilon $ to iteratively  construct a upper bound function $ W(s) $ for $ f $ on a fixed region $ s\in(0,(1+\alpha)\epsilon_0] $.

For any function $ f $ that satisfies  \eqref{jang}  on $U_{ \bar{s}} $ and  $ f\to+\infty $ when $ s\to 0 $, we construct the following upper bound $ W(s) $   for $ f $ iteratively for $ s\in(0,(1+\alpha)\epsilon] $:

\begin{proposition}\label{w}
	For a constant $ d\in(0,1)$,  and  function $ f $ that satisfies  Eq.\eqref{jang}  on $U_{ \bar{s}} $ and  $ f\to+\infty $ when $ s\to 0 $, we construct $ W(s) $ iteratively in the following way:
	\begin{enumerate}
		\item For $ s\in[(1+d \alpha)\epsilon_0,(1+\alpha)\epsilon_0], $ define
		$$	W(s)=\log(\frac{\alpha\epsilon_0}{s-\epsilon_0})+\sup_{q\in\Sigma}f(q,(1+\alpha)\epsilon_0)
		$$
		\item For $ n\ge 0 $, if $ W(s) $  is defined in $ s\in\big[\dfrac{(1+d\alpha)^{n+1}}{(1+\alpha)^n} \epsilon_0,\dfrac{(1+d\alpha)^n}{(1+\alpha)^{n-1}}\epsilon_0\big) $, then define
		\begin{equation*}
		W(s)=\log\left(\frac{\alpha\dfrac{(1+d\alpha)^{n+1}}{(1+\alpha)^{n+1}}\epsilon_0}{s-\dfrac{(1+d\alpha)^{n+1}}{(1+\alpha)^{n+1}}\epsilon_0}\right)+W\big(\dfrac{(1+d\alpha)^{n+1}}{(1+\alpha)^n} \epsilon_0\big)
		\end{equation*} 
		for $ s\in\big[\dfrac{(1+d\alpha)^{n+2}}{(1+\alpha)^{n+1}} \epsilon_0,\dfrac{(1+d\alpha)^{n+1}}{(1+\alpha)^{n}}\epsilon_0\big)  $

	\end{enumerate}
	Then $ W(s) $  will be a piecewise smooth function defined on $ s\in(0,(1+\alpha)\epsilon_0] $. Furthermore,  $ W $ will be an upper bound for $ f $, and the following inequality 
	\begin{equation}
	\label{superfix}
	f(p,s)\le W(s)
	\end{equation}
	holds for $ \forall s\in(0,(1+\alpha)\epsilon_0] $ and $ \forall p\in\Sigma $.
	
\end{proposition}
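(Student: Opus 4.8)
The plan is to read the construction of $W$ as an iterated application of the single-step estimate \eqref{superep}, with the base point $\epsilon$ shrunk by a fixed geometric factor at each stage, and then to prove $f\le W$ by induction over the pieces. First I would relabel: put $\epsilon_n:=\big(\tfrac{1+d\alpha}{1+\alpha}\big)^n\epsilon_0$ for $n\ge 0$, so $\epsilon_0>\epsilon_1>\cdots$; here $d\in(0,1)$ enters in two essential ways, namely it forces the ratio $\tfrac{1+d\alpha}{1+\alpha}<1$ (so $\epsilon_n\to 0$) while keeping every $\epsilon_n\le\epsilon_0$ (so Lemma \ref{superlemma1} applies at every stage). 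Writing $J_n:=[(1+d\alpha)\epsilon_n,(1+\alpha)\epsilon_n)$ for $n\ge 1$ and $J_0:=[(1+d\alpha)\epsilon_0,(1+\alpha)\epsilon_0]$, a telescoping check gives the identities $(1+\alpha)\epsilon_{n+1}=(1+d\alpha)\epsilon_n$ and $\epsilon_n\cdot\alpha=(1+\alpha)\epsilon_n-\epsilon_n$, from which the $J_n$ are seen to be consecutive with disjoint interiors, $\bigcup_{n\ge 0}J_n=(0,(1+\alpha)\epsilon_0]$, and each $J_n$ sits inside $U_{\bar s}$ because $(1+d\alpha)\epsilon_0>\epsilon_0$ and $\bar s>(1+\alpha)\epsilon_0$. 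On $J_n$ the recursion in the statement reads $W(s)=w_{\epsilon_n}(s)-w_{\epsilon_n}((1+\alpha)\epsilon_n)+W((1+\alpha)\epsilon_n)$ with $w_{\epsilon_n}$ as in Lemma \ref{superlemma1}; evaluating at $s=(1+\alpha)\epsilon_n$ the $\log$-term becomes $\log\tfrac{\alpha\epsilon_n}{\alpha\epsilon_n}=0$, so $W$ matches across each junction $(1+\alpha)\epsilon_n$, and on the interior of each $J_n$ it is smooth. This establishes that $W$ is a well-defined, continuous, piecewise-smooth function on $(0,(1+\alpha)\epsilon_0]$.

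Next I would prove $f(p,s)\le W(s)$ by induction on $n$. The base case is \eqref{superep} with $\epsilon=\epsilon_0$, which yields $f(\cdot,s)\le\sup_{q\in\Sigma}f(q,(1+\alpha)\epsilon_0)+\log\tfrac{\alpha\epsilon_0}{s-\epsilon_0}=W(s)$ on $(\epsilon_0,(1+\alpha)\epsilon_0]\supset J_0$. For the inductive step, suppose $f\le W$ on $J_0\cup\cdots\cup J_n$; since the left endpoint of $J_n$ is $(1+\alpha)\epsilon_{n+1}$ and lies in $J_n$, this gives $\sup_{q\in\Sigma}f(q,(1+\alpha)\epsilon_{n+1})\le W((1+\alpha)\epsilon_{n+1})$. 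Applying \eqref{superep} with $\epsilon=\epsilon_{n+1}\in(0,\epsilon_0]$ now gives, for all $s\in(\epsilon_{n+1},(1+\alpha)\epsilon_{n+1}]$,
\[
f(\cdot,s)\ \le\ \sup_{q\in\Sigma}f(q,(1+\alpha)\epsilon_{n+1})+\log\tfrac{\alpha\epsilon_{n+1}}{s-\epsilon_{n+1}}\ \le\ W((1+\alpha)\epsilon_{n+1})+\log\tfrac{\alpha\epsilon_{n+1}}{s-\epsilon_{n+1}}\ =\ W(s),
\]
the last equality being exactly the defining formula for $W$ on $J_{n+1}$. As $J_{n+1}\subset(\epsilon_{n+1},(1+\alpha)\epsilon_{n+1}]$, we conclude $f\le W$ on $J_{n+1}$, closing the induction; since $\bigcup_n J_n=(0,(1+\alpha)\epsilon_0]$ this gives $f\le W$ everywhere on that interval.

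I expect this to be a bookkeeping induction rather than a source of real difficulty, the genuine content already living in Lemma \ref{superlemma1} and in \eqref{superep}; the point to get exactly right is the telescoping that makes the shift constants in $W$ disappear at the slices $(1+\alpha)\epsilon_n$ together with the double role of $d\in(0,1)$ just mentioned. What makes the chaining work is that \eqref{superep} is built with the top-slice value $\sup_\Sigma f(\cdot,(1+\alpha)\epsilon)$ as its matching datum, and this is precisely the quantity the previous step of the induction already controls; so no boundary information for $f$ at the \emph{lower} end of a piece is ever needed, the barrier $w_{\epsilon_{n+1}}$ (and hence the comparison principle hidden inside \eqref{superep}) supplying the missing inequality there through its blowup as $s\downarrow\epsilon_{n+1}$. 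Finally, reading off that $W(s)\le c_1(-\log s)+C$ as $s\to 0$ with $c_1,C>0$ depending only on $\alpha,d$ is then immediate by summing $\log(1/d)$ over the pieces, but that asymptotic is used only in the argument following this proposition, not in its proof.
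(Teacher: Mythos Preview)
Your proof is correct and follows essentially the same approach as the paper: induction over the successive intervals, applying \eqref{superep} at each step with $\epsilon$ shrunk by the fixed geometric factor $(1+d\alpha)/(1+\alpha)$, and using the inductive hypothesis at the left endpoint to bound $\sup_{q\in\Sigma}f(q,(1+\alpha)\epsilon_{n+1})$ by $W((1+\alpha)\epsilon_{n+1})$. Your relabeling $\epsilon_n=\big(\tfrac{1+d\alpha}{1+\alpha}\big)^n\epsilon_0$ is a cosmetic improvement that makes the telescoping structure and the continuity of $W$ across the junctions more transparent than in the paper, but the logical content is identical.
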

\begin{proof}
	We will prove Ineq.\eqref{superfix}  on $ s\in(0,(1+\alpha)\epsilon_0] $ by induction.

	First, plug  $ \epsilon=\epsilon_0 $  into Ineq.\eqref{superep}. By the fact that $ (1+d \alpha)\epsilon_0>\epsilon_0 $, we get Ineq.\eqref{superfix} holds on $ s\in[(1+d\alpha)\epsilon_0 ,(1+\alpha)\epsilon_0 ) $.

	Next, if Ineq.\eqref{superfix} holds for $ s\in\big[\dfrac{(1+d\alpha)^{n+1}}{(1+\alpha)^n} \epsilon_0,\dfrac{(1+d\alpha)^n}{(1+\alpha)^{n-1}}\epsilon_0\big) $ for some $ n\ge 0 $, (especially we have $ f(\cdot,\dfrac{(1+d\alpha)^{n+1}}{(1+\alpha)^n} \epsilon_0)\le W(\dfrac{(1+d\alpha)^{n+1}}{(1+\alpha)^n} \epsilon_0) $), we apply Ineq.\eqref{superep} for $ \epsilon=\dfrac{(1+d\alpha)^{n+1}}{(1+\alpha)^{n+1}} \epsilon_0 $ and by the fact that $ \dfrac{(1+d\alpha)^{n+2}}{(1+\alpha)^{n+1}} \epsilon_0 >\epsilon=\dfrac{(1+d\alpha)^{n+1}}{(1+\alpha)^{n+1}} \epsilon_0$, we have:
	\begin{align*}
	f(p,s)\le&\sup_{q\in\Sigma} f(q,\frac{(1+d\alpha)^{n+1}}{(1+\alpha)^{n}} \epsilon_0) + \log\left(\frac{\alpha\dfrac{(1+d\alpha)^{n+1}}{(1+\alpha)^{n+1}}\epsilon_0}{s-\dfrac{(1+d\alpha)^{n+1}}{(1+\alpha)^{n+1}}\epsilon_0}\right)\\
	\le& W\big(\dfrac{(1+d\alpha)^{n+1}}{(1+\alpha)^n} \epsilon_0\big)+\log\left(\frac{\alpha\dfrac{(1+d\alpha)^{n+1}}{(1+\alpha)^{n+1}}\epsilon_0}{s-\dfrac{(1+d\alpha)^{n+1}}{(1+\alpha)^{n+1}}\epsilon_0}\right)\\
	=&W(s)
	\end{align*}
	holds 	for $ s\in\big[\dfrac{(1+d\alpha)^{n+2}}{(1+\alpha)^{n+1}} \epsilon_0,\dfrac{(1+d\alpha)^{n+1}}{(1+\alpha)^{n}}\epsilon_0\big) $, $ p\in\Sigma $.
	
	Thus by induction we know that Ineq.\eqref{superfix} holds for $ s\in\big[\dfrac{(1+d\alpha)^{n+1}}{(1+\alpha)^n} \epsilon_0,\dfrac{(1+d\alpha)^n}{(1+\alpha)^{n-1}}\epsilon_0\big) $, for all interger $ n\ge 0 $, thus it holds for $ s\in(0,(1+\alpha)\epsilon_0] $.
\end{proof} 
In the above construction for $ W $, we can also prove  inductively that $ W\big(\dfrac{(1+d\alpha)^{n}}{(1+\alpha)^{n-1}} \epsilon_0\big)= -n\log d +\sup_{q\in\Sigma}f(q,(1+\alpha)\epsilon_0)$ for any non-negative interger $ n $.

Thus  we know that for  for  $ s\in\big[\dfrac{(1+d\alpha)^{N+1}}{(1+\alpha)^N} \epsilon_0,\dfrac{(1+d\alpha)^N}{(1+\alpha)^{N-1}}\epsilon_0\big) $ :
\begin{align*}
f(\cdot,s)\le& W(s)\\
\le& W\big( \dfrac{(1+d\alpha)^{N+1}}{(1+\alpha)^N} \epsilon_0 \big) \\
=&-(N+1)\log d+\sup_{q\in\Sigma}f(q,(1+\alpha)\epsilon_0) \\
=& (N+1)\log \frac{1}{d}+\sup_{q\in\Sigma}f(q,(1+\alpha)\epsilon_0) \\
<& (1-\log_{\frac{1+ \alpha}{1+d\alpha}}(\frac{s}{(1+\alpha)\epsilon_0}))\log \frac{1}{d}+\sup_{q\in\Sigma}f(q,(1+\alpha)\epsilon_0)
\end{align*}
By setting $ d=\dfrac{1}{2} $, we get:
\begin{align*}\label{half}
f(\cdot,s)\le&
-\frac{\log 2}{\log(1+\alpha)-\log(1+\frac{\alpha}{2})}\log s +\sup_{q\in\Sigma}f(q,(1+\alpha)\epsilon_0)\\
&+\log 2+\frac{\log 2           \log((1+\alpha)\epsilon_0)}{\log(1+\alpha)-\log(1+\frac{\alpha}{2})}
\end{align*}

\begin{remark}
	By the fact that
	\begin{equation*}
	\lim_{d\to1^-}\frac{\log d}{\log \frac{1+d \alpha}{1+\alpha} }=\frac{\alpha+1}{\alpha}
	\end{equation*}
	we know  the opitimal value of the coefficient of $ -\log s $ in the Ineq.\eqref{half} will be $ \frac{\alpha+1}{\alpha} $, and the corresponding inequality will be:
	\begin{equation*}
	f(\cdot,s)\le-\frac{\alpha+1}{\alpha}\log\frac{s}{(1+\alpha)\epsilon_0}+\sup_{q\in\Sigma}f(q,(1+\alpha)\epsilon_0) 
	\end{equation*}
\end{remark}

Because $ \epsilon_0,\alpha $ are constants determined by $ (\M,g,k) $,  thus for any blowup solution $ f $, we have the following upper bound for the blowup order:
\begin{proposition}
	\label{1}
	There exist constants  $ s_1 $, $ c_1 $, and $ C_1 $ only depend on the local geometry of $ (\M,g,k) $ near $ \Sigma $, s.t. 
	for any function $f $ that satisfies Eq.\eqref{jang}  on $U_{ \bar s} $ and  $ f\to+\infty $ when $ s\to 0 $, the following inequality :
	\begin{equation}
	\label{super1}
	f(p,s) \le \sup_{q\in\Sigma}f(q,s_1)-c_1\log s+C_1
	\end{equation}
	holds for   $\forall s\in(0,s_1] $ and $ p\in \Sigma $. 
\end{proposition}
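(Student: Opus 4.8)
The plan is to read the Proposition off directly from the upper bound $W$ constructed in Proposition~\ref{w}; at this stage essentially all the analysis is done and only bookkeeping remains. Recall that Lemma~\ref{superlemma1} supplies constants $\epsilon_0>0$ and $\alpha>0$ that depend only on the local geometry of $(\M,g,k)$ near $\Sigma$, and that Proposition~\ref{w}, applied with $d=\tfrac12$, produces for every blowup solution $f$ of \eqref{jang} on $U_{\bar s}$ a piecewise smooth function $W$ on $(0,(1+\alpha)\epsilon_0]$ with $f(\cdot,s)\le W(s)$ there. So the first step is to set $s_1:=(1+\alpha)\epsilon_0$, which then depends only on the local geometry near $\Sigma$.

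Next I would use the identity recorded right after the proof of Proposition~\ref{w}, namely
\[
W\Big(\tfrac{(1+\alpha/2)^{n}}{(1+\alpha)^{n-1}}\,\epsilon_0\Big)=n\log 2+\sup_{q\in\Sigma}f(q,s_1)\qquad(n\ge 0),
\]
together with the fact that $W$ is non-increasing in $s$ on each half-open interval $I_N:=\big[\tfrac{(1+\alpha/2)^{N+1}}{(1+\alpha)^{N}}\epsilon_0,\ \tfrac{(1+\alpha/2)^{N}}{(1+\alpha)^{N-1}}\epsilon_0\big)$, and that the $I_N$, $N=0,1,2,\dots$, tile $(0,s_1)$. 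For $s\in I_N$ this yields
\[
f(\cdot,s)\le W(s)\le W\Big(\tfrac{(1+\alpha/2)^{N+1}}{(1+\alpha)^{N}}\epsilon_0\Big)=(N+1)\log 2+\sup_{q\in\Sigma}f(q,s_1).
\]
To convert this into a bound in $s$, observe that $s<\tfrac{(1+\alpha/2)^{N}}{(1+\alpha)^{N-1}}\epsilon_0=(1+\alpha)\epsilon_0\big(\tfrac{1+\alpha/2}{1+\alpha}\big)^{N}$ with $\tfrac{1+\alpha/2}{1+\alpha}<1$, so taking logarithms and dividing by the negative number $\log\tfrac{1+\alpha/2}{1+\alpha}$ gives $N<\big(\log((1+\alpha)\epsilon_0)-\log s\big)/\big(\log(1+\alpha)-\log(1+\alpha/2)\big)$. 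Substituting this back and rearranging gives \eqref{super1} for all $s\in(0,s_1)$ and $p\in\Sigma$, with
\[
c_1=\frac{\log 2}{\log(1+\alpha)-\log(1+\alpha/2)},\qquad
C_1=\log 2+\frac{\log 2\,\log((1+\alpha)\epsilon_0)}{\log(1+\alpha)-\log(1+\alpha/2)};
\]
one has $c_1>0$ because $\tfrac{1+\alpha}{1+\alpha/2}>1$, and the endpoint $s=s_1$ is trivial since there the right-hand side of \eqref{super1} exceeds $\sup_{\Sigma}f(\cdot,s_1)$ by $\log 2$. Since $\alpha$ and $\epsilon_0$ are geometric, so are $s_1$, $c_1$ and $C_1$.

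There is no genuine obstacle at this stage — the substance is entirely contained in Lemma~\ref{superlemma1} and the iterative construction in Proposition~\ref{w}, which may be used freely. The only points requiring a little care are: (i) confirming that the intervals $I_N$ exhaust $(0,s_1)$ and that $N\to\infty$ as $s\to 0^+$, so that the estimate holds on the whole punctured neighborhood of $\Sigma$; and (ii) checking that no hidden dependence on $f$ enters $C_1$ — on the right-hand side of \eqref{super1} the only $f$-dependent quantity is the isolated value $\sup_{q\in\Sigma}f(q,s_1)$, exactly as the statement permits.
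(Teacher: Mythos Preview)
Your proposal is correct and follows exactly the paper's own argument: it reads off the bound from the iteratively constructed $W$ of Proposition~\ref{w} with $d=\tfrac12$, bounds $W(s)$ on each subinterval by its value at the left endpoint, and converts the resulting bound on $N$ into a logarithmic bound in $s$, arriving at the same explicit constants $s_1=(1+\alpha)\epsilon_0$, $c_1=\dfrac{\log 2}{\log(1+\alpha)-\log(1+\alpha/2)}$, and $C_1=\log 2+c_1\log((1+\alpha)\epsilon_0)$. The only additions are the explicit check of the endpoint $s=s_1$ and the remark that the intervals $I_N$ tile $(0,s_1)$, both of which are straightforward.
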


\subsection{Super Estimate: Coefficient Refinement}
In the previous section we prove that any blowup solution of Jang's equation in our case cannot blow up faster than $ -c_1\log s$ for some positive constant $ c_1 $ which only depends on $ (\M,g,K) $ (c.f. Ineq.\eqref{super1}). In this section we are going to refine this result to $-\dfrac{1}{\sqrt{\lambda}}\log s.$

To achieve this, we need to construct a new family of super barriers. We first
compute the expansion of $ \J[v_{a,\gamma}] $ for some test functions near the  horizon.\par

\begin{lemma}
	\label{taylor}
	For $ \gamma>0 $ and constant $ a $, we construct the following functions $ v_{a,\gamma} $ :
	\begin{equation*}
	v_{a,\gamma}(s)=\dfrac{1}{\sqrt{\lambda}}\int_{s}^{\bar s} \frac{1}{x^{\gamma}}dx+a s
	\end{equation*}	
	for $ s\in(0,\bar s] $.
	
	These functions have the following properties:
	\begin{enumerate}
		
		\item For $ \gamma>1 $, $v_{a,\gamma}(s) $ blowup at the rate of $ \dfrac{s^{1-\gamma}}{\sqrt{\lambda}(\gamma-1)} $ when $ s $ goes to $ 0 $ . For $ \gamma<1 $, $v_{a,\gamma}(s) $ is bounded in $ U_{\bar{ s}} $.
		\item $ v'_{a,\gamma}(s)= -\dfrac{1}{\sqrt{\lambda}s^{\gamma}}+a$, 	  $ v''_{a,\gamma}(s)= \dfrac{\gamma}{\sqrt{\lambda}s^{\gamma+1}}$
		\item Denote $ \sigma^2=(a^2+\beta^2)s^{2\gamma}-\dfrac{2a}{\sqrt{\lambda}} s^\gamma+\dfrac{1}{\lambda} $. Then for $ \gamma $ close to 1, $\sigma^3 \J[v_{a,\gamma}] $ has the following expansion in $ U_{\ov{s}} $:
		\begin{align}\label{expansion}
		\sigma^3\J[v_{a,\gamma}]=&-\frac{\beta}{\sqrt{\lambda}} s+\frac{\beta\gamma}{\sqrt{\lambda}}s^{2\gamma-1}\\ \nonumber
		& -\frac{1}{\lambda\sqrt{\lambda}} (\theta^+-\lambda\beta s)
		+3 a \beta s^{\gamma+1}
		-\frac{\beta^2}{\sqrt{\lambda}}(k(\nu,\nu)+\frac{P}{2})s^{2\gamma}\\ \nonumber
		&  +O(s^{3\gamma}+s^{2\gamma+1}+s^{\gamma+2}+s^3)	
		\end{align}
	\end{enumerate}
\end{lemma}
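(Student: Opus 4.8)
Properties (1) and (2) I would dispatch immediately. For (1), writing $\int_s^{\bar s}x^{-\gamma}\,dx=\frac{s^{1-\gamma}-\bar s^{1-\gamma}}{\gamma-1}$ for $\gamma\neq1$ shows that the integral blows up like $\frac{s^{1-\gamma}}{\sqrt{\lambda}(\gamma-1)}$ as $s\to0$ when $\gamma>1$, and stays bounded on $(0,\bar s]$ when $\gamma<1$; since $as$ is bounded, the asserted behavior of $v_{a,\gamma}$ follows. Property (2) is just the fundamental theorem of calculus applied to the defining integral, followed by one more differentiation.

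The substance is (3). The plan is to substitute $\phi=v_{a,\gamma}$ into the formula \eqref{local} for $\J$ and then Taylor expand in $s$. First comes an algebraic simplification: from $(\phi')^2=a^2-\frac{2a}{\sqrt{\lambda}s^\gamma}+\frac1{\lambda s^{2\gamma}}$ one checks directly that $1+\beta^{-2}(\phi')^2=\frac{\beta^2+(\phi')^2}{\beta^2}=\frac{\sigma^2}{\beta^2 s^{2\gamma}}$, hence $\sqrt{1+\beta^{-2}(\phi')^2}=\frac{\sigma}{\beta s^\gamma}$ and $\frac{\phi'}{\beta\sqrt{1+\beta^{-2}(\phi')^2}}=\frac{\phi's^\gamma}{\sigma}=\frac{as^\gamma-1/\sqrt{\lambda}}{\sigma}$. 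Feeding this into \eqref{local} and multiplying by $\sigma^3$ will give the exact identity
\[
\sigma^3\J[v_{a,\gamma}]=\sigma^2\Big(as^\gamma-\tfrac1{\sqrt{\lambda}}\Big)\theta^+-\sigma^2\Big(\sigma+as^\gamma-\tfrac1{\sqrt{\lambda}}\Big)P-\sigma\beta^2s^{2\gamma}k(\nu,\nu)+\beta\phi''s^{3\gamma},
\]
in which the last term is already exactly $\frac{\beta\gamma}{\sqrt{\lambda}}s^{2\gamma-1}$ by Property (2) and needs no expansion.

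The second step is the expansion. Since $\sigma^2=\frac1\lambda-\frac{2a}{\sqrt{\lambda}}s^\gamma+(a^2+\beta^2)s^{2\gamma}$ is polynomial in $s^\gamma$, the binomial series gives $\sigma=\frac1{\sqrt{\lambda}}-as^\gamma+\frac{\sqrt{\lambda}\beta^2}{2}s^{2\gamma}+O(s^{3\gamma})$ and $\frac1\sigma=\sqrt{\lambda}+a\lambda s^\gamma+O(s^{2\gamma})$; the two cancellations I will want to record are $\frac{as^\gamma-1/\sqrt{\lambda}}{\sigma}=-1+\frac{\lambda\beta^2}{2}s^{2\gamma}+O(s^{3\gamma})$ and $\sigma+as^\gamma-\frac1{\sqrt{\lambda}}=\frac{\sqrt{\lambda}\beta^2}{2}s^{2\gamma}+O(s^{3\gamma})$. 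Then I would write $\theta^+=\lambda\beta s+(\theta^+-\lambda\beta s)$ with $\theta^+-\lambda\beta s=O(s^2)$ from the foliation estimate, keep $\theta^+-\lambda\beta s$, $P$ and $k(\nu,\nu)$ as named quantities (using only that the latter two are bounded near $\Sigma$ and that $\beta$ depends on $\Sigma$ alone), and collect: the $\theta^+$-term contributes $-\frac\beta{\sqrt{\lambda}}s-\frac1{\lambda\sqrt{\lambda}}(\theta^+-\lambda\beta s)+3a\beta s^{\gamma+1}$ modulo $O(s^{\gamma+2}+s^{2\gamma+1}+s^{3\gamma})$, while the $P$- and $k(\nu,\nu)$-terms contribute $-\frac{\beta^2 P}{2\sqrt{\lambda}}s^{2\gamma}$ and $-\frac{\beta^2 k(\nu,\nu)}{\sqrt{\lambda}}s^{2\gamma}$ modulo $O(s^{3\gamma})$. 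Adding these together with $\frac{\beta\gamma}{\sqrt{\lambda}}s^{2\gamma-1}$ and grouping the two $s^{2\gamma}$ contributions as $-\frac{\beta^2}{\sqrt{\lambda}}(k(\nu,\nu)+\frac P2)s^{2\gamma}$ produces \eqref{expansion}; the hypothesis that $\gamma$ is close to $1$ is exactly what makes every remainder fall into $O(s^{3\gamma}+s^{2\gamma+1}+s^{\gamma+2}+s^3)$.

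The only genuine obstacle is the bookkeeping in this last step. In particular I expect the point needing the most care is verifying that the order-$s^\gamma$ coefficient in the $\theta^+$-term really cancels and that $\sigma+\phi's^\gamma=O(s^{2\gamma})$ rather than $O(1)$ — without these two cancellations the expansion would carry spurious terms of order $s$ and $s^\gamma$, and the clean form \eqref{expansion} (on which the refined super-barrier construction in Lemma~\ref{superlemma2} rests) would be destroyed. Everything else is mechanical substitution and binomial expansion.
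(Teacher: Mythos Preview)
Your proposal is correct and follows essentially the same approach as the paper: both plug $\phi=v_{a,\gamma}$ into \eqref{local}, clear denominators by multiplying through by $\sigma^3$ (using the identity $1+\beta^{-2}(\phi')^2=\sigma^2/(\beta^2 s^{2\gamma})$), expand $\sigma$, $\sigma^2$, $\sigma^3$ in powers of $s^\gamma$, and collect term by term. Your presentation is slightly cleaner in that you first write down the exact identity for $\sigma^3\J[v_{a,\gamma}]$ before expanding and explicitly isolate the two cancellations $\frac{as^\gamma-1/\sqrt{\lambda}}{\sigma}=-1+O(s^{2\gamma})$ and $\sigma+\phi's^\gamma=O(s^{2\gamma})$, whereas the paper expands the powers of $\sigma$ first and then computes the four terms of \eqref{local} separately; but the substance and bookkeeping are the same.
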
	
\begin{proof}
	Properties (1)(2) are obvious, thus we only prove property (3) here.
	First of all we have the following expansion for the powers of $ \sigma $ near the horizon:
	\begin{align*}
	\sigma&=\frac{1}{\sqrt{\lambda}}-a s^\gamma +\frac{\beta^2 \sqrt{\lambda}}{2} s^{2\gamma}+O(s^{3\gamma})\\
	\sigma^2&=\dfrac{1}{\lambda}-\dfrac{2a}{\sqrt{\lambda}} s^\gamma+(a^2+\beta^2)s^{2\gamma}\\
	\sigma^3&=\frac{1}{\lambda\sqrt{\lambda}}-\frac{3a}{\lambda}s^\gamma+\frac{3(a^2+\frac{1}{2}\beta^2)}{\sqrt{\lambda}}s^{2\gamma}+O(s^{3\gamma})
	\end{align*}

	If we  plug   $ v_{a,\gamma} $ into Eq.\eqref{local}, then  each term  of $ \sigma^3\J[v_{a,\gamma}] $ will be:
	\begin{align*}
	\sigma^2 s^\gamma v_{a,\gamma}'(s)\theta^+
	=&(\dfrac{1}{\lambda}-\dfrac{2a}{\sqrt{\lambda}} s^\gamma+(a^2+\beta^2)s^{2\gamma})
	(-\dfrac{1}{\sqrt{\lambda}}+as^{\gamma})\theta^+\\
	=& (-\frac{1}{\lambda\sqrt{\lambda}}+\frac{3a}{\lambda}s^\gamma+O(s^{2\gamma}))\theta^+\\
	=&-\frac{1}{\sqrt{\lambda}}\beta s+3 a \beta s^{\gamma+1}-\frac{1}{\lambda\sqrt{\lambda}}(\theta^+-\lambda\beta s)\\ & +O(s^{3\gamma}+s^{2\gamma+1}+s^{\gamma+2}+s^3)\\
	\sigma^3+\sigma^2 s^\gamma v_{a,\gamma}'(s)
	=&\frac{1}{\lambda\sqrt{\lambda}}-\frac{3a}{\lambda}s^\gamma+\frac{3(a^2+\frac{1}{2}\beta^2)}{\sqrt{\lambda}}s^{2\gamma}\\
	& +(\dfrac{1}{\lambda}-\dfrac{2a}{\sqrt{\lambda}} s^\gamma+(a^2+\beta^2)s^{2\gamma})
	(-\dfrac{1}{\sqrt{\lambda}}+as^{\gamma})\\
	& +O(s^{3\gamma}+s^{2\gamma+1}+s^{\gamma+2}+s^3)\\	
	=&\frac{\beta^2}{2\sqrt{\lambda}}s^{2\gamma}+O(s^{3\gamma}+s^{2\gamma+1}+s^{\gamma+2}+s^3)\\
	\sigma s^{2\gamma}\beta^2 =&\frac{\beta^2}{\sqrt{\lambda}}s^{2\gamma}+O(s^{3\gamma})\\
	s^{3\gamma}\beta v_{a,\gamma}''(s) =&\frac{\beta\gamma}{\sqrt{\lambda}}s^{2\gamma-1}
	\end{align*}

	Put all these together we prove Eq.\eqref{expansion}.	
\end{proof}

By Eq.\eqref{expansion}, we can construct a family of super barriers (see  Lemma \ref{superlemma2}) and sub barriers (see Lemma \ref{sublemma} in the next section) for Jang's equation \eqref{jang} on a fixed domain near the horizon.

\begin{proposition}
	\label{superlemma2}
	There exist  constants $ a, s_2 $  which only depend on the local geometry of the initial data set near horizon, such that for $\forall 1<\gamma<\frac{5}{4} $,

	\begin{enumerate}
		
		\item $\J[v_{a,\gamma}(s)]\leq 0 $ for $ s\in(0,s_2] $.
		\item  $\frac{1}{\sqrt{\lambda}} \int_{s}^{s_2} \frac{1}{x^{\gamma}} dx+as-a s_2 +\sup_{q\in\Sigma}f(q,s_2)$ is a super barrier for the solution $ f $ of Eq.\eqref{jang} on $ (0,s_2] $.
	\end{enumerate}
\end{proposition}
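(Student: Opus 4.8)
The plan is to verify, via the expansion of Lemma~\ref{taylor}, that $v_{a,\gamma}$ is a supersolution on a fixed punctured neighbourhood of $\Sigma$ (part~(1)), and then to turn that into a genuine barrier by combining it with the crude blowup bound of Proposition~\ref{1} and the comparison principle (part~(2)).

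For part~(1) I would first note that $\sigma^2=(a^2+\beta^2)s^{2\gamma}-\frac{2a}{\sqrt{\lambda}}s^\gamma+\frac1\lambda\to\frac1\lambda>0$ as $s\to0$, so there is $s_2$ with $\sigma>0$ on $(0,s_2]$; since $\sigma^3>0$ there, $\J[v_{a,\gamma}]$ and $\sigma^3\J[v_{a,\gamma}]$ have the same sign, and by \eqref{expansion} it suffices to show $\sigma^3\J[v_{a,\gamma}]\le0$ on $(0,s_2]$. The key structural observation is that the first and third terms on the right of \eqref{expansion} combine into $-\tfrac{\theta^+}{\lambda\sqrt{\lambda}}$, which is strictly negative on $\Sigma_s$ for $s$ small since $\Sigma$ is strictly stable — the foliation was chosen so that $\lambda\beta s-\Lambda s^2\le\theta^+[\Sigma_s]\le\lambda\beta s+\Lambda s^2$, hence $\theta^+[\Sigma_s]>0$ for small $s$. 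Thus the leading term $-\theta^+/(\lambda\sqrt{\lambda})\sim-\beta s/\sqrt{\lambda}$ is of order $s$ and negative. Among the remaining terms, $\frac{\beta\gamma}{\sqrt{\lambda}}s^{2\gamma-1}$ has exponent $2\gamma-1>1$, and since $\gamma s^{2\gamma-2}<1$ for $s$ below a threshold bounded away from $0$ uniformly in $\gamma\in(1,\tfrac{5}{4})$, the combination $-\frac{\beta}{\sqrt{\lambda}}s+\frac{\beta\gamma}{\sqrt{\lambda}}s^{2\gamma-1}$ is already $\le0$; the terms $3a\beta s^{\gamma+1}$, $-\frac{\beta^2}{\sqrt{\lambda}}(k(\nu,\nu)+\tfrac{P}{2})s^{2\gamma}$, the quadratic defect $-\frac{1}{\lambda\sqrt{\lambda}}(\theta^+-\lambda\beta s)=O(s^2)$, and the $O(s^{3\gamma}+s^{2\gamma+1}+s^{\gamma+2}+s^3)$ remainder are all $O(s^2)$ with constants controlled by the local geometry, $\Lambda$, and the fixed constant $a$. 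The delicate point — and the main obstacle — is that the negative margin $\frac{\beta}{\sqrt{\lambda}}s\,(1-\gamma s^{2\gamma-2})$ degenerates as $\gamma\downarrow1$, so to keep $s_2$ independent of $\gamma$ one must arrange that this margin beats the $O(s^2)$ remainder uniformly for $\gamma\in(1,\tfrac{5}{4})$; this is where the exact form of \eqref{expansion} and the freedom in choosing $a$ (to absorb the $O(s^2)$ contribution, which survives the limit $\gamma=1$) are needed. Uniformity in $\gamma$ is essential, since it is the limit $\gamma\to1^+$ that later produces the sharp coefficient $1/\sqrt{\lambda}$.

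For part~(2), granting (1), set $B_\gamma(s):=\frac{1}{\sqrt{\lambda}}\int_s^{s_2}x^{-\gamma}\,dx+as-as_2+\sup_{q\in\Sigma}f(q,s_2)$. Since $B_\gamma-v_{a,\gamma}$ is a constant and \eqref{jang} is invariant under vertical translation, $\J[B_\gamma]=\J[v_{a,\gamma}]\le0$ on $(0,s_2]$ by (1), and $B_\gamma(s_2)=\sup_{q\in\Sigma}f(q,s_2)\ge f$ on $\Sigma_{s_2}$. As $s\to0^+$, $B_\gamma(s)$ blows up like $\frac{s^{1-\gamma}}{\sqrt{\lambda}(\gamma-1)}$ (Lemma~\ref{taylor}(1)), hence faster than any logarithm, whereas Proposition~\ref{1} gives $f(\cdot,s)\le\sup_{q\in\Sigma}f(q,s_1)-c_1\log s+C_1$ for $s\le s_1$; after shrinking $s_2$ so that $s_2\le s_1$, it follows that $B_\gamma>f$ on $\Sigma_{s''}$ for all sufficiently small $s''$. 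Applying the comparison principle (\cite[Section~10]{Gilbarg}) on the region between $\Sigma_{s''}$ and $\Sigma_{s_2}$ — where $\J[B_\gamma]\le0=\J[f]$ and $B_\gamma\ge f$ on the boundary — yields $B_\gamma\ge f$ there, and letting $s''\to0$ gives $f\le B_\gamma$ on $(0,s_2]$, i.e.\ $B_\gamma$ is a super barrier, as required.
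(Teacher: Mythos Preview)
Your overall strategy matches the paper's, and your Part~(2) is essentially identical to the paper's argument (blowup rate $s^{1-\gamma}$ beats the logarithmic bound from Proposition~\ref{1}, then comparison). The issue is in Part~(1): you correctly isolate the obstacle --- the margin $\frac{\beta}{\sqrt{\lambda}}s(1-\gamma s^{2\gamma-2})$ degenerates as $\gamma\downarrow 1$ --- but you do not actually resolve it. Saying ``this is where the freedom in choosing $a$ is needed'' is not a proof: the $a$-term is $3a\beta s^{\gamma+1}$, and for $\gamma$ bounded away from $1$ this has order $s^{\gamma+1}<s^2$, so by itself it cannot absorb the $\frac{\Lambda}{\lambda\beta}s^2$ contribution uniformly; conversely, for $\gamma$ near $1$ the margin is useless. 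You need a mechanism that interpolates between the two regimes, and you have not supplied one.

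The paper handles this by an explicit algebraic factorisation. After writing $-\sqrt{\lambda}\beta^{-1}\sigma^3\J[v_{a,\gamma}]\geq s(1-\gamma s^{2\gamma-2})-\frac{\Lambda}{\lambda\beta}s^2-3a\sqrt{\lambda}s^{\gamma+1}-c_1 s^{2\gamma}+O(s^3)$, it factors $1-\gamma s^{2\gamma-2}=(1+\sqrt{\gamma}s^{\gamma-1})(1-\sqrt{\gamma}s^{\gamma-1})$ and then pulls the factor $(1-\sqrt{\gamma}s^{\gamma-1})$ out of the $s^2$ and $s^{\gamma+1}$ terms as well, leaving a remainder $(-3a\sqrt{\lambda\gamma}-\tfrac{\Lambda\gamma}{\lambda\beta}-c_1)s^{2\gamma}+O(s^3)$. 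The point is that $(1-\sqrt{\gamma}s^{\gamma-1})\geq 0$ for $s\leq e^{-1/2}$ uniformly in $\gamma\in(1,\tfrac{5}{4})$, and the bracket multiplying it is manifestly positive for small $s$; then choosing $a$ sufficiently negative makes the coefficient of $s^{2\gamma}$ at least $1$, and since $2\gamma<\tfrac{5}{2}<3$ this dominates the $O(s^3)$ error on a fixed interval $(0,s_2]$. This factorisation is precisely what converts the degenerate margin into a nonnegative factor times something uniformly positive --- the step your sketch omits.
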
	
\begin{proof}
	We first prove Property (1). For the case $ \gamma>1 $,  rearrange Eq.\eqref{expansion} in Lemma \ref{taylor} on $ U_{\ov{s}} $  in ascending order. We denote $  c_1=\sup_{U_{\ov{s}}}(-\beta(k(\nu,\nu)+\frac{P}{2})) $ and recall that on $U_{\ov{s}}$ we have $ \lambda\beta s -\Lambda s^2<\theta^+[\Sigma_s]<\lambda\beta s +\Lambda s^2 $, then:
	\begin{align*}
	\sigma^3\J[v_{a,\gamma}]=&-\frac{\beta}{\sqrt{\lambda}} s+\frac{\beta\gamma}{\sqrt{\lambda}}s^{2\gamma-1}-\frac{1}{\lambda\sqrt{\lambda}} (\theta^+-\lambda\beta s)
	+3 a \beta s^{\gamma+1}\\
	&-\frac{\beta^2}{\sqrt{\lambda}}(k(\nu,\nu)+\frac{P}{2})s^{2\gamma}+O(s^3)\\ 
	\leq&-\frac{\beta}{\sqrt{\lambda}} s+\frac{\beta\gamma}{\sqrt{\lambda}}s^{2\gamma-1}+\frac{\Lambda}{\lambda\sqrt{\lambda}} s^2
	+3 a \beta s^{\gamma+1}
	+c_1\frac{\beta}{\sqrt{\lambda}} s^{2\gamma}+O(s^3)\\
	\end{align*}
	Thus 
	\begin{align*}
	-\sqrt{\lambda}\beta^{-1}\sigma^3\J[v_{a,\gamma}]\geq&
	s-\gamma s^{2\gamma-1}-\frac{\Lambda}{\lambda\beta} s^2
	-3 a \sqrt{\lambda} s^{\gamma+1}
	-c_1s^{2\gamma}+O(s^3)\\
	=&
	s(1-\gamma s^{2\gamma-2})-\left(\frac{\Lambda}{\lambda\beta} s^2
	+3 a \sqrt{\lambda} s^{\gamma+1}
	+c_1s^{2\gamma}\right)+O(s^3)\\
	=&
	s(1+\sqrt{\gamma} s^{\gamma-1})(1-\sqrt{\gamma} s^{\gamma-1})\\
	&-(1-\sqrt{\gamma} s^{\gamma-1})
	(\frac{\Lambda}{\lambda\beta} s^2+(\frac{\Lambda\sqrt{\gamma}}{\lambda\beta}+3a\sqrt{\lambda})s^{\gamma+1}		
	)\\
	&-(\frac{\Lambda\gamma}{\lambda\beta}+3a\sqrt{\lambda\gamma}+c_1)s^{2\gamma}+O(s^3)\\
	=&
	(1-\sqrt{\gamma} s^{\gamma-1})
	(s+\sqrt{\gamma} s^{\gamma}
	-\frac{\Lambda}{\lambda\beta} s^2-(\frac{\Lambda\sqrt{\gamma}}{\lambda\beta}+3a\sqrt{\lambda})s^{\gamma+1}
	)\\
	&
	+(-3a\sqrt{\lambda\gamma}-\frac{\Lambda\gamma}{\lambda\beta}-c_1)s^{2\gamma}+O(s^3)
	\end{align*}
	Let's look at the last line of the above calculation. Notice the following facts:
	\begin{enumerate}
		\item   $ 1-\sqrt{\gamma} s^{\gamma-1}\geq 0 $ for $ \forall \gamma>1 $ and $ \forall s\in[0,e^{-\frac{1}{2}}] $.
		\item Set $ a=-\sup_{q\in\Sigma}\dfrac{\frac{5\Lambda}{4\lambda\beta}+|c_1|+1}{3\sqrt{\lambda}} $, then $ -3a\sqrt{\lambda\gamma}-\frac{\Lambda\gamma}{\lambda\beta}-c_1\geq1>0$ for $\gamma\in(1,\frac{5}{4}) $. Notice that $ a $ is not related with $ \gamma $.
		\item  Because we set $ a $ in terms of the local geometry data near the horizon, there exists constant $ s_2^*>0 $ which only depends on the same local geometry (not related with $ \gamma $), such that 
		\begin{align*}
		&  s+\sqrt{\gamma} s^{\gamma}
		-\frac{\Lambda}{\lambda\beta} s^2-(\frac{\Lambda\sqrt{\gamma}}{\lambda\beta}+3a\sqrt{\lambda})s^{\gamma+1}\\
		\geq&s+  s^{\frac{5}{4}}
		-\frac{\Lambda}{\lambda\beta} s^2-\big|\frac{\sqrt{5}\Lambda }{2\lambda\beta}+3a\sqrt{\lambda}\big|s^2
		>0 
		\end{align*}
		
		holds for $ s\in(0,s_2^*] $. This is obvious because the coefficients of the leading terms $ s $ and $ s^\frac{5}{4} $ are both positive constants, and their orders are at least $ \frac{3}{4} $ away from the orders of the rest terms, whose coefficients are bounded and only depend  on the local geometry near the horizon of the initial data set.	
	\end{enumerate}
	
	Then  on $ (0,\min(s_2^*, e^{-\frac{1}{2}}) ] $ we have:
	\begin{align*}
	-\sqrt{\lambda}\beta^{-1}\sigma^3\J[v_{a,\gamma}] \geq&(-3a\sqrt{\lambda\gamma}-\frac{\Lambda\gamma}{\lambda\beta}-c_1)s^{2\gamma}+O(s^3)\\
	\geq&s^{2\gamma}+O(s^3)\\
	\geq&s^{\frac{5}{2}}+O(s^3) 
	\end{align*}

	Therefore, there exists constant $ s_2 $ only depending on the local geometry of the initial data, such that  $ -\sqrt{\lambda}\beta^{-1}\sigma^3\J[v_{a,\gamma}]\geq 0 $ for $ s\in(0,s_2], $
	and thus $ \J[v_{a,\gamma}]\leq 0 $ because $ \lambda>0 $ and $ \beta $ is positive function. Thus 
	$v_{a,\gamma}(s) $ is a supersolution of Jang's equation \eqref{jang} on $ (0,s_2] $.\par
	
	For Property (2) we only need to notice that for $ \gamma>1 $, $ v_{a,\gamma} $ blows up at the  order of $ s^{1-\gamma} $, which is much faster than $ -c\log s $ for any constant $ c $ when $ s\to 0 $. In the previous section we prove that any solution $ f $ cannot blowup faster than $ -c\log s $. Thus for $ \gamma>1 $, $ v_{a,\gamma} $ always blows up faster than any solution $ f $ of Eq.\eqref{jang} near the horizon. Therefore  a vertical translation of $ v_{a,\gamma} $ at $ s=s_2 $ is enough to make it   a super barrier.	
\end{proof}

By the comparison principle, c.f. \cite[Chapter 10]{Gilbarg}, we have the following   result:

\begin{proposition}
	There exist  constants $ a, s_2 $  which only depend on the local geometry of the initial data set near horizon, such that for $\forall 1<\gamma<\frac{3}{2} $, and 
	any function  $ f $ satisfies Jang's equation \eqref{jang} on $ U_{\ov{s}} $ and $ f\to +\infty $ when $ s\to 0 $, the following inequality :
	\begin{equation}
	\label{supergamma}
	f(p,s) \leq \sup_{q\in\Sigma}f(q,s_2)+\frac{1}{\sqrt{\lambda}} \int_{s}^{s_2} \frac{1}{x^{\gamma}} dx +as-a s_2
	\end{equation}
	holds for any $ p\in\Sigma$ and $ s\in(0,s_2] $. 
\end{proposition}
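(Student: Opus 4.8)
The plan is to derive \eqref{supergamma} directly from the comparison principle, using the supersolution already produced in Proposition~\ref{superlemma2} together with the a priori order control of Proposition~\ref{1}. Proposition~\ref{superlemma2} supplies, with $a$ and $s_2$ depending only on the local geometry near $\Sigma$, a supersolution $v_{a,\gamma}$ of \eqref{jang} on $(0,s_2]$ for $\gamma$ close to $1$; Proposition~\ref{1} guarantees that any blowup solution $f$ grows at most like $-c_1\log s$. The second fact is what makes the comparison legitimate even though both $f$ and the barrier are infinite along $\Sigma$.

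For a fixed $1<\gamma<\tfrac32$ I would introduce the vertical translate
\[
\bar v_{a,\gamma}(s):=\frac{1}{\sqrt{\lambda}}\int_s^{s_2}\frac{dx}{x^{\gamma}}+as-as_2+\sup_{q\in\Sigma}f(q,s_2),
\]
so that $\bar v_{a,\gamma}(s_2)=\sup_{\Sigma_{s_2}}f$. Since Jang's operator in the form \eqref{local} involves only $\phi'$ and $\phi''$, it is unchanged by adding a constant, so $\J[\bar v_{a,\gamma}]=\J[v_{a,\gamma}]\le 0$ on $(0,s_2]$ by Proposition~\ref{superlemma2}, while $\J[f]=0$. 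I would then apply the comparison principle (\cite[Chapter 10]{Gilbarg}, which applies since $\J$ is of mean-curvature type and $f$, $\bar v_{a,\gamma}$ are admissible) on the annular region $A_\delta:=U_{s_2}\setminus\overline{U_\delta}$, with $\partial A_\delta=\Sigma_\delta\cup\Sigma_{s_2}$. On $\Sigma_{s_2}$ the inequality $\bar v_{a,\gamma}\ge f$ holds by the choice of the translation constant; on $\Sigma_\delta$ it holds for all small $\delta$ because Proposition~\ref{1} gives $f(\cdot,\delta)\le\sup_{q\in\Sigma}f(q,s_1)-c_1\log\delta+C_1$, whereas by Property~(1) of Lemma~\ref{taylor} one has $\bar v_{a,\gamma}(\delta)\sim\frac{\delta^{1-\gamma}}{\sqrt{\lambda}(\gamma-1)}$, and a negative power of $\delta$ dominates $-\log\delta$. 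Hence $\bar v_{a,\gamma}\ge f$ on each $A_\delta$ with $\delta$ small, and letting $\delta\to 0$ gives \eqref{supergamma} on all of $(0,s_2]$.

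The only delicate point is precisely the inner boundary $\Sigma$, where the boundary data are infinite so the comparison theorem cannot be quoted verbatim on $U_{s_2}$ itself; exhausting $U_{s_2}$ by the $A_\delta$ and invoking the logarithmic bound of Proposition~\ref{1} --- which is exactly why that ``order control'' step was carried out first --- removes this obstacle, since it forces $\bar v_{a,\gamma}-f\to+\infty$ along $\Sigma$. Beyond that, nothing is hard: the supersolution property, the leading behaviour of $v_{a,\gamma}$, and the geometry-only dependence of $a$ and $s_2$ are all in hand, and what remains is the bookkeeping of the exhaustion together with the standard comparison argument.
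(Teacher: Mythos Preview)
Your proposal is correct and matches the paper's approach exactly: the paper simply invokes the comparison principle (\cite[Chapter~10]{Gilbarg}), having already packaged the key observation---that $v_{a,\gamma}$ with $\gamma>1$ blows up like $s^{1-\gamma}$ and hence dominates any solution bounded by $-c_1\log s$ near $\Sigma$---into Property~(2) of Proposition~\ref{superlemma2}. Your exhaustion argument on the annuli $A_\delta$ is the rigorous way to justify that one-line invocation, and is precisely what the paper's informal ``blows up faster'' remark is standing in for.
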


We let $ \gamma\to 1 $ in Ineq.\eqref{supergamma}. Then because $ a, s_2 $ is not related to $ \gamma $, we have the  following estimate for the upper bound of blowup rate:
\begin{proposition}
	There exist  constants $ a, s_2 $  which only depend on the local geometry of the initial data set near horizon, such that 
	any function  $ f $ satisfies Jang's equation \ref{jang} on $ U_{\ov{s}} $ and $ f\to +\infty $ when $ s\to 0 $, the following inequality :
	\begin{equation}
	\label{super2}
	f(p,s) \le \sup_{q\in\Sigma}f(q,s_2)-\frac{1}{\sqrt{\lambda}} \log s+\frac{1}{\sqrt{\lambda}} \log s_2  +as-a s_2
	\end{equation}
	holds for any $ p\in\Sigma$ and $ s\in(0,s_2] $.

\end{proposition}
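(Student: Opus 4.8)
The plan is to deduce the estimate directly from the one-parameter family of inequalities \eqref{supergamma} by letting $\gamma\to 1^+$. The point that makes this work — and which was deliberately arranged in the proof of Proposition \ref{superlemma2} — is that the constants $a$ and $s_2$ appearing in \eqref{supergamma} are selected purely from the local geometry of $(\M,g,k)$ near $\Sigma$ (via $\Lambda$, $\beta$, and $c_1=\sup_{U_{\ov{s}}}(-\beta(k(\nu,\nu)+P/2))$) and are in particular \emph{independent of} $\gamma$. Thus for every fixed $p\in\Sigma$ and $s\in(0,s_2]$, \eqref{supergamma} is a whole family of upper bounds for the single real number $f(p,s)$, and we may take the infimum over $\gamma$, equivalently the limit $\gamma\to 1^+$, of the right-hand side.

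The only computation needed is the elementary limit
\begin{equation*}
\lim_{\gamma\to 1^+}\int_s^{s_2}\frac{1}{x^\gamma}\,dx
=\lim_{\gamma\to 1^+}\frac{s^{1-\gamma}-s_2^{1-\gamma}}{\gamma-1}
=\log s_2-\log s,
\end{equation*}
which follows by expanding $s^{1-\gamma}=1+(1-\gamma)\log s+O((\gamma-1)^2)$ together with the analogous expansion for $s_2^{1-\gamma}$; since $s_2$ may be taken $<1$, the integrand is positive and the integral is increasing in $\gamma$, so this limit is in fact the infimum over $\gamma\in(1,\tfrac{3}{2})$. Passing to the limit $\gamma\to 1^+$ in \eqref{supergamma}, where the left side $f(p,s)$ and the remaining right-hand terms $\sup_{q\in\Sigma}f(q,s_2)+as-as_2$ do not involve $\gamma$, then yields at once
\begin{equation*}
f(p,s)\le \sup_{q\in\Sigma}f(q,s_2)-\frac{1}{\sqrt{\lambda}}\log s+\frac{1}{\sqrt{\lambda}}\log s_2+as-as_2
\end{equation*}
for all $p\in\Sigma$ and $s\in(0,s_2]$, which is the claim.

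There is no genuine analytic obstacle here; the only thing that must be respected is the order of quantifiers set up earlier. Because all the supersolutions $v_{a,\gamma}$ were built from the \emph{same} pair $(a,s_2)$, the limiting function $v_{a,1}(s)=-\frac{1}{\sqrt{\lambda}}\log s+as$ inherits the required barrier behavior in the limit, so one never has to re-run the comparison-principle argument for $\gamma=1$ directly — and this is precisely the reason the preceding propositions were organized so as to keep $a$ and $s_2$ free of $\gamma$.
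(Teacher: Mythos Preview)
Your proposal is correct and follows exactly the approach of the paper: pass to the limit $\gamma\to 1^+$ in \eqref{supergamma}, using that $a$ and $s_2$ were chosen independently of $\gamma$, and compute $\lim_{\gamma\to1^+}\int_s^{s_2}x^{-\gamma}\,dx=\log s_2-\log s$. Your write-up is in fact more detailed than the paper's, which simply asserts the limit; your monotonicity remark (the integral is increasing in $\gamma$ for $s_2<1$, hence the limit is an infimum) is a pleasant extra but not needed for the argument.
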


Thus the first half of Theorem \ref{thm1} is proved.

\subsection{Sub Estimate}

By Eq.\eqref{expansion}, we can construct a family of sub barriers in a similar way as in the previous section for Jang's equation \eqref{jang} on a fixed domain near the horizon.

\begin{proposition}
	\label{sublemma}
	There exist  constants $ a, s_3 $  which only depend on the local geometry of the initial data set near horizon, such that for $\forall \frac{3}{4}<\gamma<1 $,

	\begin{enumerate}
		
		\item $\J[v_{a,\gamma}(s)]\geq 0 $ for $s\in (0,s_3] $.
		\item  $\frac{1}{\sqrt{\lambda}} \int_{s}^{s_3} \frac{1}{x^{\gamma}} dx+as-a s_3 +\inf_{q\in\Sigma}f(q,s_3)$ is a sub barrier for solution $ f $ of Eq.\eqref{jang} on $ (0,s_3] $.
	\end{enumerate}
\end{proposition}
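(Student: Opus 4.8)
The plan is to mirror the proof of Proposition~\ref{superlemma2}, reversing every inequality and replacing the range $1<\gamma<\tfrac54$ by $\tfrac34<\gamma<1$. I would insert $v_{a,\gamma}$ into \eqref{local} and use the expansion \eqref{expansion} of Lemma~\ref{taylor}; after substituting the two-sided control $\lambda\beta s-\Lambda s^2\le\theta^+[\Sigma_s]\le\lambda\beta s+\Lambda s^2$ and bounding $-\beta\bigl(k(\nu,\nu)+\tfrac P2\bigr)$ by its supremum over $U_{\bar s}$, one obtains a lower bound for $\sqrt\lambda\,\beta^{-1}\sigma^3\J[v_{a,\gamma}]$ whose terms are $\gamma s^{2\gamma-1}-s$ together with errors of orders $s^2$, $s^{\gamma+1}$, $s^{2\gamma}$ and $O(s^3+s^{3\gamma}+s^{2\gamma+1}+s^{\gamma+2})$. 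The structural fact that makes $v_{a,\gamma}$ a \emph{sub}-solution here is that for $\gamma<1$ the exponent $2\gamma-1$ is strictly less than $1$, so $\gamma s^{2\gamma-1}$ is now the dominant term and it carries the favourable sign — the exact opposite of the super case, where $2\gamma-1>1$ made it subordinate to $s$.

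The one genuinely delicate step, precisely as in Proposition~\ref{superlemma2}, is to get a threshold $s_3$ that does not degenerate as $\gamma\uparrow1$; note that the naive margin $\gamma s^{2\gamma-1}-s$ tends to $0$ in that limit. I would handle this with the same factorisation used there: write $\gamma s^{2\gamma-1}-s=-s\,(1-\sqrt\gamma\,s^{\gamma-1})(1+\sqrt\gamma\,s^{\gamma-1})$ and observe that $1-\sqrt\gamma\,s^{\gamma-1}\le0$ for every $\gamma\in(\tfrac34,1)$ once $s\le(3/4)^2$ — this uses that $t\mapsto(\log t)/(1-t)$ is increasing on $(0,1)$, so that the threshold $\gamma^{1/(2-2\gamma)}$ satisfies $\gamma^{1/(2-2\gamma)}\ge(3/4)^2$ — whence $\gamma s^{2\gamma-1}-s\ge0$ on a fixed interval. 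Then, exactly as in the super proof, pulling the factor $1-\sqrt\gamma\,s^{\gamma-1}$ out of the $s^2$- and $s^{\gamma+1}$-terms promotes the $a$-dependent term to order $s^{2\gamma}$, and I would fix $a$ — a constant of opposite sign to the one in Proposition~\ref{superlemma2}, so that a single $a$ appears in both halves of \eqref{thm barrier} — so that the net coefficient of $s^{2\gamma}$ is $\ge1$ uniformly for $\gamma\in(\tfrac34,1)$; the cut-offs $\gamma<1$ and $\gamma>\tfrac34$ are exactly what keep $2\gamma-1<1$ and the remaining exponents (including $3-2\gamma$) bounded away from the critical ones. This yields $\sqrt\lambda\,\beta^{-1}\sigma^3\J[v_{a,\gamma}]\ge(\text{nonnegative})+s^{2\gamma}+O(s^3)\ge0$ on $(0,s_3]$ with $s_3$ depending only on the local geometry near $\Sigma$, and since $\sigma^3>0$, $\beta>0$, $\lambda>0$ this gives $\J[v_{a,\gamma}]\ge0$ on $(0,s_3]$, which is~(1).

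For~(2) I would invoke the comparison principle \cite[Chapter~10]{Gilbarg}. Set $\underline w(s)=\tfrac1{\sqrt\lambda}\int_s^{s_3}x^{-\gamma}\,dx+as-as_3+\inf_{q\in\Sigma}f(q,s_3)$; by~(1) this is a sub-solution of \eqref{jang} on $(0,s_3]$ with $\underline w(s_3)=\inf_{\Sigma_{s_3}}f\le f(\cdot,s_3)$. Since $\gamma<1$, property~(1) of Lemma~\ref{taylor} shows $v_{a,\gamma}$, hence $\underline w$, stays bounded as $s\to0$ while $f\to+\infty$ at $\Sigma$; so $\underline w<f$ on every slice $\Sigma_\delta$ with $\delta$ small, and applying the comparison principle on $U_{s_3}\setminus U_\delta$ and letting $\delta\to0$ gives $\underline w\le f$ on $(0,s_3]$, i.e.\ $\underline w$ is a sub barrier. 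I expect the uniformity of $s_3$ in $\gamma$ to be the only real obstacle; the rest runs parallel to Proposition~\ref{superlemma2}, and letting $\gamma\uparrow1$ afterwards produces the sub barrier $-\tfrac1{\sqrt\lambda}\log s+\tfrac1{\sqrt\lambda}\log s_3+a(s-s_3)+\inf_{\Sigma_{s_3}}f$ that, together with the super barrier, completes Theorem~\ref{thm1}.
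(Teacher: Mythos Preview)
Your proposal is correct and follows the paper's approach; the one slip is that for a \emph{lower} bound on $\sigma^3\J[v_{a,\gamma}]$ you must bound $-\beta\bigl(k(\nu,\nu)+\tfrac{P}{2}\bigr)$ below by its \emph{infimum} $c_2$ over $U_{\bar s}$, not its supremum. The paper's factorisation is the mirror of yours --- it writes $\gamma s^{2\gamma-1}-s=\gamma s^{2\gamma-1}\bigl(1-\tfrac{1}{\sqrt\gamma}s^{1-\gamma}\bigr)\bigl(1+\tfrac{1}{\sqrt\gamma}s^{1-\gamma}\bigr)$ and pulls out the \emph{nonnegative} factor $\bigl(1-\tfrac{1}{\sqrt\gamma}s^{1-\gamma}\bigr)$, absorbing the $s^{2\gamma}$- and $s^{\gamma+1}$-terms so that the $a$-controlled remainder sits at order $s^2$ rather than $s^{2\gamma}$ --- but the two factorisations are equivalent (indeed $1-\sqrt\gamma\,s^{\gamma-1}\le0\iff1-\tfrac{1}{\sqrt\gamma}s^{1-\gamma}\ge0$, with the same threshold $s\le\gamma^{1/(2-2\gamma)}$), and both give a uniform $s_3$ independent of $\gamma$.
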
	
\begin{proof}
	We first prove Property (1). For the case $ \gamma<1 $,  rearrange the terms of Eq.\eqref{expansion}    in ascending order on $ U_{\ov{s}} $. Denote $  c_2=\inf_{U_{\ov{s}}}(-\beta(k(\nu,\nu)+\frac{P}{2})) $ and recall that on $U_{\ov{s}}$ we have $ \lambda\beta s -\Lambda s^2<\theta^+[\Sigma_s]<\lambda\beta s +\Lambda s^2 $, then:
	\begin{align*}
	\sigma^3\J[v_{a,\gamma}] =&\frac{\beta\gamma}{\sqrt{\lambda}}s^{2\gamma-1}-\frac{\beta}{\sqrt{\lambda}} s-\frac{\beta^2}{\sqrt{\lambda}}(k(\nu,\nu)+\frac{P}{2})s^{2\gamma}
	+3 a \beta s^{\gamma+1}\\
	&-\frac{1}{\lambda\sqrt{\lambda}} (\theta^+-\lambda\beta s)
	+O(s^{3\gamma})\\ 
	\geq&\frac{\beta\gamma}{\sqrt{\lambda}}s^{2\gamma-1}-\frac{\beta}{\sqrt{\lambda}} s
	+c_2\frac{\beta}{\sqrt{\lambda}} s^{2\gamma}+3 a \beta s^{\gamma+1}
	-\frac{\Lambda}{\lambda\sqrt{\lambda}} s^2+O(s^{3\gamma})\\
	\end{align*}
	Thus 
	\begin{align*}
	\sqrt{\lambda}\beta^{-1}\sigma^3\J[v_{a,\gamma}] \geq&
	\gamma s^{2\gamma-1}-s+c_2s^{2\gamma}
	+3 a \sqrt{\lambda} s^{\gamma+1}-\frac{\Lambda}{\lambda\beta} s^2
	+O(s^{3\gamma})\\
	=&
	\gamma s^{2\gamma-1}(1 -\frac{1}{\gamma}s^{2-2\gamma})+\left(c_2s^{2\gamma}
	+3 a \sqrt{\lambda} s^{\gamma+1}-\frac{\Lambda}{\lambda\beta} s^2\right)+O(s^{3\gamma})\\
	=&
	\gamma s^{2\gamma-1}(1 -\frac{1}{\sqrt{\gamma}}s^{1-\gamma})(1 +\frac{1}{\sqrt{\gamma}}s^{1-\gamma})\\
	&+(1 -\frac{1}{\sqrt{\gamma}}s^{1-\gamma})
	(c_2 s^{2\gamma}+(3a\sqrt{\lambda}+\frac{c_2}{\sqrt{\gamma}})s^{\gamma+1}		
	)\\
	&+(3a\sqrt{\frac{\lambda}{\gamma}}-\frac{\Lambda}{\lambda\beta}+\frac{c_2}{ \gamma })s^{2}+O(s^{3\gamma})\\
	=&
	(1 -\frac{1}{\sqrt{\gamma}}s^{1-\gamma})
	(\gamma s^{2\gamma-1}+\sqrt{\gamma} s^{\gamma}+c_2 s^{2\gamma}+(3a\sqrt{\lambda}+\frac{c_2}{\sqrt{\gamma}})s^{\gamma+1}	
	)\\
	&
	+(3a\sqrt{\frac{\lambda}{\gamma}}-\frac{\Lambda}{\lambda\beta}+\frac{c_2}{ \gamma })s^{2}+O(s^{3\gamma})
	\end{align*}
	Let's look at the last line of the above calculation. Notice the following facts:
	\begin{enumerate}
		\item   $ 1 -\frac{1}{\sqrt{\gamma}}s^{1-\gamma}\geq 0 $ for $ \forall \frac{3}{4}<\gamma<1 $ and $ \forall s\in[0,\frac{9}{16}] $.
		\item Set $ a= \sup_{q\in\Sigma}\frac{2}{3\sqrt{3\lambda}}(1+\frac{4}{3}|c_2|+\frac{\Lambda}{\lambda\beta})$, then $ 3a\sqrt{\frac{\lambda}{\gamma}}-\frac{\Lambda}{\lambda\beta}+\frac{c_2}{ \gamma }\geq1>0$ for $\gamma\in( \frac{3}{4},1) $. Notice that $ a $ is not related with $ \gamma $.
		\item  Because we set $ a $ in terms of the local geometry data near the horizon,  there exists constant $ s_3^*>0 $ which only depends on the same local geometry (not related with $ \gamma $), such that 
		\begin{align*}
		&  \gamma s^{2\gamma-1}+\sqrt{\gamma} s^{\gamma}+c_2 s^{2\gamma}+(3a\sqrt{\lambda}+\frac{c_2}{\sqrt{\gamma}})s^{\gamma+1}\\
		\geq&
		\frac{3}{4}
		s +\frac{\sqrt{3}}{2} s -|c_2| s^{\frac{3}{2}}-\big|3a\sqrt{\lambda}+\frac{c_2}{\sqrt{\gamma}}\big|s^{\frac{7}{4}}
		>0 
		\end{align*}
		
		holds for $ s\in(0,s_3^*] $. This is obvious because the coefficients of the leading terms $ s $   are both positive constants, and their orders are at least $ \frac{1}{2} $ away from the orders of the rest terms, whose coefficients are bounded and only depend  on the local geometry near the horizon of the initial data set.	
	\end{enumerate}
	
	Then  on $ (0,\min(s_3^*,  \frac{9}{16})  ] $ we have:
	\begin{align*}
	\sqrt{\lambda}\beta^{-1}\sigma^3\J[v_{a,\gamma}]\geq&(3a\sqrt{\frac{\lambda}{\gamma}}-\frac{\Lambda}{\lambda\beta}+\frac{c_2}{ \gamma })s^{2}+O(s^{3\gamma})\\
	\geq&s^{2 }+O(s^{3\gamma})\\
	\geq&s^2+O(s^\frac{9}{4}) 
	\end{align*}

	Therefore, there exists constant $ s_3 $ only depending on the local geometry of the initial data, such that  $  \sqrt{\lambda}\beta^{-1}\sigma^3\J[v_{a,\gamma}]\geq 0 $ for $ s\in(0,s_3], $
	and thus $ \J[v_{a,\gamma}]\geq 0 $ because $ \lambda>0 $ and $ \beta $ is positive function. Thus 
	$v_{a,\gamma}(s) $ is a supersolution of Jang's equation \eqref{jang} on $ (0,s_3] $.\par
	
	For Property (2) we only need to notice that for $ \frac{3}{4}<\gamma<1 $, $ v_{a,\gamma} $ is finite at $ s=0 $. By the fact that $ f $  blows up   near the horizon,  a vertical translation of $ v_{a,\gamma} $ at $ s=s_3 $ is enough  to make it   a sub barrier.
\end{proof}

By the comparison principle, c.f. \cite[Chapter 10]{Gilbarg}, we have the following  result:

\begin{proposition}
	There exist  constants $ a, s_3 $  which only depends on the local geometry of the initial data set near horizon, such that for $\forall \frac{3}{4}<\gamma<1$, and 
	any function  $ f $ satisfies Jang's equation \eqref{jang} on $ U_{\ov{s}} $ and $ f\to +\infty $ when $ s\to 0 $, the following inequality:
	\begin{equation}
	\label{subgamma}
	f(p,s) \geq \inf_{q\in\Sigma}f(q,s_3)+\frac{1}{\sqrt{\lambda}} \int_{s}^{s_3} \frac{1}{x^{\gamma}} dx +as-a s_3
	\end{equation}
	holds for any $ p\in\Sigma$ and $ s\in(0,s_3] $. 
\end{proposition}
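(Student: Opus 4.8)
The plan is to obtain \eqref{subgamma} by a straightforward application of the comparison principle to the sub-barrier constructed in Proposition~\ref{sublemma}. Fix $\gamma\in(3/4,1)$, let $a,s_3$ be the constants from that proposition, and write
\[
\underline v_\gamma(s)\;=\;\frac{1}{\sqrt{\lambda}}\int_s^{s_3}\frac{1}{x^{\gamma}}\,dx\;+\;a s\;-\;a s_3\;+\;\inf_{q\in\Sigma}f(q,s_3)
\]
for the sub-barrier, a function of the leaf parameter $s$ alone. By Proposition~\ref{sublemma}(1) and the invariance of Jang's operator \eqref{jang} under vertical translation of the graph, $\J[\underline v_\gamma]\ge 0$ on $U_{s_3}$, so $\underline v_\gamma$ is a subsolution of \eqref{jang} there.

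I would next check that $\underline v_\gamma$ lies below $f$ on the boundary of $U_{s_3}$, which consists of the outer leaf $\Sigma_{s_3}$ and the inner boundary $\Sigma$. On $\Sigma_{s_3}$ this holds by construction, since $\underline v_\gamma(s_3)=\inf_{\Sigma_{s_3}}f\le f(\cdot,s_3)$. On the inner side the decisive point is that $\gamma<1$, so $\int_0^{s_3}x^{-\gamma}\,dx<\infty$ and $\underline v_\gamma$ is bounded on $U_{s_3}$, say by $M_\gamma$; since $f\to+\infty$ as $x\to\Sigma$ by hypothesis, there is a collar $U_\epsilon$, $\epsilon=\epsilon(\gamma)\in(0,s_3)$, on which $f>M_\gamma\ge\underline v_\gamma$, in particular on the leaf $\Sigma_\epsilon$.

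With these boundary comparisons in hand I would apply the comparison principle, e.g.\ \cite[Chapter~10]{Gilbarg}, on the compact annular region $\Omega_\epsilon=\{\,\epsilon\le s\le s_3\,\}\subset\M$, where $f$ and $\underline v_\gamma$ are both smooth. Jang's operator is quasilinear elliptic of prescribed-mean-curvature type with no zeroth-order dependence on the unknown (the term $\PP[f]$ depends only on $x$ and $\nabla f$), so a subsolution not exceeding a solution on $\partial\Omega_\epsilon$ cannot exceed it inside $\Omega_\epsilon$. Since $\underline v_\gamma\le f$ on $\partial\Omega_\epsilon=\Sigma_\epsilon\cup\Sigma_{s_3}$, this yields $\underline v_\gamma\le f$ on $\Omega_\epsilon$; combined with $\underline v_\gamma<f$ on $U_\epsilon$ we conclude $\underline v_\gamma\le f$ on all of $U_{s_3}$, which is \eqref{subgamma}. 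Because $a$ and $s_3$ were chosen in Proposition~\ref{sublemma} independently of $\gamma$, the estimate holds with the same constants for every $\gamma\in(3/4,1)$, as claimed.

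I expect the only subtle point to be the handling of the blow-up boundary: $U_{s_3}$ is not a domain to which the maximum principle applies directly, since $f$ is unbounded along $\Sigma$, and it is the exhaustion by the regions $\Omega_\epsilon$ — which relies on the boundedness of $\underline v_\gamma$ (hence the restriction $\gamma<1$) together with the blow-up of $f$ — that circumvents this. A secondary, routine point is to record that Jang's operator does fall under the comparison principle of \cite[Chapter~10]{Gilbarg}, its principal part being the graphical mean-curvature operator.
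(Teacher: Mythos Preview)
Your argument is correct and follows essentially the same route as the paper: invoke the comparison principle \cite[Chapter~10]{Gilbarg} with the sub-barrier supplied by Proposition~\ref{sublemma}, using that for $\gamma<1$ the barrier is bounded while $f$ blows up at $\Sigma$. The paper states this in one line, whereas you have spelled out the exhaustion by the annuli $\Omega_\epsilon$ and the verification that Jang's operator has no zeroth-order term; these details are sound and add nothing new methodologically.
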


Let $ \gamma\to 1 $ in Ineq.\eqref{subgamma}, then because $a,  s_3 $ is not related to $ \gamma $, we have the  following estimate for the lower bound of blowup rate:
\begin{proposition}
	There exist  constants $ a, s_3 $  which only depend on the local geometry of the initial data set near horizon, such that for 
	any function  $ f $ satisfies Jang's equation \ref{jang} on $ U_{\ov{s}} $ and $ f\to +\infty $ when $ s\to 0 $, the following inequality :
	\begin{equation}
	\label{sub}
	f(p,s) \geq \inf_{q\in\Sigma}f(q,s_3)-\frac{1}{\sqrt{\lambda}} \log s+\frac{1}{\sqrt{\lambda}} \log s_3  +as-a s_3
	\end{equation}
	holds for any $ p\in\Sigma$ and $ s\in(0,s_3] $.

\end{proposition}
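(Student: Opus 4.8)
The plan is to obtain this estimate directly from the preceding proposition (Ineq.~\eqref{subgamma}) by letting $\gamma \to 1^-$. The point that makes this work, and which was the real content of Proposition~\ref{sublemma}, is that the constants $a$ and $s_3$ there are selected without reference to $\gamma$: they depend only on the local geometry of $(\M,g,k)$ near $\Sigma$, namely on $\beta$, $\Lambda$, and $c_2 = \inf_{U_{\bar s}}\bigl(-\beta(k(\nu,\nu)+\tfrac{P}{2})\bigr)$, together with the cut-off $s_3^*$ coming from the same data. Consequently, fixing a blowup solution $f$ of \eqref{jang} on $U_{\bar s}$, a point $p\in\Sigma$, and a value $s\in(0,s_3]$, the inequality
\begin{equation*}
f(p,s) \geq \inf_{q\in\Sigma}f(q,s_3)+\frac{1}{\sqrt{\lambda}} \int_{s}^{s_3} \frac{1}{x^{\gamma}}\,dx +as-a s_3
\end{equation*}
holds for every $\gamma$ in the open interval $(\tfrac34,1)$ simultaneously.

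Next I would pass to the limit in $\gamma$. All terms on both sides are independent of $\gamma$ except $\frac{1}{\sqrt{\lambda}}\int_s^{s_3} x^{-\gamma}\,dx = \frac{1}{\sqrt{\lambda}}\cdot\frac{s_3^{\,1-\gamma}-s^{\,1-\gamma}}{1-\gamma}$, and writing $t^{1-\gamma}=\exp\bigl((1-\gamma)\log t\bigr)$ one sees immediately that this quantity tends to $\frac{1}{\sqrt{\lambda}}(\log s_3-\log s)$ as $\gamma\to 1^-$ (L'Hôpital gives the same). Taking this limit in the displayed inequality --- legitimate since it is a pointwise limit of the right-hand side while the left-hand side is a fixed constant --- yields
\begin{equation*}
f(p,s) \geq \inf_{q\in\Sigma}f(q,s_3)-\frac{1}{\sqrt{\lambda}} \log s+\frac{1}{\sqrt{\lambda}} \log s_3  +as-a s_3,
\end{equation*}
which is precisely \eqref{sub}. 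Since $p\in\Sigma$ and $s\in(0,s_3]$ were arbitrary, this finishes the argument.

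I do not expect a genuine obstacle at this stage: the substantive work --- constructing the family $v_{a,\gamma}$, expanding $\sigma^3\J[v_{a,\gamma}]$ via \eqref{expansion}, and invoking the comparison principle of \cite[Chapter 10]{Gilbarg} to place vertical translates of $v_{a,\gamma}$ below $f$ on $(0,s_3]$ uniformly in $\gamma$ --- has already been carried out in Proposition~\ref{sublemma}. The one thing that must be handled carefully is precisely the $\gamma$-uniformity of $a$ and $s_3$, which is why it is worth stressing it explicitly. As an alternative one could redo the barrier estimate directly at $\gamma=1$, with test function $v(s)=-\tfrac{1}{\sqrt{\lambda}}\log(s/s_3)+as$: in \eqref{expansion} the two leading terms $-\tfrac{\beta}{\sqrt{\lambda}}s$ and $\tfrac{\beta\gamma}{\sqrt{\lambda}}s^{2\gamma-1}$ coincide and cancel at $\gamma=1$, leaving an $O(s^2)$ remainder whose sign is again controlled by taking $a$ large and positive in terms of $\Lambda$, $\beta$, $c_2$; but the limiting argument above is shorter and spares one from repeating that computation.
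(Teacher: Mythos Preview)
Your proposal is correct and matches the paper's own proof essentially verbatim: the paper simply states ``Let $\gamma\to 1$ in Ineq.~\eqref{subgamma}, then because $a,s_3$ is not related to $\gamma$'' and records the resulting inequality. Your added remarks on the explicit evaluation of the limit and the alternative direct barrier at $\gamma=1$ are sound but go beyond what the paper writes.
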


Thus  the second half of   Theorem \ref{thm1} is proved.

Put together our barriers, we prove that in  Foliation B, there exist  constants $ a, s_0 $  which only depend on the local geometry of the initial data set near horizon, such that for 
any function  $ f $ satisfies Jang's equation \eqref{jang} on $ U_{\ov{s}} $ and $ f\to +\infty $ when $ s\to 0 $, the following inequality :
\begin{equation}\label{all}
\inf_{q\in\Sigma}f(q,s_0)-\frac{1}{\sqrt{\lambda}} \log \frac{s}{s_0}  +a(s-s_0)\leq f(p,s) \leq \sup_{q\in\Sigma}f(q,s_0)-\frac{1}{\sqrt{\lambda}} \log \frac{s}{s_0}   -a(s-s_0)
\end{equation}
holds for any $ p\in\Sigma$ and $ s\in(0,s_0] $. 
Because $ s $ is comparable with the distance to the horizon, thus the above inequality implies the following $ C^0 $ lower and upper bound:
\begin{proposition} 
	There exist  constants $ a, \tau_0 $  which only depend on the local geometry of the initial data set near horizon, such that for
	any function  $ f $ satisfies Jang's equation \eqref{jang} on $ V_{\tau_0} $ and $ f\to +\infty $ when $ \tau\to 0 $, $ |f+\frac{1}{\sqrt{\lambda}} \log \tau| $ is a bounded function in $ V_{\tau_0} $. Here $ \tau  $ is the distance function to horizon, and $ V_{\tau_0}=\{x\in\M|dist(x,\Sigma)\in(0,\tau_0]\} $

\end{proposition}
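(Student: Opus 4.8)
The plan is to deduce this $C^0$ two-sided bound directly from inequality \eqref{all}, which was established under Foliation B, by transferring the estimate from the Foliation B parameter $s$ to the geodesic distance $\tau$. First I would observe that the two sides of \eqref{all} differ from $-\frac{1}{\sqrt\lambda}\log s$ only by quantities bounded on $(0,s_0]$: the terms $\sup_{\Sigma_{s_0}}f$ and $\inf_{\Sigma_{s_0}}f$ are finite constants, since $\Sigma_{s_0}$ is a compact surface at positive distance from $\Sigma$ and $f$ is smooth, hence continuous and bounded, there; and the linear terms $\pm a(s-s_0)$ are obviously bounded for $s\in(0,s_0]$. Thus \eqref{all} already gives that $\big|f(\cdot,s)+\frac{1}{\sqrt\lambda}\log s\big|$ is bounded on $U_{s_0}\setminus\Sigma$ by a constant depending only on $s_0$, $a$, and $\sup_{\Sigma_{s_0}}|f|$.

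Next I would compare the parameter $s$ with the geodesic distance $\tau$. Because $\Sigma$ is compact, $\beta$ is continuous with $\min_\Sigma\beta=1$, and $\Psi_B$ has velocity $\partial_s\Psi_B=\beta\nu_s$, Foliation A and Foliation B are comparable (as already noted in Section~\ref{introduciton}): there is a constant $c_0\ge 1$, depending only on the local geometry near $\Sigma$, such that $c_0^{-1}\,s\le\tau(\Psi_B(p,s))\le c_0\,s$ for all $p\in\Sigma$ and all $s$ in a fixed small interval. Consequently $\big|\log\tau-\log s\big|\le\log c_0$ on this region, and therefore
$$\Big|f+\tfrac{1}{\sqrt\lambda}\log\tau\Big|\le\Big|f+\tfrac{1}{\sqrt\lambda}\log s\Big|+\tfrac{1}{\sqrt\lambda}\big|\log\tau-\log s\big|$$
is bounded by a constant depending only on the local geometry near $\Sigma$. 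Choosing $\tau_0$ small enough that $V_{\tau_0}$ is contained in the region swept out by the $\Sigma_s$ with $s\in(0,s_0]$, and that the comparability constant $c_0$ is valid throughout $V_{\tau_0}$, finishes the argument; the constant $a$ in the statement is the one already produced for \eqref{all}.

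There is no real obstacle in this final step — all the substance lies in the barrier constructions of the preceding subsections. The only points requiring mild care are (i) the choice of $\tau_0$ so that $V_{\tau_0}\subset U_{s_0}$ and the comparability holds there, and (ii) the fact that the boundedness constant unavoidably involves $\sup_{\Sigma_{s_0}}|f|$, since Jang's equation is invariant under vertical translation, as flagged in the remark after Theorem~\ref{thm2}. If one wants a bound whose constant depends only on the initial data, one must impose the additional global hypotheses of Theorem~\ref{thm3}.
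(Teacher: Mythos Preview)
Your proposal is correct and follows exactly the paper's approach: the paper deduces the proposition directly from inequality \eqref{all} with the single sentence ``Because $s$ is comparable with the distance to the horizon, thus the above inequality implies the following $C^0$ lower and upper bound,'' and you have simply written out the details of that passage. Your closing remark about the unavoidable dependence of the bound on $\sup_{\Sigma_{s_0}}|f|$ is also accurate and matches what the paper observes in the remark following Theorem~\ref{thm2}.
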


It is a direct consequence of our barrier arguments that the following inequality holds for $ \forall 0<s\leq s_1\leq s_0 $:
\begin{equation}\label{main}
\inf_{q\in\Sigma}f(q,s_1)-\frac{1}{\sqrt{\lambda}} \log \frac{s}{s_1}  +a(s-s_1)\leq f(\cdot,s) \leq \sup_{q\in\Sigma}f(q,s_1)-\frac{1}{\sqrt{\lambda}} \log \frac{s}{s_1}   -a(s-s_1)
\end{equation}
As a consequence of Ineq.\eqref{main}, suppose $ x_1, x_2 $ are the points on $ \Sigma_{s_1} $ such that $ f(x_1,s_1)=\sup_{q\in\Sigma}f(q,s_1) $ and $ f(x_2,s_1)=\inf_{q\in\Sigma}f(q,s_1) $, then:
\begin{align*}
\partial_sf(x_1,s_1)
=&\lim_{s\to s_1}\frac{f(x_1,s)-f(x_1,s_1)}{s-s_1}\\
=&\lim_{s\to s_1^-}\frac{f(x_1,s_1)-f(x_1,s)}{s_1-s }\\
\geq&\lim_{s\to s_1^-}\frac{f(x_1,s_1)-\sup_{q\in\Sigma}f(q,s_1)+\frac{1}{\sqrt{\lambda}} \log \frac{s}{s_1}   +a(s-s_1)}{s_1-s }\\
=&\lim_{s\to s_1^-}\frac{ -\frac{1}{\sqrt{\lambda}} (\log s_1-\log s)   -a(s_1-s )}{s_1-s }\\
=&-\frac{1}{\sqrt{\lambda}s_1}-a
\end{align*}

Similarly we can also get:
\begin{equation*}
\partial_sf(x_2,s_1)\leq-\frac{1}{\sqrt{\lambda}s_1}+a
\end{equation*}
Thus Theorem \ref{thm1} gets proved.

\section{Gradient Estimates}\label{gradient}

In this section we prove some   gradient estimates for  the blowup solution $ f $ of Jang's equation Eq.\eqref{jang}. First we need the following a priori estimate for Jang's equation by R. Schoen and S. -T. Yau:

\begin{theorem}\cite[Proposition  2]{yau}
	Let $ F(x) $ be a given $ C^2 $ function on $ \M $ and suppose $ \mu_1,\mu_2,\mu_3 $ are constants so that 
	$$\sup_{\M} |F|\leq \mu_1,\mbox{ }\mbox{ }\mbox{ } \sup_{\M} |\nabla F|\leq \mu_2,\mbox{ }\mbox{ }\mbox{ } \sup_{\M} |\nabla^2 F|\leq \mu_3$$
	and $ f $ is a $ C^3 $ solution of 
	$$(g^{ij}-\frac{f^i f^j}{1+|\nabla f |^2})(\frac{\nabla_{ij}f}{\sqrt{1+|\nabla f |^2}}-k_{ij})=F
	$$
	Denote $ N=\f $. If $ X_0\in N $, let $ (y^1,y^2,y^3,y^4) $ be normal coordinates in $ \M\times\R $ centered at $ X_0 $ so that the tangent space to $ N $ at $ X_0 $ is the $ y^1y^2y^3 $-space. Then in a neighborhood of $ X_0 $, $ N $ is given by the graph of a function $ w(y) $, $ y=(y^1, y^2, y^3) $. We called this the local defining function $ w $ for $ N $.\par
	
	Then, there is a constant $ \rho>0 $ depending only on the initial data and $ \mu_1,\mu_2,\mu_3 $ so that for any $ X_0\in N $, the local defining function $ w $ for $ N $ is defined on $ \{|y|\leq\rho\} $, and satisfies for any $ \alpha\in(0,1) $,
	$$\sup_{|y|\leq\rho}(|w(y)|+|\partial w(y)|+|\partial\partial w(y)|+|\partial\partial\partial w(y)|+|\partial\partial\partial w(y)|_{\alpha,\rho})\leq c_1(\alpha),$$
	
	where $ c_1 $ depends only on $ \alpha $, the initial data, and $ \mu_1,\mu_2,\mu_3 $. Moreover, we may require 
	$$ N\cap B^4_\rho(X_0)\subset \{Y:y^4=w(y)\} $$
	We also have the following Harnack-type inequalities:
	\begin{align*}
	&\sup_{N\cap B^4_\rho(X_0)}<\ov{e}_4,\nu> \leq   c_2 \inf_{N\cap B^4_\rho(X_0)}<\ov{e}_4,\nu>\\
	&\sup_{N\cap B^4_\rho(X_0)}|\ov{\nabla}\log<\ov{e}_4,\nu>| \leq  c_3
	\end{align*}
	where $ \ov{e}_4 $ is the downward unit normal to $ N $, $ \nu $ is the downward unit parallel vector field tangent to the $ \R $ factor, $ \ov{\nabla} $ is the covariant derivative on $ N=\f $, and $ c_2, c_3 $ are constants only depending on the initial data and $ \mu_1,\mu_2,\mu_3 $.

\end{theorem}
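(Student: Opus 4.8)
\emph{Proof proposal.} The natural first move is to read the hypothesis geometrically: a $C^3$ function $f$ with $\HH[f]-\PP[f]=F$ is exactly a function whose graph $N:=\mathrm{graph}\,f\subset\M\times\R$ has mean curvature, with respect to the downward normal $\ov{e}_4$, equal to $\mathrm{tr}_N k+F$. Since $k$ is fixed smooth data and $|F|\le\mu_1$, this prescribed mean curvature is bounded a priori, and, regarded as a function of the base point and of the normal direction, it has $C^2$-norm controlled by the data and $\mu_1,\mu_2,\mu_3$. I claim both conclusions follow once one has a \emph{local second fundamental form bound}: $|A_N|\le c_0$ on $N\cap B^4_{2\rho_0}(X_0)$ with $c_0,\rho_0$ depending only on the data and $\mu_1,\mu_2,\mu_3$. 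Indeed, such a bound makes $N$, near $X_0$, a $C^{1,1}$ graph $w$ over $T_{X_0}N$ on a ball of uniform radius; the prescribed-mean-curvature equation for $w$ is then uniformly elliptic with $C^{1,\alpha}$ data (here $\mu_2,\mu_3$ and the smoothness of $k$ enter), so the interior Schauder estimates bootstrap $w$ to $\|w\|_{C^{3,\alpha}(\{|y|\le\rho\})}\le c_1(\alpha)$ after one shrinking of $\rho$, which is the first conclusion; and, $|A_N|$ being bounded, the tilt function $\vartheta:=\langle\ov{e}_4,\nu\rangle=(1+|\nabla f|^2)^{-1/2}$ satisfies a uniformly elliptic linear equation on $N\cap B^4_\rho(X_0)$, whence Moser's Harnack inequality gives $\sup\vartheta\le c_2\inf\vartheta$ and the De Giorgi--Nash--Moser gradient estimate applied to $\log\vartheta$ gives $|\ov{\nabla}\log\vartheta|\le c_3$, i.e.\ the two Harnack-type inequalities.

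So the whole problem is the curvature bound, and here I would work with $\vartheta>0$. Using that $\nu$ is parallel in the product metric and that the product structure forces $\overline{\mathrm{Ric}}(\,\cdot\,,\nu)=0$, a direct computation --- Weingarten for $\ov{\nabla}\ov{e}_4$, Codazzi for $\mathrm{div}_N A_N$, and $\ov{\nabla}_X\nu^T=\vartheta\,A_N(X)$ from parallelism --- yields the Jacobi-type identity
\begin{equation*}
\Delta_N\vartheta+\bigl(|A_N|^2-\overline{\mathrm{Ric}}(\ov{e}_4,\ov{e}_4)\bigr)\vartheta=-\langle\ov{\nabla}H_N,\nu^T\rangle .
\end{equation*}
The point is that the right-hand side decomposes as a drift term $b\cdot\ov{\nabla}\vartheta$ --- coming from the Weingarten term in $\ov{\nabla}H_N$, which is a multiple of $k$ paired with $A_N(\nu^T)=-\ov{\nabla}\vartheta$, so $|b|\le C(\mathrm{data})$ --- plus a remainder of size $O(\vartheta)$, because $\pi_*\nu^T=-\vartheta\,\ov{e}_4^{\,\M}$ and the $\nu$-derivative of the trivially-extended $k$ vanishes, so $\ov{\nabla}_{\nu^T}k=-\vartheta\,\ov{\nabla}_{\ov{e}_4}k$. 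Moving the drift into the operator, $\vartheta$ is a \emph{positive solution of a homogeneous linear elliptic equation} $\Delta_N\vartheta-b\cdot\ov{\nabla}\vartheta+V\vartheta=0$ with $V=|A_N|^2+O(1)$, the drift $b$ and the $O(1)$ both bounded by $C(\mathrm{data},\mu_i)$. Testing this against $\varphi^2/\vartheta$ for $\varphi\in C_c^\infty(N)$ and integrating by parts (absorbing the drift with Cauchy--Schwarz and discarding the resulting $\int\varphi^2|\ov{\nabla}\vartheta|^2/\vartheta^2$) produces the stability-type inequality
\begin{equation*}
\int_N|A_N|^2\varphi^2\le 4\int_N|\ov{\nabla}\varphi|^2+C\int_N\varphi^2,\qquad\forall\,\varphi\in C_c^\infty(N).
\end{equation*}

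From here the curvature bound is obtained by the Schoen--Simon--Yau scheme, available in dimension three. One feeds the stability inequality together with Simons' identity $\Delta_N|A_N|^2\ge 2|\ov{\nabla}A_N|^2-C(|A_N|^4+1)$ and the Michael--Simon Sobolev inequality on $N$ into the standard iteration (together with the easily available local area bound $\mathrm{Area}(N\cap B^4_\rho(X_0))\le C\rho^3$, valid because $N$ bounds the subgraph $E=\{(x,t):t<f(x)\}$, a set with bounded variational mean curvature, hence a $(\Lambda,r_0)$-minimizer of perimeter). The iteration then yields $|A_N|\le c_0$ on $N\cap B^4_{2\rho_0}(X_0)$, and the Schauder and Harnack steps of the first paragraph finish the proof.

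I expect the second paragraph to be the crux. Everything hinges on the algebraic structure of $H_N=\mathrm{tr}_N k+F$ on the graph: without it the inhomogeneity $\langle\ov{\nabla}H_N,\nu^T\rangle$ in the Jacobi equation is only $O(1+|A_N|)$, and division by the possibly tiny $\vartheta$ destroys the stability inequality; it is precisely the cancellations $\pi_*\nu^T=-\vartheta\,\ov{e}_4^{\,\M}$, $\ov{\nabla}_{\nu^T}k=-\vartheta\,\ov{\nabla}_{\ov{e}_4}k$ and $A_N(\nu^T)=-\ov{\nabla}\vartheta$ that turn it into a drift term plus an $O(\vartheta)$ term and so let one recover a genuine curvature estimate for a surface carrying no a priori gradient bound (solutions of Jang's equation do, after all, blow up). Getting the signs right in the Jacobi identity --- using the \emph{downward} normal --- and checking that the Schoen--Simon--Yau iteration tolerates the lower-order term $C\int\varphi^2$ and the inhomogeneity in Simons are the remaining technical points; the rest is routine elliptic regularity.
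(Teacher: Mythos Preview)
The paper does not give its own proof of this statement: it is quoted verbatim as \cite[Proposition~2]{yau} at the start of Section~\ref{gradient} and used as a black box. Your outline is essentially the original Schoen--Yau argument --- derive the Jacobi-type equation for the tilt $\vartheta=\langle\ov{e}_4,\nu\rangle$, exploit the structure $H_N=\mathrm{tr}_Nk+F$ to see that the inhomogeneity splits as a bounded drift in $\ov{\nabla}\vartheta$ plus an $O(\vartheta)$ term, obtain the stability inequality by testing against $\varphi^2/\vartheta$, feed it together with Simons' identity and the local area bound into the Schoen--Simon--Yau iteration to get $|A_N|\le c_0$, and then finish with Schauder and Moser--Harnack --- so there is nothing in the present paper to compare against, but your sketch matches the cited source.
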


Then following the arguments in \cite[Proposition 4]{yau}, we know that $ N=\f $ converges to the cylinder $ \Sigma\times\R $. In fact, from the fact that $  f-a $ is also a solution of Eq.\eqref{jang}, by the estimates of \cite[Proposition 2]{yau} there is a sequence $ \{a_i\} $ tending to infinity such that   the graph of $ f-a_i $ converges smoothly on compact subsets of $ \M\times\R $ to a limiting 3-dim submanifold of $ \M\times\R $. Then Harnack inequality in \cite[Proposition 2]{yau} implies that this limiting manifold is $ \Sigma\times\R $.  Let $ U $ be a neighborhood of $ \Sigma $. Following the arguments in \cite[Proposition 4, Corollary 2]{yau}, we can define a coordinate system on the neighborhood $U\times\R$ of $ \Sigma\times\R $ in $ \M\times\R $ by taking the fourth coordinate $ \tau $ to be the distance function to $ \Sigma\times\R $ in $ \M\times\R $. Let $\bar \Psi' : \Sigma \times(-\epsilon',\epsilon')\to \M$ be the map
\begin{equation*}
\bar\Psi'
:
\Sigma\times (-\epsilon',\epsilon')\times \R \to \M \times\R
:
(p,\tau,z) \mapsto \big( \exp_p(\tau\nu), z \big).    
\end{equation*}
Thus $ \bar{\Psi}' $ is compatible with Foliation A.

Then for a function $h$ on $C_{\bar z}$ we let $graph_{\bar\Psi'} h$ be the set
\begin{equation*}
graph_{\bar\Psi'} h = \{ \bar\Psi(p, h(p,z),z) : (p,z) \in \Sigma\times \R \}.
\end{equation*}

Therefore, it is a direct consequence of the above reasoning that there exists constant  $ \bar{z} $, such that $N \cap(U\times[\bar{z},\infty))  $ can be written as a graph of $ u $ on $ C_{\bar{z}}:=\Sigma\times[\bar{z},\infty) $. Furthermore, by translating the $ C^0 $ barriers of $ f  $ Eq.\eqref{thm barrier} into $ C^0 $ barriers of $h  $, we get the following:

\begin{proposition}
	
	Under the same assumptions as  Theorem \ref{thm1}, 
	for each solution $ f $ of Eq.\eqref{jang} which blows up at $ \Sigma $, there exist positive constants $\bar z $  and $C_0' $, $C_2' $,
	such that $graph f \cap U
	\times [\bar z,\infty)$ can be written as the graph of a function
	$h$ over $C_{\bar z}:= \Sigma  \times [\bar z,\infty)$ under coordinate $ \bar{\Psi}' $,  and
	\begin{equation*}
	C_2'\exp({-\sqrt{\lambda } z} ) \leq|h(p,z)|   \leq C_0'\exp({-\sqrt{\lambda } z})
	\end{equation*}
	
\end{proposition}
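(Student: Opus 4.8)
The plan is to read the logarithmic $C^0$ control of $f$ furnished by Theorem~\ref{thm1} through the change of variables that passes from the geodesic distance $\tau$ on $\M$ to the cylindrical height $z$. The graphical representation itself — the existence of $\bar z$ and of the function $h$ over $C_{\bar z}=\Sigma\times[\bar z,\infty)$ in the coordinates $\bar\Psi'$ — has already been recorded above as a consequence of the Schoen--Yau a priori estimates together with the Harnack inequalities of \cite[Proposition 2]{yau}, so only the two‑sided bound on $h$ remains to be proved. Recall that, as the $C^0$ consequence of Theorem~\ref{thm1} derived from Ineq.\eqref{all} (boundedness of $f+\tfrac{1}{\sqrt{\lambda}}\log\tau$ on $V_{\tau_0}$, valid in Foliation A since there $\tau$ is exactly the geodesic distance), there is a constant $M$ with
\begin{equation*}
-\tfrac{1}{\sqrt{\lambda}}\log\tau(x)-M\ \le\ f(x)\ \le\ -\tfrac{1}{\sqrt{\lambda}}\log\tau(x)+M\qquad\text{for }x\in V_{\tau_0};
\end{equation*}
as in the Remark after Theorem~\ref{thm2}, $M$ depends on $f$ only through $\sup_{\Sigma_{\tau_0}}f$ and $\inf_{\Sigma_{\tau_0}}f$, which is why the resulting $C_0',C_2'$ need not be determined by the initial data alone.

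The key point is the defining relation for $h$: by construction, for $(p,z)\in\Sigma\times[\bar z,\infty)$ the point $\bar\Psi'(p,h(p,z),z)=(\exp_p(h(p,z)\nu),z)$ lies on the graph of $f$, so
\begin{equation*}
z=f\big(\exp_p(h(p,z)\nu)\big).
\end{equation*}
Since $\tau(\exp_p(t\nu))=t$ for small $t$, the geodesic distance to $\Sigma$ of the point $\exp_p(h(p,z)\nu)$ equals $h(p,z)$, and $h(p,z)>0$ because this point lies in the interior of $\M$; hence $|h|=h$. Because $N$ converges to the cylinder $\Sigma\times\R$, enlarging $\bar z$ (equivalently shrinking $U$) we may assume $0<h(p,z)\le\tau_0$ throughout $C_{\bar z}$, so the displayed $C^0$ bound applies at $x=\exp_p(h(p,z)\nu)$ with $\tau(x)=h(p,z)$ and $f(x)=z$:
\begin{equation*}
-\tfrac{1}{\sqrt{\lambda}}\log h(p,z)-M\ \le\ z\ \le\ -\tfrac{1}{\sqrt{\lambda}}\log h(p,z)+M.
\end{equation*}
Solving each inequality for $h(p,z)$ yields $e^{-\sqrt{\lambda}M}e^{-\sqrt{\lambda}z}\le h(p,z)\le e^{\sqrt{\lambda}M}e^{-\sqrt{\lambda}z}$, and we set $C_2'=e^{-\sqrt{\lambda}M}$ and $C_0'=e^{\sqrt{\lambda}M}$, which gives the asserted bound.

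The only genuinely delicate step is ensuring that the point $\exp_p(h(p,z)\nu)$ at which we invoke the $C^0$ bound stays inside the validity region $V_{\tau_0}$ of Theorem~\ref{thm1}; this is exactly what forces the choice of a large $\bar z$, and it is legitimate because the graph of $f$ becomes uniformly close to $\Sigma\times\R$ at large heights, so that $h(\cdot,z)\to0$ as $z\to\infty$. Beyond that there is nothing to compute: the exponential decay of $h$ is merely the logarithmic blowup of $f$ expressed in the variable $z$.
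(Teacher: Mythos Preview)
Your argument is correct and follows the same line as the paper's: the paper simply records that the proposition is obtained ``by translating the $C^0$ barriers of $f$ Eq.~\eqref{thm barrier} into $C^0$ barriers of $h$,'' and your write-up supplies exactly this translation via the defining relation $z=f(\exp_p(h(p,z)\nu))$ together with $\tau=h$. The one cosmetic difference is that the paper cites Eq.~\eqref{thm barrier} (Foliation~B) and implicitly uses the comparability of Foliations~A and~B, whereas you invoke directly the Foliation~A consequence (boundedness of $f+\tfrac{1}{\sqrt{\lambda}}\log\tau$ on $V_{\tau_0}$), which is cleaner since $\bar\Psi'$ is built from Foliation~A.
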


\subsection{$C^2$ Bound}

In this section, we prove a refinement of \cite[Corollary 2]{yau} and \cite[Theorem 4.2]{metzger}.  \cite[Corollary 2]{yau} states that $ N=\f $ converges uniformly in $ C^2 $ to the cylinder $ \Sigma\times\R $ for large values of $ f $, and \cite[Theorem 4.2]{metzger} states that the rate of this convergence is exponential. Recall the statements of \cite[Corollary 2]{yau} and \cite[Theorem 4.2]{metzger}:

\begin{theorem}\cite[Corollary 2]{yau}
	\label{thm:sideways1}
	Assume the conditions of Theorem \ref{original} and let $\Sigma$ be a connected component
	of the apparent horizons, on which $ f $ tends to $ +\infty $ ($ -\infty $ respectively). Let $U$ be a neighborhood of $\Sigma$ with positive
	distance to any other apparent horizons in $ \M $.
	
	Then for all $\epsilon>0$ there exists $\bar z = \bar z(\epsilon)$,
	depending also on the geometry of $(\M,g,k)$, such that $N \cap( U
	\times [\bar z,\infty))$ can be written as the graph of a function
	$u$ over $C_{\bar z}:= \Sigma \times [\bar z,\infty)$, so that
	\begin{equation*}
	|u(p,z)| + |^{C_{\bar z}}\nabla u(p,z) | + |^{C_{\bar z}}\nabla^2 u(p,z)| < \epsilon.
	\end{equation*}
	for all $(p,z)\in C_{\bar z}$. Here, $^{C_{\bar z}}\nabla$ denotes
	covariant differentiation along $C_{\bar z}$.
\end{theorem}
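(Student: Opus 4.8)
Since Theorem \ref{thm:sideways1} is nothing but the statement of \cite[Corollary 2]{yau}, the quickest route is to cite that reference directly; for completeness I record how one reconstructs it from the ingredients already collected above. The plan is a vertical--translation compactness argument run by contradiction, combining the a priori estimates of \cite[Proposition 2]{yau} with the convergence $N\to\Sigma\ti\R$ furnished by Theorem \ref{original}(3).

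First I would fix Fermi-type coordinates around $\Sigma\ti\R$ in $\M\ti\R$, namely $(p,\tau,z)\mapsto(\exp_p^{\M}(\tau\nu),z)$, which are exactly the coordinates $\bar\Psi'$ used above and are comparable with Foliation A. By Theorem \ref{original}(3), $N$ has a cylindrical end over $\Sigma$, so for each $\epsilon>0$ there is $\bar z$ with $N\cap(U\ti[\bar z,\infty))$ inside the $\epsilon$-tube around $\Sigma\ti\R$. The Harnack inequality $\sup\langle\ov e_4,\nu\rangle\le c_2\inf\langle\ov e_4,\nu\rangle$ of \cite[Proposition 2]{yau}, applied on the uniform balls $B^4_\rho(X_0)$, propagates the smallness of $\langle\ov e_4,\nu\rangle=(1+|\nabla f|^2)^{-1/2}$ from points where $f$ is huge to the whole region; hence there $N$ is genuinely a graph $\tau=u(p,z)$ over $C_{\bar z}=\Sigma\ti[\bar z,\infty)$, and $|u|<\epsilon$ follows from the tube inclusion.

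For the derivative bounds I would argue by contradiction: if the $C^1$ or $C^2$ conclusion failed there would be $\epsilon_0>0$ and points $X_i\in N$ with $z$-coordinate $\to\infty$ and $|{}^{C_{\bar z}}\nabla u|+|{}^{C_{\bar z}}\nabla^2 u|\ge\epsilon_0$ at the corresponding $(p_i,z_i)$. For Jang's equation \eqref{jang} the right-hand side $F$ vanishes, so \cite[Proposition 2]{yau} supplies, uniformly in $X_i$ and independently of how steep $N$ is, a local defining function $w_i$ on the fixed ball $\{|y|\le\rho\}$ with a uniform $C^{2,\alpha}$ bound. After translating vertically by $-z_i$, a standard compactness (Arzel\`a--Ascoli) argument extracts a subsequence converging in $C^2_{\mathrm{loc}}$ to a limit submanifold $N_\infty$ through the centre of the translated cylinder. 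Because \eqref{jang} is invariant under vertical translation, $N_\infty$ again solves it, and because $\langle\ov e_4,\nu\rangle\to0$ along the sequence, $N_\infty$ has vertical tangent planes everywhere, so $N_\infty=\Gamma\ti\R$ with $\theta^+[\Gamma]=0$ (resp. $\theta^-[\Gamma]=0$); proximity to $\Sigma\ti\R$ together with the positive distance of $U$ to the other apparent horizons then forces $\Gamma=\Sigma$. Consequently the function over $C_{\bar z}$ describing $N_\infty$ is $u_\infty\equiv0$, so $|{}^{C_{\bar z}}\nabla u_\infty|=|{}^{C_{\bar z}}\nabla^2 u_\infty|=0$ at the limit point; but $C^2$ convergence makes these the limits of the same quantities at $(p_i,z_i)$, which were $\ge\epsilon_0$, a contradiction.

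The step I expect to be the main obstacle is the translation between the two graph descriptions: \cite[Proposition 2]{yau} controls $N$ as a graph $y^4=w(y)$ over the tangent plane $T_{X_0}N$, while the theorem wants $N$ as a graph $\tau=u(p,z)$ over the \emph{fixed} cylinder $C_{\bar z}$. Transferring the $C^{2,\alpha}$ bounds and the $C^2$ convergence across this change of frame requires a quantitative control on how fast $T_{X_0}N$ rotates toward the vertical plane $T_p\Sigma\ti\R$, which is the sharp form of $\langle\ov e_4,\nu\rangle\to0$; a secondary care point is making the identification $\Gamma=\Sigma$ of the blow-up limit rigorous, where one reuses the uniqueness and maximum-principle features of the limit already recorded in Theorem \ref{original}.
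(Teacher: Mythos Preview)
Your proposal is correct and matches what the paper does. The paper does not supply its own proof of Theorem~\ref{thm:sideways1}: it is simply a restatement of \cite[Corollary~2]{yau}, recalled for the reader's convenience at the start of the $C^2$-bound subsection. The only argument the paper offers is the short paragraph just before that subsection, which sketches exactly your vertical-translation compactness mechanism (translate $f$ by $a_i\to\infty$, invoke the uniform local parametric estimates and Harnack inequality of \cite[Proposition~2]{yau}, and identify the limit as $\Sigma\times\R$); your write-up is a faithful and somewhat more detailed expansion of that sketch.
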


\begin{theorem}\cite[Theorem 4.2]{metzger}
	\label{thm:exponential1}
	Let $N_0 = \f_0$ be the manifold constructed in Theorem \ref{thm:blowup} and 
	assume the situation of Theorem~\ref{thm:sideways}. Then there exists $\bar z = \bar z(\epsilon)$,
	depending also on the geometry of $(\M,g,k)$, such that $N_0 \cap (U
	\times [\bar z,\infty))$ can be written as the graph of a function
	$u_0$ over $C_{\bar z}:= \Sigma \times [\bar z,\infty).$
	If in addition $\Sigma$ is
	strictly stable with principal eigenvalue $\lambda >0$, then
	for all $\delta < \sqrt{\lambda}$ there exists $c=c(\delta)$ depending only
	on the data $(\M,g,k)$ and $\delta$ such that
	\begin{equation*}
	|u_0(p,z)| + |^{C_{\bar z}}\nabla u_0(p,z)| + |^{C_{\bar z}}\nabla^2 u_0(p,z)| \leq c\exp(-\delta z).
	\end{equation*}
	where  $^{C_{\bar z}}\nabla$ is the covariant derivative w.r.t. the induced metric on $C_{\bar{z}}$. 
\end{theorem}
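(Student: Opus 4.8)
The plan is to combine the logarithmic $C^0$ barrier already established with the interior a priori estimates of Schoen--Yau and a linearization of Jang's equation along the cylinder $\Sigma\times\R$. First I would recall that, by \cite[Proposition 2]{yau} (equivalently Theorem \ref{thm:sideways1}), the graph $N_0=\f_0$ has uniformly bounded $C^{2,\alpha}$ geometry when written locally as a graph in normal coordinates, so $N_0\to\Sigma\times\R$ in $C^2_{loc}$ and, for $\bar z$ large, $N_0\cap(U\times[\bar z,\infty))$ is the graph of a function $u_0$ over $C_{\bar z}=\Sigma\times[\bar z,\infty)$ with $|u_0|+|{}^{C_{\bar z}}\nabla u_0|+|{}^{C_{\bar z}}\nabla^2 u_0|$ as small as we wish on $\{z\ge z_0\}$ for $z_0$ large. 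The logarithmic upper barrier of Theorem \ref{thm1} (equivalently the $C^0$ super-solution of \cite[Section 4]{metzger}), after the coordinate change $\tau\mapsto z$, yields a first bound $|u_0(p,z)|\le C\exp(-\sqrt{\lambda}\,z)$; the point of the theorem is then to propagate an essentially equally strong bound, with rate $\delta<\sqrt{\lambda}$, to the first and second covariant derivatives.

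The key step is to write the PDE satisfied by $u_0$ on $C_{\bar z}$ and linearize it along $\Sigma\times\R$. Jang's operator \eqref{jang}, expressed in the coordinates $\bar\Psi$, becomes a quasilinear elliptic operator for $u_0$ whose principal part is $\partial_z^2 u_0+{}^\Sigma\Delta u_0$ and whose full linearization at $u_0\equiv0$ is $\partial_z^2+L_\Sigma$, with $L_\Sigma$ the MOTS stability operator of $\Sigma$; the remaining terms are at least quadratic in $(u_0,{}^{C_{\bar z}}\nabla u_0,{}^{C_{\bar z}}\nabla^2 u_0)$. Formula \eqref{local} is the version of this computation for test functions depending only on $s$; one needs its full $u_0$-dependent analogue, tracking how the induced metric, second fundamental form and $\mathrm{tr}\,k$ of the nearby graphs depend on $u_0$ to second order. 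Since $L_\Sigma$ is not self-adjoint, instead of an $L^2$ eigenfunction expansion I would use \cite[Lemma 4.1]{stable}: there is a real principal eigenvalue $\lambda>0$ with a positive eigenfunction $\beta$, $L_\Sigma\beta=\lambda\beta$. For any $\delta<\sqrt{\lambda}$ one computes $(\partial_z^2+L_\Sigma)\big(\beta e^{-\delta z}\big)=(\delta^2-\lambda)\beta e^{-\delta z}<0$, so $c\,\beta(p)e^{-\delta z}$ is a strict supersolution of the \emph{linearized} equation; because $u_0$ and its derivatives are already known to be small on $\{z\ge z_0\}$, the quadratic error is absorbed and $c\,\beta(p)e^{-\delta z}$ is a genuine supersolution of the nonlinear equation there. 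Choosing $c$ so that $c\,\beta(p)e^{-\delta z_0}\ge|u_0(p,z_0)|$ for all $p$, and using $u_0\to0$ as $z\to\infty$, the comparison principle on the noncompact cylinder gives $|u_0(p,z)|\le c\,\beta(p)e^{-\delta z}\le c'e^{-\delta z}$.

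Once the $C^0$ decay is in hand, the derivative bounds follow by a rescaling/bootstrap argument: applying interior Schauder estimates on each unit slab $\Sigma\times[z-1,z+2]$ and absorbing the small quadratic contribution of ${}^{C_{\bar z}}\nabla u_0,{}^{C_{\bar z}}\nabla^2 u_0$ to the inhomogeneity against the left-hand side, one obtains $|{}^{C_{\bar z}}\nabla u_0|+|{}^{C_{\bar z}}\nabla^2 u_0|\le C\sup_{\Sigma\times[z-1,z+2]}|u_0|\le C'e^{-\delta z}$ on $\Sigma\times[z,z+1]$, with constants independent of $z$; patching over $z$ gives the stated inequality for all $(p,z)\in C_{\bar z}$.

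I expect the main obstacle to be twofold. First, carrying out the linearization of Jang's operator at the cylinder and recognizing the linear part as exactly $\partial_z^2+L_\Sigma$ is a nontrivial geometric computation, essential for identifying the correct decay rate. Second, the non-self-adjointness of $L_\Sigma$ forces the comparison argument to be run with the single positive eigenfunction $\beta$ on a noncompact domain, so one must be careful that the $C^0$ decay already obtained (together with the uniform geometry from \cite[Proposition 2]{yau}) genuinely controls $u_0$ at $z=\infty$; and the strict inequality $\delta<\sqrt{\lambda}$, which makes $\delta^2-\lambda<0$ available to dominate the quadratic and lower-order errors, is precisely what one improves to $\delta=\sqrt{\lambda}$ by using instead the refined integral barriers $v_{a,\gamma}$ of Lemma \ref{taylor} in Theorem \ref{thm1}.
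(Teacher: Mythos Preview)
Your two-step strategy---a $C^0$ decay bound via comparison with an exponentially decaying supersolution, then a bootstrap to $C^1$ and $C^2$ via interior Schauder estimates on unit slabs $\Sigma\times[z-1,z+2]$---is exactly the structure of Metzger's proof as the paper outlines it in the ``$C^2$ Bound'' subsection. The only genuine variation is the side on which you run the comparison: Metzger (and this paper) works on $\M$, taking a test function $\phi(s)$ depending only on the foliation parameter and checking $\J[\phi]\le 0$ via formula \eqref{local}; you instead linearize the sideways equation on the cylinder and compare $u_0$ directly with $c\,\beta(p)e^{-\delta z}$. These are equivalent through the graph relation $s=u_0(p,z)\leftrightarrow z=f_0(p,s)$: Metzger's barrier $f_0\le -\delta^{-1}\log s+C$ translates precisely to $u_0\le C'e^{-\delta z}$. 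So the difference is cosmetic.

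Two small points. First, the linearization of the sideways Jang operator at the cylinder is $\partial_z^2-L_\Sigma$, not $\partial_z^2+L_\Sigma$: the paper's displayed expansion of $\J[h]$ has $+\gamma^{ij}\nabla^2_{ij}h-2\gamma^{ij}\partial_i h\,k(\partial_s,\partial_j)-\theta^+[\Sigma_{h}]$, and linearizing $-\theta^+[\Sigma_h]$ contributes $-(L_\Sigma 1)h$; collecting terms gives $\partial_z^2 h-L_\Sigma h$. With your sign one would get $(\delta^2+\lambda)$ rather than the $(\delta^2-\lambda)$ you need. Second, you invoke ``the logarithmic upper barrier of Theorem \ref{thm1}'' as input, but Theorem \ref{thm1} is this paper's sharper result, logically downstream of Theorem \ref{thm:exponential1}; for Metzger's statement the available $C^0$ input is only the $\delta<\sqrt\lambda$ barrier, and your final paragraph correctly separates this from the $v_{a,\gamma}$ refinement that upgrades the rate to $\sqrt\lambda$.
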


In this section, we are able to prove that the rate of this convergence is $ \exp(-\sqrt{\lambda} z) $.
In \cite[Section 4]{metzger}, J. Metzger provided a way to extend the $ C^0 $ super  control to $ C^2 $ super  control. The procedure presented here is the same as   in \cite[Theorem 4.2]{metzger}. We briefly outline here for consistence. \par

In \cite[Theorem 4.2]{metzger}, it is computed that  the value of Jang's operator for a function $h$ on $ \Sigma\times\R $ is the
following:
\begin{align*}
\label{eq:8}
\J [h]
=&
\partial_z^2 h
+ \gamma_{h(p,z)}^{ij} \nabla^2_{i,j}h
- 2 \gamma_{h(p,z)}^{ij} \partial_i h\cdot k(\partial_s, \partial_j)
- \theta^+[\Sigma_{h(p,z)}]\\
& + Q(h,^{C_{\bar z}}\nabla h,^{C_{\bar z}}\nabla^2 h)
\end{align*}
where $\gamma_s$ is the metric on $\Sigma_s$ and $Q$ is of the form
\begin{equation*}
Q(h,^{C_{\bar z}}\nabla h,^{C_{\bar z}}\nabla ^2 h)
=
h * ^{C_{\bar z}}\nabla 
+ ^{C_{\bar z}}\nabla  * ^{C_{\bar z}}\nabla 
+ ^{C_{\bar z}}\nabla  * ^{C_{\bar z}}\nabla  *^{C_{\bar z}} \nabla ^2 h
\end{equation*}
where $*$ denotes some contraction with a bounded
tensor. The vectors $\partial_i$, $i=1,2$ denote directions
tangential to $\Sigma$ and $\partial_z$ the direction along the
$\R$-factor in $C_{\bar z}$.\par

By \cite[Corollary 2]{yau}, we know that $ Q(h,^{C_{\bar z}}\nabla h,^{C_{\bar z}}\nabla ^2 h) $ is low order term. Thus by freezing coefficients, $h$ satisfies a linear, uniformly elliptic equation of the form
\begin{equation*}
a^{ij} \partial_i\partial_j h + < b, ^{C_{\bar z}}\nabla  h> - \theta^+[\Sigma_{h(p,z)}] = F.
\end{equation*}
Notice that because  $\theta^+[\Sigma_\tau] =\tau L_\Sigma 1  +O(\tau^2)$,  $\theta^+[\Sigma_{h(p,z)}]$ also decays
exponentially with the order $ e^{-\sqrt{\lambda}z} $.\par

Now using standard Schauder interior estimates for linear elliptic equations (c.f. \cite[Chapter 6]{Gilbarg}), we can bound $ C^2 $ norm of $ h$ by its $ C^0 $ norm and  the norm of $\theta^+[\Sigma_{h(p,z)}]$. By the fact that they both decay exponentially with the order $ e^{-\sqrt{\lambda}z} $, we can conclude that $ C^2 $ norm of $ h $ also decays exponentially with the order $ e^{-\sqrt{\lambda}z} $. Thus we prove the first part of Theorem \ref{thm2}.

\begin{theorem}
	\label{ex1}
	Under the same assumptions of  Theorem \ref{thm1}, 
	for each solution $ f $ of Eq.\eqref{jang} which blows up at $ \Sigma $, there exist positive constants $\bar z $  and $C_1' $, $C_2' $,
	such that $graph f \cap U
	\times [\bar z,\infty)$ can be written as the graph of a function
	$h$ over $C_{\bar z}:= \Sigma  \times [\bar z,\infty)$ under coordinate system $ \bar{\Psi}' $,  and
	\begin{equation*}
	|h(p,z)| + |^{C_{\bar z}}\nabla  h(p,z)| + |^{C_{\bar z}}\nabla ^2 h(p,z)| \leq C_1'\exp({-\sqrt{\lambda } z})
	\end{equation*}
	and
	\begin{equation*}
	|h(p,z)|\geq	C_2'\exp({-\sqrt{\lambda } z} ) 
	\end{equation*}
	where $^{C_{\bar z}}\nabla $ is the covariant derivative w.r.t. the induced metric on $C_{\bar{z}}$. 
\end{theorem}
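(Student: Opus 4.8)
The plan is to propagate the sharp $C^0$ decay rate to $C^1$ and $C^2$ by interior elliptic estimates. The representation of $\text{graph} f$ near $\Sigma$ as a graph $h$ over the cylinder $C_{\bar z}$ is already furnished by \cite[Corollary 2]{yau} (Theorem \ref{thm:sideways1}), and the two $C^0$ bounds $C_2'\exp(-\sqrt\lambda z)\le|h(p,z)|\le C_0'\exp(-\sqrt\lambda z)$ are exactly the content of the Proposition preceding this subsection, obtained by translating the barrier control Eq.\eqref{thm barrier} of Theorem \ref{thm1}. Thus the only new point is to upgrade the exponent $\sqrt\lambda$ from $C^0$ to $C^2$; this is where we sharpen \cite[Theorem 4.2]{metzger}, whose comparison argument reaches only $\exp(-\delta z)$ with $\delta<\sqrt\lambda$ because it does not produce the sharp logarithmic $C^0$ barrier, whereas ours does via the family $v_{a,\gamma}$ with $\gamma\to1$.

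First I would recall from \cite[Theorem 4.2]{metzger} the form of Jang's operator along $\Sigma\times\R$,
\[
\J[h]=\partial_z^2 h+\gamma_{h(p,z)}^{ij}\nabla^2_{ij}h-2\gamma_{h(p,z)}^{ij}\partial_i h\,k(\partial_s,\partial_j)-\theta^+[\Sigma_{h(p,z)}]+Q(h,{}^{C_{\bar z}}\nabla h,{}^{C_{\bar z}}\nabla^2 h),
\]
with $Q$ of the schematic shape $h*{}^{C_{\bar z}}\nabla+{}^{C_{\bar z}}\nabla*{}^{C_{\bar z}}\nabla+{}^{C_{\bar z}}\nabla*{}^{C_{\bar z}}\nabla*{}^{C_{\bar z}}\nabla^2 h$. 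Since $\J[h]=0$, this is a quasilinear elliptic equation for $h$. By \cite[Corollary 2]{yau} the quantity $|h|+|{}^{C_{\bar z}}\nabla h|+|{}^{C_{\bar z}}\nabla^2 h|$ is uniformly small on $C_{\bar z}$ for $\bar z$ large, so the coefficient $\gamma^{ij}_{h}$ stays uniformly close to the ($z$-independent) cylinder metric — giving uniform ellipticity and uniform H\"older bounds — the cubic term ${}^{C_{\bar z}}\nabla*{}^{C_{\bar z}}\nabla*{}^{C_{\bar z}}\nabla^2 h$ can be absorbed into the principal part without destroying ellipticity, and the rest of $Q$ together with $-\theta^+[\Sigma_{h(p,z)}]$ is treated as an inhomogeneity $F$. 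The decisive observations are that $\theta^+$ vanishes on $\Sigma$, so $\theta^+[\Sigma_\tau]=\tau\,L_\Sigma 1+O(\tau^2)$ and hence $|\theta^+[\Sigma_{h(p,z)}]|\lesssim|h(p,z)|\lesssim\exp(-\sqrt\lambda z)$, and that the remaining $Q$-contributions each carry a factor that is $O(\exp(-\sqrt\lambda z))$ — either $h$ itself, or, after the smallness, one slot of ${}^{C_{\bar z}}\nabla h$.

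Then I would apply the standard interior Schauder estimate for $h$ on unit balls $B_1(p,z)\subset\Sigma\times\R$ — all isometric, since the cylinder metric is $z$-translation invariant — in the form $\|h\|_{C^{2,\alpha}(B_1(p,z))}\le C(\|h\|_{C^0(B_2(p,z))}+\|F\|_{C^{0,\alpha}(B_2(p,z))})$ with $C$ independent of $z$. On the right, $\|h\|_{C^0(B_2(p,z))}\lesssim\exp(-\sqrt\lambda z)$ by the Proposition, the $\theta^+$ part of $F$ is $\lesssim\exp(-\sqrt\lambda z)$, and the genuinely nonlinear part of $F$ is bounded by $\varepsilon\,\|h\|_{C^{2,\alpha}(B_2(p,z))}$ with $\varepsilon$ arbitrarily small for $\bar z$ large; the latter is absorbed into the left side by the usual iteration over a chain of concentric balls. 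This yields $|{}^{C_{\bar z}}\nabla h(p,z)|+|{}^{C_{\bar z}}\nabla^2 h(p,z)|\lesssim\exp(-\sqrt\lambda z)$, which combined with the two $C^0$ bounds gives the theorem with a suitably enlarged $C_1'$ and the same $C_2'$. The main obstacle I expect is the uniform-in-$z$ bookkeeping in this last step: one must check that the $C^2$-smallness from \cite[Corollary 2]{yau} is genuinely enough both to keep the frozen equation uniformly elliptic with $z$-independent Schauder constants and to make the quadratic terms sub-$\varepsilon$, so that neither the Schauder constant nor the absorption threshold degenerates as $z\to\infty$. This is routine but is the only place demanding care; the sharp exponent $\sqrt\lambda$ enters solely through the $C^0$ barrier of Theorem \ref{thm1}.
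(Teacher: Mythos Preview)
Your proposal is correct and follows essentially the same approach as the paper: translate the sharp $C^0$ barrier of Theorem \ref{thm1} to the cylinder via the preceding Proposition, recall the form of $\J[h]$ from \cite{metzger}, freeze coefficients using the $C^2$-smallness of \cite[Corollary 2]{yau} to obtain a uniformly elliptic linear equation with inhomogeneity $F$ controlled by $|h|$ and $|\theta^+[\Sigma_{h}]|\lesssim |h|$, and then apply interior Schauder on translation-invariant balls to pass from $C^0$ decay $e^{-\sqrt\lambda z}$ to $C^2$ decay. Your discussion of the absorption of the cubic term and the $z$-uniformity of the Schauder constants is more explicit than the paper's outline, but the argument is the same.
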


For the convenience of the next section, we prefer to rewrite  the above theorems under a coordinate system in $ U\times\R $ which is compatible with Foliation B.

Let $\bar \Psi : \Sigma \times(-\epsilon,\epsilon)\to \M$ be the map
\begin{equation*}
\bar\Psi
:
\Sigma\times (-\epsilon,\epsilon)\times \R \to \M \times\R
:
(p,s,z) \mapsto \big( \Psi_B(p,s), z \big).    
\end{equation*}
Thus $ \bar{\Psi} $ is compatible with Foliation B.

Then for a function $u$ on $C_{\bar z}$ we let $graph_{\bar\Psi} u$ be the set
\begin{equation*}
graph_{\bar\Psi} u = \{ \bar\Psi(p, u(p,z),z) : (p,z) \in \Sigma\times \R \}.
\end{equation*}

For a small neighborhood of $ \Sigma $,  there exist constants $ \alpha_i, i=1,2,3 $, such that the following inequalities:
\begin{gather*}
\frac{  s}{ \tau}\in(\alpha_1^{-1},\alpha_1)\\
\frac{\partial s}{\partial\tau}, \frac{\partial \tau}{\partial s}\in(\alpha_2^{-1},\alpha_2)\\
\frac{\partial^2 s}{\partial\tau^2},\frac{\partial^2 \tau}{\partial s^2}\in(\alpha_3^{-1},\alpha_3)
\end{gather*}
holds in this neighborhood.

Then we can rewrite the above theorem under coordinates defined by $ \Psi $:

\begin{theorem}
	\label{ex}
	Under the same assumption as  Theorem \ref{thm1}, 
	for each solution $ f $ of Eq.\eqref{jang} which blows up at $ \Sigma $, there exist positive constants $\bar z $  and $C_1 $, $C_2 $,
	such that $graph f \cap U
	\times [\bar z,\infty)$ can be written as the graph of a function
	$u$ over $C_{\bar z}:= \Sigma  \times [\bar z,\infty)$ under coordinate system $ \bar{\Psi} $,  and
	\begin{equation*}
	|u(p,z)| + |^{C_{\bar z}}\nabla  u(p,z)| + |^{C_{\bar z}}\nabla ^2 u(p,z)| \leq C_1\exp({-\sqrt{\lambda } z})
	\end{equation*}
	and
	\begin{equation*}
	|u(p,z)|\geq	C_2\exp({-\sqrt{\lambda } z} ) 
	\end{equation*}
	where $^{C_{\bar z}}\nabla $ is the covariant derivative w.r.t. the induced metric on $C_{\bar{z}}$. 
\end{theorem}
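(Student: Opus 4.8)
The plan is to obtain Theorem~\ref{ex} from Theorem~\ref{ex1} by a single change of collar coordinates, since both statements describe the same submanifold $N=\f$ near $\Sigma\times\R$: Theorem~\ref{ex1} represents $N$ as a graph over $C_{\bar z}$ in the coordinates $\bar\Psi'$ adapted to Foliation~A, whereas Theorem~\ref{ex} asks for the representation in the coordinates $\bar\Psi$ adapted to Foliation~B. No new geometric analysis is needed beyond the comparability of the two foliations, which has already been quantified above.

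First I would record the relation between the two graph functions. A point of $\M$ near $\Sigma$ can be written either as $\exp_p(\tau\nu)$ with foot point $p\in\Sigma$ and distance $\tau$, or as $\Psi_B(p',s)$; this defines a diffeomorphism $G=(G_1,G_2)\colon\Sigma\times[0,\epsilon)\to\Sigma\times[0,\epsilon')$, $(p,\tau)\mapsto(p',s)$, which touches only the $\M$-factor and satisfies $G_2(p,0)=0$. The inequalities stated above --- namely $s/\tau\in(\alpha_1^{-1},\alpha_1)$, $\partial s/\partial\tau,\partial\tau/\partial s\in(\alpha_2^{-1},\alpha_2)$ and $\partial^2 s/\partial\tau^2,\partial^2\tau/\partial s^2\in(\alpha_3^{-1},\alpha_3)$ --- say exactly that $G$ and $G^{-1}$ are $C^2$-bounded. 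Since a point $(y,z)\in N$ near $\Sigma\times\R$ has $\tau(y)=h(\pi_A(y),z)$ in Foliation~A and $s(y)=u(\pi_B(y),z)$ in Foliation~B, with $\pi_A,\pi_B$ the respective projections onto $\Sigma$, the two graph functions are linked by
\[
u\bigl(G_1(p,h(p,z)),\,z\bigr)=G_2\bigl(p,h(p,z)\bigr).
\]

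Next I would read off the estimates. Because $G$ carries no $z$-dependence and is $C^2$-bounded with $C^2$-bounded inverse, substituting into the displayed identity and applying the chain rule twice turns the bound $|h|+|^{C_{\bar z}}\nabla h|+|^{C_{\bar z}}\nabla^2 h|\le C_1'e^{-\sqrt\lambda z}$ from Theorem~\ref{ex1} into $|u|+|^{C_{\bar z}}\nabla u|+|^{C_{\bar z}}\nabla^2 u|\le C_1 e^{-\sqrt\lambda z}$ for a constant $C_1$ depending only on $C_1'$ and the $\alpha_i$; the rate $\sqrt\lambda$ is untouched. For the lower bound, the pointwise comparability $s/\tau\in(\alpha_1^{-1},\alpha_1)$ gives $|G_2(p,h(p,z))|\ge\alpha_1^{-1}|h(p,z)|$, and combining with $|h|\ge C_2'e^{-\sqrt\lambda z}$ yields $|u|\ge C_2 e^{-\sqrt\lambda z}$ with $C_2=\alpha_1^{-1}C_2'$. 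Finally, enlarging $\bar z$ if necessary guarantees that the identity above can be solved for $u$ on all of $\Sigma\times[\bar z,\infty)$.

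The one point that deserves care --- and it is bookkeeping rather than a genuine obstacle --- is checking that $G$ and $G^{-1}$ really are bounded in $C^2$ uniformly near $\Sigma$ and that the graph representation for $u$ survives the coordinate change down to a single $\bar z$; this is precisely the content of the comparability of Foliations~A and~B, so once that is in hand Theorem~\ref{ex} follows with no further input.
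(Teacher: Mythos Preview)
Your proposal is correct and follows essentially the same approach as the paper: the paper simply records the comparability constants $\alpha_1,\alpha_2,\alpha_3$ between Foliations~A and~B and then restates Theorem~\ref{ex1} as Theorem~\ref{ex} without any further argument, so your chain-rule passage from $h$ to $u$ via the collar diffeomorphism $G$ is exactly what the paper is tacitly invoking. If anything, you have spelled out more carefully than the paper does why the exponential rate $\sqrt{\lambda}$ survives and why the lower bound transfers via $s/\tau\in(\alpha_1^{-1},\alpha_1)$.
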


In the above theorem, the constants $ C_1,C_2,\bar z $ are of course related to the solution $ f $ because Jang's equation is invariant under vertical translation. By the fact that super barrier and sub barrier  depend on $ \sup_{\Sigma_{s_0}} f $ and $  \inf_{\Sigma_{s_0}} f $,     we know  constants $ C_1,C_2,\bar z $ also depend on these quantities. \par

\subsection{Gradient Estimate}

Now under the coordinates defined by $ \bar \Psi $, the hypersurface $ N $ can be expressed in two different way in a neighborhood of $ \Sigma $: either as the graph of $ f $ on $ \M $, or as the graph of $ u $ on cylinder $ \Sigma\times\R $:
\begin{align*}
z=f(p,s)\\
s=u(p,z)
\end{align*}
hold  for $ p\in\Sigma $.\par

Denote $ F(p,z,s)=f(p,s)-z $. Then $ F(p,z,u(p,z))=0 $. Let $ e_i, i=1,2 $  be the orthonormal frame on $ \Sigma $. Parallel translate them to obtain an orthonormal frame on $ \Sigma_s $, and still denote them as $ e_i, i=1,2 $. Denote $ \partial_i, i=1,2 $ to be the derivatives taken in the direction  of $ e_i, i=1,2 $, respectively,  and $ \partial_s$ to be the derivative taken w.r.t the parameter $ s $ of Foliation B.  Then by Implicit Function Theorem, 
\begin{align}
\partial_z u(p,z)=&-\big([J_{F,s}(p,z,s)]^{-1}\partial_z F(p,z,s)\big)\big|_{s=u(p,z)}\\
\partial_i u(p,z)=&-\big([J_{F,s}(p,z,s)]^{-1}\partial_i F(p,z,s)\big)\big|_{s=u(p,z)}\label{tangential}
\end{align}
Thus we have 
\begin{equation}
\partial_s f(p,s)\big|_{s=u(p,z)}=J_{F,s}(p,z,s )\big|_{s=u(p,z)}= \frac{1}{\partial_z u(p,z)}
\end{equation}

Therefore, by the fact that $$ |\partial_z u(p,z)|\leq|^{C_{\bar z}}\nabla  u (p,z)|\leq C_1 e^{-\sqrt{\lambda}z}, $$ and 
$$ s=  u(p,z)  \geq C_2 e^{-\sqrt{\lambda}z}, $$
we have the following lower bound for $ |\partial_s f  | $:
\begin{equation*}
| \partial_s f(p,s)|
\geq\frac{1}{C_1 e^{-\sqrt{\lambda}z}}
\geq \frac{C_2}{C_1  s} 	 
\end{equation*}

We now prove the upper bound for the gradient estimate for $ f $. As discussed above, it is equivalent to find the lower bound for $ |\partial_z u| $. We first prove the following  weaker version:

\begin{proposition}
	There exist   constants $ C>0 $ and $z_1\geq\bar{z}$ such that for $ \forall z> z_1 $, there exists a point $ p_z\in\Sigma $, such that 
	$$ \partial_z u(p_z,z):=\lim_{\delta z\to0}\frac{u(p_z,z+\delta z)-u(p_z,z)}{\delta z}\leq -C \exp(-\sqrt{\lambda}z) $$
	
\end{proposition}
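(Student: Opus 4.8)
The plan is to convert the sought upper bound on $|\partial_s f|$ into the equivalent lower bound $|\partial_z u(p_z,z)|\ge Ce^{-\sqrt{\lambda}z}$ at one well-chosen point per level, via the identity $\partial_s f(p,u(p,z))=1/\partial_z u(p,z)$, and to produce that point by an averaging argument for the fixed-surface integral of $u$. Concretely, let $d\mu_\Sigma$ be the (fixed) area element of $\Sigma$, and let $\psi>0$ be the principal eigenfunction of the formal $L^2(\Sigma,d\mu_\Sigma)$-adjoint $\mathcal L^\ast$ of the operator $\mathcal L$ introduced below; its existence and positivity, and the fact that its eigenvalue coincides with the principal eigenvalue of $\mathcal L$, follow from the Krein--Rutman argument of \cite[Lemma 4.1]{stable}. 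Set $\Phi(z):=\int_\Sigma u(\cdot,z)\,\psi\,d\mu_\Sigma$. By Theorem \ref{ex}, $\Phi$ is smooth and $C_2 A\,e^{-\sqrt{\lambda}z}\le \Phi(z)\le C_1 A\,e^{-\sqrt{\lambda}z}$, where $A:=\int_\Sigma\psi\,d\mu_\Sigma>0$.

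The core of the argument is that $\Phi$ is convex for large $z$. Starting from Metzger's expression for Jang's operator on the cylinder recalled above, one writes $\theta^+[\Sigma_u]=\lambda\beta u+O(u^2)$, freezes the slice metric $\gamma_u=\gamma_0+O(u)$ and the shift term $k(\partial_s,\partial_j)$ at $s=0$, and collects the resulting linear part into a fixed second-order elliptic operator $\mathcal L$ on $\Sigma$ (of $-\Delta$-type, with zeroth-order coefficient $+\lambda\beta$). Using the uniform $C^2$ estimates of Theorem \ref{ex} together with the quadratic (or higher) structure of $Q$, the equation $\J[u]=0$ then reads $\partial_z^2 u=\mathcal L u+\mathcal N$ with $|\mathcal N|\le C' e^{-2\sqrt{\lambda}z}$ on $C_{\bar z}$. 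Integrating against $\psi$ and moving $\mathcal L$ onto $\psi$ (no boundary terms, since $\Sigma$ is closed) gives $\Phi''(z)=\Lambda_1\Phi(z)+O(e^{-2\sqrt{\lambda}z})$, where $\Lambda_1$ is the common principal eigenvalue of $\mathcal L$ and $\mathcal L^\ast$. Since $\Phi>0$, $\Phi\to 0$, and $\Phi$ is pinched between two multiples of $e^{-\sqrt{\lambda}z}$, one gets $\Lambda_1>0$: if $\Lambda_1<0$ then $\Phi$ would be eventually concave, which is impossible for a positive function tending to $0$; if $\Lambda_1=0$ then integrating $\Phi''=O(e^{-2\sqrt{\lambda}z})$ twice from $+\infty$ would force $\Phi=O(e^{-2\sqrt{\lambda}z})$, contradicting the lower bound (in fact $\Lambda_1=\lambda$, consistent with \cite[Theorem 4.2, Theorem 4.4]{metzger}). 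Hence for $z$ beyond some $z_1$ one has $\Phi''(z)\ge\tfrac{\Lambda_1}{2}C_2 A\,e^{-\sqrt{\lambda}z}>0$, so $\Phi$ is convex on $[z_1,\infty)$.

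Now fix $T_0:=\tfrac{1}{\sqrt{\lambda}}\log\tfrac{2C_1}{C_2}>0$. For $z\ge z_1$, convexity gives the chord bound $\Phi'(z)\le\tfrac{\Phi(z+T_0)-\Phi(z)}{T_0}\le\tfrac{A e^{-\sqrt{\lambda}z}}{T_0}\bigl(C_1 e^{-\sqrt{\lambda}T_0}-C_2\bigr)=-\tfrac{C_2 A}{2T_0}e^{-\sqrt{\lambda}z}$. Since $\Phi'(z)=\int_\Sigma\partial_z u(\cdot,z)\,\psi\,d\mu_\Sigma$ and $\psi>0$, the mean value theorem for integrals produces $p_z\in\Sigma$ with $\partial_z u(p_z,z)\,\psi(p_z)\le-\tfrac{C_2 A}{2T_0|\Sigma|}e^{-\sqrt{\lambda}z}$, hence $\partial_z u(p_z,z)\le-C e^{-\sqrt{\lambda}z}$ with $C:=\tfrac{C_2 A}{2T_0|\Sigma|\,\max_\Sigma\psi}>0$, which is the assertion.

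The main obstacle is the convexity step: one must organize the cylinder equation so that every term other than $\partial_z^2 u$ and the \emph{fixed} linear operator $\mathcal L u$ is genuinely $O(e^{-2\sqrt{\lambda}z})$ uniformly on $C_{\bar z}$ — this is exactly where the uniform $C^2$ control of Theorem \ref{ex} and the bilinear form of $Q$ enter — and one must use the correct weight $\psi$ (the positive adjoint eigenfunction) so that $\int_\Sigma(\mathcal L u)\psi\,d\mu_\Sigma=\Lambda_1\int_\Sigma u\psi\,d\mu_\Sigma$ \emph{exactly}, after which the sign $\Lambda_1>0$ is free from the two-sided exponential bounds already established. The subsequent passage from ``one good point on each $\Sigma_s$'' to a lower bound for $|\partial_z u|$ at \emph{every} point will use the Harnack inequalities of \cite[Proposition 2]{yau}, but that is not needed for the present statement.
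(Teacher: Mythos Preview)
Your argument is correct, but it takes a substantially different and more elaborate route than the paper's. The paper's proof is almost immediate from Theorem~\ref{thm1}: for each $z_0\ge\bar z$ take $p_0$ to be a point where $u(\cdot,z_0)$ attains its maximum, set $s_0=u(p_0,z_0)$, and observe that $f$ then attains its maximum on $\Sigma_{s_0}$ at $p_0$. The barrier-based gradient estimate of Theorem~\ref{thm1} at that extremal point gives $|\partial_s f(p_0,s_0)|\le\tfrac{1}{\sqrt{\lambda}\,s_0}+a$, and since $\partial_z u(p_0,z_0)=1/\partial_s f(p_0,s_0)$ and $s_0\ge C_2e^{-\sqrt{\lambda}z_0}$ from Theorem~\ref{ex}, one obtains $|\partial_z u(p_0,z_0)|\ge\tfrac{\sqrt{\lambda}C_2}{2}e^{-\sqrt{\lambda}z_0}$ for $z_0$ large. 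No linearization, no eigenfunction, no ODE analysis is needed.

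Your approach instead stays entirely on the cylinder side: you re-linearize the graph equation into $\partial_z^2 u=\mathcal{L}u+O(e^{-2\sqrt{\lambda}z})$, pair with the positive adjoint eigenfunction to reduce to a scalar ODE $\Phi''=\Lambda_1\Phi+O(e^{-2\sqrt{\lambda}z})$, argue $\Lambda_1>0$ from the two-sided exponential pinching of $\Phi$, deduce convexity, and extract a good point via a chord bound and averaging. This is valid and has the virtue of using only the $C^0$--$C^2$ bounds of Theorem~\ref{ex} together with the structure of the equation, rather than the pointwise extremal gradient bound from Theorem~\ref{thm1}. But it effectively rebuilds, in integrated form, an estimate that Section~\ref{barrier} already supplies pointwise and for free; the paper's choice of $p_z$ as the maximizer of $u(\cdot,z)$ short-circuits the entire ODE detour.
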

\begin{proof}
	For an arbitrary $ z_0\geq\bar{z} $, denote $ p_0\in\Sigma $ to be the point where $ u $ achieves its maximum when $ z=z_0 $, and   $ s_0=\max_{p\in\Sigma}(u(p,z_0))=u(p_0,z_0) $. Then on $ \Sigma_{s_0} $  under Foliation B, $ f $ achieves its maximum on $ \Sigma_{s_0} $ at $ p_0\in\Sigma $. By the gradient estimate of $ f $ at maximum point on $ \Sigma_{s_0} $ in Theorem \ref{thm1}, we have 
	$$|\partial_s f(p_0,s_0)|\leq \frac{1}{\sqrt{\lambda}s_0}+a$$
	Translate it back to the gradient estimate for $ u $ at $ (p_0,z_0) $ by Implicit Function Theorem, we have:
	\begin{align*}
	|\partial_z u(p_0,z_0)|=&\frac{1}{|\partial_s f(p_0,s_0)|}\geq\frac{\sqrt{\lambda}s_0}{1+a \sqrt{\lambda}s_0}\\
	=&\frac{\sqrt{\lambda}u(p_0,z_0)}{1+a \sqrt{\lambda}u(p_0,z_0)}
	\geq \frac{\sqrt{\lambda}C_2e^{-\sqrt{\lambda} z_0}}{1+a C_0 e^{-\sqrt{\lambda} z_0}}\geq\frac{\sqrt{\lambda}C_2}{2}e^{-\sqrt{\lambda} z_0}
	\end{align*}

	The last inequality holds when $ z_0 $ is large enough.	
\end{proof}

We denote $\Sigma^u_{z}:=u(\Sigma,z )$. Suppose $ V $ is the open domain enclosed by $ \Sigma^u_{\bar{ z}}  $ and $ \Sigma $. Then  $\Sigma^u_{z}$ will stay in the tubular area $ V\times\R $ if $  z >\bar{ z}$.

Denote $ N=\text{graph} f =\text{graph} u $. The Harnack inequality in \cite[Proposition 2]{yau} implies that there are constants $\rho, c_2$,  such that  for any point $ X_0 $ on $ N $, $$\sup_{N\cap B^4_\rho(X_0)}<\ov{e}_4,\nu>\leq c_2 \inf_{N\cap B^4_\rho(X_0)}<\ov{e}_4,\nu>$$

where $$<\ov{e}_4,\nu>=\frac{1}{\sqrt{1+|\nabla f|^2}}=\frac{-\partial_z u}{\sqrt{1+|^{C_{\bar z}}\nabla u|^2}}$$

Now  $ V $ can be covered by finite balls $ B^4_\rho $, and we suppose constant $ c_4 $ is the number of balls that is enough  to cover $ V $. Thus each $\Sigma^u_{z}$  can be covered by $ c_4 $ balls $ B^4_\rho $, for $\forall z >\bar{ z}$. This is because each of them stays in a horizontal cut of the tube  $ V\times\R $, which is exactly $ V $.

Therefore, together with the Harnack inequality,  there is a constant $c_5$, such that 
$$\sup_{\Sigma^u_{z }}<\ov{e}_4,\nu>\leq c_5 \inf_{\Sigma^u_{z }  }<\ov{e}_4,\nu>,\mbox{ } \forall z>\bar{ z}.$$
Now for any $p\in\Sigma$, we have 
{\allowdisplaybreaks\begin{align*}
	|\partial_z u(p,z)|\geq& \frac{|\partial_z u(p,z)|}{\sqrt{1+|^{C_{\bar z}}\nabla u(p,z)|^2}}=\frac{-\partial_z u(p,z)}{\sqrt{1+|^{C_{\bar z}}\nabla u(p,z)|^2}}\\
	=&<\ov{e}_4,\nu>\big|_{(p,z)}\geq \inf_{\Sigma^u_{z}}<\ov{e}_4,\nu>\\
	\geq& c_5^{-1}\sup_{\Sigma^u_{z}}<\ov{e}_4,\nu>  \geq c_5^{-1}<\ov{e}_4,\nu>\big|_{(p_0,z)}\\
	=& \frac{-\partial_z u(p_z,z)}{c_5\sqrt{1+|^{C_{\bar z}}\nabla u(p_z,z)|^2}}\geq \frac{ C e^{-\sqrt{\lambda}z}  }{c_5\sqrt{1+C_1^2 e^{-2\sqrt{\lambda}z}}}\\
	\geq&\frac{C}{2c_5 C_1}e^{-\sqrt{\lambda }z}
	\end{align*}}
The last line holds if $ z $ is big enough.\par 
Set $ C_3= \frac{C}{2c_5 C_1} $ then we prove the following:

\begin{proposition}
	Assume the condition and notation of Theorem \ref{ex}, then there exist  constants $ C_3>0 $, $z_2\geq \bar{z}$,  such that $ \forall z\geq z_2, p\in\Sigma, $
	$$|\partial_z u(p,z)|\geq C_3 e^{-\sqrt{\lambda}z}$$
	
\end{proposition}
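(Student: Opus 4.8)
The plan is to bootstrap the one-point lower bound from the previous proposition to a pointwise-everywhere lower bound, using the Harnack-type inequality of \cite[Proposition 2]{yau} propagated around each level set. Recall that the previous proposition already produces, for every $z>z_1$, a single point $p_z\in\Sigma$ with $\partial_z u(p_z,z)\le -C\exp(-\sqrt{\lambda}z)$; this came from applying the gradient estimate of Theorem \ref{thm1} at the maximum point of $f$ on the slice $\Sigma_{s_0}$ (equivalently the maximum point of $u$ on $\Sigma^u_z$) and translating it through the implicit function theorem identity $\partial_s f|_{s=u(p,z)} = 1/\partial_z u(p,z)$. The task is to replace ``at one point $p_z$'' by ``at every $p\in\Sigma$''.

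The mechanism is the Harnack inequality for the quantity $\langle\ov{e}_4,\nu\rangle = \frac{1}{\sqrt{1+|\nabla f|^2}} = \frac{-\partial_z u}{\sqrt{1+|{}^{C_{\bar z}}\nabla u|^2}}$ on $N=\f$: by \cite[Proposition 2]{yau} there are constants $\rho,c_2$, depending only on the initial data and on the $C^2$ bounds for $u$ (which are uniform here, by Theorem \ref{ex}), so that $\sup_{N\cap B^4_\rho(X_0)}\langle\ov{e}_4,\nu\rangle \le c_2 \inf_{N\cap B^4_\rho(X_0)}\langle\ov{e}_4,\nu\rangle$ for every $X_0\in N$. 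The key geometric point is that, since $u>0$ and $u\to 0$, each level set $\Sigma^u_z$ lies inside the fixed solid tube $V\times\R$, where $V$ is the relatively compact region bounded by $\Sigma$ and $\Sigma^u_{\bar z}$; hence a fixed number $c_4$ of balls of radius $\rho$ covers $\Sigma^u_z$ for every $z>\bar z$. Chaining the local Harnack estimate along such a covering gives a global Harnack inequality $\sup_{\Sigma^u_z}\langle\ov{e}_4,\nu\rangle \le c_5 \inf_{\Sigma^u_z}\langle\ov{e}_4,\nu\rangle$ with $c_5$ independent of $z$.

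To conclude, I would evaluate $\langle\ov{e}_4,\nu\rangle$ at $p_z$: using the weaker proposition for the numerator and the upper bound $|{}^{C_{\bar z}}\nabla u|\le C_1 e^{-\sqrt{\lambda}z}$ from Theorem \ref{ex} for the denominator, one gets $\langle\ov{e}_4,\nu\rangle|_{(p_z,z)}\ge c\, e^{-\sqrt{\lambda}z}$ for some uniform $c>0$ once $z$ is large. The global Harnack inequality then transports this to $\langle\ov{e}_4,\nu\rangle|_{(p,z)}\ge c_5^{-1}c\, e^{-\sqrt{\lambda}z}$ for every $p\in\Sigma$; and since $|\partial_z u(p,z)| \ge \langle\ov{e}_4,\nu\rangle|_{(p,z)}$ (the denominator $\sqrt{1+|{}^{C_{\bar z}}\nabla u|^2}\ge 1$), this yields $|\partial_z u(p,z)|\ge C_3 e^{-\sqrt{\lambda}z}$ with $C_3=\frac{C}{2c_5 C_1}$, valid for all $z\ge z_2$ for a suitable $z_2\ge\bar z$.

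The main obstacle is the uniformity of $c_5$ in $z$, i.e. that both ingredients entering the chained Harnack estimate are $z$-independent: the Harnack radius $\rho$ of \cite[Proposition 2]{yau} is uniform because the $C^{2,\alpha}$ norms of the local defining functions of $N$ are uniformly bounded (a consequence of the exponential decay in Theorem \ref{ex}), and the number of balls needed to cover $\Sigma^u_z$ stays bounded because every $\Sigma^u_z$ is confined to the one fixed tube $V\times\R$. Once these two uniformity facts are in hand, the rest is routine bookkeeping with the implicit function theorem identities already recorded above.
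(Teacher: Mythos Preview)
Your proposal is correct and follows essentially the same route as the paper: propagate the one-point lower bound from the preceding proposition over each level set $\Sigma^u_z$ via the Harnack inequality of \cite[Proposition~2]{yau} for $\langle\ov e_4,\nu\rangle=\dfrac{-\partial_z u}{\sqrt{1+|^{C_{\bar z}}\nabla u|^2}}$, using that every $\Sigma^u_z$ sits in a fixed horizontal slice of the tube $V\times\R$ and hence admits a $z$-independent covering number, then finish with $|\partial_z u|\ge\langle\ov e_4,\nu\rangle$ and the upper bound on $|^{C_{\bar z}}\nabla u|$ from Theorem~\ref{ex}. You even arrive at the same constant $C_3=\dfrac{C}{2c_5 C_1}$ as the paper.
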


Now, by Implicit Function Theorem we have:
\begin{equation*}
|\partial_s f(p,s)|=  \frac{1}{|\partial_z u(p,z)|}
\leq  \frac{1}{C_3 e^{-\sqrt{\lambda} z}}\leq \frac{C_1}{C_3  s}
\end{equation*}
Moreover,  
\begin{equation*}
|\partial_i f(p,s)|=|\partial_i u(p,z)||\partial_s f(p,s)|
\leq  \frac{C_1^2 e^{-\sqrt{\lambda} z}}{C_3 s}\leq \frac{C_1^2}{C_2 C_3}
\end{equation*}
Because $ \partial_i, i=1,2 $ are the derivatives taken in the direction  of the orthonormal frame $ e_i, i=1,2 $  on $ \Sigma_s $, we have:
\begin{equation*}
|\nabla^{\Sigma_s} f(p,s)|\leq \frac{\sqrt{2} C_1^2}{C_2 C_3}
\end{equation*}
holds for $ \forall p\in\Sigma$ , $\forall s\in(0, s_0] $. Here $ \nabla^{\Sigma_s}$ denotes the covariant derivative along $ \Sigma_s. $

Put together all the above conclusions, we prove the following:
\begin{proposition}
	
	Under the same assumptions as  Theorem \ref{thm1}, suppose $f$ is a blowup solution of Eq.\eqref{jang} in an open neighborhood $ U $ near $ \Sigma $, which is a compact boundary component and a strictly stable MOTS with principal eigenvalue $ \lambda>0 $. Denote $ N=\f $.
	Then,   under coordinate system $ \bar{\Psi} $, \par
	(1) there exist positive constants $\bar z $  and $C_1 $, $C_2 $, $C_3 $, such that $N \cap( U
	\times [\bar z,\infty))$ can be written as the graph of a function
	$u$ over $C_{\bar z}:= \Sigma \times [\bar z,\infty)$,  and
	\begin{gather} 
	|u(p,z)| + |^{C_{\bar z}}\nabla u(p,z)| + |^{C_{\bar z}}\nabla^2 u(p,z)| \leq C_1\exp(-\sqrt{\lambda} z)\\
	|u(p,z)|\geq	C_2\exp(-\sqrt{\lambda} z) \\
	|^{C_{\bar z}}\nabla u(p,z)|\geq	C_3\exp(-\sqrt{\lambda} z) 
	\end{gather} 
	where $^{C_{\bar z}}\nabla$ is the covariant derivative w.r.t. the induced metric on $C_{\bar{z}}$.\par 
	(2) Denote $ \Sigma_s=\Psi_B(\Sigma,s) $. Then there exists constant $ s_0 $, such that the following gradient estimates for $ f $:
	\begin{align}\label{cy barrier3}
	\frac{C_2}{C_1s }\leq	|\partial_s  f (p,s)  |\leq\frac{C_1}{C_3s }\\
	|\nabla^{\Sigma_s} f(p,s)|\leq \frac{\sqrt{2} C_1^2}{C_2 C_3}
	\end{align}
	hold for $ \forall p\in\Sigma$ , $\forall s\in(0, s_0] $. Here $ \nabla^{\Sigma_s}$ denotes the covariant derivative along $ \Sigma_s. $
\end{proposition}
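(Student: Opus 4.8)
The plan is to assemble the three facts established earlier in this section and feed them through the implicit function theorem. First, Theorem~\ref{ex} gives, in the coordinate system $\bar\Psi$ adapted to Foliation B, that $N\cap(U\times[\bar z,\infty))$ is the graph $s=u(p,z)$ of a function $u$ over $C_{\bar z}=\Sigma\times[\bar z,\infty)$ with $|u|+|{}^{C_{\bar z}}\nabla u|+|{}^{C_{\bar z}}\nabla^2 u|\le C_1 e^{-\sqrt{\lambda} z}$ and $|u|\ge C_2 e^{-\sqrt{\lambda} z}$; this is exactly the first two displayed inequalities of part (1). The third one follows from $|{}^{C_{\bar z}}\nabla u|\ge|\partial_z u|\ge C_3 e^{-\sqrt{\lambda} z}$, which is the content of the proposition proved just above: at each height $z$ one applies the gradient estimate of Theorem~\ref{thm1} at the maximum point of $f$ on the slice $\Sigma_s$ with $s=\max_{\Sigma}u(\cdot,z)$ to get $|\partial_z u|\ge c\, e^{-\sqrt{\lambda} z}$ at a single point $p_z\in\Sigma$, and then propagates this to all of $\Sigma$ via the Harnack inequality $\sup_{N\cap B^4_\rho}\langle\ov{e}_4,\nu\rangle\le c_2\inf_{N\cap B^4_\rho}\langle\ov{e}_4,\nu\rangle$ of \cite[Proposition 2]{yau}, using that each $\Sigma^u_z$ stays inside a fixed tube $V\times\R$ and hence can be covered by a fixed finite number of balls $B^4_\rho$.

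The second step is to translate these estimates from $u$ back to $f$ using the implicit function theorem applied to $F(p,z,s)=f(p,s)-z$: on the overlap $s=u(p,z)$, $z=f(p,s)$, one has $\partial_z u(p,z)=1/\partial_s f(p,s)$ and $\partial_i u(p,z)=-\partial_i f(p,s)/\partial_s f(p,s)$. Combining with the sandwich $C_2 e^{-\sqrt{\lambda} z}\le s=u(p,z)\le C_1 e^{-\sqrt{\lambda} z}$, the bounds $|\partial_z u|\le|{}^{C_{\bar z}}\nabla u|\le C_1 e^{-\sqrt{\lambda} z}$ and $|\partial_z u|\ge C_3 e^{-\sqrt{\lambda} z}$ give $C_2/(C_1 s)\le|\partial_s f|\le C_1/(C_3 s)$; then $|\partial_i f|=|\partial_i u|\,|\partial_s f|\le C_1 e^{-\sqrt{\lambda} z}\cdot C_1/(C_3 s)\le C_1^2/(C_2 C_3)$, and since $\partial_1,\partial_2$ are taken along an orthonormal frame on $\Sigma_s$ this yields $|\nabla^{\Sigma_s}f|\le\sqrt{2}\,C_1^2/(C_2 C_3)$. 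Since $s$ is comparable with the geodesic distance $\tau$ to $\Sigma$, all of this is valid for $s\in(0,s_0]$, with $s_0$ fixed by the radius of validity of Foliation B and of the $C^2$ estimates.

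The \emph{main obstacle} is the lower bound $|\partial_z u|\ge C_3 e^{-\sqrt{\lambda} z}$, equivalently the upper gradient bound for $\partial_s f$: the barriers of Theorem~\ref{thm1} only control $\partial_s f$ at the two extreme points of each slice $\Sigma_s$, so upgrading this to a bound holding at every point of $\Sigma$ genuinely requires the Schoen--Yau Harnack inequality together with the covering of the tube $V\times\R$. Everything else is bookkeeping with the implicit function theorem and with tracking the direction of each inequality. As noted in the remark after Theorem~\ref{thm2}, the constants $C_1,C_2,C_3,\bar z$ depend on $\sup_{\Sigma_{s_0}}f$ and $\inf_{\Sigma_{s_0}}f$ because Jang's equation \eqref{jang} is invariant under vertical translation.
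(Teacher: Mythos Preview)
Your proposal is correct and follows the paper's argument essentially step for step: Theorem~\ref{ex} supplies the first two inequalities of part~(1); the lower bound $|\partial_z u|\ge C_3 e^{-\sqrt{\lambda} z}$ is obtained exactly as you describe, by combining the pointwise gradient bound from Theorem~\ref{thm1} at the maximum of $u(\cdot,z)$ with the Schoen--Yau Harnack inequality and a finite ball cover of the tube; and part~(2) is then read off via the implicit function relations $\partial_s f=1/\partial_z u$, $\partial_i f=-\partial_i u\,\partial_s f$ together with the sandwich $C_2 e^{-\sqrt{\lambda} z}\le s\le C_1 e^{-\sqrt{\lambda} z}$. You have also correctly identified the Harnack step as the only nontrivial ingredient.
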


Thus Theorem \ref{thm2} gets proved.

\section{Dependence of the Coefficients}\label{apriori}

In this section we   prove Theorem \ref{thm3}. We prove that for the solution $ f_0 $ constructed by J. Metzger, c.f. \cite{metzger}, we can have the same estimates as in Theorem \ref{thm2} hold for $ f_0 $, while with constants $ \bar{ z}, C_1, C_2, C_3 $  depending only on the geometry of the initial data set.\par

To remove the dependence of $ \bar{ z}, C_1, C_2, C_3 $ on the vertical translation on the solution $ f $, we need to normalize $ f $ such that there is no such kind of freedom. For example, we can normalize $ f $ by some outer boundary condition or require that the solution decay at infinity. J. Metzger constructed a blowup solution $ f_0  $ in \cite{metzger} with these constraints, thus in this section we will study this specific example  $ f_0  $  to see if our $ C_1,C_2,C_3,\bar{ z} $ can be determined by the geometry of initial data. Therefore, although Theorem \ref{thm1} and   Theorem \ref{thm2} work for more general initial data sets, we need to further assume that $ \M $ is 3-dim asymptotically flat manifold with one end, satisfies the dominant energy condition. Also assume  $ \Sigma $ is the only boundary component, and is a compact strictly stable outermost MOTS. \par
Then by \cite[Lemma 9.7]{stable}, $ \Sigma $ is topologically 2-sphere.
We also assume  there is no   MITS in $ \M $. The existence of asymptotically decaying solution of Jang's equation over $ \M $, which only blows up at $ \Sigma $ then follows \cite[Theorem  3.1, Remark 3.3]{metzger} . The detailed construction procedure can be found in \cite[Section 3]{metzger}. We only give a brief introduction about the procedure here for consistence. \par 

First of all, it can be  constructed  from $ (\M,g,k) $ a new data set  $(\tilde M, \tilde g, \tilde k)$ (c.f. \cite{metzger}, \cite{Andersson}), with following properties:
\begin{enumerate}
	\item $\M\subset \tilde \M$ with $\tilde g|_\M = g$, $\tilde k|_\M = k$,
	
	\item $\theta^+[\partial \tilde \M] <0$,
	\item $H[\partial\tilde \M] > 0$ where $H$ is the mean curvature of
	$\partial\tilde{\M}$ with respect to the normal pointing out of $ \tilde{\M} $,
	\item The region $\tilde \M \setminus \M$ is foliated by surfaces
	$\Sigma_s$ with $\theta^+(\Sigma_s) <0$.
\end{enumerate}

Then the following Dirichlet boundary value problem can be solved:
\begin{equation}
\label{eq3}
\begin{cases}
\J[f_t] = t f_t & \text{in}\ \tilde M \\
f_t      = \frac{\delta}{2t} & \text{on}\ \partial^-\tilde M
\\
f_t     \to 0 & \mbox{when } x\to \infty
\end{cases}    
\end{equation}
where $\delta$ is a lower bound for the mean curvature $H$ on $\partial\tilde{\M}$. 
The solution $f_t$ of Eq.\eqref{eq3} satisfies an estimate
of the form
\begin{equation}
\label{eq:4}
\sup_{\tilde \M} |f_t| + \sup_{\tilde \M} |\nabla f_t| \leq \frac{C}{t},
\end{equation}
where $C$ is a constant depending only on the data $(\tilde M,\tilde
g, \tilde k)$ but not on $t$.

This gradient estimate and the fact that graphs $N_t=\f_t$ have uniformly bounded curvature in $\tilde \M\times \R$ away from the boundary implies that it allows to extract a sequence $t_i\to 0$ such that the $N_{t_i}$ converge smoothly to a
manifold $N_0$ (c.f. \cite[Section 4]{yau}, \cite[Proposition 3.8]{Andersson}), which can be proved to be the graph of a function $ f_0 $ on $ \M $ which satisfies
$\J[f_0]=0$, with the desired asymptotics. \par 

From the local parametric estimate \cite[Proposition 2]{yau}, the a priori bounds at asymptotically flat ends \cite[Proposition 3]{yau}, and the gradient estimate \eqref{eq:4}, it is straightforward to see that for any $ \M'\subset\M $ such that $ \partial \M' $ does not intersect with $ \Sigma $,
there is a uniform $ C^0 $ bound for $ f_t $ on $ \M' $ $$  |\sup_{\M'} f_t|<c_1.$$ where $ c_1 $ only depends  on the geometry of the initial data. This also bounds the limit function $ f_0 $. Thus by setting $ \M' $ to be the portion of $ \M $ outside $ \Sigma_{s_0} $ in Theorem \ref{thm1}, we have  $-c_1< \inf_{q\in\Sigma}f_0(q,s_0)\leq\sup_{q\in\Sigma}f_0(q,s_0)< c_1 $ in  Eq.\eqref{thm barrier}. Thus we have the following:
\begin{proposition}\label{thm3 half}
	
	Besides the conditions and notations in Theorem \ref{thm2},  we further assume that $ \M $ is a 3-dim asymptotically flat manifold with one end, and satisfies the dominant energy condition. Also assume that $ \Sigma $ is the only boundary component, and is a compact outermost MOTS. We also assume that there is no MITS in $ \M $. Then a function $ f_0 $ on $ \M $ can be constructed as in \cite[Theorem 3.1]{metzger}, such that $ \J[f_0]=0 $, $ f_0(x)\to 0 $ when $ |x|\to \infty $, and it only blows up at $ \Sigma $. Denote $ N_0=\f_0 $. Then, 
	
	(1) there exist positive constants $\bar z $  and $C_1 $, $C_2 $, $C_3$, which only depend  on the initial data  such that $N_0 \cap( U
	\times [\bar z,\infty))$ can be written as the graph of a function
	$u_0$ over $C_{\bar z}:= \Sigma \times [\bar z,\infty)$ under coordinate system $ \bar{\Psi} $,  and
	\begin{gather} 
	|u_0(p,z)| + |^{C_{\bar z}}\nabla u_0(p,z)| + |^{C_{\bar z}}\nabla^2 u_0(p,z)| \leq C_1\exp(-\sqrt{\lambda} z).\\
	|u_0(p,z)|\geq	C_2\exp(-\sqrt{\lambda} z) \\
	|^{C_{\bar z}}\nabla u_0(p,z)|\geq	C_3\exp(-\sqrt{\lambda} z) 
	\end{gather}
	where $^{C_{\bar z}}\nabla$ is the covariant derivative w.r.t. the induced metric on $C_{\bar{z}}$.\\ 
	
	(2) Denote $ \Sigma_s=\Psi_B(\Sigma,s) $. Then
	\begin{align}\label{cy barrier4}
	\frac{C_2}{C_1s } \leq	|\partial_s  f_0 (p,s)  |\leq\frac{C_1}{C_3s }\\
	|\nabla^{\Sigma_s} f_0(p,s)|\leq \frac{\sqrt{2} C_1^2}{C_2 C_3}
	\end{align}
	hold for $ \forall p\in\Sigma$ , $\forall s\in(0, s_0] $. Here $ \nabla^{\Sigma_s}$ denotes the covariant derivative along $ \Sigma_s. $
\end{proposition}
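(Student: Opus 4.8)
The plan is to show that, for the Metzger solution $f_0$, the only way the coefficients $\bar z, C_1, C_2, C_3$ produced by Theorems \ref{thm1} and \ref{thm2} could fail to be data-determined is through the quantities $\sup_{\Sigma_{s_0}}f_0$ and $\inf_{\Sigma_{s_0}}f_0$ entering the barrier inequality \eqref{thm barrier}, and that these two quantities are in fact bounded by a constant depending only on $(\M,g,k)$. Since the barrier constants $a,s_0$ of Theorem \ref{thm1} already depend only on the local geometry near $\Sigma$, and since every constant appearing in the $C^2$ bootstrap of Section \ref{gradient} and in the Harnack argument for the lower gradient bound depends only on those barrier data together with the parametric constants $\rho,c_2,\dots$ of \cite[Proposition 2]{yau}, the whole chain of estimates then becomes data-determined, which is exactly the content of Proposition \ref{thm3 half}.

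First I would recall the construction of $f_0$: it is a smooth subsequential limit of solutions $f_t$ of the regularized Dirichlet problem \eqref{eq3} on the enlarged data set $(\tilde\M,\tilde g,\tilde k)$, satisfying the a priori bound \eqref{eq:4}, namely $\sup_{\tilde\M}|f_t|+\sup_{\tilde\M}|\nabla f_t|\le C/t$ with $C$ depending only on the data. The crucial observation is that although $f_t$ and its gradient blow up as $t\to 0$, the right-hand side $F=tf_t$ of Jang's equation has $\sup|F|,\sup|\nabla F|,\sup|\nabla^2 F|$ bounded uniformly in $t$, so \cite[Proposition 2]{yau} gives $t$-uniform parametric control of the graphs of $f_t$ and \cite[Proposition 3]{yau} gives $t$-uniform control at the asymptotically flat end. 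The large Dirichlet datum $\tfrac{\delta}{2t}$ only forces blowup inside the region $\tilde\M\setminus\M$, which is foliated by surfaces with $\theta^+<0$; once inside $\M$ and away from $\Sigma$ the solution stays controlled. Combining these ingredients one obtains: for any $\M'\subset\M$ whose boundary has positive distance to $\Sigma$ there is $c_1=c_1(\text{data})$ with $\sup_{\M'}|f_t|\le c_1$, hence $\sup_{\M'}|f_0|\le c_1$. Taking $\M'$ to be the part of $\M$ outside $\Sigma_{s_0}$ yields $-c_1<\inf_{\Sigma_{s_0}}f_0\le\sup_{\Sigma_{s_0}}f_0<c_1$.

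With this in hand I would substitute these bounds into \eqref{thm barrier} and \eqref{main}, obtaining $C^0$ upper and lower barriers for $f_0$ near $\Sigma$ whose constants depend only on the data; translating them into barriers for the cylindrical graph function $u_0$ as in Section \ref{gradient} produces data-determined $\bar z$ and $C_2$ together with a two-sided exponential control of $|u_0(p,z)|$. Then I would re-run Metzger's $C^2$ bootstrap from Section \ref{gradient}: $u_0$ satisfies a linear uniformly elliptic equation $a^{ij}\partial_i\partial_j u_0+\langle b,\nabla u_0\rangle-\theta^+[\Sigma_{u_0(p,z)}]=F$ with coefficients bounded in terms of the local geometry, and Schauder interior estimates bound its $C^2$ norm by its $C^0$ norm and $\|\theta^+[\Sigma_{u_0(p,z)}]\|$, both controlled by $(\text{data})\exp(-\sqrt\lambda z)$ once the $C^0$ barriers are data-determined; this gives $C_1$. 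The lower bound $|{}^{C_{\bar z}}\nabla u_0|\ge C_3\exp(-\sqrt\lambda z)$ follows exactly as in the Harnack argument of Section \ref{gradient}: the constant in the weak lower bound comes from the gradient estimate of Theorem \ref{thm1} at the maximum point of $u_0(\cdot,z)$ (data-determined now), and the Harnack constant comes from covering the fixed tubular region $V\times\R$ by finitely many balls $B^4_\rho$ and applying \cite[Proposition 2]{yau}, again data-determined. Finally, part (2) is obtained from part (1) verbatim by the Implicit Function Theorem computation relating $\partial_s f_0$ to $1/\partial_z u_0$ and $\nabla^{\Sigma_s}f_0$ to $\partial_i u_0\cdot\partial_s f_0$.

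I expect the main obstacle to be the middle step: making rigorous the claim that the $C/t$ blowup of $f_t$ is \emph{localized} near $\Sigma$, so that the interior estimates of \cite[Proposition 2, Proposition 3]{yau} actually deliver a $t$-\emph{uniform} $C^0$ bound on $\M\setminus U_{s_0}$. This requires combining the smooth subsequential convergence of the graphs of $f_{t_i}$ to $N_0$ on compact subsets away from $\Sigma\times\R$ with the fact — coming from the barrier arguments of Section \ref{barrier}, which apply to every blowup solution and in particular to $f_0$ — that $f_0$ is finite and controlled right up to $\Sigma$, and with the structure of the enlargement $(\tilde\M,\tilde g,\tilde k)$ that confines the formation of the cylinder to $\tilde\M\setminus\M$. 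Everything after that bound is bookkeeping, tracking constants through Sections \ref{barrier} and \ref{gradient}.
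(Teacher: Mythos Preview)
Your proposal is correct and follows essentially the same approach as the paper: the paper's proof likewise reduces everything to showing that $\sup_{\Sigma_{s_0}}f_0$ and $\inf_{\Sigma_{s_0}}f_0$ are bounded by a data-dependent constant, obtains this bound from the local parametric estimate \cite[Proposition 2]{yau}, the asymptotic bound \cite[Proposition 3]{yau}, and the a priori estimate \eqref{eq:4} applied to the approximants $f_t$ on $\M'=\M\setminus U_{s_0}$, and then simply declares the proposition to follow by feeding this into \eqref{thm barrier}. Your write-up is in fact more explicit than the paper's in tracing how each of $\bar z, C_1, C_2, C_3$ becomes data-determined through the Schauder and Harnack steps of Section \ref{gradient}; the paper compresses all of that into the single sentence ``Thus we have the following.''
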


Thus Theorem \ref{thm3} is proved.

\section{Application to Penrose Inequality}\label{penrose}

In previous sections we have obtained a sharp estimate for the blowup rate and gradient  of the solution $ f_0 $ of Jang's equation, which is contructed by J. Metzger in \cite{metzger}. In this section we apply this control to the slice $ \f_0 $ get a   Penrose-like inequality for general initial data sets.

\subsection{Basic Settings}
Assume the same condition as in Theorem \ref{thm3}. We will apply our estimates onto the solution  $ f_0 $ constructed in \cite[Theorem 3.1]{metzger} . Also in this section we still use   Foliation B.  Recall that in this foliation $ \Psi_B: \Sigma\times[0,\bar{ s}]\to \M $ the metric $ g $ of $ \M $ near a neighborhood of $ \Sigma $ can be written as 
\begin{equation}
g=\beta^2 ds^2+g_{\Sigma_s}
\end{equation}
and $ \theta^+  $ satisfies:
\begin{equation}
\theta^+[\Sigma_s]=\lambda \beta s +O(s^2)
\end{equation}

Now by  Theorem \ref{thm3} we know there are constants  $ C_1, C_2,C_3  $ which only depend on the initial data, such that: 
\begin{align} 
\frac{C_2}{C_1 s}\leq|\partial_s  f_0   |\leq\frac{C_1}{C_3 s}\\
|\nabla^{\Sigma_s} f_0(p,s)|\leq \frac{\sqrt{2} C_1^2}{C_2 C_3}
\end{align}
Here $ \nabla^{\Sigma_s}$ denotes the covariant derivative along $ \Sigma_s. $\par 

In this Section we will use $ \ov{\M} $ to represent  $ N_0=\f_0 $. 
Denote the induced metric on Jang's slice to be $ \ov{g} $. Denote $ \nabla,\ov{\nabla} $ to be the Levi-Civita connection of $ g,\ov{g} $, respectively. Then $ \ov{g}=df_0^2+g $, and the second fundamental form on $ \ov{\M} $ embedded into $ \M\times\R $ will be $ h=\dfrac{\nabla^2 f_0}{\sqrt{1+|\nabla f_0|^2}} $. Denote  $ \ov{\Sigma}_s $ to be the lift of $ \Sigma_s $   to $ \ov{\M} $. Denote the portion of $ \M $, $ \ov{\M} $ outside $ \Sigma_s $, $ \ov{\Sigma}_s $ to be $ \M_s $, $ \ov{\M}_s $, respectively.  We choose orthonormal frames $ \{e_1,e_2\} $ and $ \{ \ov{e}_1,\ov{e}_2   \} $ for $ T\Sigma_s $ and $ T\ov{\Sigma}_s $, respectively. Let $ e_3  $ be the normal of $ \Sigma_s $ pointing into $ \M_s $ that is tangent to $ \M $. We also choose  $ \{ \ov{e}_3,\ov{e}_4   \} $ for the normal bundle of $ \Sigma_s $ in $ \M\times\R $, such that $ \ov{e}_3 $ is tangent to the graph $ \ov{\M} $ and $ \ov{e}_4 $ is a downward unit normal vector of $ \ov{\M} $ in $ \M\times\R $. We denote the induced Levi-Civita connection on $  \Sigma _s $ to be $  \slashed{\nabla} $ and $ \phi_s=f_0|_{  \Sigma_s} $  to be $ f $'s restriction on $   \Sigma_s $. Denote $ H_{\Sigma_s}  $, $ \ov{H}_{\ov{\Sigma}_s}  $ as the mean curvature of $\Sigma_s $, $\ov{\Sigma}_s $ with respect to $ e_3 $, $ \ov{e}_3 $, respectively.  Recall that the scalar curvature on the Jang's slice can be written as:
\begin{equation}
\ov{R}=16 \pi(\mu-J(\omega))+|h-k|^2_{\ov{g}}+2|q|^2_{\ov{g}}-2div_{\ov{g}}(q)
\end{equation}
where 
\begin{align*}
\omega_i=&\frac{\nabla_i f_0}{\sqrt{1+|\nabla f_0|^2}}\\
q_i=&\frac{f_0^j}{\sqrt{1+|\nabla f_0|^2}} (h_{ij}-k_{ij})
\end{align*}

We will calculate  $ \ov{H}_{\ov{\Sigma}_s}-q(\ov{e}_3) $ in terms of the geometry quantities on $ \M $.
\subsection{Calculation of $ \ov{H}_{\ov{\Sigma}_s}-q(\ov{e}_3) $}

We now calculate the value of $ \ov{H}_{\ov{\Sigma}_s}-q(\ov{e}_3) $ in terms of the geometry quantities on $ \M $.

From \cite[page 10]{black}:
\begin{equation}
\ov{H}_{\ov{\Sigma}_s}-q(\ov{e}_3)=(<e_3,\ov{e}_3>+\frac{<e_3,\ov{e}_4>^2}{<e_3,\ov{e}_3>})H_{\Sigma_s} -\frac{<e_3,\ov{e}_4>}{<e_3,\ov{e}_3>} Tr_{\Sigma_s}k
\end{equation}

Following the calculation of  \cite[Section 4, Eq.(4.4)]{isometric}, we get:
\begin{equation}
<e_3,\ov{e}_3>=\frac{\sqrt{1+| \slashed{\nabla} \phi_s|^2}}{\sqrt{1+| \nabla f_0|^2}} \mbox{ }\mbox{and }\mbox{ }<e_3,\ov{e}_4>=\frac{ \nabla _{e_3}f_0}{\sqrt{1+| \nabla f_0|^2}}
\end{equation}
Then the above equations imply:
\begin{equation}\label{sy}
\ov{H}_{\ov{\Sigma}_s}-q(\ov{e}_3)=\frac{\sqrt{1+| \nabla f_0|^2}}{\sqrt{1+| \slashed{\nabla} \phi_s|^2}}\theta^+[\Sigma_s] +\frac{\sqrt{1+| \slashed{\nabla} \phi_s|^2}}{| \nabla _{e_3}f_0|+\sqrt{1+| \nabla f_0|^2}} Tr_{\Sigma_s}k
\end{equation}
In the above calculation we use the  fact that $  \nabla _{e_3}f_0<0 $ and $ | \nabla f_0|^2=|\nabla _{e_3}f_0|^2+| \slashed{\nabla} \phi_s|^2 $.

Therefore,  we can have an upper bound for $ \ov{H}_{\ov{\Sigma}_s}-q(\ov{e}_3) $ as follows:

\begin{align}
\ov{H}_{\ov{\Sigma}_s}-q(\ov{e}_3)&=\frac{\sqrt{1+| \nabla f_0|^2}}{\sqrt{1+| \slashed{\nabla} \phi_s|^2}}\theta^+[\Sigma_s] +\frac{\sqrt{1+| \slashed{\nabla} \phi_s|^2}}{| \nabla _{e_3}f_0|+\sqrt{1+| \nabla f_0|^2}} Tr_{\Sigma_s}k\\\nonumber
&\leq \sqrt{1+| \nabla f_0|^2}(\lambda \beta s+O(s^2))+O(s)\\\nonumber
&=\sqrt{\beta^{-2}|\partial_s f_0|^2+O(1)}(\lambda \beta s+O(s^2))+O(s)\\\nonumber
&\leq\sqrt{ \frac{C_1^2}{\beta^2 C_3^2s^2}+O(1)}\lambda \beta s+O(s)\\\nonumber
&=\lambda \frac{C_1}{C_3} +O(s)\label{sy22}
\end{align}

\begin{remark}
	The standard gradient estimate for the mean curvature type equation actually provides an estimate for $ |\nabla f_0| $.  From \cite[Chapter 16]{Gilbarg}, there exist positive constants $ C_4, C_5 $ only depends on $ \sup |h| $ and $ \sup |\nabla h| $, where $ h $  is the mean curvature of $ N_0=\f_0 $ embedded into $ \M\times\R $, such that:
	$$ |\nabla f_0|\leq C_4 \exp({C_5 f_0/\tau}) $$
	Then because $ f=-\frac{1}{\sqrt{\lambda}}\log\tau +O(1)$ in a neighborhood of $ \Sigma $,
	there exists positive constant  $ C_6 $, s.t.
	$$ |\nabla f_0|\leq C_4 \exp({C_6 \frac{\log\frac{1}{\tau}}{\tau}}) $$
	This is an upper bound for the gradient estimate of $ f_0 $. But the order is not sharp enough for proving a Penrose-like inequality in this cection.
\end{remark}

\subsection{Existence of Harmonic Spinor}

We are now going to find a harmonic spinor on $ \ov{\M} $ which is constant at infinity. Because $ \ov{\M} $ is 3-dimensional oriented Riemannian manifold, it is also a spin manifold.  
Denote $\ov{\M}_r $ as the domain enclosed by large sphere $ S_r $, and $ \ov{\M}_{s,r}=\ov{\M}_r \cap \ov{\M}_s $.  Denote $ \nu_r  $ to be the outer normal of $ S_r $.\par
Our calculation will follow \cite[Section 2]{herz} and  \cite[Section 5]{isometric}. 
From \cite[Lemma 2.3]{herz},  we have the following Bochner-Lichnerowicz-Weitzenb\"{o}ck formula:
\begin{align}
\int_{\ov{\M}_{s,r}}|D\psi|^2
&=\int_{\ov{\M}_{s,r}}|\ov{\nabla}\psi|^2
+\frac{1}{4}\int_{\ov{\M}_{s,r}} \ov{R}|\psi|^2
+\int_{\partial\ov{\M}_{s}} <\ov{\nabla}_{\ov{e}_3} \psi+\ov{e}_3\cdot D\psi,\psi>
\\\nonumber& -\int_{S_r} <\ov{\nabla}_{\nu_r} \psi+\nu_r\cdot D\psi,\psi>\label{bochner}
\end{align}
where $ \psi $ is a spinor on $ \ov{\M}_{s,r} $,   $ D $ is the Dirac operator and $ \cdot $ is Clifford multiplication on Jang's slice $ \ov{\M}_{s,r} $. Here we still use $ \ov{\nabla} $ to denote the spin connection on $ \ov{\M}_{s,r} $

From \cite[Lemma 9.7]{stable} we know that $$ \lambda|\partial\M|_g\leq 4\pi,$$

We set $$ C_s=\min_{\partial\ov{\M}_s} \left(4-(\ov{H}_{\partial\ov{\M}_s}-q(\ov{e}_3))\sqrt{\dfrac{|\partial\ov{\M}_s|_{\ov{g}}}{\pi}}\right). $$ 

Then 
\begin{align}
C_s&\geq 4-(\ov{H}_{\partial\ov{\M}_s}-q(\ov{e}_3))\sqrt{\dfrac{|\partial\ov{\M}_s|_{\ov{g}}}{\pi}}\\\nonumber
&\geq 4- (\lambda \frac{C_1}{C_3} +O(s)) \sqrt{\dfrac{|\partial\ov{\M}_s|_{\ov{g}}}{|\partial{\M}_s|_{{g}}}}\sqrt{\dfrac{|\partial{\M}_s|_{{g}}}{\pi}}\\\nonumber
&\geq  4- (\lambda \frac{C_1}{C_3} +O(s)) \sup_{\Sigma_{s }}\sqrt{1+| \slashed{\nabla} \phi_s|^2}\sqrt{\dfrac{4}{\lambda}}\\\nonumber
&=  4-2\sqrt{\lambda}(\frac{C_1}{C_3} +O(s))\sup_{\Sigma_{s }}\sqrt{1+|  \nabla^{\Sigma_s}  f|^2} \\\nonumber
&\geq 4-2\sqrt{\lambda}(\frac{C_1}{C_3} +O(s))\sqrt{1+\frac{2C_1^4 }{C_2^2C_3^2}}
\end{align}
If we suppose 
$$\lambda \frac{C_1^2}{C_3^2}(1+\frac{2 C_1^4}{C_2^2C_3^2})<4, $$
then $ C_s>0 $ if $ s $ is small enough.  Therefore, there exists a positive constant, still denote as  $ s_0  $, s.t. $ \forall s\in(0,s_0] $, we have $ C_s>0 $. In all the context below,   $ s_0 $ always refers to this definition.\par

We are now going to find a nontrivial solution to the PDE system as  in \cite{herz}:
\begin{equation}
D\psi=0 \text{ }\text{ with }\text{ }\text{boundary }\text{ }\text{condition }\text{ }\text{ }  P^+_s\psi=0 
\end{equation}
where $ \psi  $ is asymptotically constant and $ P^+_s $ is the $ L^2  $ orthogonal projection on the space of eigenvectors of positive eigenvalues of $ \slashed{D} $ on $ \partial\ov{\M}_s $. Here $ \slashed{D} $  
is the 2-dimensional Dirac operator on the boundary, defined as:
\begin{equation*}
-\slashed{D}\psi=\ov{e}_3\cdot\ov{e}_1\cdot \ov{\slashed{\nabla}}_{\ov{e}_1}\psi+\ov{e}_3\cdot\ov{e}_2\cdot \ov{\slashed{\nabla}}_{\ov{e}_2}\psi
\end{equation*}
where $ \ov{\slashed{\nabla}} $ is the spin connection on the boundary. 

Notice that the boundary is a topological  2-sphere when s is small, there is no $ \slashed{D} $- harmonic spinor field on it. We denote the
weighted Sobolev space
\begin{equation*}
W^{1,2}_{-1}(\ov{\M}_{s})=\{v\in
W^{1,2}_{loc}(\ov{\M}_{s})\mid
|x|^{l-1}\ov{\nabla}^{l}v\in L^{2}(\ov{\M}_{s}),\text{
}\text{ }l=0,1\}.
\end{equation*} 
as a  notation for both functions and spinors. We then define the space 
\begin{equation}
\HH_s=\{ \psi\in W_{-1}^{1,2}(\ov{\M}_s), P^+_s\psi=0  \}
\end{equation}  
This is a Hilbert space w.r.t the $W_{-1}^{1,2}  $ norm.\par

The boundary term for $\partial\ov{\M}_s  $ in the Weitzenb\"{o}ck formula can be rearranged as
\begin{equation}
\int_{\partial\ov{\M}_{s}} <\ov{\nabla}_{\ov{e}_3} \psi+\ov{e}_3\cdot D\psi,\psi>=-\int_{\partial\ov{\M}_s}<\psi,\slashed{D}\psi>-\frac{1}{2}\int_{\partial\ov{\M}_s}\ov{H}_{\partial\ov{\M}_s}|\psi|^2
\end{equation}

We will follow the steps of calculation in \cite[Theorem 5.1]{isometric} . First for any domain $ \Omega $ and vector field $ X $,  the following equation holds:
\begin{equation*}
\int_{\partial\Omega}<X,\nu_{out}>|\psi|^2
=\int_{\Omega} divX|\psi|^2+\int_\Omega X(|\psi|^2)
\end{equation*}

By setting $ X=q,\Omega= \ov{\M}_{s,r}$ in the above equation, the Weitzenb\"{o}ck formula can be rearranged as:
\begin{align}\label{ddd}
\int_{\ov{\M}_{s,r}}|D\psi|^2
=&\int_{\ov{\M}_{s,r}}|\ov{\nabla}\psi|^2
+\frac{1}{4}\int_{\ov{\M}_{s,r}} (\ov{R}+2div_{\ov{g}}(q))|\psi|^2
+\frac{1}{2}\int_{\ov{\M}_{s,r}} q(|\psi|^2)\\
\nonumber
&+\frac{1}{2}\int_{\partial\ov{\M}_{s}} q(\ov{e}_3)|\psi|^2
-\frac{1}{2}\int_{S_r} q(\nu_r)|\psi|^2\\
\nonumber 
&-\int_{\partial\ov{\M}_{s}}<\psi,\slashed{D}\psi>
-\frac{1}{2}\int_{\partial\ov{\M}_{s}}\ov{H}|\psi|^2
-\int_{S_r} <\ov{\nabla}_{\nu_r} \psi+{\nu_r}\cdot D\psi,\psi>\\
\nonumber 
\geq& \int_{\ov{\M}_{s,r}}|\ov{\nabla}\psi|^2
+\frac{1}{2}\int_{\ov{\M}_{s,r}}(|q|^2_{\ov{g}} |\psi|^2+ q(|\psi|^2))\\
\nonumber
&+\int_{\partial\ov{\M}_{s}}-<\psi,\slashed{D}\psi>-\frac{1}{2}\ov{H}|\psi|^2+\frac{1}{2} q(\ov{e}_3)|\psi|^2\\
\nonumber
&-\int_{S_r} \frac{1}{2}q(\nu_r)|\psi|^2+<\ov{\nabla}_{\nu_r} \psi+{\nu_r}\cdot D\psi,\psi>	 
\end{align}

The second inequality of the above is induced by Eq.\eqref{syid}
\begin{equation}
\ov{R}-2div_{\ov{g}}(q)\geq 2|q|^2_{\ov{g}}
\end{equation}
Now if we set  $ X=q $  in the following inequality (c.f. \cite{isometric}):
\begin{equation}\label{isometric}
|\ov{\nabla}\psi|^2+|X|^2|\psi|^2+X(|\psi|^2)\ge 0
\end{equation}
as 
\begin{equation*}
X(|\psi|^2) 
=2<\ov{\nabla}_X\psi,\psi> 
\geq 	 -2|\ov{\nabla}_X\psi|  |\psi| 
\geq  	 -2|\ov{\nabla}\psi||X| |\psi| 
\end{equation*}
Then Ineq.\eqref{ddd} will imply:
\begin{align}\label{bochner2}
\int_{\ov{\M}_{s,r}}|D\psi|^2&\geq 
\frac{1}{2}\int_{\ov{\M}_{s,r}}|\ov{\nabla}\psi|^2\\
\nonumber
&  +\int_{\partial\ov{\M}_{s}}-<\psi,\slashed{D}\psi>-\frac{1}{2}\ov{H}|\psi|^2+\frac{1}{2} q(\ov{e}_3)|\psi|^2\\
\nonumber
&   -\int_{S_r} \frac{1}{2}q(\nu_r)|\psi|^2+<\ov{\nabla}_{\nu_r} \psi+{\nu_r}\cdot D\psi,\psi>
\end{align}

For any spinor $ \psi\in W_{-1}^{1,2} $, the Ineq.\eqref{bochner2} becomes (because $ C_c^\infty $ in dense in $  W_{-1}^{1,2} $):
\begin{align}\label{bochner3}
\int_{\ov{\M}_{s,r}}|D\psi|^2
&\geq\frac{1}{2}\int_{\ov{\M}_{s,r}}|\ov{\nabla}\psi|^2
+\int_{\partial\ov{\M}_{s}}-<\psi,\slashed{D}\psi>-\frac{1}{2}(\ov{H}-q(\ov{e}_3))|\psi|^2
\end{align}

Now for spinors $ \psi  $ in $ \HH_s $, we decompose $ \psi  $ in the eigenspace of $ \slashed{D} $ as: $ \psi=\sum_i\psi_i $, where $ \psi_i $ is the eigenfunctions w.r.t to $ \alpha_i $, i.e. $ \slashed{D}\psi_i=\alpha_i\psi_i $. 
Under this decomposition, the boundary term of formula Ineq.\eqref{bochner3} becomes:
\begin{align*}
& \int_{\partial\ov{\M}_s}-<\psi,\slashed{D}\psi> -\frac{1}{2}\ov{H}|\psi|^2+\frac{1}{2}q(\ov{e}_3)|\psi|^2 \\
&=  \int_{\partial\ov{\M}_s} -\sum_i \alpha_i|\psi_i|^2 -\frac{1}{2}\ov{H}|\psi|^2+\frac{1}{2}q(\ov{e}_3)\sum_i|\psi_i|^2   \\
&=\sum_i \int_{\partial\ov{\M}_s} (-\alpha_i-\frac{1}{2}\ov{H}+\frac{1}{2}q(\ov{e}_3))|\psi_i|^2
\end{align*}

From the lower estimation for the absolute value of  eigenvalues of Dirac operator on 2-sphere \cite{bar},\cite{hi}, and together with the fact that our boundary condition $ P^+_s\psi=0 $ forbids any positive eigenvalues, we have $ \alpha_i\le-2\sqrt{\dfrac{\pi}{|\partial\ov{\M}_s|_{\ov{g}}    }} $. 
Now for $ s<s_0 $, we have:
\begin{equation}
-\alpha_i-\frac{1}{2}\ov{H}+\frac{1}{2}q(\ov{e}_3)
\geq
2\sqrt{\dfrac{\pi}{|\partial\ov{\M}_s|_{\ov{g}}    }}-\frac{1}{2}\ov{H}+\frac{1}{2}q(\ov{e}_3)
\geq
\frac{1}{2}\sqrt{\dfrac{\pi}{|\partial\ov{\M}_s|_{\ov{g}}    }} C_s	>0
\end{equation} 

Thus
\begin{align}
& \int_{\partial\ov{M}}-<\psi,\slashed{D}\psi> -\frac{1}{2}\ov{H}|\psi|^2+\frac{1}{2}q(\ov{e}_3)|\psi|^2 \\
\nonumber
=&\sum_i \int_{\partial\ov{M}_s} (-\alpha_i-\frac{1}{2}\ov{H}+\frac{1}{2}q(\ov{e}_3))|\psi_i|^2\\
\nonumber
\geq&\sum_i \int_{\partial\ov{M}_s} 	\frac{1}{2}\sqrt{\dfrac{\pi}{|\partial\ov{\M}_s|_{\ov{g}}    }} C_s|\psi_i|^2\\
\nonumber
=&	\frac{1}{2}\sqrt{\dfrac{\pi}{|\partial\ov{\M}_s|_{\ov{g}}    }} C_s \int_{\partial\ov{M_s}}|\psi|^2\\
\nonumber
\geq& 0
\end{align}
This formula together with formula \ref{bochner3} implies the following estimate for spinors in $ \HH_s $:
\begin{equation}\label{bochner4}
\int_{\ov{\M}_{s}}|D\psi|^2\geq	\frac{1}{2}\int_{\ov{\M}_{s}}|\ov{\nabla}\psi|^2+\frac{1}{2}\sqrt{\dfrac{\pi}{|\partial\ov{\M}_s|_{\ov{g}}    }} C_s \int_{\partial\ov{M_s}}|\psi|^2\geq\frac{1}{2}\int_{\ov{\M}_{s}}|\ov{\nabla}\psi|^2
\end{equation}

The right hand side of the above is the Hilbert norm on $ W_{-1}^{1,2} $. This implies a coercive estimate for spinors in $ \HH $, which is enough to establish the existence of harmonic spinors that are asymptotic to constant spinor at infinity. We are then able to get a lower bound for the  ADM mass of the initial data set: \par

\begin{proposition}\label{ppmt}
	Under Foliation B, we suppose 
	$$\lambda \frac{C_1^2}{C_3^2}(1+\frac{2 C_1^4}{C_2^2C_3^2})<4, $$ where $ C_1, C_2, C_3 $ are constants appearing in the gradient estimates in Theorem \ref{thm3}. Also we suppose $ \psi_0 $ is  a smooth spinor field  which is constant in some chart around infinity and $ P^+_s\psi_0=0 $.  Then for $ s<s_0 $, there exists a unique $ \psi\in \HH_s $ s.t. 
	\begin{equation}\label{spin pde}
	D(\psi_0+\psi)=0, P^+_s(\psi_0+\psi)=0
	\end{equation}
	Furthermore, we have the following lower bound for the  ADM mass of the initial data set:
	\begin{equation}\label{ppmt eq}
	8\pi| \psi_0|^2 m\geq  \int_{\ov{\M}_{s}}|\ov{\nabla}(\psi+\psi_0)|^2+ \sqrt{\dfrac{\pi}{|\partial\ov{\M}_s|_{\ov{g}}    }} C_s \int_{\partial\ov{M_s}}|\psi+\psi_0|^2
	\end{equation}
\end{proposition}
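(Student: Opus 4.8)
The plan is to follow the Witten--Herzlich scheme, using \eqref{bochner4} as the only analytic input. Fix $s<s_{0}$, so that $C_{s}>0$. By \eqref{bochner4}---whose right-hand side is, as noted above, a multiple of the squared Hilbert norm of $W^{1,2}_{-1}$---the quadratic form $\psi\mapsto\|D\psi\|^{2}_{L^{2}(\ov{\M}_{s})}$ dominates the squared Hilbert norm of $\HH_{s}$; and since $\psi_{0}$ is constant in a chart near infinity, $D\psi_{0}$ decays like the connection coefficients of $\ov{g}$ and so lies in $L^{2}(\ov{\M}_{s})$. Hence $\psi\mapsto\int_{\ov{\M}_{s}}|D\psi_{0}+D\psi|^{2}$ is a strictly convex, coercive, weakly lower semicontinuous quadratic on the Hilbert space $\HH_{s}$, and has a unique minimizer $\psi$, whose Euler--Lagrange equation is $\int_{\ov{\M}_{s}}\langle D(\psi_{0}+\psi),D\phi\rangle=0$ for all $\phi\in\HH_{s}$. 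Set $\Psi:=\psi_{0}+\psi$ and $\xi:=D\Psi$. Testing against $\phi\in C^{\infty}_{c}$ and interior elliptic regularity give $D\xi=0$ with $\xi$ smooth; and Green's formula for the Dirac operator, together with the fact that Clifford multiplication by the normal $\ov{e}_{3}$ anticommutes with the boundary operator $\slashed{D}$, forces the natural boundary condition $P^{+}_{s}\xi=0$. Thus $\xi\in\HH_{s}$ with $D\xi=0$, so \eqref{bochner4} yields $\int_{\ov{\M}_{s}}|\ov{\nabla}\xi|^{2}=0$; then $\xi$ is parallel, hence $\xi\equiv0$ by the decay built into $W^{1,2}_{-1}$. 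This gives $D(\psi_{0}+\psi)=0$ and $P^{+}_{s}(\psi_{0}+\psi)=0$, and uniqueness follows because any two solutions differ by an element of $\HH_{s}\cap\ker D=\{0\}$.

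For the inequality I would not pass to the limit prematurely: apply the rearranged Weitzenb\"{o}ck identity \eqref{bochner2} (equivalently \eqref{bochner} with the $q$-manipulations) to $\Psi$ on the compact domain $\ov{\M}_{s,r}$, retaining the flux over $S_{r}$. Since $D\Psi=0$, the left-hand side vanishes and the $\nu_{r}\cdot D\Psi$ term drops, leaving
\begin{align*}
0 \;\ge\; &\tfrac12\int_{\ov{\M}_{s,r}}|\ov{\nabla}\Psi|^{2}
+\int_{\partial\ov{\M}_{s}}\Big(-\langle\Psi,\slashed{D}\Psi\rangle-\tfrac12\big(\ov{H}_{\ov{\Sigma}_{s}}-q(\ov{e}_{3})\big)|\Psi|^{2}\Big)\\
&-\int_{S_{r}}\Big(\tfrac12 q(\nu_{r})|\Psi|^{2}+\langle\ov{\nabla}_{\nu_{r}}\Psi,\Psi\rangle\Big).
\end{align*}
The $\partial\ov{\M}_{s}$-integral is nonnegative by exactly the computation already carried out in deriving \eqref{bochner4}: decomposing $\Psi$ into $\slashed{D}$-eigenspinors, the boundary condition $P^{+}_{s}\Psi=0$ excludes positive eigenvalues, the B\"{a}r--Hijazi bound (\cite{bar},\cite{hi}) gives $|\alpha_{i}|\ge 2\sqrt{\pi/|\partial\ov{\M}_{s}|_{\ov{g}}}$, and for $s<s_{0}$ the integrand is at least $\tfrac12\sqrt{\pi/|\partial\ov{\M}_{s}|_{\ov{g}}}\,C_{s}|\Psi|^{2}\ge0$. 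On $S_{r}$ one uses asymptotic flatness: $q=O(|x|^{-3})$ and $|\Psi|\to|\psi_{0}|$, so $\int_{S_{r}}q(\nu_{r})|\Psi|^{2}\to0$, while $\int_{S_{r}}\langle\ov{\nabla}_{\nu_{r}}\Psi,\Psi\rangle\to 4\pi\,m\,|\psi_{0}|^{2}$ is the standard Witten ADM-mass flux, with $m$ the ADM mass normalized as at the beginning of the paper; this is legitimate because $\Psi-\psi_{0}$ and its first derivative decay at the Witten rate, by elliptic decay estimates for $D\Psi=0$ on the end. Letting $r\to\infty$ and multiplying by $2$ gives $8\pi|\psi_{0}|^{2}m\ge\int_{\ov{\M}_{s}}|\ov{\nabla}\Psi|^{2}+\sqrt{\pi/|\partial\ov{\M}_{s}|_{\ov{g}}}\,C_{s}\int_{\partial\ov{\M}_{s}}|\Psi|^{2}$, which is \eqref{ppmt eq}.

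The \emph{main obstacle} is the analysis at the two ends of $\ov{\M}_{s}$, not any single identity. At infinity one must justify the Weitzenb\"{o}ck integration by parts on the noncompact region and evaluate the limits of the $S_{r}$-fluxes; this rests on the Witten-rate decay of $\Psi-\psi_{0}$ and its derivative---which in turn needs the asymptotically flat decay of $\ov{g}$ (hence of $g$ and $f_{0}$), the bound $q=O(|x|^{-3})$, and standard elliptic decay estimates for the harmonic spinor---all classical (Witten \cite{witten}, Parker--Taubes \cite{taubes}, Herzlich \cite{herz}) but to be invoked carefully. At $\partial\ov{\M}_{s}$ the delicate point is the sign of the boundary term, controlled only because $C_{s}>0$: this is precisely where Condition~1, $\lambda\frac{C_{1}^{2}}{C_{3}^{2}}\big(1+\frac{2C_{1}^{4}}{C_{2}^{2}C_{3}^{2}}\big)<4$, enters, through the estimate for $C_{s}$ carried out above, and is why we restrict to $s<s_{0}$. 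The remaining work is functional-analytic bookkeeping---the weighted spaces $W^{1,2}_{-1}$, the boundary-value theory for the Dirac operator with the projection condition $P^{+}_{s}=0$, and the density of $C^{\infty}_{c}$---for which I would follow \cite{herz} and \cite{isometric}.
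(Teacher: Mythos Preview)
Your proposal is correct and follows essentially the same route as the paper: existence via the coercive estimate \eqref{bochner4} (the paper simply asserts that coercivity suffices, while you spell out the variational argument), and the mass inequality by plugging the harmonic spinor $\Psi=\psi_0+\psi$ into \eqref{bochner2}, using the boundary eigenspinor decomposition and $C_s>0$ on $\partial\ov{\M}_s$, the decay of $q$ on $S_r$, and the Witten flux identity $\lim_{r\to\infty}\int_{S_r}\langle\ov{\nabla}_{\nu_r}\Psi+\nu_r\cdot D\Psi,\Psi\rangle=4\pi|\psi_0|^2 m$ (you drop the $\nu_r\cdot D\Psi$ term since $D\Psi=0$, which is equivalent).
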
 
\begin{proof}
	We have already establish the existence of the solution of Eq.\eqref{spin pde}. Plugging  $ \psi+\psi_0 $ back into Ineq.\eqref{bochner2}  will imply the following   result:
	\begin{align}\label{pen}
	& \int_{S_r}\frac{1}{2}q(\nu_r)|\psi_0+\psi|^2
	+<\ov{\nabla}_{\nu_r} (\psi+\psi_0)+\nu_r\cdot D (\psi+\psi_0), \psi+\psi_0>\\
	\nonumber
	\geq&
	\frac{1}{2}\int_{\ov{\M}_{s}}|\ov{\nabla}(\psi+\psi_0)|^2+\frac{1}{2}\sqrt{\dfrac{\pi}{|\partial\ov{\M}_s|_{\ov{g}}    }} C_s \int_{\partial\ov{M_s}}|\psi+\psi_0|^2
	\end{align}
	We take limit $ r\to\infty $ in this inequality. The decay rate of $ q $ implies that the first term of left hand side disappears under this limiting process. Together with the fact that 
	\begin{equation}
	4\pi| \psi_0|^2 m=\lim_{r\to\infty}\int_{S_r}<\ov{\nabla}_{\nu_r} (\psi+\psi_0)+\nu_r\cdot D (\psi+\psi_0), \psi+\psi_0>
	\end{equation}
	We then prove the following lower bound for the ADM mass:
	\begin{equation}
	8\pi| \psi_0|^2 m\geq  \int_{\ov{\M}_{s}}|\ov{\nabla}(\psi+\psi_0)|^2+ \sqrt{\dfrac{\pi}{|\partial\ov{\M}_s|_{\ov{g}}    }} C_s \int_{\partial\ov{M_s}}|\psi+\psi_0|^2
	\end{equation}	
\end{proof}

\subsection{Discussion about Condition 2}
In this section we prove that the similar conclusion still holds under Foliation C if Condition 2 in Theorem \ref{thm4} holds. 
Under Foliation C, for $ \gamma $	small enough, we set

$$ C'_\gamma=\min_{\partial\ov{\M}_\gamma} \left(4-(\ov{H}_{\partial\ov{\M}_\gamma}-q(\ov{e}_3))\sqrt{\dfrac{|\partial\ov{\M}_\gamma|_{\ov{g}}}{\pi}}\right)  $$  
where $ \Sigma_\gamma=\Psi_C(\Sigma,\gamma) $ and $ \ov{\Sigma}_\gamma $ is the  lift of $ \Sigma_\gamma$ onto $ \overline{M} $. $ \ov{\Sigma}_\gamma $ can also be treated as the horizontal cuts of the cylindrical end.\par

Recall that Foliation C is defined as :
\begin{align}
&\Psi_C : \Sigma\times [0,\bar \gamma] \to \M \mbox{ }\mbox{such that}\\ \nonumber
(1) & \Psi_C(p,0)=p \mbox{ }\mbox{for}\mbox{ }p\in\Sigma\\ \nonumber
(2) &  \Psi_C(p,\gamma)=u_0(p,-\frac{1}{\sqrt{\lambda}}\log\gamma)
\end{align}
where $ u_0 $ is the graph of $ \ov{\M} $ on the cylinder $ \Sigma\times\R $.\par

Then, by Theorem \ref{thm3} we have $ u_0 $ and $ ^{C_{\bar{ z}}}\nabla u_0 $ both uniformly converging to zero as $ z\to\infty $. Thus together with the fact that $ \ov{g }$ is the same as $ g $ on the level set of $ f $, and $ \lambda |\Sigma|_g\leq 4\pi$ (c.f. \cite[Lemma 9.7]{stable}), 
we have  $$ |\partial\ov{\M}_\gamma|_{\ov{g}}=|\partial {\M}_\gamma|_{ {g}} \to |\Sigma|_g\leq \frac{4\pi}{\lambda}\mbox{ }\mbox{when}\mbox{ }\gamma\to 0.$$ 
Moreover, it is proved in \cite{bk}  that  $ \overline{H}_{\partial\ov{\M}_\gamma} $ goes to zero. More specifically, $  \overline  {H}_{\partial\ov{\M}_\gamma} =<e_3, \overline{e}_3>H_{\partial {\M}_\gamma}  $. This is because
$ \overline  {g}=df_0^2+g $  does not change the metric $ g $ on the level sets of $ f_0 $ . It only  stretches lengths perpendicular to the level sets of $ f_0 $ by a factor of $ <e_3,\overline  {e}_3>=\frac{1}{\sqrt{1+|\nabla f_0|^2}} $. The formula $  \overline  {H}_{\partial\ov{\M}_\gamma} =<e_3, \overline{e}_3>H_{\partial {\M}_\gamma}  $ is then implied by the first variation formula for area. As we have proved that $ |\nabla f_0| $ blows up,  $ \overline{H}_{\partial\ov{\M}_\gamma} $ goes to zero as $\gamma\to 0 $.\par	

From \cite[Proposition 2]{yau} we know that $ |q|_{\ov{g}} $ is bounded near $ \Sigma $. If furthermore we suppose that the Condition 2 of Theorem \ref{thm4} holds, i.e. 
$|q|_{\ov{g}}<2\sqrt{\lambda} $ near $ \Sigma $, then it is straightforward to see that under this assumption,  there exists constant $ \gamma_0 $ s.t. $ C'_\gamma>0 $ for $ \forall\gamma\in(0,\gamma_0]  $. In all the context below,   $ \gamma_0 $ always refers to this definition.  \par
Denote $ P'^+_\gamma $ to be the $ L^2  $ orthogonal projection on the space of eigenvectors of positive eigenvalues of the corresponding $ \slashed{D} $ on $ \partial\ov{\M}_\gamma $, and denote $\HH'_\gamma=\{ \psi\in W_{-1}^{1,2}(\ov{\M}_\gamma),\linebreak P'^+_\gamma\psi=0  \}.$ Then similar arguments as in the previous section imply  the following inequality for spinors in $ \HH'_\gamma $:
\begin{equation}\label{bochnergamma}
\int_{\ov{\M}_{\gamma}}|D\psi|^2\geq	\frac{1}{2}\int_{\ov{\M}_{\gamma}}|\ov{\nabla}\psi|^2+\frac{1}{2}\sqrt{\dfrac{\pi}{|\partial\ov{\M}_\gamma|_{\ov{g}}    }} C'_\gamma \int_{\partial\ov{M_\gamma}}|\psi|^2\geq\frac{1}{2}\int_{\ov{\M}_{\gamma}}|\ov{\nabla}\psi|^2
\end{equation}

This implies a coercive estimate for spinors in $ \HH'_\gamma $. We then reach the same conclusion as in Proposition \ref{ppmt}:

\begin{proposition}\label{ppmtgamma}
	We suppose 
	$|q|_{\ov{g}}<2\sqrt{\lambda}$ near $ \Sigma $. Also we suppose $ \psi_0 $ is  a smooth spinor field  which is constant in some chart around infinity and $ P'^+_\gamma\psi_0=0 $ under Foliation C.  Then,  for $ \gamma<\gamma_0 $, there exists a unique $ \psi\in \HH'_\gamma $ s.t. 
	\begin{equation} 
	D(\psi_0+\psi)=0, P'^+_\gamma(\psi_0+\psi)=0
	\end{equation}
	Furthermore, we have the following lower bound for the  ADM mass of the initial data set:
	\begin{equation}\label{ppmtgamma eq}
	8\pi| \psi_0|^2 m\geq  \int_{\ov{\M}_{\gamma}}|\ov{\nabla}(\psi+\psi_0)|^2+ \sqrt{\dfrac{\pi}{|\partial\ov{\M}_\gamma|_{\ov{g}}    }} C'_\gamma \int_{\partial\ov{M_\gamma}}|\psi+\psi_0|^2
	\end{equation}
\end{proposition}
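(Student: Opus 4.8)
The plan is to mirror the proof of Proposition~\ref{ppmt}, with $\partial\ov{\M}_s$, $C_s$, $\HH_s$ replaced by $\partial\ov{\M}_\gamma$, $C'_\gamma$, $\HH'_\gamma$, and with the coercive estimate \eqref{bochnergamma} playing the role of \eqref{bochner4}. For the existence statement I would first observe that for $\gamma\le\gamma_0$ the boundary $\partial\ov{\M}_\gamma$ is a topological $2$-sphere (it is a graph over $\Sigma_\gamma\simeq\Sigma$, and $\Sigma$ is topologically $S^2$ by \cite[Lemma 9.7]{stable}), so the intrinsic boundary Dirac operator $\slashed{D}$ has trivial kernel and $P'^+_\gamma$ defines an elliptic (Atiyah--Patodi--Singer type) boundary condition. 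Then, exactly as in \cite[Section 2]{herz}, \eqref{bochnergamma} together with the equivalence of $\|\ov{\nabla}\cdot\|_{L^2(\ov{\M}_\gamma)}$ with the $W^{1,2}_{-1}$-norm on the asymptotically flat end shows that $D$ acting on $\HH'_\gamma=\{\psi\in W^{1,2}_{-1}(\ov{\M}_\gamma):P'^+_\gamma\psi=0\}$ is injective with closed range, and a standard Fredholm/self-adjointness argument (equivalently, Lax--Milgram applied to the coercive bilinear form $(\psi,\phi)\mapsto\int_{\ov{\M}_\gamma}\langle D\psi,D\phi\rangle$, followed by elliptic regularity) upgrades this to $D\colon\HH'_\gamma\to L^2(\ov{\M}_\gamma)$ being an isomorphism. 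Since $\psi_0$ is constant near infinity, $D\psi_0$ is compactly supported and hence lies in $L^2(\ov{\M}_\gamma)$, so there is a unique $\psi\in\HH'_\gamma$ with $D\psi=-D\psi_0$; for this $\psi$ we have $D(\psi_0+\psi)=0$ and $P'^+_\gamma(\psi_0+\psi)=P'^+_\gamma\psi_0+P'^+_\gamma\psi=0$, which proves the first assertion.

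For the mass inequality I would substitute $\Phi:=\psi_0+\psi$ into the Bochner--Lichnerowicz--Weitzenb\"{o}ck inequality \eqref{bochner2} on $\ov{\M}_{\gamma,r}:=\ov{\M}_r\cap\ov{\M}_\gamma$; the derivation of \eqref{bochner2} used only the Weitzenb\"{o}ck formula, the scalar-curvature identity \eqref{syid} and the inequality \eqref{isometric}, none of which refer to the foliation, so it holds verbatim with $\partial\ov{\M}_s$ replaced by $\partial\ov{\M}_\gamma$. Since $D\Phi=0$, the left-hand side vanishes. Decomposing $\Phi|_{\partial\ov{\M}_\gamma}$ into eigenspinors of $\slashed{D}$ and using that $P'^+_\gamma\Phi=0$ forces every eigenvalue to satisfy $\alpha_i\le-2\sqrt{\pi/|\partial\ov{\M}_\gamma|_{\ov{g}}}$, the same computation as in the previous subsection gives $-\alpha_i-\tfrac12\ov{H}_{\partial\ov{\M}_\gamma}+\tfrac12 q(\ov{e}_3)\ge\tfrac12\sqrt{\pi/|\partial\ov{\M}_\gamma|_{\ov{g}}}\,C'_\gamma\ge 0$, so the $\partial\ov{\M}_\gamma$ boundary term in \eqref{bochner2} is bounded below by $\tfrac12\sqrt{\pi/|\partial\ov{\M}_\gamma|_{\ov{g}}}\,C'_\gamma\int_{\partial\ov{\M}_\gamma}|\Phi|^2$. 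Letting $r\to\infty$, the $\tfrac12 q(\nu_r)|\Phi|^2$ term over $S_r$ vanishes by the decay of $q$ on the asymptotically flat end, while $\lim_{r\to\infty}\int_{S_r}\langle\ov{\nabla}_{\nu_r}\Phi+\nu_r\cdot D\Phi,\Phi\rangle=4\pi|\psi_0|^2 m$ is Witten's boundary expression for the ADM mass. Collecting terms yields $4\pi|\psi_0|^2 m\ge\tfrac12\int_{\ov{\M}_\gamma}|\ov{\nabla}\Phi|^2+\tfrac12\sqrt{\pi/|\partial\ov{\M}_\gamma|_{\ov{g}}}\,C'_\gamma\int_{\partial\ov{\M}_\gamma}|\Phi|^2$, which is \eqref{ppmtgamma eq} after multiplication by $2$.

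The only input specific to this proposition, as opposed to Proposition~\ref{ppmt}, is that here the positivity $C'_\gamma>0$ for $\gamma$ small is forced by Condition~2 rather than Condition~1: it rests on the limits $|\partial\ov{\M}_\gamma|_{\ov{g}}=|\partial\M_\gamma|_g\to|\Sigma|_g\le 4\pi/\lambda$ and $\ov{H}_{\partial\ov{\M}_\gamma}\to 0$ established just above --- the latter being where the sharp blow-up $|\nabla f_0|\to\infty$ from Theorem~\ref{thm3} is used --- so that $4-(\ov{H}_{\partial\ov{\M}_\gamma}-q(\ov{e}_3))\sqrt{|\partial\ov{\M}_\gamma|_{\ov{g}}/\pi}$ stays above a positive constant once $|q|_{\ov{g}}<2\sqrt{\lambda}$ near $\Sigma$. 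I expect the heaviest technical step of the write-up to be the existence half --- making the spectral boundary condition $P'^+_\gamma$ and the weighted-Sobolev framework precise so that \eqref{bochnergamma} really does upgrade to invertibility of $D$ on $\HH'_\gamma$ --- but since this is the same functional-analytic setup as in \cite{herz} and in the preceding subsection, it can be disposed of by citation, and I would keep the proof of this proposition short, essentially recording the substitution $\Phi=\psi_0+\psi$ into \eqref{bochner2} and the limit $r\to\infty$.
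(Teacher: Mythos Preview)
Your proposal is correct and follows essentially the same approach as the paper: the paper does not give a separate proof of this proposition but simply remarks that ``similar arguments as in the previous section'' yield \eqref{bochnergamma} and then ``the same conclusion as in Proposition~\ref{ppmt}'' follows, which is exactly the mirroring you carry out. Your write-up is more explicit than the paper's (in particular you spell out the eigenvalue bound and the $r\to\infty$ limit), but the structure---coercivity \eqref{bochnergamma} $\Rightarrow$ existence via \cite{herz}, then substitution of $\psi_0+\psi$ into \eqref{bochner2} and passage to the limit using Witten's mass formula---is identical.
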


\subsection{A Penrose-like Inequality}

We shall now perform the last step in proving a Penrose-like inequality. We need following lemmas:
\begin{lemma}
	Assume $  X $ and $ Y $ are two Banach space such that $ X $ is continuously embedded into $ Y $. If $ X $ is  reflexive and $ \{x_{n}\} $ is a bounded sequence in $ X $, then there is a subsequence $ \{x_{n_k}\} $  weakly convergent to $ x\in X $ both in $ X $ and $ Y $.
\end{lemma}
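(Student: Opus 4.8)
The plan is to first extract a weakly convergent subsequence inside $X$ using reflexivity, and then to push that weak convergence forward through the continuous embedding into $Y$ by duality.

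First I would invoke the standard fact that in a reflexive Banach space every bounded sequence has a weakly convergent subsequence. Let $J:X\to X^{**}$ be the canonical isometric embedding, which is surjective since $X$ is reflexive. The sequence $\{Jx_n\}$ is bounded in $X^{**}=(X^{*})^{*}$, so by the Banach--Alaoglu theorem it lies in a weak-$*$ compact ball; under the identification $J$, the weak-$*$ topology on $X^{**}$ corresponds to the weak topology on $X$. Applying the Eberlein--\v{S}mulian theorem, which turns this weak compactness into weak sequential compactness on bounded sets, we obtain a subsequence $\{x_{n_k}\}$ and an element $x\in X$ with $x_{n_k}\rightharpoonup x$ weakly in $X$, i.e. $\varphi(x_{n_k})\to\varphi(x)$ for every $\varphi\in X^{*}$.

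Next I would transfer this convergence to $Y$. Write the continuous embedding as $\iota:X\hookrightarrow Y$, a bounded linear map, with bounded adjoint $\iota^{*}:Y^{*}\to X^{*}$. For any $\psi\in Y^{*}$ one has $\psi\circ\iota=\iota^{*}\psi\in X^{*}$, hence
\[
\psi(x_{n_k}) = (\iota^{*}\psi)(x_{n_k}) \longrightarrow (\iota^{*}\psi)(x) = \psi(x),
\]
which is precisely the assertion that $x_{n_k}\rightharpoonup x$ weakly in $Y$. The same subsequence works in both spaces exactly because every test functional on $Y$ restricts (via $\iota$) to a test functional on $X$.

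There is no genuine obstacle here; the only point requiring care is the passage from Banach--Alaoglu (which gives weak-$*$ compactness of a dual ball, not weak sequential compactness) to an actual convergent subsequence. If one prefers to sidestep Eberlein--\v{S}mulian, it suffices to replace $X$ by the closed linear span $X_0$ of $\{x_n\}$: this $X_0$ is a separable closed subspace of the reflexive space $X$, hence itself reflexive, and on a separable reflexive space the closed unit ball is weakly metrizable, so a routine subsequence extraction yields $x\in X_0\subseteq X$ with $x_{n_k}\rightharpoonup x$ weakly in $X$, after which the duality argument above proceeds unchanged.
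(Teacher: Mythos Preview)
Your proof is correct and follows essentially the same approach as the paper: extract a weakly convergent subsequence in $X$ via Banach--Alaoglu together with Eberlein--\v{S}mulian, then observe that every $\psi\in Y^{*}$ pulls back along the continuous embedding to an element of $X^{*}$, yielding weak convergence in $Y$. Your alternative route through a separable closed span (to avoid Eberlein--\v{S}mulian) is a nice addition but not present in the paper.
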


\begin{proof}
	$ X $ is a reflexive Banach space, therefore bounded subsets are weakly precompact by Banach-Alaoglu theorem. The Eberlein-Smulian theorem implies that bounded subsets in X are weakly sequentially precompact, and therefore from $ \{x_n\} $ we can extract a subsequence $ \{x_{n_k}\} $ which is weakly converging to $ x\in X $. Let $ J $ to be the embedding operator from $ X $ to $ Y $. Then for $ f\in Y^* $, $ f\circ J\in X^* $ because $ J $ is continuous. Hence $ f(x_n)=(f\circ J)(x_n) $ converges to $ (f\circ J)(x)=f(x) $. Therefore, $  x_n $ converges weakly to x in Y.
\end{proof}  

\begin{lemma}
	Assume $ X $ and $ Z $ are two Banach space such that $ X $ is compactly embedded into $ Y $. If $ X $ is reflexive and $ \{x_{n}\} $ is bounded in $ X $, then there is a subsequence $ \{x_{n_k}\} $  convergent to $ x\in X $, weakly in $ X $ and strongly in $ Z $.
\end{lemma}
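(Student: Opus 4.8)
The plan is to chain together the previous lemma with the basic fact that a compact linear operator sends weakly convergent sequences to norm-convergent ones. Since a compact embedding is in particular continuous, the hypotheses of the preceding lemma are satisfied; applying it produces a subsequence $\{x_{n_k}\}$ and an element $x\in X$ with $x_{n_k}\rightharpoonup x$ weakly in $X$, and hence also weakly in the target space of the embedding. It then remains only to promote this to strong convergence.

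For that, let $J$ denote the compact embedding operator, so that $x_{n_k}=Jx_{n_k}$ after the usual identification and $x=Jx$. A weakly convergent sequence in a Banach space is norm-bounded (uniform boundedness principle), so $\{x_{n_k}\}$ is bounded in $X$, and by compactness of $J$ the set $\{Jx_{n_k}\}$ is precompact in $Z$. I would then argue by contradiction: if $Jx_{n_k}\not\to Jx$ in $Z$, extract a further subsequence $\{x_{n_{k_j}}\}$ with $\|Jx_{n_{k_j}}-Jx\|_Z\ge\delta>0$; by precompactness pass to a sub-subsequence along which $Jx_{n_{k_j}}\to y$ strongly in $Z$ for some $y$ with $\|y-Jx\|_Z\ge\delta$. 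But strong convergence implies weak convergence, and continuity of $J$ gives $Jx_{n_{k_j}}\rightharpoonup Jx$ weakly in $Z$; uniqueness of weak limits forces $y=Jx$, a contradiction. Hence $x_{n_k}\to x$ strongly in $Z$ while $x_{n_k}\rightharpoonup x$ weakly in $X$, which is the assertion.

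There is essentially no hard step here: the only point that deserves care is the identification of $\{x_n\}\subset X$ with its image under the embedding and the fact that the weak limit computed in $X$ is carried to the weak limit in $Z$ by the continuous embedding — which is precisely what the previous lemma already records. I would also note that the statement as displayed refers to an embedding of $X$ into $Y$, while the conclusion concerns strong convergence in $Z$; I take the intended hypothesis to be that $X$ is compactly embedded in $Z$, and the proof above is written accordingly.
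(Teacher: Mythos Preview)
Your proof is correct and follows essentially the same approach as the paper: extract a weakly convergent subsequence using reflexivity, use compactness of the embedding to obtain strong convergence in $Z$, and invoke uniqueness of weak limits to identify the two limits. The only cosmetic difference is that the paper passes to a further subsequence converging strongly in $Z$ and then identifies its limit with $x$, whereas you prove directly (via the standard contradiction argument) that the whole weakly convergent subsequence already converges strongly in $Z$; you also correctly flag the $Y$/$Z$ typo in the hypothesis.
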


\begin{proof}
	For the same reason as above lemma, we can extract a subsequence $ \{x_{n_k}\} $ which is weakly converging to $ x\in X $. Because $ X $ is compactly embedded into $ Z $, thus there is an element  $ z\in Z $ s.t. there is a subsequence of $ \{x_n\} $, still denote as $ x_{n_k} $ , converges strongly to $ z $ in $ Z $. Thus $ \{x_{n_k}\} $ also converges weakly to $ z $ in $ Z $. For the same reason as above lemma, $ x_{n_k} $  converges weakly to $ x $ in $ Z $, thus $ x $ and $ z $ are equal.
\end{proof} 
\begin{lemma}
	Assume $\displaystyle \sigma_s=\sqrt{\dfrac{|\ov{\Sigma}_s|_{\ov{g}}}{\pi}}\inf_{f\in C_c^\infty(\ov{\Sigma}_{s})} \dfrac{|df|_{L^2(\ov{M}_{s})}^2}{|f|_{L^2(\ov{\Sigma}_{s})}^2}
	=\sqrt{\dfrac{|\ov{\Sigma}_s|_{\ov{g}}}{\pi}}\inf_{f\in W_{-1}^{1,2}(\ov{\Sigma}_{s})} \dfrac{|df|_{L^2(\ov{M}_{s})}^2}{|f|_{L^2(\ov{\Sigma}_{s})}^2}$, then $ \sigma_s>0$ for any $ s>0 $. Here f denotes function which is not identical zero.
\end{lemma}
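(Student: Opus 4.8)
The plan is to reduce the statement to a single trace-type inequality on the asymptotically flat exterior region $\ov{\M}_s$ and then prove that inequality by a weighted Poincar\'e (Hardy) estimate. First I would dispose of the equality of the two infima: since $C_c^\infty(\ov{\M}_s)$ is dense in $W^{1,2}_{-1}(\ov{\M}_s)$, and since both $f\mapsto |df|_{L^2(\ov{\M}_s)}^2$ and $f\mapsto |f|_{L^2(\ov{\Sigma}_s)}^2$ are continuous with respect to the $W^{1,2}_{-1}$-norm — the former trivially, the latter because $\ov{\Sigma}_s$ is compact, the weight $r(x)$ is comparable to $1$ on a collar of $\ov{\Sigma}_s$, and the trace $W^{1,2}(\text{collar})\to L^2(\ov{\Sigma}_s)$ is bounded — any admissible quotient over $W^{1,2}_{-1}(\ov{\M}_s)$ is a limit of admissible quotients over $C_c^\infty(\ov{\M}_s)$, so the two infima coincide. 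It then remains to show that this common value $\sigma_s$ is positive, i.e. that there is a constant $C=C(s,\text{geometry})$ with $|f|_{L^2(\ov{\Sigma}_s)}^2\le C\,|df|_{L^2(\ov{\M}_s)}^2$ for every $f\in W^{1,2}_{-1}(\ov{\M}_s)$ with non-vanishing trace.

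The heart of the argument is the claim that the Dirichlet energy $|df|_{L^2(\ov{\M}_s)}$ is, by itself, an equivalent norm on $W^{1,2}_{-1}(\ov{\M}_s)$; concretely,
\[
\big\|\,|x|^{-1}f\,\big\|_{L^2(\ov{\M}_s)}\le C\,|df|_{L^2(\ov{\M}_s)}\qquad\text{for all }f\in W^{1,2}_{-1}(\ov{\M}_s).
\]
I would prove this by splitting $\ov{\M}_s$ along a large coordinate sphere $S_R$ into the asymptotically flat end $E=\ov{\M}_s\cap\{|x|\ge R\}$ and the bounded part $B=\ov{\M}_s\cap\{|x|\le R\}$. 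On $E$, integrating $f^2$ against $\operatorname{div}(|x|^{-1}\hat x)\asymp|x|^{-2}$ and using that the boundary term at infinity vanishes for $f\in W^{1,2}_{-1}$ (approximating by $C_c^\infty$), one gets the Hardy inequality $\||x|^{-1}f\|_{L^2(E)}\le C|df|_{L^2(E)}$ with no boundary contribution. On $B$ the weight is bounded and $r(x)\ge 1$, so it suffices to bound $|f|_{L^2(B)}$; writing $f=\bar f+(f-\bar f)$ with $\bar f$ the mean of $f$ over $S_R$, the oscillation $f-\bar f$ is controlled by $|df|_{L^2(B)}$ through the Poincar\'e–Wirtinger inequality relative to $S_R$, while the number $\bar f$ is controlled by $|df|_{L^2(E)}$ through a radial capacity estimate $|\bar f|\le C_R\,|df|_{L^2(E)}$, obtained by integrating the spherical means of $\partial_r f$ from $R$ to $\infty$ (using $\int_R^\infty\rho^{-2}\,d\rho<\infty$ and that the spherical mean of $f$ tends to $0$ at infinity). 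Combining the two pieces gives the displayed inequality.

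With this in hand the conclusion is immediate. If $\sigma_s=0$ there is a sequence $f_n\in W^{1,2}_{-1}(\ov{\M}_s)$, normalized by $|f_n|_{L^2(\ov{\Sigma}_s)}=1$, with $|df_n|_{L^2(\ov{\M}_s)}\to 0$; the weighted Poincar\'e inequality then forces $\|f_n\|_{W^{1,2}_{-1}(\ov{\M}_s)}\to 0$, so continuity of the trace gives $|f_n|_{L^2(\ov{\Sigma}_s)}\to 0$, contradicting the normalization. Equivalently, one may run the compactness machinery of the two preceding lemmas: the Poincar\'e inequality bounds $\{f_n\}$ in the reflexive Hilbert space $W^{1,2}_{-1}(\ov{\M}_s)$, which embeds compactly into $L^2(\ov{\Sigma}_s)$ via the trace; a weak limit $f$ satisfies $|df|_{L^2}=0$ by lower semicontinuity, hence is constant, hence is $0$ because a non-zero constant is never in $W^{1,2}_{-1}$ of a manifold with an asymptotically flat end, while its trace is the strong limit of the $f_n|_{\ov{\Sigma}_s}$, so $|f|_{L^2(\ov{\Sigma}_s)}=1$, again a contradiction. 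Either way $\sigma_s>0$ for every $s>0$.

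The main obstacle is the weighted Poincar\'e inequality, and inside it the exclusion of the ``escape to infinity'' scenario: a priori a minimizing sequence could keep unit boundary mass while pushing all of its Dirichlet energy out toward the end. What rules this out is precisely the asymptotically flat geometry of $\ov{\M}_s$ — one end, $r(x)=|x|$ there, $r\ge 1$ globally: the radial capacity estimate on the end ties the value of $f$ near $S_R$ to $|df|_{L^2(E)}$, and the divergence of $\int_{\text{end}}|x|^{-2}$ keeps non-zero constants out of $W^{1,2}_{-1}$. The remaining ingredients — density of $C_c^\infty$, the trace theorem on the collar, the Poincar\'e–Wirtinger inequality on the bounded region, and weak lower semicontinuity of the Dirichlet energy — are standard.
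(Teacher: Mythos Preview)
Your proof is correct and follows essentially the same compactness argument as the paper: take a minimizing sequence normalized on the boundary, extract a weak limit in $W^{1,2}_{-1}$ and a strong limit in $L^2(\ov{\Sigma}_s)$, and derive a contradiction from the limit being simultaneously constant, decaying, and of unit trace norm. The one substantive addition you make is to justify explicitly, via the weighted Poincar\'e/Hardy inequality $\||x|^{-1}f\|_{L^2}\le C\,|df|_{L^2}$, that the minimizing sequence is bounded in $W^{1,2}_{-1}$; the paper invokes the compact embedding directly and leaves this boundedness implicit, so your version is in fact more complete on that point.
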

\begin{proof}
	Assume $ \{f_i\} $ is a sequence of  functions minimizing $ \sigma_s $, and satisfies $ |f_i|_{L^2(\ov{\Sigma}_{s})}=1 $. Because $ W_{-1}^{1,2}(\ov{\M}_{s}) $ is compactly embedded into $ L^2(\ov{\Sigma}_{s} )$, by above lemma, we can find a subsequence $ \{f_{i_k}\} $ and $ f\in W_{-1}^{1,2} $, such that $ \{f_{i_k}\} $ converge to $ f $ weakly in $ W_{-1}^{1,2} $, strongly in $ L^2(\ov{\Sigma}_{s}) $. Thus $ |f|_{L^2(\ov{\Sigma}_{s})}=1 $, and  $ |df|_{L^2(\ov{\M}_{s})}\le \underline{lim}_{k\to\infty} |df_{i_k}|_{L^2(\ov{\M}_{s})}=\sigma_s$. Suppose that $ \sigma_s=0 $, then $ |f|_{L^2(\ov{\Sigma}_{s})}=1 $, while  $ |df|_{L^2(\ov{\M}_{s})}=0$. So $ f $ must be a constant on $ \ov{\M}_{s} $ while at the same time decay at infinity, which indicates that it can only be zero. This contradicts the fact that  $ |f|_{L^2(\ov{\Sigma}_{s})}=1 $.
\end{proof}

\begin{theorem}
	Under Foliation B, suppose $$\lambda \frac{C_1^2}{C_3^2}(1+\frac{2 C_1^4}{C_2^2C_3^2})<4, $$ where $ C_1, C_2, C_3 $ are constants appearing in the gradient estimates in Theorem \ref{thm3}. Then for  $ s\in (0,s_0] $, we have the following lower bound for the ADM mass of the initial data set $ (\M,g,K) $:
	\begin{equation}
	m \geq \frac{\sigma_s C_s}{ 2(C_s+\sigma_s)}\sqrt{\frac{|\ov{\Sigma}_{s}|_{\ov{g}}}{16\pi}}
	\end{equation}
\end{theorem}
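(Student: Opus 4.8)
The plan is to feed the mass inequality of Proposition~\ref{ppmt} into a pointwise optimization, following the strategy of Herzlich \cite{herz}. First I would normalize the asymptotically constant spinor so that $|\psi_0|\equiv 1$; by the scale invariance of everything in sight this costs nothing. Applying Proposition~\ref{ppmt} with this $\psi_0$ (legitimate since $C_s>0$ for $s\in(0,s_0]$ under Condition~1) produces $\psi\in\HH_s$ such that $\Phi:=\psi_0+\psi$ satisfies $D\Phi=0$, $P^+_s\Phi=0$, and
\begin{equation*}
8\pi m \geq \int_{\ov{\M}_{s}}|\ov{\nabla}\Phi|^2+ \sqrt{\dfrac{\pi}{|\partial\ov{\M}_s|_{\ov{g}}}}\, C_s \int_{\partial\ov{\M}_s}|\Phi|^2 .
\end{equation*}

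The key step is to estimate the bulk Dirichlet term from below by a boundary $L^2$ quantity via the constant $\sigma_s$. Since $\psi\in\HH_s\subset W^{1,2}_{-1}(\ov{\M}_s)$ decays at infinity and $\Phi$ is smooth (elliptic regularity for $D\Phi=0$), the function $f:=1-|\Phi|$ lies in $W^{1,2}_{-1}(\ov{\M}_s)$: indeed $|f|=\big||\psi_0|-|\psi_0+\psi|\big|\leq|\psi|$ and $|df|=|d|\Phi||\leq|\ov{\nabla}\Phi|=|\ov{\nabla}\psi|$ by the Kato inequality (valid distributionally across the zero set of $\Phi$). Hence, by the variational definition of $\sigma_s$ and the fact that the induced metric on the level set of $f_0$ is unchanged so that $|\partial\ov{\M}_s|_{\ov{g}}=|\ov{\Sigma}_s|_{\ov{g}}$,
\begin{equation*}
\int_{\ov{\M}_{s}}|\ov{\nabla}\Phi|^2\geq \int_{\ov{\M}_{s}}|df|^2\geq \sigma_s\sqrt{\dfrac{\pi}{|\ov{\Sigma}_s|_{\ov{g}}}}\int_{\ov{\Sigma}_{s}}(1-|\Phi|)^2 ,
\end{equation*}
and combining with the displayed mass inequality gives
\begin{equation*}
8\pi m \geq \sqrt{\dfrac{\pi}{|\ov{\Sigma}_s|_{\ov{g}}}}\left(\sigma_s\int_{\ov{\Sigma}_{s}}(1-|\Phi|)^2+C_s\int_{\ov{\Sigma}_{s}}|\Phi|^2\right).
\end{equation*}

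Finally I would optimize pointwise on $\ov{\Sigma}_s$: for $t\ge 0$ the quadratic $\sigma_s(1-t)^2+C_s t^2$ attains its minimum $\tfrac{\sigma_s C_s}{\sigma_s+C_s}$ at $t=\tfrac{\sigma_s}{\sigma_s+C_s}$, so — using $\sigma_s>0$ from the preceding lemma and $C_s>0$ — the integrand above is pointwise at least $\tfrac{\sigma_s C_s}{\sigma_s+C_s}$. Integrating over $\ov{\Sigma}_s$ yields $8\pi m\geq \tfrac{\sigma_s C_s}{\sigma_s+C_s}\sqrt{\pi\,|\ov{\Sigma}_s|_{\ov{g}}}$, which is exactly the claimed bound after rearrangement. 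The main obstacle, and the only place real care is needed, is the regularity and decay bookkeeping for $f=1-|\Phi|$: one must confirm that $\Phi$ is smooth and $|\Phi|\to1$ at spatial infinity so that $f$ genuinely belongs to the weighted space entering the definition of $\sigma_s$, and that the Kato inequality together with the trace theorem makes the $\sigma_s$ comparison rigorous. All of this is standard, and everything else is elementary algebra.
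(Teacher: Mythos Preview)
Your proof is correct and follows essentially the same route as the paper: normalize $|\psi_0|=1$, invoke Proposition~\ref{ppmt}, apply the Kato inequality to pass from $|\ov{\nabla}\Phi|$ to $|d|\Phi||$, feed $1-|\Phi|\in W^{1,2}_{-1}$ into the definition of $\sigma_s$, and then optimize. The only cosmetic difference is in the last step: the paper expands $(1+v)^2$ via Young's inequality with a parameter $\epsilon$ and then sets $\epsilon=1+C_s^{-1}\sigma_s$, whereas you minimize the quadratic $\sigma_s(1-t)^2+C_s t^2$ pointwise; these are the same computation and yield the same constant $\sigma_sC_s/(\sigma_s+C_s)$.
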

\begin{proof}
	From Ineq.\eqref{ppmt eq} in Proposition \ref{ppmt}, and if for simplicity we assume $ \lim_{r\to\infty}|\psi_0|=1 $,
	we will  have:
	\begin{equation}\label{pen1}
	8\pi m\geq
	\int_{\ov{\M}_{s}}|\ov{\nabla}\psi_1|^2+\sqrt{\dfrac{\pi}{|\partial\ov{\M}_s|_{\ov{g}}    }} C_s \int_{\partial\ov{M_s}}|\psi_1|^2
	\end{equation}
	where $ \psi_1=\psi+\psi_0 $ is a spinor  such that  $ \lim_{r\to\infty}|\psi_1|=1 $.\par
	
	We now denote functions $ h=|\psi_1| $ and $ v=h-1 $. Then $ v\in W_{-1}^{1,2} $ as a function because of triangular formula. Because $ \ov{\nabla} $ is compatible with $ < , > $,  we have: 
	\begin{equation*}
	|\ov{\nabla}|\psi||^2=\frac{1}{4}|\psi|^{-2}|\ov{\nabla}(|\psi|^2)|^2
	=|\psi|^{-2}|<\ov{\nabla}\psi,\psi>|^2
	\leq|\ov{\nabla}\psi|^2
	\end{equation*}
	The above inequality  then implies:
	\begin{align}
	8\pi m&\geq
	\int_{\ov{\M}_{s}}|\ov{\nabla}\psi_1|^2+\sqrt{\dfrac{\pi}{|\partial\ov{\M}_s|_{\ov{g}}    }} C_s \int_{\partial\ov{M_s}}|\psi_1|^2\\
	\nonumber
	&\geq
	\int_{\ov{\M}_{s}}|\ov{\nabla}|\psi_1||^2+\sqrt{\dfrac{\pi}{|\partial\ov{\M}_s|_{\ov{g}}    }} C_s \int_{\partial\ov{M_s}}|\psi_1|^2\\
	\nonumber	
	&=
	\int_{\ov{\M}_{s}}|dv|^2+\sqrt{\dfrac{\pi}{|\partial\ov{\M}_s|_{\ov{g}}    }} C_s \int_{\partial\ov{M_s}}(1+v)^2
	\end{align}

	By Young's Inequality,
	\begin{equation}
	(1+v)^2\geq 1-\epsilon^{-1} +(1-\epsilon)v^2, \forall\epsilon>0
	\end{equation}
	From definition of $ \sigma_s $ we know that 
	
	\begin{equation}
	|dv|_{L^2(\ov{M}_{s})}^2 \geq \sigma_s \sqrt{\dfrac{\pi}{|\ov{\Sigma}_s|_{\ov{g}_s}}}  |v|_{L^2(\ov{\Sigma}_{s})}^2
	\end{equation} 
	
	Thus put together all the above inequalities and set $ \epsilon=1+ C^{-1}_s\sigma_s $, we get the following Penrose-like inequality for $ m $:
	\begin{align*}
	8\pi m
	&\geq\int_{\ov{\M}_{s}}|dv|^2+\sqrt{\dfrac{\pi}{|\partial\ov{\M}_s|_{\ov{g}}    }} C_s \int_{\partial\ov{M_s}}(1+v)^2\\
	&\geq|dv|_{L^2(\ov{M}_{s})}^2+
	\sqrt{\dfrac{\pi}{|\partial\ov{\M}_s|_{\ov{g}}    }}C_s\left( (1-\epsilon^{-1})|\partial\ov{\M}_s|_{\ov{g}}     +(1-\epsilon)|v|_{L^2(\partial\ov{\M}_s)}^2\right)\\
	&=\sqrt{\dfrac{\pi}{|\partial\ov{\M}_s|_{\ov{g}}    }}C_s (1-\epsilon^{-1})|\partial\ov{\M}_s|_{\ov{g}}    
	+|dv|_{L^2(\ov{M}_{s})}^2
	+\sqrt{\dfrac{\pi}{|\partial\ov{\M}_s|_{\ov{g}}    }}C_s(1-\epsilon)|v|_{L^2(\partial\ov{\M}_s)}^2\\
	&\geq\sqrt{\dfrac{\pi}{|\partial\ov{\M}_s|_{\ov{g}}    }}\left(C_s (1-\epsilon^{-1})|\partial\ov{\M}_s|_{\ov{g}}    
	+(\sigma_s+C_s(1-\epsilon))|v|_{L^2(\partial\ov{\M}_s)}^2\right)\\
	&= \frac{\sigma_s C_s}{ C_s+\sigma_s}\sqrt{\pi|\partial\ov{\M}_s|_{\ov{g}}    }	
	\end{align*}
\end{proof}

The similar reasoning holds for Foliation C if Condition 2 of Theorem \ref{thm4} holds, and thus we have a Penrose-like inequality following Proposition \ref{ppmtgamma}:

\begin{theorem}
	Under Foliation C, denote $\displaystyle \sigma'_\gamma=\sqrt{\dfrac{|\ov{\Sigma}_\gamma|_{\ov{g}}}{\pi}}\inf_{f\in C_c^\infty(\ov{\Sigma}_{\gamma})} \dfrac{|df|_{L^2(\ov{M}_{\gamma})}^2}{|f|_{L^2(\ov{\Sigma}_{\gamma})}^2}
	\linebreak=\sqrt{\dfrac{|\ov{\Sigma}_\gamma|_{\ov{g}}}{\pi}}\inf_{f\in W_{-1}^{1,2}(\ov{\Sigma}_{\gamma})} \dfrac{|df|_{L^2(\ov{M}_{\gamma})}^2}{|f|_{L^2(\ov{\Sigma}_{\gamma})}^2}$, then $ \sigma'_\gamma>0$ for any $ \gamma>0 $.
	If we suppose $|q|_{\ov{g}}<2\sqrt{\lambda}$ near $ \Sigma $, then for $ \gamma\in(0,\gamma_0] $, we have the following lower bound for the ADM mass of the initial data set $ (\M,g,K) $:
	\begin{equation}
	m \geq \frac{\sigma'_\gamma C'_\gamma}{ 2(C'_\gamma+\sigma'_\gamma)}\sqrt{\frac{|\ov{\Sigma}_{\gamma}|_{\ov{g}}}{16\pi}}
	\end{equation}
\end{theorem}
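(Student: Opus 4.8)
The plan is to run verbatim the argument that produced the Foliation B Penrose-like inequality, with Proposition \ref{ppmtgamma} replacing Proposition \ref{ppmt}, after first disposing of the claim $\sigma'_\gamma>0$. For the latter, I would imitate the lemma on $\sigma_s$: pick a minimizing sequence $\{f_i\}\subset W^{1,2}_{-1}(\ov{\Sigma}_\gamma)$ normalized by $|f_i|_{L^2(\ov{\Sigma}_\gamma)}=1$; since $W^{1,2}_{-1}(\ov{\M}_\gamma)$ is reflexive and compactly embedded into $L^2(\ov{\Sigma}_\gamma)$, pass to a subsequence converging weakly in $W^{1,2}_{-1}$ and strongly in $L^2(\ov{\Sigma}_\gamma)$ to some $f$ with $|f|_{L^2(\ov{\Sigma}_\gamma)}=1$ and $|df|_{L^2(\ov{\M}_\gamma)}\le\liminf_k|df_{i_k}|_{L^2(\ov{\M}_\gamma)}=\sigma'_\gamma$ by weak lower semicontinuity. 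If $\sigma'_\gamma=0$ then $f$ is constant on $\ov{\M}_\gamma$ yet decays at infinity, hence $f\equiv0$, contradicting $|f|_{L^2(\ov{\Sigma}_\gamma)}=1$. The two infima in the definition agree because $C_c^\infty$ is dense in $W^{1,2}_{-1}$.

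For the mass bound, assume Condition 2 so that $C'_\gamma>0$ for $\gamma\in(0,\gamma_0]$. Proposition \ref{ppmtgamma} then furnishes, for each smooth spinor $\psi_0$ constant near infinity with $P'^+_\gamma\psi_0=0$, a unique $\psi\in\HH'_\gamma$ solving $D(\psi_0+\psi)=0$, $P'^+_\gamma(\psi_0+\psi)=0$, together with
\[
8\pi|\psi_0|^2 m\;\ge\;\int_{\ov{\M}_\gamma}|\ov{\nabla}(\psi+\psi_0)|^2+\sqrt{\tfrac{\pi}{|\partial\ov{\M}_\gamma|_{\ov{g}}}}\,C'_\gamma\int_{\partial\ov{\M}_\gamma}|\psi+\psi_0|^2 .
\]
Normalizing $|\psi_0|\to1$ at infinity and setting $\psi_1=\psi+\psi_0$, $h=|\psi_1|$, $v=h-1$ (which lies in $W^{1,2}_{-1}$ by the triangle inequality), the Kato-type inequality $|\ov{\nabla}|\psi_1||^2\le|\ov{\nabla}\psi_1|^2$ yields
\[
8\pi m\;\ge\;\int_{\ov{\M}_\gamma}|dv|^2+\sqrt{\tfrac{\pi}{|\partial\ov{\M}_\gamma|_{\ov{g}}}}\,C'_\gamma\int_{\partial\ov{\M}_\gamma}(1+v)^2 .
\]
Now apply Young's inequality $(1+v)^2\ge 1-\epsilon^{-1}+(1-\epsilon)v^2$, invoke $|dv|_{L^2(\ov{\M}_\gamma)}^2\ge\sigma'_\gamma\sqrt{\pi/|\ov{\Sigma}_\gamma|_{\ov{g}}}\,|v|_{L^2(\ov{\Sigma}_\gamma)}^2$, and choose $\epsilon=1+(C'_\gamma)^{-1}\sigma'_\gamma$ so the $v^2$-terms aggregate non-negatively; the residual $(1-\epsilon^{-1})$-term gives $8\pi m\ge\frac{\sigma'_\gamma C'_\gamma}{C'_\gamma+\sigma'_\gamma}\sqrt{\pi|\partial\ov{\M}_\gamma|_{\ov{g}}}$, which is the asserted inequality once one notes $|\partial\ov{\M}_\gamma|_{\ov{g}}=|\ov{\Sigma}_\gamma|_{\ov{g}}$.

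I do not expect a genuine obstacle here: the coercive estimate \eqref{bochnergamma}, the existence of the asymptotically-constant harmonic spinor, and the positivity of $C'_\gamma$ under Condition 2 are all established earlier in the Foliation C subsection, so this theorem is essentially a bookkeeping variant of the Foliation B theorem. The two points needing (routine) care are verifying that $v=|\psi_1|-1$ genuinely belongs to $W^{1,2}_{-1}(\ov{\M}_\gamma)$ so that the $\sigma'_\gamma$ inequality is applicable, and that the decay of $q$ together with the ADM mass formula still annihilates the $S_r$ boundary term as $r\to\infty$; both are inherited unchanged from the Foliation B computation.
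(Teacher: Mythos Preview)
Your proposal is correct and matches the paper's approach exactly: the paper does not give a separate proof for this Foliation C theorem but simply remarks that ``the similar reasoning holds for Foliation C if Condition 2 of Theorem \ref{thm4} holds,'' invoking Proposition \ref{ppmtgamma} in place of Proposition \ref{ppmt} and relying on the same $\sigma_s$-lemma, Kato inequality, Young's inequality, and choice $\epsilon=1+(C'_\gamma)^{-1}\sigma'_\gamma$ that you outline. The two minor points you flag (that $v\in W^{1,2}_{-1}$ and the vanishing of the $S_r$ term) are indeed inherited verbatim from the Foliation B computation.
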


We now denote $ \theta_s= \dfrac{\sigma_s C_s}{2(C_s+\sigma_s)}\sqrt{\dfrac{|\partial\ov{\M}_s|_{\ov{g}}}{ |\partial\M|_g}}$, $ \theta'_\gamma= \dfrac{\sigma'_\gamma C'_\gamma}{2(C'_\gamma+\sigma'_\gamma)}\sqrt{\dfrac{|\partial\ov{\M}_\gamma|_{\ov{g}}}{ |\partial\M|_g}}$.

Denote 
\begin{equation}
\label{theta}
\theta=\left\{
\begin{aligned}
\sup_{s\in(0,s_0)}\theta_s &   &  \mbox{if}\mbox{ }\mbox{Condition}\mbox{ }\mbox{1}\mbox{ }\mbox{holds,} \\
\sup_{\gamma\in(0,\gamma_0)}\theta'_\gamma &   & \mbox{if}\mbox{ }\mbox{Condition}\mbox{ }\mbox{2}\mbox{ }\mbox{holds.}
\end{aligned}
\right.
\end{equation}

Then either Condition 1 or 2 of Theorem \ref{thm4} implies that there is a constant $ \theta>0 $ such that

\begin{equation}
m\ge \theta\sqrt{\dfrac{|\Sigma|_g}{16\pi}}
\end{equation}

\subsection{Schwarzschild Time Symmetric Case}

Now we calculate the value of $ \theta $ in Schwarzschild  time symmetric case. Suppose now our initial data set is the Schwarzschild  time symmetric slice, and for simplicity we assume $ m=1 $. Following J. Metzger's construction procedure in \cite[Theorem 3.1]{metzger}, each solution $ f_t $ of   $ \J[f_t]=tf_t $ with Dirichlet boundary condition under spherically symmetric setting will still be spherically symmetric. Thus the Jang's slice $ \ov{\M}=graph f_0 $, which is the limiting manifold of   $ graph f_t $ when $ t\to 0 $, is also spherically symmetric. It is then straightforward to find the only spherically symmetric solution $ f_0 $ of $ \J[f_0]=0 $ on $ \M $ such that it blows up at $ r=2 $, and decays to zero at infinity. By solving the ODE, we know $ f_0  $ satisfies: 
$$ f_0'(r)=-\sqrt{\dfrac{16r}{(r-2)(r^4-16)}} $$

We  use the coordinate system $ \{ r,\theta,\phi \} $ for both $ \M $ and $ \ov{\M} $. First of all, $ \dfrac{\partial}{\partial r}\theta^+(S_r)|_{r=2}>0 $, thus the horizon $ \{r=2\} $ is outermost strictly stable MOTS.  Also under this coordinate system, the distance function $ \tau $ to the horizon $ \{ r=2 \} $ on $ \M $  is of order $ \sqrt{r-2} $, thus $$ \log \tau=\dfrac{1}{2}\log (r-2) +O(1). $$ Together with the fact that  $$ f_0(r)=-\log(r-2)+O(1)=-2\log\tau+O(1) ,$$ by Theorem 1 the principal eigenvalue of the horizon $ \{r=2\} $ in $ \M $ must be $ \lambda =\dfrac{1}{4}$.\par

Now the induced metric $ \ov{g} $ on $ \ov{\M} $ is
$$\ov{g}=g+df_0^2=\frac{r^5}{(r-2)(r^4-16)}dr^2+r^2d\theta^2+r^2\sin^2\theta d\phi^2 ,$$ and it can be calculated that:
\begin{align*}
|\ov{\Sigma}_r|_{\ov{g}} =&4\pi r^2\\
\ov{H}_{\ov{\Sigma}_r}=&2\sqrt{\dfrac{(r-2)(r^4-16)}{r^7}}  \\ q(\ov{e}_3)=&-\dfrac{32}{\sqrt{r^7(r^3+2r^2+4r+8)}} 	
\end{align*}

Thus 
\begin{align*}
C_r=& 4-(\ov{H}_{\partial\ov{\M}_r}-q(\ov{e}_3))\sqrt{\dfrac{|\partial\ov{\M}_r|_{\ov{g}}}{\pi}}\\
=&4-4 \sqrt{\dfrac{(r-2)(r^4-16)}{r^5}} -\dfrac{64 }{\sqrt{r^5(r^3+2r^2+4r+8)}}  
\end{align*}
$ C_r $ is positive for $ \forall r\geq 2 $.\par

The following lemma gives the value of $ \sigma_r $:

\begin{lemma}\label{sigma}
	Suppose $ (M,g) $ is a 3-dim spherically symmetric Riemannian manifold equipped with $ g=F^2(r)(r)dr^2+r^2d\theta^2+r^2\sin^2\theta d\phi^2$ and boundary $ \partial M =\{r=r_0\}$. Then for any  $ f\in C_c^\infty(M) $  such that $ |f|_{L^2(\partial M)}=1 $, we have 
	\begin{equation}
	|df|^2_{L^2(M)}\geq\frac{1}{r_0^2 \int_{r_0}^{\infty}\frac{F(r)}{r^2}}
	\end{equation} 
	and the equality holds if and only if $ f $ is the spherically symmetric harmonic solution on $ M $.
	
\end{lemma}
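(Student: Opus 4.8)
The plan is to reduce the boundary normalization on the sphere $\partial M=\{r=r_0\}$ to a family of one–dimensional capacity inequalities along the radial directions. Writing $g = F^2\,dr^2 + r^2 g_{S^2}$ with $g_{S^2}$ the round metric on $S^2$, one computes $dV_g = F(r)\,r^2\,dr\,d\omega$ (with $d\omega$ the round measure, $\int_{S^2}d\omega = 4\pi$), so that for $f\in C_c^\infty(M)$
\[
|df|_{L^2(M)}^2 \;=\; \int_{r_0}^{\infty}\!\!\int_{S^2}\Big(\tfrac{r^2}{F(r)}(\partial_r f)^2 \;+\; F(r)\,|\nabla_{S^2}f|_{g_{S^2}}^2\Big)\,d\omega\,dr ,
\]
while $|f|_{L^2(\partial M)}^2 = r_0^2\int_{S^2} f(r_0,\omega)^2\,d\omega$. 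Since $F>0$ the tangential term is nonnegative, so it suffices to bound $\int_{r_0}^{\infty}\tfrac{r^2}{F}(\partial_r f)^2\,dr$ from below for each fixed $\omega$.

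The key step I would isolate is the following one–dimensional inequality: if $\phi\in C^\infty([r_0,\infty))$ vanishes for large $r$, then
\[
\int_{r_0}^{\infty}\frac{r^2}{F(r)}\,\phi'(r)^2\,dr \;\ge\; \frac{\phi(r_0)^2}{\displaystyle\int_{r_0}^{\infty}\frac{F(r)}{r^2}\,dr}.
\]
This follows at once from Cauchy--Schwarz applied to the factorization $\phi(r_0) = -\int_{r_0}^{\infty}\phi'\,dr = -\int_{r_0}^{\infty}\big(\tfrac{r}{\sqrt F}\phi'\big)\big(\tfrac{\sqrt F}{r}\big)\,dr$. (Here one uses that $\int_{r_0}^\infty F/r^2\,dr<\infty$, which holds because the end is asymptotically flat so $F\to1$; were this integral infinite the capacity would vanish and no positive lower bound could hold.) Applying this with $\phi = f(\cdot,\omega)$ for each $\omega$, integrating over $S^2$, and inserting the normalization $r_0^2\int_{S^2}f(r_0,\omega)^2\,d\omega = 1$ gives $|df|_{L^2(M)}^2 \ge \big(r_0^2\int_{r_0}^\infty F/r^2\,dr\big)^{-1}$, which is the asserted inequality.

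For the equality clause I would run the two estimates backwards. Equality forces the tangential term to vanish, i.e.\ $f=f(r)$ is radial, and equality in Cauchy--Schwarz forces $\phi'(r)$ to be a multiple of $F(r)/r^2$, i.e.\ $\tfrac{d}{dr}\!\big(\tfrac{r^2}{F}\phi'\big)=0$; this is precisely the equation $\Delta_g f = 0$ for a spherically symmetric $f$, so the extremizer is, up to sign and the fixed scale, the spherically symmetric harmonic solution $G(r)=\big(\int_{r_0}^\infty F/r'^2\,dr'\big)^{-1}\int_r^\infty F(r')/r'^2\,dr'$. Since $C_c^\infty(M)$ is dense in $W^{1,2}_{-1}(M)$ the two infima considered in the lemma agree; I would only remark that $G$ lies in $W^{1,2}_{-1}(M)$ but is not compactly supported, so the minimum is attained only after passing to this completion. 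The computation is otherwise routine; the one point requiring a little care is the finiteness of $\int_{r_0}^\infty F/r^2\,dr$ together with the observation that the optimizer lives in $W^{1,2}_{-1}$ rather than in $C_c^\infty$.
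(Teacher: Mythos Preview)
Your argument is correct and is essentially identical to the paper's proof: both drop the tangential part of $|df|^2$ and apply Cauchy--Schwarz along each radial ray via Fubini, then read off the equality case as $f=f(r)$ with $f'(r)=C\,F(r)/r^2$, i.e.\ the radial harmonic function. Your additional remarks on the finiteness of $\int_{r_0}^\infty F/r^2\,dr$ and on the minimizer lying in $W^{1,2}_{-1}$ rather than $C_c^\infty$ are valid refinements that the paper leaves implicit.
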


\begin{proof}
	The inequality can be proved by Fubini's theorem and Cauchy-Schwarz Inequality:	
	{\allowdisplaybreaks	\begin{align*}
		\int_M |df|^2=& \int d\theta\int d\phi \int_{r_0}^{\infty}
		|df|^2	\sqrt{det g}dr\\
		\geq& \int d\theta\int d\phi \int_{r_0}^{\infty}
		\frac{1}{F^2(r)} (\partial_r f(r,\theta,\phi))^2	\sqrt{det g}dr\\
		=&  \int\sin\theta d\theta d\phi  \int_{r_0}^{\infty} (\partial_r f(r,\theta,\phi))^2 \frac{r^2}{F(r)}   \\
		\geq&    \iint \sin\theta d\theta d\phi \frac{(\int_{r_0}^{\infty}\partial_r f(r,\theta,\phi))^2}{\int_{r_0}^{\infty}\frac{F(r)}{r^2}}    \\	
		=&   \frac{1}{\int_{r_0}^{\infty}\frac{F(r)}{r^2}}  \iint   f(r_0,\theta,\phi)^2\sin\theta d\theta d\phi  \\	
		=& \frac{1}{\int_{r_0}^{\infty}\frac{F(r)}{r^2}} \frac{|f|_{L^2(\partial M)}^2  }{r_0^2}   	
		=\frac{1}{r_0^2\int_{r_0}^{\infty}\frac{F(r)}{r^2}} 	
		\end{align*}}
	
	The equality holds if and only if $ f=f(r) $ and $ f'(r)=C \dfrac{F(r)}{r^2} $	. It can be checked  that if this happens then $ \Delta_M f=0 $.	
\end{proof}

It is then straightforward to calculate the value of $ \sigma_r $ by Lemma \ref{sigma}:
\begin{align*}
\sigma_r=&\sqrt{\dfrac{|\ov{\Sigma}_r|_{\ov{g} }}{\pi}}\inf_{f\in C_c^\infty(\ov{\Sigma}_{r})} \dfrac{|df|_{L^2(\ov{M}_{r})}^2}{|f|_{L^2(\ov{\Sigma}_{r})}^2}\\
=&	\frac{2 }{r\int_{r }^{\infty}\sqrt{\frac{x}{(x-2)(x^4-16)}}dx}
\end{align*}

Then

\begin{align*}
\theta_r=& \dfrac{\sigma_r C_r}{2(C_r+\sigma_r)}\sqrt{\dfrac{|\partial\ov{\M}_r|_{\ov{g}}}{|\partial\M|_g}}\\
=&\frac{r }{4(\sigma_r^{-1}+C_r^{-1})}\\
=& \frac{r}{2r\int_{r }^{\infty}\sqrt{\frac{x}{(x-2)(x^4-16)}}dx+\frac{1}{1-  \sqrt{\frac{(r-2)(r^4-16)}{r^5}} -\frac{16 }{\sqrt{r^5(r^3+2r^2+4r+8)}} }}
\end{align*}

$ \theta_r $ is monotonically increasing in $ r $, $ \lim_{r\to2}\theta_r=0 $, $ \lim_{r\to\infty}\theta_r=1 $. Although $ \lim_{r\to2}\theta_r=0 $ is not a good property, $ \theta_r $  quickly becomes significantly non-zero when it leaves away from the horizon $\{r=2\}  $. We list a few numerical results here: 
\begin{table}[h!]
	\begin{center}
		\caption{Numerical Values of $ \theta_r $}
		\label{tab:table1}
		\begin{tabular}{l| r} 
			
			$r$ & $\theta_r$ \\
			\hline
			2.001 & 0.3198 \\
			2.01 & 0.3922 \\
			2.1 & 0.5084 \\
			2.5 & 0.6466 \\
			3.0 & 0.7292  \\
		\end{tabular}
	\end{center}
\end{table}

Thus for example if we look at the domain $ U=\{r\in(2,2.1)\} $, then $ \theta=\sup_U \theta_r\approx 0.5 $.\par

However, the property that $ \lim_{r\to2}\theta_r=0 $ is still not a good one. This property means that when we approach the cylindrical end we will lose more and more information, which contradicts our intuition. The following calculation shows that most of the information loss happens at the last step: when we apply Young's Inequality together with capacity.\par  

We first calculate the Green's function of the Dirac operator on $ \M $.
\begin{lemma}\label{spin}
	Suppose $ (M,g) $ is a 3-dim spherically symmetric Riemannian manifold equipped with $ g=F^2(r)dr^2+r^2d\theta^2+r^2\sin^2\theta d\phi^2$ and boundary $ \partial M =\{r=r_0\}$. Denote $ D $ to be the Dirac operator on $ M $. Then the following:
	
	\begin{align*}
	\psi=c_1 e^{\frac{i\phi}{2}}\begin{bmatrix}
	\cos\frac{\theta}{2}  \\ -\sin\frac{\theta}{2} 
	\end{bmatrix}h(r)
	+c_2 e^{-\frac{i\phi}{2}}\begin{bmatrix}
	\sin\frac{\theta}{2}  \\ \cos\frac{\theta}{2} 
	\end{bmatrix}h(r)
	\end{align*}
	is a solution of $ D\psi=0 $.
	Here $ c_1,c_2 $ are complex numbers, and 
	$$ h(r)=C e^{-\int_{r}^{\infty}\frac{F(s)-1}{s}ds}$$
	where $ C $ is a real constant. Furthermore, we have 
	$ |\psi|=(|c_1|^2+|c_2|^2)h(r)$, and  
	$ \slashed{D}\psi=-\frac{1}{r_0 }\psi $ on the boundary $ \partial M =\{r=r_0\}$
	
\end{lemma}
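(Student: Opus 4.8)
The plan is to compute the Dirac operator $D$ in the orthonormal frame adapted to the warped product, to recognise that on each sphere $\{r=\mathrm{const}\}$ it reduces to a twist of the round $S^2$ Dirac operator, and then to separate variables using the explicit lowest eigenspinors of the latter. Concretely I would take the coframe $e^1=F\,dr$, $e^2=r\,d\theta$, $e^3=r\sin\theta\,d\phi$ with dual frame $E_1=F^{-1}\partial_r$, $E_2=r^{-1}\partial_\theta$, $E_3=(r\sin\theta)^{-1}\partial_\phi$, solve Cartan's structure equations to obtain the connection one-forms $\omega_{21}=(rF)^{-1}e^2$, $\omega_{31}=(rF)^{-1}e^3$, $\omega_{32}=r^{-1}\cot\theta\,e^3$, and substitute into the local expression $D=\sum_a\gamma^a\big(E_a+\tfrac14\sum_{b,c}\omega_{bc}(E_a)\gamma^b\gamma^c\big)$ for a fixed two-dimensional representation of the Clifford algebra. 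This yields a decomposition of the shape $D=\gamma^1\big(F^{-1}\partial_r+(rF)^{-1}\big)+r^{-1}\,\slashed{D}_{S^2}$, where $\slashed{D}_{S^2}$ is the round unit-sphere Dirac operator acting in the frame $E_2,E_3$ (with the usual $\gamma^1$-twist), and $(rF)^{-1}$ equals half the mean curvature of $\{r=\mathrm{const}\}$ in $(M,g)$.

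Next I would invoke the classical description of the bottom of the Dirac spectrum on the round $S^2$ (B\"ar, Hijazi): the lowest eigenvalue has modulus one, with two-dimensional eigenspace spanned by the Killing-type spinors $\Phi_1=e^{i\phi/2}(\cos\tfrac\theta2,-\sin\tfrac\theta2)^{T}$ and $\Phi_2=e^{-i\phi/2}(\sin\tfrac\theta2,\cos\tfrac\theta2)^{T}$, which satisfy the pointwise identities $|\Phi_1|\equiv|\Phi_2|\equiv1$ and $\langle\Phi_1,\Phi_2\rangle\equiv0$. Writing the ansatz as $\psi=h(r)(c_1\Phi_1+c_2\Phi_2)$ and using $\slashed{D}_{S^2}(c_1\Phi_1+c_2\Phi_2)=\varepsilon(c_1\Phi_1+c_2\Phi_2)$ for the appropriate sign $\varepsilon$, the equation $D\psi=0$ collapses to the single scalar ODE $r h'(r)=(F(r)-1)h(r)$. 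Since $F\to1$ at infinity, the unique solution that is asymptotically constant is $h(r)=C\exp\!\big(-\int_r^\infty\tfrac{F(s)-1}{s}\,ds\big)$, which is the asserted radial profile; as an independent check, the same ODE drops out in one line from conformal covariance of the Dirac operator, writing $g=\Omega^2\delta$ with $\Omega^{-1}(r)=\exp\!\big(-\int_r^\infty\tfrac{F(s)-1}{s}\,ds\big)$ and observing that a constant spinor on flat $\R^3$ transforms to $\Omega^{-1}$ times a $(\theta,\phi)$-dependent spinor lying in the span of $\Phi_1,\Phi_2$. The norm formula for $\psi$ is then immediate from the two pointwise identities above.

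For the boundary statement, the restriction of $\psi$ to $\{r=r_0\}$ is $h(r_0)(c_1\Phi_1+c_2\Phi_2)$, an eigenspinor of the Dirac operator of the round two-sphere of radius $r_0$; the relevant boundary operator is the twisted operator $-\slashed{D}\psi=\ov{e}_3\cdot\ov{e}_1\cdot\ov{\slashed{\nabla}}_{\ov{e}_1}\psi+\ov{e}_3\cdot\ov{e}_2\cdot\ov{\slashed{\nabla}}_{\ov{e}_2}\psi$ with $\ov{e}_3=E_1$ the outward unit normal and $\ov{e}_1=E_2$, $\ov{e}_2=E_3$. Its eigenvalue on this eigenspace is $-1/r_0$, in accordance with the sharp estimate $|\lambda_1|\geq2\sqrt{\pi/\mathrm{Area}}$, which for the round sphere of radius $r_0$ gives exactly $2\sqrt{\pi/(4\pi r_0^2)}=1/r_0$; hence $\slashed{D}\psi=-\tfrac1{r_0}\psi$ on the boundary, completing the proof.

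The only genuine difficulty is bookkeeping: one must fix mutually consistent sign and normalisation conventions — the representation of the $\gamma$-matrices, the sign in the spin connection, the precise form of $\slashed{D}_{S^2}$ inside $D$ (including the $\gamma^1$-twist), and the sign $\varepsilon$ of the $S^2$-eigenvalue — so that the radial ODE comes out as $rh'=(F-1)h$ and the boundary eigenvalue comes out as $-1/r_0$ rather than $+1/r_0$. Each step is routine in isolation, and the conformal-covariance computation provides a welcome cross-check that pins down $h$ without reference to any spinor convention.
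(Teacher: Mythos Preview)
Your proposal is correct and follows essentially the same route as the paper: build the adapted orthonormal frame, compute the connection one-forms from Cartan's structure equations, and separate variables to reduce $D\psi=0$ to the radial ODE $rh'=(F-1)h$. The paper fixes explicit Pauli matrices and verifies the ansatz by direct substitution, whereas you package the spherical part as $r^{-1}\slashed{D}_{S^2}$ and cite the known lowest eigenspinors on the round $S^2$; this is the same computation organised more conceptually, and your conformal-covariance cross-check is an additional consistency argument not present in the paper.
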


\begin{proof}
	Denote the associated orthonormal frame $ \{e_i\} $ and coframe $ \{\omega_i\} $ on $ M $ by:
	\begin{align*}
	e_1=\frac{1}{r}\frac{\partial }{\partial\theta},\mbox{   }\mbox{   }\mbox{   }\mbox{   }\mbox{   } e_2=\frac{1}{r\sin\theta}\frac{\partial }{\partial \phi},\mbox{   }\mbox{   }\mbox{   }\mbox{   }\mbox{   } 
	e_3=\frac{1}{F(r)}\frac{\partial }{\partial r}.
	\end{align*}
	and 
	\begin{align*}
	\omega_1=rd\theta,  \mbox{   }\mbox{   }\mbox{   }\mbox{   }\mbox{   }  \omega_2=r\sin\theta d\phi, \mbox{   }\mbox{   }\mbox{   }\mbox{   }\mbox{   }
	\omega_3=F(r)dr.
	\end{align*}
	
	The connection 1-form $ \{\omega_{ij}\} $ is given by $ d\omega_i=-\omega_{ij}\wedge\omega_j $. Calculation shows:
	\begin{align*}
	\omega_{31}=-\frac{1}{rF(r)}\omega_1, \mbox{   }\mbox{   }\mbox{   }\mbox{   }\mbox{   }  \omega_{32}=-\frac{1}{rF(r)}\omega_2,  \mbox{   }\mbox{   }\mbox{   }\mbox{   }\mbox{   }  \omega_{12}=-\frac{\cot\theta}{r}\omega_2.
	\end{align*}
	
	The spin connection is given by:
	\begin{equation*}
	{\nabla}=d-\frac{1}{4}\omega_{ij}\otimes e_i \cdot e_j  =d-\frac{1}{2}\omega_{12}\otimes e_1 \cdot e_2  -\frac{1}{2}\omega_{13}\otimes e_1 \cdot e_3 \cdot-\frac{1}{2}\omega_{23}\otimes e_2 \cdot e_3  
	\end{equation*}
	
	Fix the following Pauli matrix throughout this section
	\begin{align*}
	e_1\rightarrow\begin{bmatrix}
	&i    \\
	i\mbox{ }&   
	\end{bmatrix}, 
	\mbox{   }\mbox{   }\mbox{   }\mbox{   }\mbox{   }
	e_2\rightarrow\begin{bmatrix}
	&1    \\
	-1&  
	\end{bmatrix},
	\mbox{   }\mbox{   }\mbox{   }\mbox{   }\mbox{   }
	e_3\rightarrow\begin{bmatrix}
	i       &    \\
	& -i 
	\end{bmatrix}.
	\end{align*}

	Then the Dirac operator is: 
	\begin{align*}
	D =&e_1\cdot{\nabla}_{e_1}+e_2\cdot{\nabla}_{e_2}+e_3\cdot{\nabla}_{e_3}\\
	=&e_3\cdot\frac{1}{F(r)}\frac{\partial }{\partial r}
	+e_1\cdot\frac{1}{r}\frac{\partial }{\partial\theta}
	+ e_2\cdot\frac{1}{r\sin\theta}\frac{\partial }{\partial \phi} \\
	&+\frac{1}{2r F(r)}(e_1\cdot e_3\cdot e_1+e_2\cdot e_3\cdot e_2)
	+\frac{\cot\theta}{2r}e_2\cdot e_1\cdot e_2	
	\end{align*}
	
	Then by separation of variable, we can find a solution of $ D\psi=0 $:
	\begin{align*}
	\psi=c_1 e^{\frac{i\phi}{2}}\begin{bmatrix}
	\cos\frac{\theta}{2}  \\ -\sin\frac{\theta}{2} 
	\end{bmatrix}h(r)
	+c_2 e^{-\frac{i\phi}{2}}\begin{bmatrix}
	\sin\frac{\theta}{2}  \\ \cos\frac{\theta}{2} 
	\end{bmatrix}h(r)
	\end{align*}
	
	where $ c_1,c_2 $ are complex numbers, and $ h(r) $ is a real-value function satisfies:
	\begin{equation*}
	rh'(r)=(F(r)-1)h(r)
	\end{equation*}
	then 
	$$ h(r)=C e^{-\int_{r}^{\infty}\frac{F(s)-1}{s}ds}$$
	It can be checked that $ |\psi|=(|c_1|^2+|c_2|^2)h(r)$.  Also
	on the boundary $ \partial M=S_{r_0} $, the orthonormal frame will be $ \{e_1,e_2\} $, and $ \nu=e_3 $. It is then straightforward to get the following result on $ S_{r_0} $:
	\begin{equation*}\psi=e_3\cdot e_1\cdot {\nabla}^{S_{r_0} }_{e_1}\psi+e_3\cdot e_2\cdot {\nabla}^{S_{r_0} }_{e_2}\psi=\frac{1}{r_0 }\psi
	\end{equation*}
	where $ {\nabla}^{S_{r_0}}  $  is the spin connection on the boundary.

	Thus $ \slashed{D}\psi=-\frac{1}{r_0 }\psi $.
\end{proof}

We apply the above lemma to the Jang's slice $ \ov{\M} $, then
$ F(r)=\sqrt{\dfrac{r^5}{(r-2)(r^4-16)}} $.
If we require that $ h(r)\to 1 $ when $ r\to\infty $,
then 
$$ h(r)=e^{-\int_{r}^{\infty}\frac{F(s)-1}{s}ds}$$
Under this setting $ \psi $ is constant at infinity. It can also be checked that $ h(r)\to 0 $ when $ r\to 2 $, and $ h(r)\to 1 $ when $ r\to\infty $. Thus if we set $ |c_1|^2+|c_2|^2=1 $, then $ |\psi|\to 1 $ when $ r\to \infty $.\par

Then 
\begin{align*}
&\lim_{r\to2}(\int_{\ov{\M}_{r}}|dh|^2+\sqrt{\dfrac{\pi}{|\partial\ov{\M}_r|_{\ov{g}}    }} C_r \int_{\partial\ov{M_r}}h^2)\\
=&\lim_{r\to2}\int_{\ov{\M}_{r}}|dh|^2\\
=&\iint d\theta d\phi \int_{2}^{\infty} \frac{1}{F(r)} (h'(r))^2\sqrt{det g}dr\\
=&4\pi\int_{2}^{\infty} \frac{r^2}{\sqrt{F(r)}}( h'(r))^2dr
\end{align*}
By numerical calculation $ \int_{2}^{\infty} \frac{r^2}{\sqrt{F(r)}}( h'(r))^2dr\approx 0.6795 $. Although it is not sharp (sharp value should be 2), it is not zero, compared to the fact that $ \lim_{r\to2}\theta_r=0 $.\par 

There are some other information loss. One of them happens when we are applying the Ineq.\eqref{isometric}:
\begin{equation}
|\ov{\nabla}\psi|^2+|q|^2|\psi|^2+q(|\psi|^2)\ge 0
\end{equation}

Another happens when we are using the inequality $|\ov{\nabla}|\psi|| \leq|\ov{\nabla}\psi| $:

One can  calculate that: 
\begin{equation*}
\lim_{r\to2}(\int_{\ov{\M}_{r}}|\ov{\nabla}\psi|^2+\sqrt{\dfrac{\pi}{|\partial\ov{\M}_r|_{\ov{g}}    }} C_r \int_{\partial\ov{M_r}}|\psi|^2)\approx 4\pi* 1.0193
\end{equation*}
which is a little sharper than before.\par

\subsection{Conclusion}
We prove a Penrose-like inequality $ m\ge \theta\sqrt{\dfrac{|\Sigma|_g}{16\pi}} $ under two different conditions on a one-end asymptotically flat 3-dim initial data set $ (\M,g,k) $ with boundary $ \partial\M $, which is a connected   compact  strictly stable outermost MOTS. We are able to keep most of the information about the cylindrical end after solving Dirac equation Eq.\eqref{spin pde}. The most serious information loss happens at the last step, when we are using capacity to extract an area term. This is caused by the essential difference between  the Green's function  for    Laplacian and the Green's function  for   Dirac operator.

\providecommand{\href}[2]{#2}

\address{
	Columbia University,\\
	 New York, USA\\
	\email{wy2227@columbia.edu}
}


\begin{thebibliography}{10}
	
\bibitem{yau}R. Schoen and S.-T. Yau, {\em Proof of the Positive Mass Theorem II}. {Comm.
	Math. Phys.}, {\bf 79},  no.~2, 231--260, (1981).

\bibitem{metzger}J. Metzger, {\em Blowup of Jang's equation at outermost marginally trapped surfaces}.  {Comm. Math. Phys.}, {\bf 294}, no.~1, 61--72, (2010).

\bibitem{jek} M. Eichmair and J. Metzger, {\em Jenkins-Serrin type results for the Jang equation}. {J. Differential Geom.}, {\bf 102}, no.~2, 207--242,  (2016). 

\bibitem{Andersson} L. Andersson and J. Metzger, {\em The area of horizons and the trapped region}. {Comm. Math. Phys.}, {\bf 290}, no.~3, 941--972, (2009).

\bibitem{han} Q. Han and M. Khuri, {\em Existence and Blow-Up Behavior for Solutions of the Generalized Jang Equation}. {Comm. Partial Differential Equations}, {\bf 38}, no.~12, 2199--2237, (2013). 

\bibitem{williams} C. Williams, {\em On blow-up solutions of the Jang equation in spherical symmetry}. {Classical Quantum Gravity}, {\bf 27}, no.~6, 065001, 13 pp,  (2010). 

\bibitem{Khuri} M. Khuri, {\em A Penrose-like inequality with charge}. {Gen. Relativity Gravitation},  {\bf 45}, no.~11, 2341--2361, (2013).

\bibitem{wang} P.-N. Chen, M.-T. Wang, Y.-K. Wang and S.-T. Yau, {\em Quasi-local energy with respect to a static spacetime}.  {Adv. Theor. Math. Phys.}, {\bf 22}, no.~1, 1--23, (2018). 

\bibitem{Shi} Y. Shi and L.-F. Tam,  {\em Positive mass theorem and the boundary behaviors of compact manifolds with nonnegative scalar curvature}. {J. Differential Geom.}, {\bf 62},  no.~1, 79--125, (2002).

\bibitem{spin} M. Eichmair, P. Miao and X. Wang, {\em Extension of a theorem of Shi and Tam}. {Calc. Var. Partial Differential Equations},  {\bf 43}, no.~1--2, 45--56, (2012). 

\bibitem{Huisken} G. Huisken and T. Ilmanen,  {\em The Riemannian Penrose inequality}. {Internat. Math. Res. Notices},   {\bf 1997}, no.~20 , 1045--1058, (1997).

\bibitem{HI01}  G. Huisken and T. Ilmanen, {\em The inverse mean curvature flow and the {R}iemannian {P}enrose inequality}. {J. Differential  Geom.},  {\bf 59}, no.~3, 353--437,  (2001).

\bibitem{Liu-Yau} C.-C. M. Liu and S.-T.  Yau, {\em Positivity of quasi-local mass II}. {J. Amer. Math. Soc.},  {\bf 19}, no.~1, 181--204, (2006).

\bibitem{stable} L. Andersson, M. Mars and W. Simon,  {\em Stability of marginally outer trapped surfaces and existence of marginally outer trapped tubes}. {Adv. Theor. Math. Phys.}, {\bf 12}, no.~4, 853--888, (2008).

\bibitem{jangs} P. Jang,   {\em On the positivity of energy in general relativity}. {J. Math. Phys.},  {\bf 19}, no.~5, 1152--1155, (1978). 

\bibitem{herz} M. Herzlich,  {\em A Penrose-like inequality for the mass of Riemannian asymptotically flat manifolds}. {Commun. Math. Phys.}, {\bf 188}, no.~1, 121--133,  (1997).

\bibitem{huisken}  G. Huisken and T. Ilmanen, {\em The inverse mean curvature flow and the Riemannian Penrose inequality}. {J. DifferentialGeom.},  {\bf 59}, no.~3, 353--437,  (2001).

\bibitem{isometric}M.-T. Wang and S.-T. Yau,  {\em Isometric embeddings into the Minkowski space and new	quasi-local mass}. {Comm. Math. Phys.},  {\bf 288}, no.~3, 919--942, (2009). 

\bibitem{witten}E. Witten,  {\em A new proof of the positive energy theorem}. {Comm. Math. Phys.},  {\bf 80}, no.~3, 381--402,  (1981). 

\bibitem{bartnik} R. Bartnik,  {\em The mass of an asymptotically flat manifold}.  {Commun. Pure. Appl. Math.},  {\bf 39}, no.~5, 661--693, (1986).

\bibitem{g1} G. Galloway, {\em Rigidity of marginally trapped surfaces and the topology of black holes}. {Commun. Anal. Geom.},  {\bf 16}, no.~1, 217--229,  (2008).






\bibitem{g2} G. Galloway and R. Schoen,  {\em A generalization of Hawking's black hole topology theorem to higher dimensions}. {Commun. Math. Phys.}, {\bf 266}, no.~2, 571--576, (2006). 

\bibitem{bar} C. Bar, {\em Lower eigenvalues estimates for Dirac operators}. {Math. Ann.}, {\bf 293} , no.~1, 39--46, (1992). 

\bibitem{hi}O. Hijazi, {\em Premi\`ere valeur propre de l'op\'erateur de Dirac et nombre de Yamabe}. {C. R. Acad. Sci. Paris}, {\bf 313},  no.~12, 865--868, (1991).

\bibitem{black} S.-T. Yau,  {\em Geometry of three manifolds and existence of black hole due to boundary effect}. {Adv. Theor. Math. Phys.}, {\bf 5}, no.~4, 755--767, (2001).

\bibitem{coupled} H. Bray and M. Khuri, {\em A Jang equation approach to the Penrose inequality}. {Discrete Contin. Dyn. Syst.},  {\bf 27}, no.~2, 741--766, (2010). 

\bibitem{bray} H. Bray, {\em Proof of the Riemannian Penrose inequality using the positive	mass theorem}. {J. Differential Geom.}, {\bf 59}, no.~2, 177--267,  (2001). 

\bibitem{lee} H. Bray and D. Lee,  {\em On the Riemannian Penrose inequality in dimensions less than eight}. {Duke Math. J.},  {\bf 148}, no.~1, 81--106,  (2009).  


\bibitem{schyau}R. Schoen and S.-T. Yau, {\em On the proof of the positive mass conjecture in general relativity}. {Comm. Math. Phys.},  {\bf 65}, no.~1, 45--76, (1979).


\bibitem{taubes}T. Parker and C. Taubes, {\em On Witten's proof of the positive energy theorem}. {Comm. Math. Phys.}, {\bf 84}, no.~2, 223--238,  (1982). 


\bibitem{bk}   H. Bray and M. Khuri,   {\em P.D.E.'s which imply the Penrose conjecture}.  {Asian J. Math.}, {\bf 15}, no.~4, 557--610,  (2011). 

\bibitem{Gilbarg}D. Gilbarg and  N. Trudinger, {\em Elliptic Partial Differential Equations of Second Order}, Springer-Verlag, New York, 1998.

\bibitem{Malec-OMurchadha:2004} E. Malec and N. {\'O}~Murchadha,  {\em The {J}ang equation, apparent 	horizons and the {P}enrose inequality}. {Classical Quantum Gravity}, {\bf 21}, no.~24, 5777--5787, (2004).  

\bibitem{1}R. Schoen and S.-T. Yau, {\em The existence of a black hole due to condensation of matter}.
{Comm. Math. Phys.}, {\bf 90}, no.~4, 575--579,  (1983). 

\bibitem{Andersson-Mars-Simon:2005}
L.~Andersson, M.~Mars and W.~Simon, {\em Local existence of dynamical and
	trapping horizons}. {Phys. Rev. Lett.}, {\bf 95}, 111102, (2005). 

\bibitem{cal}X. Zhang,
{\em A definition of total energy-momenta and the positive mass theorem on asymptotically hyperbolic 3-manifolds. I}.  
{Comm. Math. Phys.}, {\bf249}, no.~3, 529--548, (2004). 




\bibitem{adm} R. Arnowitt, S. Deser and C. Misner, {\em Dynamical Structure and Definition of Energy in General Relativity}. {Physical Review},  {\bf 116}, no.~5, 1322--1330, (1959).

\bibitem{Witten:1985xe} E.~Witten, {\em Global gravitational anomalies}.  {Commun. Math. Phys.}, {\bf
  100}, no.~2, 197--229, (1985).


\end{thebibliography}
\end{document}